\newcommand{\be}{\begin{equation}}
\newcommand{\ee}{\end{equation}}
\newcommand{\bea}{\begin{eqnarray}}
\newcommand{\eea}{\end{eqnarray}}
\newcommand{\beas}{\begin{eqnarray*}}
\newcommand{\eeas}{\end{eqnarray*}}
\theoremstyle{plain}
\newtheorem{thm}{Theorem}
\newtheorem{lem}[thm]{Lemma}
\newtheorem{rem}[thm]{Remark}
\newtheorem{cor}[thm]{Corollary}
\newtheorem{prop}[thm]{Proposition}
\theoremstyle{definition}
\newtheorem{defn}[thm]{Definition}
\newtheorem{ex}[thm]{Example}
\numberwithin{thm}{section}
\numberwithin{equation}{section}
\newcommand{\A}{{\mathbb A}}
\def\One{\mathbb{I}}
\title{Algebraic Interplay between Renormalization and Monodromy}
\author{Dirk Kreimer and Karen Yeats}
\address{Humboldt U.\ Berlin and University of Waterloo}
\begin{document}
\maketitle
\begin{abstract}
We investigate combinatorial and algebraic aspects of the interplay between renormalization and monodromies for Feynman amplitudes. We clarify how extraction of subgraphs from a Feynman graph interacts with putting edges onshell or with contracting them to obtain reduced graphs. Graph by graph this leads to a study of cointeracting bialgebras.  One bialgebra comes from extraction of subgraphs and hence is needed for renormalization. The other bialgebra is an incidence bialgebra for edges put either on- or
offshell. It is hence related to the monodromies of the multivalued function to which a renormalized graph evaluates. Summing over infinite series of graphs, consequences for Green functions are derived using combinatorial Dyson--Schwinger equations. 
\end{abstract}
\tableofcontents
\section{Introduction}
In perturbative quantum field theory Green functions are regarded as formal sums of Feynman graphs each of them contributing to a chosen renormalized Green function $G_R=\sum_{\Gamma}\Phi_R(\Gamma)$ from which amplitudes $\mathsf{A}$ are derived.
To be more precise any Feynman graph $\Gamma$  in such a sum over graphs is evaluated by renormalized Feynman rules $\Phi_R$. The evaluation $\Phi_R(\Gamma)$ leads, for each graph, to a multi-valued function $\Phi_R(\Gamma)(\{m_e\},\{q_l\})$ which depends on the set of   masses $\{m_e\}$
assigned to internal edges of the Feynman graph and also on the set of momenta $\{q_l\}$ assigned to 
external edges. 
The latter are represented as half-edges $l$ of the graph
labeled by momentum vectors $q_l$ describing a momentum of a particle incoming at any such  half-edge. Each graph $\Gamma$ contributing to a given amplitude provides the same set $L$ of labeled half-edges representing the external particles.\footnote{A vertex which has no such half-edge can equivalently be regarded as a vertex where such an external half-edge $l$ is attached with zero external momentum $q_l=0$.}

We assume the evalution gives the function $\Phi_R(\Gamma)$ as a scalar under the Lorentz group.\footnote{Theories involving spin can be treated similarly but are left to future work.} 
Hence, after the evaluation we get a multi-valued function of all Lorentz scalars $q_e\cdot q_f$, $e,f\in L$ which we can form from scalar products of external momenta.\footnote{For a vector $r=(r_0,r_1,r_2,r_3)^T\in \mathbb{M}^4(\mathbb{C})$
  we have the Lorentz scalar $r\cdot r= r^2=r_0^2-r_1^2-r_2^2-r_3^2$, $r_j^2=r_j\bar{r}_j$, $r_j=\Re(r_j)+\imath \Im(r_j)\in \mathbb{C}$, $\bar{r}_j=\Re(r_j)-\imath \Im(r_j)$, $j\in \{0,1,2,3\}$.}
We regard masses as fixed, given parameters in this context. We let $Q^L(\mathbb{R})$ be the real vector space spanned by the independent scalar
products $q_e\cdot q_f$ and $Q^L(\mathbb{C})$ its complexification (see Section~\ref{QL}). See Figure~\ref{triangleKin} for an example.
\begin{figure}[h]
\includegraphics[width=6cm]{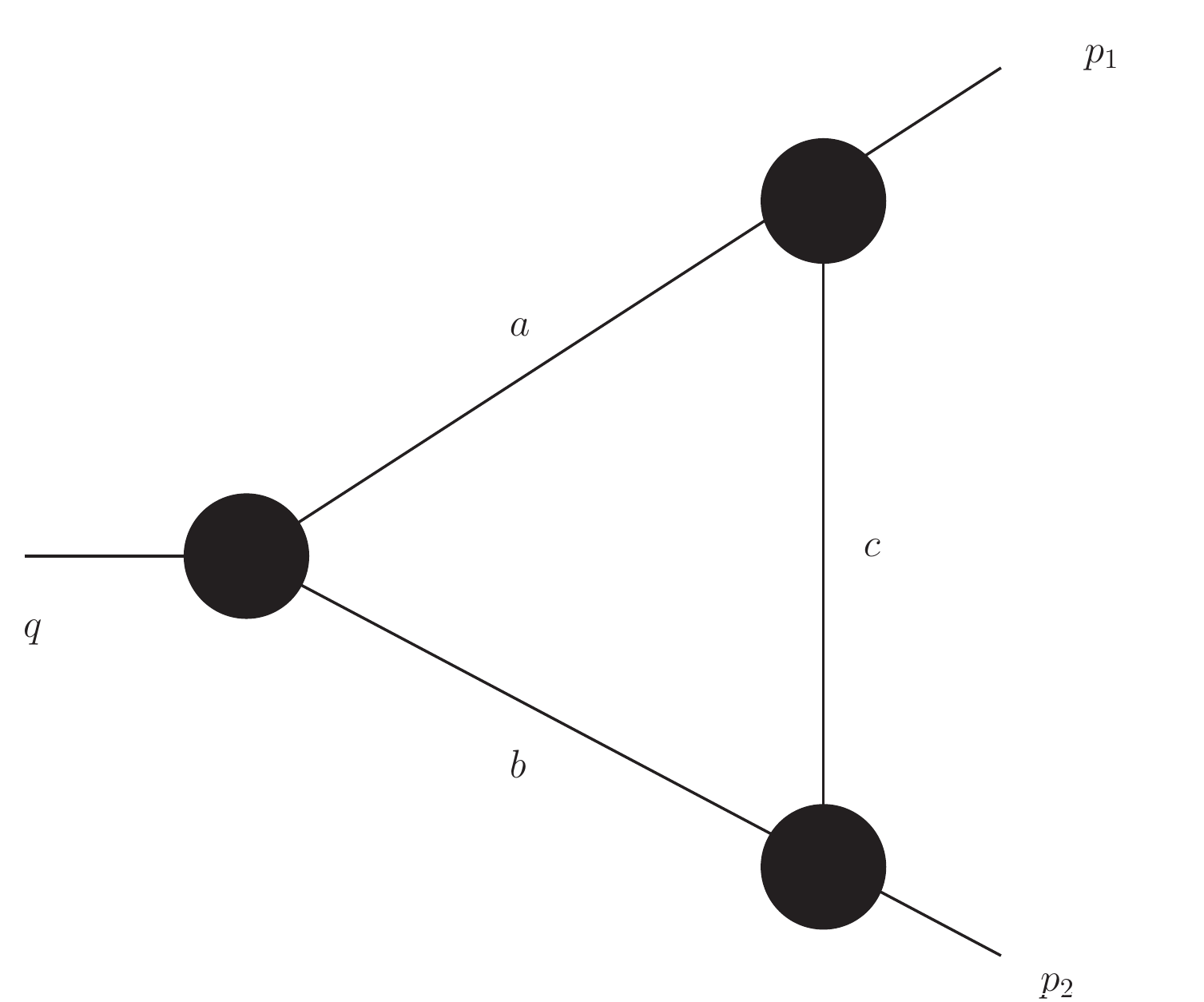}.
\caption{The one-loop triangle graph. The set $L$ is provided by three half-edges labeled by $q,p_1,p_2$. These denote three  vectors in four-dimensional complex Minkowski space $\mathbb{M}^4(\mathbb{C})$, subject to the side-constraint $q+p_1+p_2=0$. Feynman rules assign to such a graph $\Gamma$ a function $\Phi_R(\Gamma)(q^2, p_1^2, p_2^2,m_a^2,m_b^2,m_c^2)$.
The space $Q^L(\mathbb{R})$ is a $\mathbb{R}$-vectorspace spanned by $z_1:=q^2, z_2:=p_1^2, z_3:=p_2^2$ where we use that $2p_1\cdot p_2=q^2-p_1^2-p_2^2$. The mass squares $m_i^2$ are kept fixed with 
$\Re(m_i^2)\geq 0$, $0<-\Im(m_i^2)\ll 1$, $i\in \{a,b,c\}$. 
One is particularly interested in the variation of $\Phi_R(\Gamma)$ when 
$q,p_1,p_2$ and hence the $z_i$ vary in or near the real locus provided by $\mathbb{M}^4(\mathbb{R})$ and $Q^L(\mathbb{R})$.
}
\label{triangleKin}
\end{figure}

Typically, we are interested in the behaviour of such a multi-valued function when we vary  a chosen variable $s$ which defines a vector in $Q^L$ whilst keeping the other variables fixed and real.  The most typical form of the variable $s$ is described briefly below.

Each Feynman graph contributing to the same amplitude or Green function  provides such a multi-valued function  of $s$.
The Green functions of the theory can be expanded to be sums over such graphs.   Each Green function is multi-valued as a function of $s$ and is a solution of a fixed-point equation, a Dyson-Schwinger equation.
To understand the monodromy of such functions one studies the behaviour when internal edges are on the mass shell. Following $S$-matrix theory \cite{Smatrixbook}, we partition the set $L$ into two sets
$L_{in},L_{out}$ and define
\[
s=\left( \sum_{e\in L_{in}} q(e)\right)^2=\left( \sum_{e\in L_{out}} q(e)\right)^2,
\]
using momentum conservation $\sum_{e\in L_{in}} q(e)=-\sum_{e\in L_{out}} q(e)$.  The choices of $s$ that we will be interested in will be of this form.  More details on $s$ can be found in Appendix~\ref{appC}.

In the spirit of the idea outlined above for any graph $\Gamma$ contributing to a chosen amplitude $\mathsf{A}$,
consider a set $I$ of internal edges of $G$ with the property that the removal of the edges in $I$ separates the graph in two parts $\Gamma_{in}$ and $\Gamma_{out}$ such that the external particles coupled to one part form the set $L_{in}$ and the others $L_{out}$.
\[
\Gamma-I=\Gamma_{in}\cup \Gamma_{out},  L_{\Gamma_{in}}=L_{in}, L_{\Gamma_{out}}=L_{out}, 
\]
and such that no $\Gamma-H$ with $H\subsetneq I$ decomposes $\Gamma$ with the same $L_{in}$ and $L_{out}$, that is $I$ is minimal.

Such a set $I$ we call a cut for the partition into $L_{in},L_{out}$. The sum over all cuts
for a given partition contributes to the monodromy of that partition
as a function of 
\[
s=\left(\sum_{e\in L{in}}q_e \right)^2.
\]

This is not enough to understand the monodromy of $\Phi_R(\Gamma)(s)$. We have to refine the partition further. This came as a surprise in the early days of quantum field theory. The monodromy of an amplitude regarded as a function of $s$ can not be understood by the two-partion into $in$ and $out$ states as suggested by $S$-matrix theory. 

Anomalous thresholds, first discovered in particle physics experiments, appear. 
See for example Sec.(3) in the review by Amati and Fubini \cite{AmatiFubini}.  These anomalous thresholds are related to monodromies corresponding to finer partitions of $L$ \cite{BlochKreimerOS}. Below we will study the generic situation refining partitions until $L$ is separated into $|L|$ elements
and consider the fixed point equations for Green functions for any prescribed partition of $L$.  These fixed point equations are the cut analogues of Dyson-Schwinger equations.  We will describe their structure.

A partition of $L$ into two parts has a normal physical threshold $s_0\in\mathbb{R}$ as its physical
observable corresponding to a discontinuity in the function $\Phi_R(\Gamma)(s)$ at 
$s=s_0$, and finer partitions into $k\gneq 2$ parts  give rise to anomalous thresholds  $s_i\in\mathbb{R}$, $i\geq 1$ for this function $\Phi_R(\Gamma)(s)$. Such a partition of $L$ into two or more parts generates a partition of a graph $\Gamma$ into subgraphs $\Gamma_j$ for each part $j$. Each $\Gamma_j$ must be renormalized to render $\Phi_R(\Gamma)(s)$ well-defined.

The quest to understand the interplay between renormalization and the monodromies generated by $\Phi_R(\Gamma)$ through such partitions motivates this paper. We use that partitions of $L$ are realized by removing edges while the  monodromies are stable when shrinking remaining edges \cite{Smatrixbook,MarkoDirk}.

For renormalization we know that the sum of all graphs contributing to a Green function has a distinguished structure: the coproduct $\Delta_{core}$  closes when acting on (combinatorial) Green functions \cite{Karenbook}. 

Here we show that this coexists with the analytic structure of graphs.  In view of the above considerations, the analytic structure of the thresholds as discussed in \cite{MarkoDirk} can be studied through an algebraic avatar: the combinatorics of the interplay of reduced or removed edges. These basic operations of contracting an edge or removing an edge assign to a graph $\Gamma$ a lower triangular matrix $M$, and similarly for any
sum of graphs.

There is a corresponding incidence coalgebra based on reduction or removal of edges \cite{coaction} which is represented as
\[
\rho\left[(M)_{ij}\right]=\sum_k (M)_{ki}\otimes (M)_{jk}. 
\]
The coaction of $\Delta_{core}$ on Cutkosky graphs and the coproduct $\rho$ form cointeracting bialgebras giving an algebraic formulation of the interplay between renormalization and monodromy.
The cut Dyson-Schwinger equations also interact well with the coproduct.  This is a second algebraic manifestation of the underlying compatibility of renormalization and the monodromies.

Over all, we clarify the interplay between renormalization and monodromy: we can renormalize by local counterterms in Green functions $G_R(q)$, $q\in Q^L(\mathbb{R})$, which are multi-valued as functions of physical observables.

\subsection*{Organization of the paper}

Section~\ref{sec graph summary} briefly introduces graphs and cuts of graphs as we wish to formulate them.  More details are given in Appendix~\ref{graph appendix}.  Section~\ref{Hopf} proceeds to define the relevant Hopf algebras for cut and uncut graphs.  This is how renormalization is brought in algebraically.  To understand the interplay of renormalization and monodromy we need to understand the interplay of the coproducts and the cuts.  This is done in Section~\ref{coactions} with the structure of coactions. Some information on Feynman rules which is helpful to understand the significance of these coactions is collected in Appendix~\ref{appC}. Notably, in Subsection~\ref{cointbi} we use that we have cointeracting bialgebras encapsulating this interplay.  The formal definitions of the two participatng bialgebras and their cointeraction are treated in Appendix~\ref{appB}. Mathematically, this is closely related to the interplay between motivic and deRham classes and physically, this is closely related to sector decomposition.  Arriving at the cut Dyson--Schwinger equations in Section~\ref{DSE} we develop the set up and prove how the coproducts and coactions act on the Green functions.  Mathematically, this is closely related to the notion of assembly maps.  Finally, in Section~\ref{sec concl} we conclude. References can be found after the appendices.

\subsection*{Acknowledgments}
DK is supported by grant KR1401/5-2 of the DFG.  
KY is supported by an NSERC Discovery grant and the Canada Research Chairs program. She was supported by the Humboldt Foundation as a Humboldt fellow during the development of this work.  KY would like to thank DK and Humboldt University  for  hosting  her  visit  to  Berlin  as  a  Humboldt  Fellow.
Both authors would like to thank Marko Berghoff, Spencer Bloch, Michael Borinsky, David Broadhurst, Lo\"ic Foissy, John Gracey, Ralph Kaufmann, Nick Olson-Harris, Henry Ki\ss ler and Karen Vogtmann for  helpful discussions.

\section{Graphs and their spanning forests}\label{sec graph summary}

\subsection{Graphs} We want to study graphs with cuts in two different ways, as graphs with certain edges marked as cut and potentially certain vertices partitioned into pieces, and as pairs $(\Gamma, F)$ of a graph $\Gamma$ and a spanning forest $F$.  The first formulation is best suited to cut versions of Dyson-Schwinger equations as we will discuss in Section~\ref{DSE}, while the second is best suited to understanding the cointeraction of renormalization and cut structures as we will discuss in Section~\ref{coactions}.
In short, we want to determine the equations for Green functions which describe scattering.

To get there we need formal notions of all these different kinds of graphs.  In this section we will overview these ideas at an intuitive level.  Appendix~\ref{graph appendix} gives a formalization of these notions that is well suited to our purposes, and includes details.

Graphs for us are based on half edges.  That is, a graph is a set of \emph{half edges} and the information of how these half edges are paired to make \emph{internal edges} of the Feynman graph, and are collected together into \emph{corollas} of at least 3 half edges to make the vertices of the Feynman graph.  Half edges that are not paired into internal edges are \emph{external edges}.  Both the groupings into vertices and the groupings into edges can be represented as partitions of the set of half edges, and so we will write our graphs as $\Gamma=(H_\Gamma, \mathcal{V}_\Gamma, \mathcal{E}_\Gamma)$ where $H_\Gamma$ is the set of half edges, $\mathcal{V}_\Gamma$ is the partition of $H_\Gamma$ giving the vertices and $\mathcal{E}_\Gamma$ is the partition of $H_\Gamma$ giving the internal and external edges.  For details see Section~\ref{set graphs}.  Sometimes we may want to think of an element of $\mathcal{V}_\Gamma$ simply as a vertex and sometimes we will want to think it as a corolla, that is the set of half edges which define the vertex.  When it is useful to emphasize the corolla as opposed to the vertex, we will write $\mathbf{c}_v$ for the corolla of the vertex $v$.

It is worth emphasizing that our graphs have no vertices of degree less than 3, but they may have multiple edges and self-loops.
We define $L_\Gamma$ to be the set of external edges and let $e_\Gamma$, $l_\Gamma$, and $v_\Gamma$ be the number of internal edges, external edges, and vertices of $\Gamma$ respectively.

For a connected graph $\Gamma$ we write $|\Gamma|=|H^1(\Gamma)| = e_\Gamma-v_\Gamma+1$, the number of independent loops, or the dimension of the cycle space of $\Gamma$.  For a disjoint union of graphs $\Gamma_1$, $\Gamma_2$ we define $|\Gamma_1\dot{\cup} \Gamma_2| = |\Gamma_1| + |\Gamma_2|$.

We write $\Gamma-e$ for the graph $\Gamma$ with the edge $e$ cut, but unlike in usual graph theory, when we cut an edge we do not remove the half edges forming it, we simply disconnect those two half edges so they no longer form an edge.  An edge is a \emph{bridge} if removing it increases the number of connected components and a graph is \emph{bridgeless} if it has no bridges.

We write $\Gamma/e$ for the graph $\Gamma$ with the edge $e$ contracted. We think of edge contraction intuitively as shrinking the edge to length $0$.  For a formal definition in our set up see Section~\ref{sec graphs}.

We will use the same notation for cutting or contracting sets of edges: $\Gamma-X, \Gamma/X$ for $X$ a set of edges.  Furthermore if $X$ is a subgraph we use the same notation for cutting or contracting the edges of $X$.

Given a spanning tree $T$ (see Section~\ref{subsec span} for the definition of spanning tree) of a graph $\Gamma$ and an edge $e$ of $\Gamma$ which is not in $T$, the graph $T\cup e$ has a unique cycle known as the \emph{fundamental cycle} given by $e$ and $T$.  We will denote this fundamental cycle $l(T,e)$.  Fundamental cycles will be quite important in Section~\ref{cointbi}.

\subsection{Cuts}

We are interested in cutting edges so as to disconnect graphs.  From a physicist's viewpoint the cut edges can also be regarded as marked edges which are put on-shell when we apply Feynman rules.
For the Dyson-Schwinger equations we will also be interested in splitting (partitioning)  corollas.  This is a different kind of cut.

We will introduce the vector space $H_C$ generated by Cutkosky graphs, which are graphs which have cuts generated by a removal of edges.  The base graph $\Gamma$ is also allowed to vary.

In particular, we will study series on such graphs $\Gamma\in H_C$ which have cuts all corresponding to a chosen partition of a given common set of external edges $L$. Such series can be obtained as solutions to fixed point equations formulated using pre-Cutkosky graphs $H_{pC}$. Pre-Cutkosky graphs are graphs which may have cuts both by cutting edges and by also splitting internal vertices $v$  by partitions of the corollas $\mathbf{c}_v$, as will be described in more detail below.

Fortunately, both the cut edges and split corollas can be represented nicely in our formulation of graphs because they can be given by refinements of the partitions $\mathcal{E}$ and $\mathcal{V}$.  That is given a graph $\Gamma=(H_\Gamma, \mathcal{V}_\Gamma, \mathcal{E}_\Gamma)$ we represent cuts of both types by giving a second graph with the same half edges $H = (H_\Gamma, \mathcal{V}_H, \mathcal{E}_H)$, where $\mathcal{V}_H$ refines $\mathcal{V}_\Gamma$ and $\mathcal{E}_H$ refines $\mathcal{E}_\Gamma$.  See Section~\ref{subsec cuts} for details.

We define a \emph{pre-cut graph} to be such a pair $\Gamma=(\Gamma,H)$ and by abuse of notation we call both the pre-cut graph and the underlying graph (before any cutting) $\Gamma$.  The reason for this is that we think of a pre-cut graph $\Gamma$ as being the ordinary graph $\Gamma$ with the extra information of the cut.  We continue to use $|\Gamma|$ on a pre-cut graph as before on the underlying uncut graph.  For $\Gamma=(\Gamma,H)$, we write $\|\Gamma\|$ for $|H|$.

A pre-cut graph is a \emph{cut graph} if no vertices are split, so the only cuts are to edges.

For quantum field theory, as is common in graph theory, we do not want our edge cuts to include edges with both ends in the same component after the cut.  To this end we define a \emph{pre-Cutkosky graph} to be a pre-cut graph where each cut edge has the property that the two ends are in different components of the graph after the cut and we define a \emph{Cutkosky graph} to be a cut graph that is pre-Cutkosky.  See Section~\ref{sec span} for futher details.  $H_{pC}$ and $H_{C}$ are respectively the $\mathbb{Q}$ vector spaces spanned by bridgeless pre-Cutkosky graphs and bridgeless Cutkosky graphs.  We will write $H_{core}$ for the $\mathbb{Q}$ vector space spanned by bridgeless graphs in the original sense.  All three of these can be upgraded to free commutative $\mathbb{Q}$ algebras generated by the connected graphs and where the commutative product is disjoint union.

As is explored further in Section~\ref{sec span} we are also interested in spanning forests of cut graphs and we say a spanning forest is compatible with the cut if its components induce the same cut.  Sometimes it is better to think of a cut in terms of a compatible forest rather than as the cut itself, and so, in some sections, instead of cut graphs we work with pairs $(\Gamma,F)$ of a graph in the original sense and a spanning forest of the graph.

\subsection{Sub- and co-graphs}

For the Hopf algebras in the next section it will be important to have approprite notions of subgraphs and co-graphs in each of these contexts.

In the case of graphs in the original sense, we require only that our subgraphs are bridgeless and are full at each vertex in the sense that if a vertex appears in the subgraph then its whole corolla must appear.  The co-graph is then simply the graph arising from contracting the edges of the subgraph, while also removing two-valent vertices by un-subdividing any pair of edges joined by a two-valent vertex, that is, replacing them by a single edge.

In the case of pre-cut graphs, as well as needing to be bridgeless and full at the vertices, we require that our subgraphs only have the cuts they inherit, no additional edges are cut, and no vertices are cut more than before.  The co-graphs are then well defined with the understanding that if the subgraph itself is cut then the vertex or edge it forms in the co-graph is cut in the same way.

For a pair $(\Gamma,F)$ of a graph and a forest, given a subgraph $\gamma\subseteq \Gamma$ then we have a graph forest pair $(\gamma, F\cap \gamma)$.  However, we will only count this as a subgraph if $F/(F\cap \gamma)$ is a forest of $\Gamma/\gamma$.  In this case we get the graph forest pair $(\Gamma/\gamma, F/(\Gamma\cap \gamma))$ as the co-graph and so we can form the coproduct that we desire.  For more details including examples see Section~\ref{subco}.

Finally, dually to taking sub- and co-graphs we can insert one graph into another.  This reverses the above operations, see Section~\ref{composing} for details.

\section{Hopf algebras}\label{Hopf}
Hopf algebras play an important role in perturbative QFT because they allow us to organize the recursive structure of the expansion in terms of Feynman graphs. This has been particularly well-studied in the context of renormalization theory and the associated forest formula of Zimmermann \cite{K,CK,RHI,RHII,anatomy,Michibook}. 
Also, the Dyson--Schwinger equations
were identified as fixed-point equations in the corresponding Hochschild cohomology of the Hopf algebra of renormalization for any renormalizable field theory \cite{BergKr,Karenbook,FoissyDSE}.

Here, we generalize such a setup to Cutkosky graphs and graphs with forests, starting from the core Hopf algebra \cite{core,BV}. We can also obtain quotient Hopf algebras by restricting the allowable vertices. The renormalization Hopf algebras 
appear as special cases of such quotient Hopf algebras.
This supports a future study of towers of renormalizable theories \cite{John1,John2} which provide interactions corresponding to vertices going far beyond interactions studied in renormalizable field theories so far.

We concentrate on Hopf algebras dedicated to describing scattering.
Hence, we work in the arena of graphs which have cuts as 
studied by Cutkosky \cite{Cutkosky}.
We need to study (pre-) Cutkosky graphs, their Hopf algebra structure and corresponding series over such graphs, providing the combinatorial backbone for the study of variations of Green functions. 
We also extend our study to the study of pairs $(\Gamma,F)$ of graphs $\Gamma$ and spanning forests $F$. The associated graph complexes relate naturally to the study of Cutkosky rules \cite{BlochKreimerOS,MarkoDirk,BerghM}.

As we saw in the previous section, the Cutkosky situation and the graph-and-forest situation are quite similar.  Both should be viewed as defining graphs with cuts, but the forest provides the additional information of a spanning tree in each piece after cutting.  If we gather together all the $(\Gamma,F)$ pairs where $F$ gives the same cut in $\Gamma$, then this equivalence class of $(\Gamma,F)$ pairs contains the same information as the Cutkosky graph for this cut.

\medskip
There are three $\mathbb{Q}$-Hopf algebras, $H_{core}$ of core graphs, $H_{pC}$ of pre-Cutkosky graphs, and $H_{GF}$ of graphs-forest pairs which we will consider. Furthermore we will define $\mathbb{Q}$-vector spaces $H_C$ of Cutkosky graphs and $H_{nC}$ of non-Cutkosky graphs.

As variants it would be possible to also consider $H_{V}$, $H_{pC,V}$, $H_{C,V}$ and $H_{nC,V}$ obtained by restricting to quotient algebras defined by restricting to graphs with vertices of a given valence prescribed by a set $V$.  We will not elaborate on this, but will provide the framework for these quotient algebras.

The Hopf algebras induce various coactions which we will discuss below 
in Section~\ref{coactions} in particular with regards to their interpretation in physics.
\subsection{The core Hopf algebra $H_{core}$}
The core Hopf algebra $H_{core}$ \cite{core,BV} is based on the $\mathbb{Q}$-vector space generated by connected bridgeless Feynman graphs.
It is graded by the loop number $|\Gamma|$.

We define  a commutative product
\[
m: H_{core}\otimes H_{core}\to H_{core},\, m(\Gamma_1,\Gamma_2)=\Gamma_1\dot{\cup} \Gamma_2,
\]
by disjoint union. The unit $\One$ is provided by the empty set, $|\One|=0$, so that we get a connected free commutative $\mathbb{Q}$-algebra with bridgeless graphs as generators.  So that the algebra is connected we take a quotient by identifying all graphs consisting of a single vertex with external edges with $\One$.\footnote{We do not regard the dot $\cdot$ as a core graph of grade zero. In fact isolated vertices (without external edges) are not even permissible in our definition of graph, as our vertices are parts in a vertex partition and so they are nonempty subsets of half edges.  If we instead began with a more conventional definition of graph, then for the core Hopf algebra we would need to take a larger quotient given  by the ideal generated by graphs containing any non-vanishing number of isolated vertices.  These structures can also be studied in the non-connected context  \cite{Michibook}.}

We define a coproduct by
\be\label{coreco}
\Delta_{core}(\Gamma)=\Gamma\otimes\One+\One\otimes \Gamma+\sum_{\gamma}\gamma\otimes \Gamma/\gamma,
\ee
where the sum is over all $\gamma\in H_{core}$ such that $\gamma\subsetneq \Gamma$, where these are subgraphs in the sense of Section~\ref{subco} with $\Gamma$ viewed as a trivial pre-cut graph $(\Gamma,\Gamma)$.  In particular subgraphs are bridgeless and full at the vertices.
In other words there are connected bridgeless graphs $\gamma_i$ such that $\gamma=\dot{\cup}_i \gamma_i$, and $\Gamma/\gamma$ denotes the co-graph in which all internal edges of all $\gamma_i$ shrink to zero length in $\Gamma$ and 2-valent vertices are replaced by edges, see Section~\ref{subco} and  Equations~\ref{edgecont} and \ref{lengthcont}. 

We have a counit $\hat{\One}:H_{core}\to\mathbb{Q}$ which annihilates any non-empty graph and $\hat{\One}(\One)=1$ and we have the antipode
$S:H_{core}\to H_{core}$, $S(\One)=\One$
\[
S(\Gamma)=-\Gamma-\sum_{\gamma\subsetneq \Gamma}S(\gamma) \Gamma/\gamma.
\]
Furthermore our Hopf algebras are graded by the loop order,
\[
H_{core}=\oplus_{j=0}^\infty H_{core}^{(j)},\,H_{core}^{(0)}\cong \mathbb{Q}\One\,\text{and}\, \mathbf{Aug}_{core}=\oplus_{j=1}^\infty H_{core}^{(j)}.
\]
and $h\in H_{core}^{(j)}\Leftrightarrow |h|=j$.
Such Hopf algebras are the dual of a universal enveloping algebra of a Lie algebra which
originates from a pre-Lie algebra. See \cite{MarkoDirk} for a discussion of 
such pre-Lie algebras.

\begin{rem}\label{necklace}
Necklaces in the following sense are the primitives for the core Hopf algebra.

In combinatorics a \emph{necklace} of length $n$ over an alphabet $A$ is an equivalence class of length $n$ strings over $A$, where the equivalence is under rotations.  For example, over the alphabet $\{a,b\}$ there are 16 words of length $4$ but only 6 necklaces $\{aaaa,aaab,aabb,abab,abbb,bbbb\}$.  The words $abbaab$ and $baabba$ are reflections of each other but not rotations of each other; they are different necklaces.\footnote{Equivalence classes of words under both rotation and reflection are often called \emph{bracelets}.}

For the case of uncut graphs and $|L_g|=n$, one loop graphs can be encoded by necklaces as follows.  Let $g$ be a one loop graph on $1\leq k\leq n$
vertices $v_j$ with $\sum_{j=1}^k (\mathbf{val}(v_j)-2)=n$, and $\mathbf{val}(v_j)\geq 3$, $\forall v_j$. 
Such a graph $g$ is a string of $k$ vertices and $k$ edges in between, and $|L_g|=n$ half edges distributed over the vertices. If two such graphs $g,h$ can be transformed into each other by a cyclic permutation of the string of vertices we consider them as equivalent, $g\sim h$,
and call this equivalence class a necklace $\omega$  Notate $\omega$ as a string of integers $n_1, \ldots, n_k$ up to cyclic permutation where $n_i$ indicates a vertex with $n_i$ external edges, that is, a vertex of degree $n_i+2$.
\end{rem}

\subsection{Quotient Hopf algebras}\label{quotient}
There are many quotient Hopf algebras originating from $H_{core}$.  The primary use we will make of the quotient Hopf algebras is in order to bootstrap easy proofs of the various cut Hopf algebras of future sections.  Renormalization Hopf algebras of Feynman graphs form another important class of examples.

The following lemma is handy.  It concerns when functions constructed with a projection after a coproduct can themselves be coproducts.
\begin{lem}\label{lem Hopf proj}
  Let $H$ be a Hopf algebra with coproduct $\Delta$.  Let $K\subseteq H$ be a subspace of $H$ and let $P:H\rightarrow K$ be a projection.  If $(P\otimes P)\Delta P = (P\otimes P)\Delta$ then  $\Delta_K: (P\otimes P)\Delta : K\rightarrow K\otimes K$ is coassociative.
\end{lem}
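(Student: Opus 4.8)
The plan is to verify coassociativity of $\Delta_K$ directly by expanding both iterated coproducts in Sweedler notation and checking that they agree, using the hypothesis precisely to eliminate an intervening copy of $P$ so that the coassociativity of the ambient $\Delta$ can be brought to bear. First I would record the elementary facts about $P$. Since $P:H\to K$ is a projection, it restricts to the identity on $K$, so $P(x)=x$ for every $x\in K$; and since $(P\otimes P)\Delta$ lands in $K\otimes K$, the map $\Delta_K:=(P\otimes P)\Delta$ is indeed well defined as a map $K\to K\otimes K$. Writing $\Delta(x)=\sum x_{(1)}\otimes x_{(2)}$ for $x\in K$, we have $\Delta_K(x)=\sum P(x_{(1)})\otimes P(x_{(2)})$.

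Next I would compute the left-hand iterate $(\Delta_K\otimes\mathrm{id})\Delta_K(x)$. Applying the outer $\Delta_K$ to the first tensor factor $P(x_{(1)})$ produces $(P\otimes P)\Delta\bigl(P(x_{(1)})\bigr)$, and this is exactly where the hypothesis enters: by $(P\otimes P)\Delta P=(P\otimes P)\Delta$ this equals $(P\otimes P)\Delta(x_{(1)})$, so the intermediate $P$ is deleted. Expanding and invoking coassociativity of $\Delta$ then gives $\sum P(x_{(1)})\otimes P(x_{(2)(1)})\otimes P(x_{(2)(2)})$. Symmetrically, for the right-hand iterate $(\mathrm{id}\otimes\Delta_K)\Delta_K(x)$ the hypothesis applied to the second factor $P(x_{(2)})$ again strips the inner $P$ and yields the same threefold tensor. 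Since the two sides coincide, $\Delta_K$ is coassociative.

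The only genuine subtlety—and the step I expect to be the crux—is the bookkeeping of where the hypothesis must be applied. Naively iterating $\Delta_K$ produces terms of the shape $(P\otimes P)\Delta\bigl(P(\,\cdot\,)\bigr)$, and without the hypothesis the stray $P$ sitting between the two applications of $\Delta$ obstructs any direct appeal to coassociativity of $\Delta$. The condition $(P\otimes P)\Delta P=(P\otimes P)\Delta$ is precisely what licenses removing that $P$ on both the left and the right computations, after which everything collapses to the coassociativity already available in $H$. The remaining manipulations are routine Sweedler-notation bookkeeping.
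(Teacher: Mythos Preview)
Your proof is correct and follows essentially the same approach as the paper's: both compute $(\Delta_K\otimes\mathrm{id})\Delta_K$ and $(\mathrm{id}\otimes\Delta_K)\Delta_K$, use the hypothesis to strip the intermediate $P$, and reduce to $(P\otimes P\otimes P)$ applied to the coassociativity of $\Delta$. The only difference is presentational---you work in Sweedler notation on an element $x\in K$, while the paper writes the same computation as a chain of operator identities without choosing an element.
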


We will be applying this in the case, as above, where we have a distinguished basis and the projection sends certain basis elements to $0$.  In that case the condition in the lemma says that there is no way for a basis element in $H\setminus K$ to have a term in its coproduct that is in $K\otimes K$.

\begin{proof}
  \begin{align*}
    (\Delta_K\otimes \text{id})\Delta_K
    & = (P\otimes P\otimes \text{id})(\Delta \otimes \text{id})(P\otimes P) \Delta \\
    & = (P\otimes P \otimes P)(\Delta \otimes \text{id})\Delta  \quad \text{by hypothesis}\\
    & =(P\otimes P \otimes P)(\text{id} \otimes \Delta)\Delta \\
    & = (\text{id}\otimes \Delta_k)\Delta_K \quad \text{by the analogous argument.}
  \end{align*}
\end{proof}

Let us pursue the renormalization Hopf algebras of Feynman graphs in more detail as an example of how to work with quotient Hopf algebras.  The coproduct structure for these Hopf algebras is inherited from the core Hopf algebra by setting graphs with vertices of undesired valence to zero \cite{BV}.

In general let $N$ be a finite set of integers $n_i\geq 1$ and $V_N$ a corresponding set of vertices
$v_i,\,\mathbf{val}(v_i)=n_i+2$. The choices $N=\{1\}$, $N=\{2\}$, $N=\{4\}$
correspond to Hopf-algebras on $3-,4-,6-$regular graphs renormalizable in $6-,4-,3-$ dimensional spacetime respectively.

Let now $H_N$ be the sub-vectorspace of $H_{core}$ regarded as a vectorspace where we remove all graphs which have vertices 
$v$ with $(\mathbf{val}(v)-2)\not\in N$. Let $P_N$ be the projector $H_{core}\to H_N$.

Define $\Delta_N:H_N\to H_N\otimes H_N,\,\Delta_VN:=(P_N\otimes P_N)\Delta_{core}$.  

\begin{lem}
$H_N(\One,\hat{\One},m,\Delta_N,S_N)$ is a quotient Hopf algebra of $H_{core}$.
\end{lem}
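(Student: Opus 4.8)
The plan is to verify that $H_N$ with the projected coproduct $\Delta_N$ is a quotient Hopf algebra of $H_{core}$ by exhibiting it as $H_{core}/I_N$ for a suitable Hopf ideal $I_N$, and by checking that the induced structure coincides with $(\One, \hat{\One}, m, \Delta_N, S_N)$. Concretely, let $I_N$ be the ideal of $H_{core}$ generated by all connected graphs having at least one vertex $v$ with $(\mathbf{val}(v)-2)\notin N$. Since $H_N$ is the span of the complementary set of basis graphs (those all of whose vertices have admissible valence), we have a vector space decomposition $H_{core} = H_N \oplus I_N$, and $P_N$ is precisely the projection onto $H_N$ along $I_N$.

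The first step is to check that $I_N$ is an ideal for the product $m$: this is immediate, because disjoint union of graphs preserves the property of containing a forbidden vertex (if one factor has a bad vertex, so does the product), so $m(I_N \otimes H_{core}) \subseteq I_N$. The second and central step is to show $I_N$ is a coideal, i.e. $\Delta_{core}(I_N) \subseteq I_N \otimes H_{core} + H_{core}\otimes I_N$, equivalently that $(P_N\otimes P_N)\Delta_{core}$ kills $I_N$. The key observation is that in each term $\gamma \otimes \Gamma/\gamma$ of $\Delta_{core}(\Gamma)$, every vertex of $\Gamma$ contributes either to $\gamma$ (with its full corolla, since subgraphs are full at vertices) or, if it is not in $\gamma$, survives as a vertex of $\Gamma/\gamma$ whose valence is determined by $\Gamma$ together with the edge-contraction and 2-valent-vertex removal procedure. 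Thus if $\Gamma$ has a forbidden vertex $v$, then in every term of the coproduct $v$ appears as a forbidden vertex in exactly one of the two tensor factors, so that term lies in $I_N \otimes H_{core} + H_{core}\otimes I_N$ and is annihilated by $P_N\otimes P_N$. This is the condition $(P_N\otimes P_N)\Delta_{core} P_N = (P_N\otimes P_N)\Delta_{core}$ required by Lemma~\ref{lem Hopf proj}, which then immediately gives coassociativity of $\Delta_N$.

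With $I_N$ shown to be a Hopf ideal (the counit clearly annihilates $I_N$, and the antipode formula preserves $I_N$ by the same corolla-tracking argument applied recursively, defining $S_N$ as the induced map), the quotient $H_{core}/I_N \cong H_N$ inherits a Hopf algebra structure, and one checks that the induced coproduct is exactly $\Delta_N = (P_N\otimes P_N)\Delta_{core}$ under the identification $H_N \cong H_{core}/I_N$. The remaining axioms — that $\One$ is the unit and grouplike, that $\hat{\One}$ is a counit for $\Delta_N$, and that $S_N$ satisfies the antipode relation — descend routinely from the corresponding identities in $H_{core}$, since all of these structure maps respect the decomposition $H_{core} = H_N \oplus I_N$. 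Compatibility of $\Delta_N$ with $m$ (the bialgebra condition) likewise descends from $H_{core}$ because both $m$ and $\Delta_{core}$ preserve $I_N$ in the appropriate sense.

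The main obstacle I expect is the precise bookkeeping in the coideal/antipode step: one must argue carefully that the co-graph operation $\Gamma/\gamma$, which includes suppressing 2-valent vertices by un-subdividing (Equations~\ref{edgecont} and \ref{lengthcont}), does not accidentally convert a forbidden vertex into an allowed one or vice versa when a bad vertex lies outside $\gamma$. The cleanest way to handle this is to track each original corolla $\mathbf{c}_v$ through the contraction: a vertex of $\Gamma$ not in $\gamma$ keeps all its external-edge half-edges in $\Gamma/\gamma$ (contraction only merges the endpoints of contracted internal edges and removes spurious 2-valent vertices arising from $\gamma$'s boundary), so its value of $\mathbf{val}(v)-2$ is unchanged, ensuring a forbidden vertex remains forbidden in whichever factor it lands. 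Once this invariance of admissibility under the coproduct is established, invoking Lemma~\ref{lem Hopf proj} does the rest with essentially no computation.
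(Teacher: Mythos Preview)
Your proposal is correct and follows essentially the same route as the paper: both hinge on the observation that every vertex of $\Gamma$ appears in either $\gamma$ or $\Gamma/\gamma$ with its valence unchanged, so a forbidden vertex survives in one tensor factor, which is precisely the hypothesis of Lemma~\ref{lem Hopf proj}. Your framing via the Hopf ideal $I_N$ is the standard packaging of ``quotient Hopf algebra'' and is a slight elaboration of what the paper leaves implicit; your worry about the 2-valent un-subdivision is a non-issue since that operation only removes the \emph{new} vertex created by shrinking a propagator-type $\gamma$, never an original vertex of $\Gamma$.
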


Here we understand that unit $\One$, counit $\hat{\One}$ and multiplication $m$ are taken from $H_{core}$, and the antipode is similarly defined using $\Delta_N$ by the iteration 
$S_N:=-m_H\circ (S_N\otimes P_{aug})\circ \Delta_N$ where $P_{aug}$ (not to be confused with $P_N$) is the projector into the augmentation ideal 
\[
\mathbf{Aug}(H_N):=\oplus_{j=1}^\infty  H_N^{(j)},
\] 
of $H_N=\oplus_{j=0}^\infty H_N^{(j)}$.

\begin{proof}
  The previous lemma gives that $\Delta_N$ is coassociative since for any graph $\Gamma$ and subgraph $\gamma$ of $\Gamma$, each vertex of $\Gamma$ appears in either $\gamma$ or $\Gamma/\gamma$ with degree unchanged, so if $\Gamma$ has a vertex not in $N$ then so does either $\gamma$ or $\Gamma/\gamma$.

  The remaining properties are immediate from the properties of the core Hopf algebra.
\end{proof}

\subsection{The Hopf algebra $H_{pC}$}
We will define the Hopf algebra $H_{pC}(\One,\hat{\One},m,\Delta_{pC},S_{pC})$. As  a vector space it is the span of pre-Cutkosky graphs $G$.
Given a pre-Cutkosky graph $\Gamma$ we will use the notation $\hat{\Gamma}$ for the underlying uncut graph and $\tilde{\Gamma}$ for the graph after the cuts.  See Section~\ref{sec precut} for details.

$H_{pC}$ can be graded by either $|\Gamma|:=|\hat{\Gamma}|$ or $\|\Gamma\|:=|\tilde{\Gamma}|$, the first Betti number of $\tilde{G}$.  When viewing $H_{pC}$ as a Hopf algebra in its own right, as we are doing in this section, the $|\Gamma|$ grading is more useful.  Later when we work with the coaction the $\|\Gamma\|$ grading is more useful.

The Hopf algebra structure is inherited from the core coproduct $\Delta_{core}$, using that any subgraph $\gamma$ when contracting its internal edges $e\in E_\gamma$ shrinks, $\gamma\to \gamma/E_\gamma$, to a $h_0(\gamma)$-partition of its external half-edges $L_\gamma$ and forms a cut corolla.
Concretely the cut corollas are formed as follows. If $|L_\gamma|=2$, in $\Gamma/\gamma$ the two half-edges $h_1(\gamma),h_2(\gamma)$ associated to $L_\gamma$ form an edge $\in C_{\Gamma/\gamma}$ (assuming $h_0(\gamma)=2$, else the edge is uncut in $\Gamma/\gamma$).  
If $|L_\gamma|>2$, $\gamma$ shrinks to a corolla $\mathbf{c}_v$ with a new vertex $v\in V_{\Gamma/\gamma}$ with a corresponding $h_0(g)$-partition of $\mathbf{c}_v$.

To go into this in more detail, a subgraph of $\hat{\Gamma}$ which is bridgeless and full at the vertices determines a pre-cut subgraph of $\Gamma$ (see Lemma~\ref{lem precut subs}), let $f_\Gamma$ be the map from such subgraphs of $\hat{\Gamma}$ to pre-cut subgraphs of $\Gamma$.  All the subgraphs appearing in the core coproduct are bridegless and full at the vertices.

Taking $\Delta_{core}(\hat{\Gamma})$ but replacing subgraphs $\gamma$ of $\hat{\Gamma}$ with $f_G(\gamma)$ and co-graphs $\hat{\Gamma}/\gamma$ with $\Gamma/f_G(\gamma)$ 
gives us the coproduct we want.

The rest of the properties of a Hopf algebra are directly inherited from $H_{core}$.
Figures~\ref{dPc} and \ref{dPcO} give some examples.
\begin{figure}[H]
\includegraphics[width=14cm]{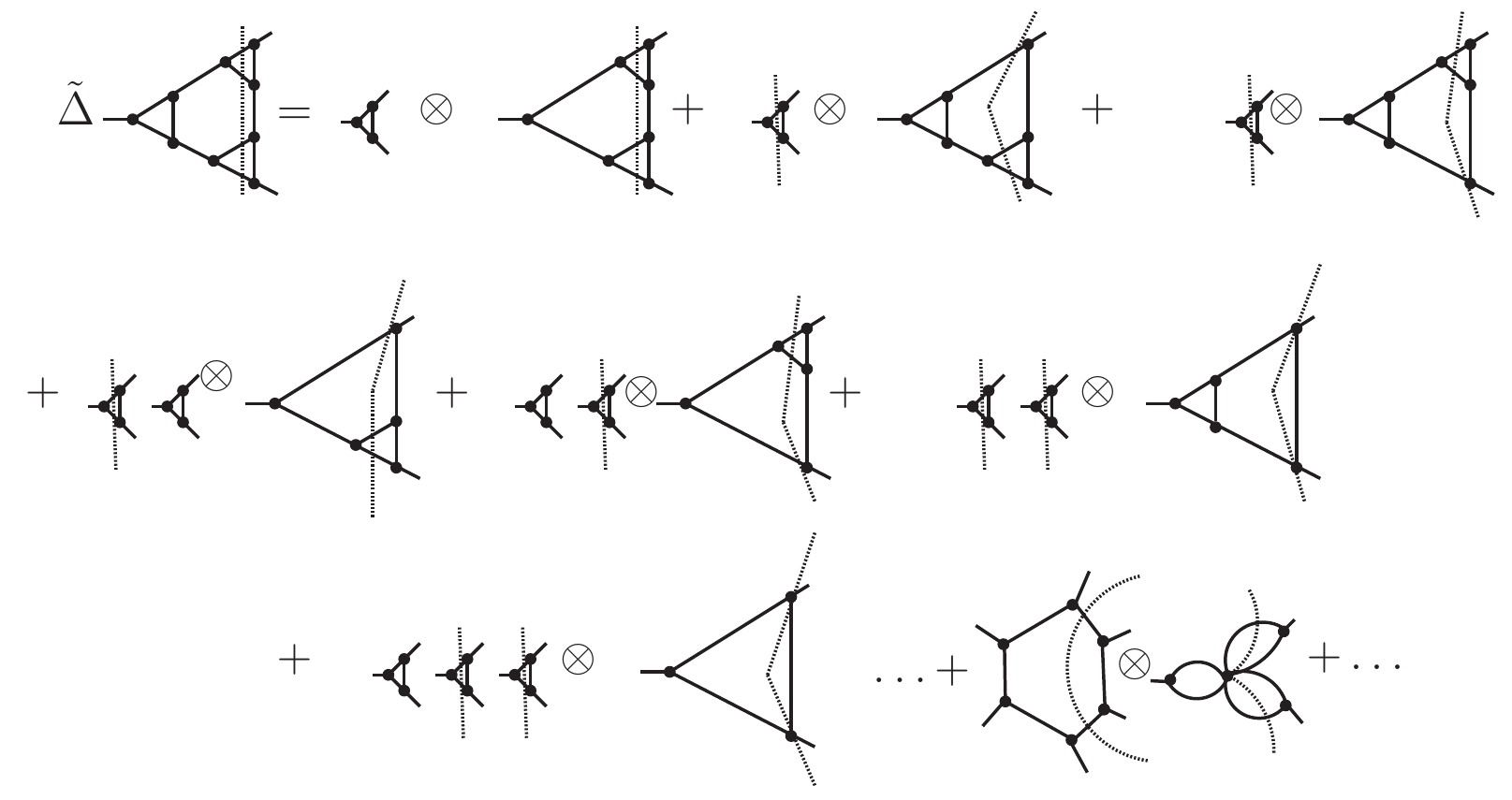}
\caption{The reduced coproduct for $\Gamma\in H_{pC}$. Even the coproduct of a graph $\Gamma$ which has no partitioned vertices has co-graphs with such partitioned vertices. Note that we omit --for space reasons-- to give all of the terms where co-graphs are generated with vertices of valence higher than three. Just one is indicated.
Similarly in the next Fig.(\ref{dPcO}).}
\label{dPc}
\end{figure}
\begin{rem}
  If we wish to impose normality conditions on the vertex cuts, that is we wish to view vertex cuts as successive refinements from a first cut into two pieces that has internal edges on both sides (see Remark~\ref{rem normal} and Section~\ref{sec co}), then at this point we need to take extra care as when we shrink a subgraph we may create new vertex cuts and we need to enforce that these are normal.

  First note that no other problems with normality can occur.  Specifically, if $\Gamma/\gamma$ satisfies the normality condition on the chains giving the refinement, then the subgraph and co-graph are both pre-cut graphs.  Furthermore since $\Gamma$ is pre-Cutkosky, every cut edge has its two ends in different components of $\tilde{\Gamma}$, so the same holds for $\gamma$ immediately, and for $\Gamma/\gamma$ it holds for all edges that are edges of $\Gamma$, and also for the edge coming from $\gamma$ since taking a compatible spanning forest of $\Gamma$ we see that the two ends of $\gamma$ are in different trees of this forest, and so the edge corresponding to them in $\Gamma/\gamma$ also has ends in two different components.

Now we will enforce the normality condition on the newly created co-graph vertices.  Let $P_{pC}$ be the projection which takes pairs $(\Gamma,H)$ with at least one non-normal vertex to $0$.  This projection is defined at the level of the preferred basis elements.  For each such $(\Gamma,H)$ and subgraph $(\gamma,h)$, each vertex of $\Gamma$ is either a vertex of $\gamma$ or of $\Gamma/\gamma$ with the same refinement.  Thus at least one of $\gamma$ or $\Gamma/\gamma$ also has a non-normal vertex.  Therefore by Lemma~\ref{lem Hopf proj}, taking $\Delta_{core}(\hat{\Gamma})$ but replacing subgraphs $\gamma$ of $\hat{\Gamma}$ with $P_{pC}(f_G(\gamma))$ and co-graphs $\hat{\Gamma}/\gamma$ with $P_{pC}(\Gamma/f_G(\gamma))$ gives us a coassociative map.
\end{rem}

\begin{figure}[H]
\includegraphics[width=14cm]{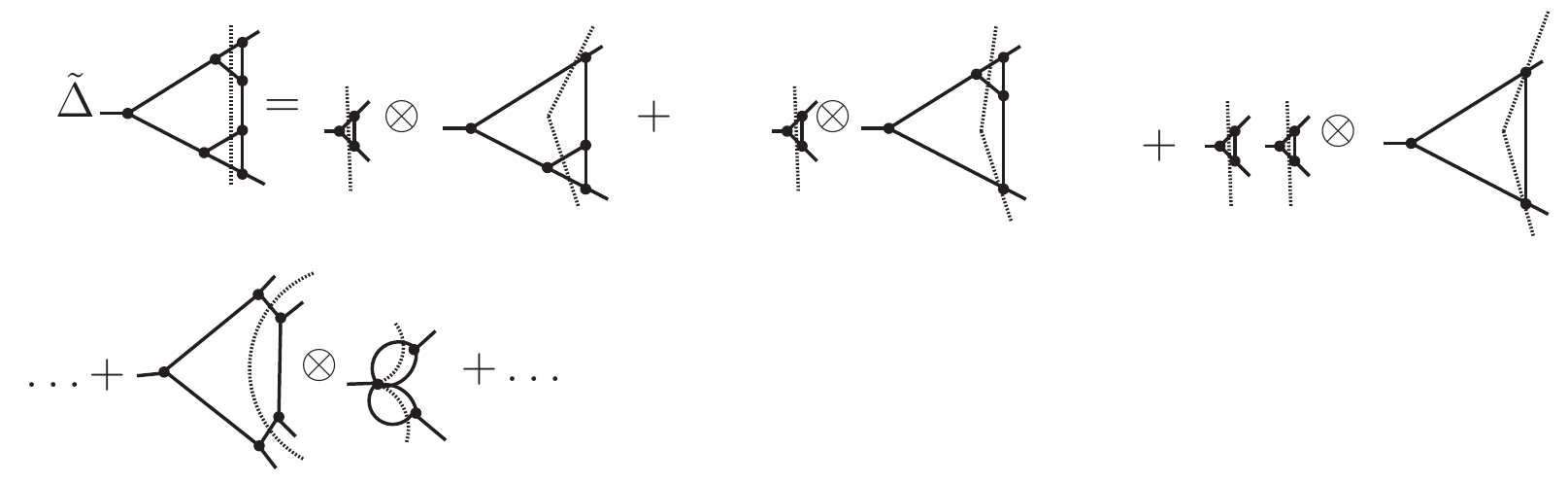}
\caption{The reduced coproduct for $\Gamma\in H_{pC}^0$, $\tilde{\Delta}_{pC}:H_{pC}^0\to H_{pC}^0\otimes H_{pC}^0$ with mostly only the terms involving vertices of degree 3  included, as before.}
\label{dPcO}
\end{figure}

Note that we also have the $\mathbb{Q}$-vector space $H_C$ generated by Cutkosky graphs.  We do not give $H_C$ a Hopf algebra structure as a co-graph of a Cutkosky graph by a Cutkosky subgraph may only be pre-Cutkosky.  This idea returns in Section~\ref{sec more coactions}.

\subsection{Pre-Cutkosky Necklaces}\label{pC necklaces}
Similarly to how core primitives can be represented by necklaces (see Remark~\ref{necklace}), pre-Cutkosky necklaces $\omega$  are the generating primitives for the Hopf algebra $H_{pC}$.

Consider a pre-Cutkosky graph $\gamma$ with $|\gamma|=1$. Such a $\gamma$ has has a unique spanning forest $F$ 
and defines a $h_0(F)$-partition $P_\omega$ of $L_\gamma$ of size $|L_\gamma|$.  Once again we can encode the features that matter for $\gamma$ using a necklace.  The only difference from the core case is that we need to have a bigger alphabet to keep track of the cuts of edges and vertices and how the external legs are partitioned across them.

There are multiple ways we could set this up with only inconsequential differences.  We will choose an alphabet with a letter for cut edges, a letter for uncut edges, and nonnegative integers as letters, indicating the number of external legs in each part of a vertex cut with the parts containing the internal edges coming first and last.  Note that this approach loses the order of the half edges at the cut vertex, but this will not be an important loss for us and it is consistent with the information we will need in order to index our Green functions in Section~\ref{DSE}.  If it were important to keep the order of the half edges at each vertex, then we could number the half edges incident to a vertex starting from the internal edge corresponding to the previous letter in the word, and then to represent the vertices in the word we'd need letters so that we could represent each arbitrary set partitions of $\{1,2,\ldots, m\}$.  This would be heavy but not a fundamental difficulty.

For example, using our convention, the graph on the right hand side of the last line of Figure~\ref{dPc} is represented by the necklace $1u01u10u$ where $u$ is the letter for uncut edges.

Each such necklace $\omega$ is a sequence alternating between $v_\gamma$ cut- or un-cut edges and of $v_\gamma$ cut- or un-cut vertices, $1\leq v_\gamma\leq |L_\gamma|$, where the cut vertices are given by a sequence of integers, one per part of the cut.

We can also view core necklaces (see Remark~\ref{necklace}) as pre-Cutkosky necklaces simply by adding the letter for the uncut edge between each $n_i$.

\subsection{Extension to a core coproduct for pairs $(\Gamma,F)$}
It is also possible to extend the Hopf algebra $H_{core}$ of graphs to a Hopf algebra $H_{GF}$ of pairs $(\Gamma,F)$ given by a graph $\Gamma$ and a spanning forest $F$ of $\Gamma$
\cite{MarkoDirk}.
The resulting coproduct is not directly analogous to the coproduct we defined on the pre-Cutkosky graphs, as the cut vertices were crucial for that definition.  The coproduct on $H_{GF}$ is more closely related to the coaction of the next section, while still being an honest coproduct.

 Let $\mathcal{F}_\Gamma$ be the set of all
spanning forests of $\Gamma$. The empty graph $\One$ has an empty spanning forest also denoted by $\One$.

 We define a $\mathbb{Q}$-Hopf algebra $H_{GF}$ for such pairs $(\Gamma,F)$ by setting 
\bea\label{HopfPairs}
\Delta_{GF}(\Gamma,F) & = & (\Gamma,F)\otimes (\One,\One)+(\One,\One)\otimes (\Gamma,F)+\nonumber\\
 &  & +\sum_{{\gamma\subsetneq \Gamma \atop F-(F\cap \gamma)\in \mathcal{F}_{\Gamma/\gamma}} \atop
 F\sim F-(F\cap \gamma)} (\gamma,\gamma\cap F)\otimes (\Gamma/\gamma, F-(F\cap \gamma)),
\eea
 where $\mathcal{F_\Gamma}$ is the set of all forests of $\Gamma$.  Additionally, by $F-(F\cap \gamma)\in \mathcal{F}_{\Gamma/\gamma}$ we mean to interpret the edges of $F-(F\cap \gamma)$ as a subgraph of $\Gamma/\gamma$ and then check if that subgraph is an element of $\mathcal{F}_{\Gamma/\gamma}$.  This ensures that 
only terms contribute such that $\Gamma/\gamma$ has a valid spanning forest.  Finally, 
by $F\sim F-(F\cap \gamma)$ we mean that the partition of external legs 
of $(\Gamma,F)$ and $(\Gamma/\gamma,F-(F\cap \gamma))$ are identical.

See Figure~\ref{deltaGF}  for an example.

\begin{figure}[H]
\includegraphics[width=14cm]{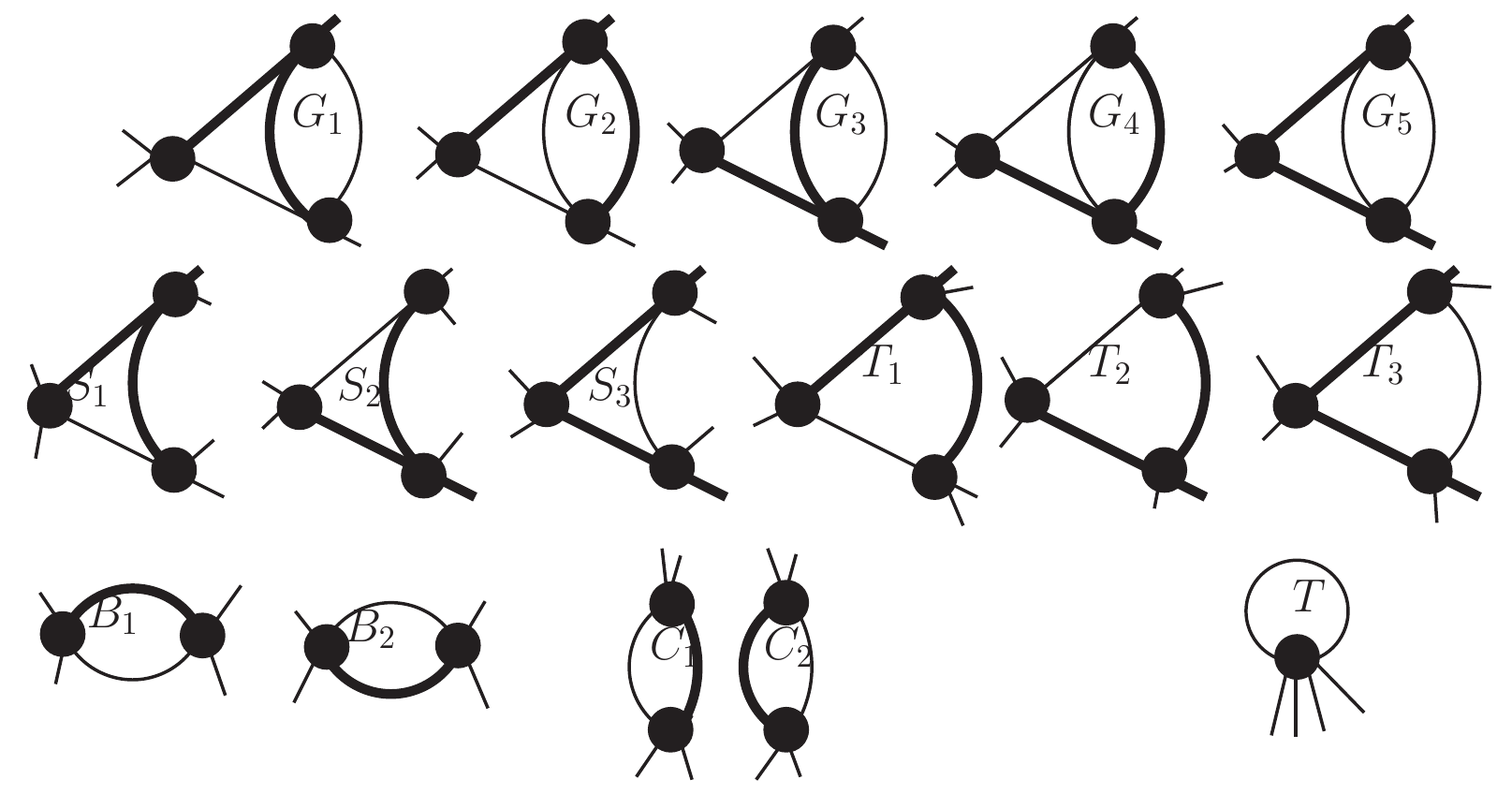}
\caption{The coproduct on $(G,F)$ for all five spanning trees of the Dunce's cap $G$.
We call the five pairs $G_i:=(G,F_i)$, where the spanning trees $F_i$ are indicated by bold lines.
We have for reduced coproducts $\tilde{\Delta}_{GF}$: 
$\tilde{\Delta}_{GF}(G_1)=S_1\otimes T+C_2\otimes B_1$, $\tilde{\Delta}_{GF}(G_2)=T_1\otimes T+C_1\otimes B_1$,
$\tilde{\Delta}_{GF}(G_3)=S_2\otimes T+C_2\otimes B_2$, $\tilde{\Delta}_{GF}(G_4)=T_2\otimes T+C_1\otimes B_2$ and
$\tilde{\Delta}_{GF}(G_5)=S_3\otimes T+ T_3\otimes T$. Note that in this graph, $T-(T\cap g)\not\in F_{G_5/g}$, where $g$ is the subgraph formed by the two edges not in $T$.
Also, note that this calculation agrees with Proposition~\ref{csptr}: $3\times 1+3\times 1+2\times 2=10=5\times 2$.
}
\label{deltaGF}
\end{figure}

We also define the commutative product to be
\[
m_{GF}((\Gamma_1,F_1),(\Gamma_2,F_2))=(\Gamma_1\dot{\cup} \Gamma_2,F_1\dot{\cup} F_2),
\]
whilst $\One_{GF}=(\One,\One)$ serves as the obvious unit which induces a counit 
through $\hat{\One}_{GF}(\One_{GF})=1$. 

\begin{thm}(Theorem 3.1 of \cite{MarkoDirk})
This is a graded commutative bi-algebra graded by $|\Gamma|$
and therefore a Hopf algebra $H_{GF}(\One_{GF},\hat{\One}_{GF},m_{GF},\Delta_{GF},S_{GF})$.
\end{thm}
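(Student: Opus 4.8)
The plan is to verify the bialgebra axioms directly and then invoke the standard fact that a connected graded bialgebra is automatically a Hopf algebra. The product $m_{GF}$ is disjoint union, hence manifestly associative, commutative, and unital with unit $\One_{GF}=(\One,\One)$, and $\hat{\One}_{GF}$ is the evident counit projecting onto the degree-zero part $\mathbb{Q}\One_{GF}$. So the real content is three points: that $\Delta_{GF}$ respects the loop-number grading, that $\Delta_{GF}$ is an algebra morphism, and --- the main work --- that $\Delta_{GF}$ is coassociative. Once these hold we have a graded connected bialgebra with $H_{GF}^{(0)}=\mathbb{Q}\One_{GF}$, which forces the antipode to exist and to be given by $S_{GF}(\One_{GF})=\One_{GF}$ together with
\[
S_{GF}(\Gamma,F)=-(\Gamma,F)-\sum_{\gamma} S_{GF}(\gamma,\gamma\cap F)\,(\Gamma/\gamma,F-(F\cap\gamma)),
\]
the sum being over the same $\gamma$ as in Equation~\eqref{HopfPairs}; this recursion is well defined by induction on $|\Gamma|$, giving the claimed Hopf structure.

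Gradedness is the quickest step: every $\gamma$ occurring in Equation~\eqref{HopfPairs} is bridgeless and full at its vertices, so an Euler-characteristic count gives $|\gamma|+|\Gamma/\gamma|=|\Gamma|$ (contracting the $e_\gamma$ internal edges of a $c$-component $\gamma$ deletes $v_\gamma$ vertices and replaces them by $c$ new ones, and the un-subdivision of $2$-valent vertices in the co-graph does not change the loop number). Hence $\Delta_{GF}(H_{GF}^{(n)})\subseteq\bigoplus_{i+j=n}H_{GF}^{(i)}\otimes H_{GF}^{(j)}$. Multiplicativity of $\Delta_{GF}$ then reduces to three observations: a subgraph of a disjoint union $\Gamma_1\dot{\cup}\Gamma_2$ is a disjoint union $\gamma_1\dot{\cup}\gamma_2$ of subgraphs; a forest $F=F_1\dot{\cup}F_2$ restricts and contracts componentwise; and the two side conditions $F-(F\cap\gamma)\in\mathcal{F}_{\Gamma/\gamma}$ and the external-leg condition $F\sim F-(F\cap\gamma)$ hold for $\Gamma$ iff they hold on each factor. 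Together these give $\Delta_{GF}\circ m_{GF}=(m_{GF}\otimes m_{GF})\circ(\mathrm{id}\otimes\tau\otimes\mathrm{id})\circ(\Delta_{GF}\otimes\Delta_{GF})$ with $\tau$ the flip.

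Coassociativity is the heart of the argument and the step I expect to be the main obstacle. I would expand both $(\Delta_{GF}\otimes\mathrm{id})\Delta_{GF}(\Gamma,F)$ and $(\mathrm{id}\otimes\Delta_{GF})\Delta_{GF}(\Gamma,F)$ and match terms along chains $\delta\subseteq\gamma\subseteq\Gamma$ of subgraphs. On the left, the inner coproduct produces subgraphs $\delta$ of $\gamma$; on the right, the inner coproduct produces subgraphs of the co-graph $\Gamma/\gamma$, which are in bijection with subgraphs $\gamma'$ satisfying $\gamma\subseteq\gamma'\subseteq\Gamma$ via $\bar\delta=\gamma'/\gamma$, with $(\Gamma/\gamma)/\bar\delta=\Gamma/\gamma'$. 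Identifying the two chains produces the common triple $(\delta,\delta\cap F)\otimes(\gamma/\delta,(\gamma\cap F)-(\delta\cap F))\otimes(\Gamma/\gamma,F-(F\cap\gamma))$ on both sides. The genuinely delicate part is that the forest data and the side conditions must agree under this identification. Concretely I would check the compatibilities $(F\cap\gamma)\cap\delta=F\cap\delta$ and $(\gamma'/\gamma)\cap(F-(F\cap\gamma))=(F\cap\gamma')-(F\cap\gamma)$ so that the three forest pieces agree whichever way they are computed, and --- most importantly --- that the conditions $F-(F\cap\gamma)\in\mathcal{F}_{\Gamma/\gamma}$ and $F\sim F-(F\cap\gamma)$ are \emph{transitive along the chain}, i.e.\ a chain $\delta\subseteq\gamma\subseteq\Gamma$ indexes a nonzero term of the left iterate exactly when it indexes one of the right iterate. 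Transitivity of the external-leg relation $\sim$ and of the ``contracts to a valid spanning forest'' condition are precisely the facts that are sensitive to conventions, so establishing them carefully is where the work lies; this is the content of Theorem~3.1 of \cite{MarkoDirk}, whose argument I would follow.
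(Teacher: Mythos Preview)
Your proposal is correct and aligns with the paper, which does not give an independent proof but simply observes that the structure is ``essentially inherited from the core Hopf algebra'' and refers to Theorem~3.1 of \cite{MarkoDirk}; your sketch is precisely the direct verification that reference carries out, and you rightly flag the transitivity of the forest side conditions along chains $\delta\subseteq\gamma\subseteq\Gamma$ as the one nontrivial point. The only nuance worth adding is that the paper's phrase ``inherited from the core Hopf algebra'' is the conceptual shortcut for your chain argument: coassociativity on bare graphs is already known for $H_{core}$, and the forest compatibility conditions are stable under the same bijection $\delta\subseteq\gamma\leftrightarrow(\gamma/\delta\subseteq\Gamma/\delta)$ that proves coassociativity there, so nothing genuinely new needs to be checked beyond bookkeeping.
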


This is essentially inherited from the core Hopf algebra, see Theorem 3.1 of \cite{MarkoDirk} for the proof.

\subsection{Counting spanning trees}\label{sec count sp tr}
It is often useful to count the number of spanning trees of a graph to control, for example, the number of Hodge matrices describing the analytic structure of an evaluated Feynman graph.  This was used in \cite{coaction} to determine all the Hodge matrices generated from variations of external momenta, in \cite{MarkoDirk}
to determine the number of terms generated from integrating out energy integrals, and 
is used below in Section~\ref{SecDec} to determine the number of sectors in a sector decomposition of physics amplitudes.

So we let  $\mathit{spt}(\Gamma)=|\mathcal{T}(\Gamma)|$ be the number of spanning trees of $\Gamma$,
$\mathit{spt}:H_{core}\to \mathbf{N}$, 
and define $\mathbf{spt}:H_{core}\to \mathbf{N}$, $\mathbf{spt}(\Gamma):=\mathit{spt}(\Gamma)|\Gamma|!$.

\begin{prop}\label{csptr}
  \mbox{}
  \begin{enumerate}
    \item 
\[
\mathbf{spt}(\Gamma)=\sum_{|\Gamma^\prime|=1} \mathbf{spt}(\Gamma^\prime)\mathbf{spt}(\Gamma/\Gamma'),
\]
and
\item
if $|\Gamma|=1$ and $\Gamma$ is bridgeless we have $\mathbf{spt}(\Gamma)=spt(\Gamma)=e_\Gamma$ while for $|\Gamma|>1$
\[
\mathbf{spt}(\Gamma)=m^{|\Gamma|-1}\mathbf{spt}^{|\Gamma|}\tilde{\Delta}_{core}^{|\Gamma|-1}(\Gamma)=m^{|\Gamma|-1}\mathit{spt}^{|\Gamma|}\tilde{\Delta}_{core}^{|\Gamma|-1}(\Gamma).
\]
\end{enumerate}
\end{prop}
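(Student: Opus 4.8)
The plan is to prove part~(1) by an explicit bijection and then to deduce part~(2) by induction on the loop number, feeding part~(1) into a grading argument for the iterated reduced coproduct.

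First I would reinterpret $\mathbf{spt}(\Gamma)=\mathit{spt}(\Gamma)\,|\Gamma|!$ combinatorially. For a connected graph the number of non-tree edges relative to any spanning tree is exactly $|\Gamma|=e_\Gamma-v_\Gamma+1$, so $\mathbf{spt}(\Gamma)$ counts pairs $(T,\pi)$ consisting of a spanning tree $T$ together with a linear ordering $\pi=(f_1,\dots,f_{|\Gamma|})$ of the non-tree edges. On the right-hand side of~(1) the one-loop subgraphs $\Gamma'$ are precisely the cycles of $\Gamma$ (taken as edge subsets, full at the vertices); since $|\Gamma'|=1$ we have $\mathbf{spt}(\Gamma')=\mathit{spt}(\Gamma')=e_{\Gamma'}$ by part~(2)(a), counting the ways to delete one edge from the cycle, while $\mathbf{spt}(\Gamma/\Gamma')$ again counts (spanning tree, ordering) pairs of the cograph, which has loop number $|\Gamma|-1$. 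Thus the right-hand side counts quadruples $(\Gamma',\hat e,T'',\pi'')$ with $\Gamma'$ a cycle, $\hat e$ a distinguished edge of $\Gamma'$, $T''$ a spanning tree of $\Gamma/\Gamma'$, and $\pi''$ an ordering of its $|\Gamma|-1$ non-tree edges.

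The heart of the matter is a bijection between these two sets. Given $(T,\pi)$, I would peel off the first non-tree edge $f_1$ and take $\Gamma'=l(T,f_1)$, the fundamental cycle; set $\hat e=f_1$, let $T''$ be the image of $T$ in $\Gamma/\Gamma'$ (that is, $T$ with the tree-path $\Gamma'\setminus\{f_1\}$ contracted), and $\pi''=(f_2,\dots,f_{|\Gamma|})$. Conversely, from $(\Gamma',\hat e,T'',\pi'')$ I would lift $T''$ to $\Gamma$, adjoin the path $\Gamma'\setminus\hat e$, and prepend $\hat e$ to the ordering; since $\Gamma'\setminus\hat e$ is the $T$-path between the ends of $\hat e$ one has $l(T,\hat e)=\Gamma'$, so the maps are mutually inverse. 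The step I expect to be the main obstacle is the contraction convention: for the lift of $T''$ to be unambiguous, the cograph $\Gamma/\Gamma'$ must be the ordinary edge contraction, with edges outside $\Gamma'$ corresponding bijectively to edges of the cograph. I would check carefully that the suppression of two-valent vertices used for the Hopf-algebra cograph elsewhere is \emph{not} applied here, since un-subdividing a two-valent vertex changes the spanning-tree count; a small example (contracting a bigon whose contraction vertex becomes two-valent) confirms that the identity holds for ordinary contraction and would fail under un-subdivision. The remaining routine point is that the restriction $T\mapsto(T\cap\Gamma',\,T/\Gamma')$ is a bijection onto (spanning tree of the cycle) $\times$ (spanning tree of $\Gamma/\Gamma'$), which uses that $\Gamma'$ is connected.

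Part~(2)(a) is immediate: if $|\Gamma|=1$ then $\mathbf{spt}(\Gamma)=\mathit{spt}(\Gamma)\cdot 1!=\mathit{spt}(\Gamma)$, and a bridgeless one-loop graph is a single cycle whose spanning trees delete any one of its $e_\Gamma$ edges. For part~(2)(b) I would induct on $L:=|\Gamma|$, the base case $L=1$ being trivial as $\tilde\Delta_{core}^{0}=\mathrm{id}$ and $m^{0}=\mathrm{id}$. The key grading observation is that $\tilde\Delta_{core}$ sends $H_{core}^{(L)}$ into $\bigoplus_{a+b=L,\ a,b\geq 1}H_{core}^{(a)}\otimes H_{core}^{(b)}$, so the iterate $\tilde\Delta_{core}^{L-1}$ lands in $(H_{core}^{(1)})^{\otimes L}$; moreover $\tilde\Delta_{core}^{L-1}(\Gamma)=\sum_{|\gamma|=1}\gamma\otimes\tilde\Delta_{core}^{L-2}(\Gamma/\gamma)$, because any $\gamma$ with $|\gamma|>1$ forces $|\Gamma/\gamma|<L-1$ and is annihilated by $\tilde\Delta_{core}^{L-2}$. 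Applying $m^{L-1}\circ\mathbf{spt}^{\otimes L}$ and peeling off the first factor gives $\sum_{|\gamma|=1}\mathbf{spt}(\gamma)\,\big(m^{L-2}\,\mathbf{spt}^{\otimes(L-1)}\,\tilde\Delta_{core}^{L-2}(\Gamma/\gamma)\big)$, which by the inductive hypothesis equals $\sum_{|\gamma|=1}\mathbf{spt}(\gamma)\,\mathbf{spt}(\Gamma/\gamma)$, and this is $\mathbf{spt}(\Gamma)$ by part~(1). Finally, the two displayed right-hand sides agree because $\mathbf{spt}=\mathit{spt}$ on one-loop graphs, where the factorial equals $1$.
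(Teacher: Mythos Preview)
Your approach is essentially the paper's: it counts pairs $(T,e)$ of a spanning tree and a single non-tree edge (giving $\mathit{spt}(\Gamma)\,|\Gamma|$ directly and $\sum_{|\Gamma'|=1}\mathit{spt}(\Gamma')\,\mathit{spt}(\Gamma/\Gamma')$ via fundamental cycles) and then multiplies by $(|\Gamma|-1)!$, whereas you fold the factorial into a full ordering $\pi$ from the start---but the underlying bijection and the inductive iteration for part~(2) are the same. Your caution about the two-valent un-subdivision convention is well taken and is a point the paper does not make explicit; the identity is indeed meant for ordinary edge contraction.
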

\begin{proof}
  \begin{enumerate}
  \item 
    Recall the notion of fundamental cycle from Section~\ref{sec graph summary}.
    
  Let us count $(T,e)$ pairs with $T\in \mathcal{T}(\Gamma)$ and $e\in E_\Gamma\setminus E_T$ in two different ways.  Counting directly, there are $\mathit{spt}(\Gamma) |\Gamma|$ such pairs.  Now we will count $(T,e)$ pairs based on the fundamental cycles.  Each cycle $C$ can appear as a fundamental cycle for any edge $e$ in $C$ and any spanning tree formed from a spanning tree of $\Gamma/C$ along with the edges of $C\setminus e$.  So $C$ is the fundamental cycle for $|C|\textit{spt}(\Gamma/C)$ $(T,e)$ pairs. So there are $\sum_{|\Gamma^\prime|=1} \mathit{spt}(\Gamma^\prime)\mathit{spt}(\Gamma/\Gamma')$ $(T,e)$ pairs in all.  Thus we have
  \[
  \mathit{spt}(\Gamma) |\Gamma| = \sum_{|\Gamma^\prime|=1} \mathit{spt}(\Gamma^\prime)\mathit{spt}(\Gamma/\Gamma').
  \]
  Multiplying both sides by $(|\Gamma|-1)!$ gives the result.
  
  \item The $|\Gamma|=1$ case is immediate as a bridgeless graph with $|\Gamma|=1$ is simply a cycle.  The first equality follows from iterating part $i)$.  To see the same argument directly, note  For any $T\in\mathcal{T}(\Gamma)$ the basis of fundamental cycles $\{l_1, \ldots, l_{|\Gamma|}\}$ can be ordered in $|\Gamma|!$ ways corresponding exactly to the $|\Gamma|!$ flags generated by  
$\tilde{\Delta}^{|\Gamma|-1}(\Gamma)$
\[
l_{i_1}\otimes l_{i_2}/E_{l_{i_1}}\otimes\cdots\otimes l_{i_{|\Gamma|}}/(\cup_{j=1}^{|\Gamma|-1}E_{l_j}).
\]

Since the $\mathbf{spt}$ on the right of the first equality only acts on one loop graphs it can be replaced by $\mathit{spt}$.
\end{enumerate}
\end{proof}
See Figure~\ref{deltaGF} which provides an example of the notions introduced above.

\begin{rem}
  We can combine the various Hopf algebras with spanning forests.
  In particular we can extend the Hopf algebra $H_{pC}$ to $H_{pC,GF}$ for pairs $(G,F)$ with $\Gamma$ pre-Cutkosky and a compatible $F$.  This does not contain further information than $(\hat{\Gamma},F)$, but gives a different viewpoint since a smaller set of forests is compatible with a pre-Cutkosky graph $\Gamma$ than the set of all spanning forests of $\hat{\Gamma}$.  This is because there can be different pre-Cutkosky graphs $\Gamma_1$ and $\Gamma_2$ with $\hat{\Gamma}_1 = \hat{\Gamma}_2$.  Running over all pre-Cutkosky graphs does give all spanning forests of each uncut graph and so $H_{pC,GF}$ is isomorphic to $H_{GF}$, but by writing $G_{pC,GF}$ we are emphasizing collecting forests by which cut they give.
  
All Hopf algebras $H_N$ also extend to Hopf algebras $H_{N,GF}$ for pairs of graphs and spanning forests using Equation~\ref{deltaGF} and projecting to graphs in $H_N$ on both sides of the coproduct.
\end{rem}

\section{Coactions}\label{coactions}
There are various coactions of physical relevance. In particular, the core Hopf algebra $H_{core}$ (as well as each of its quotient Hopf algebras of interest to us) coacts on the Hopf algebra of cut graphs $H_{pC}$,
an algebraic manifestation of basic assumptions as locality of counterterms and the existence of operator product expansions on which QFT is based.

Furthermore, the Hopf algebra structure of graphs also implies that $H_{pC}$ coacts on 
Cutkosky graphs in $H_C$ which gives an iterative structure to dispersion relations. 
$H_{pC}$ also coacts on graphs in $H_{nC}$ whose variations appear 
on non-principal sheets. The latter two coactions will be investigated in greater detail in future work.

In Appendix~\ref{appC} we collect properties of Feynman rules which clarify how the coactions discussed here connect to physics.
\subsection{$H_{pC}\to H_{core}\otimes H_{pC}$}\label{hcorehpc}
This coaction ensures that we can renormalize uncut graphs as usual and that cluster separation is respected in the scattering asymptotics  (see Remark~\ref{remclustersep}). The coaction is based on $\Delta_{core}$ lifted to graphs in $H_{pC}$ by taking only core (that is uncut) subgraphs as allowable subgraphs in the coproduct.  We will now describe this coaction in more detail.

A small technical lemma will be helpful, as all the variants of this coaction that we might use will have the same form.
\begin{lem}
  Let $H$ be a Hopf algebra of graphs and $B$ a vector subspace of $H$ and additionally suppose that the coproduct of $H$ is of the form 
  \[
  \Delta(h) = \sum_{\substack{j\subseteq h \\ j, h/j \in B}}j\otimes h/j
  \]
   on basis elements.  
  Let $J$ be a vector space of graphs with $B$ as a subspace.

  Then $\bar{\Delta}(J) \rightarrow H\otimes J$ given by
  \[
  \bar{\Delta}(j) = \sum_{\substack{k\subseteq j \\ k\in B, j/k\in J}} k\otimes j/k 
  \]
  is a coaction provided that whenever $k\subseteq \ell \subseteq j$ with $j\in J, j\in B, \ell/k \in B$ and $j/\ell \in J$ then $\ell\in B \Leftrightarrow j/k\in J$.
\end{lem}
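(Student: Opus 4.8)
The plan is to verify directly that $\bar\Delta$ satisfies the two defining axioms of a (left) $H$-coaction on $J$: the counit axiom $(\hat{\One}\otimes\mathrm{id})\bar\Delta=\mathrm{id}$ (under the identification $\Q\otimes J\cong J$) and the coassociativity axiom $(\Delta\otimes\mathrm{id})\bar\Delta=(\mathrm{id}\otimes\bar\Delta)\bar\Delta$. The counit axiom is immediate: applying $\hat{\One}$ to the first tensor factor of $\bar\Delta(j)=\sum_{k\subseteq j,\,k\in B,\,j/k\in J}k\otimes j/k$ kills every term except the one with $k=\One$, since $\hat{\One}$ annihilates nonempty graphs; that surviving term contributes $j/\One=j$, using $\One\in B$ and $j\in J$. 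So the real content is coassociativity, which I would check by expanding both sides on a basis element $j\in J$ and matching them term by term.

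For the left-hand side I would substitute the given form of $\Delta$ into the first factor of $\bar\Delta(j)$. Writing the outer split as $\ell$ and then splitting $\ell$ via $\Delta(\ell)=\sum_{k\subseteq\ell,\,k,\,\ell/k\in B}k\otimes\ell/k$, this produces
\[
(\Delta\otimes\mathrm{id})\bar\Delta(j)=\sum_{k\subseteq\ell\subseteq j} k\otimes \ell/k\otimes j/\ell,
\]
where the sum runs over nested subgraphs subject to the conditions $\ell\in B$, $j/\ell\in J$, $k\in B$ and $\ell/k\in B$.

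For the right-hand side I would substitute $\bar\Delta$ into the second factor. The crucial step is the reindexing of $\bar\Delta(j/k)$: subgraphs $m$ of the contraction $j/k$ correspond bijectively to subgraphs $\ell$ with $k\subseteq\ell\subseteq j$ via $m=\ell/k$, and the iterated contraction identity $(j/k)/(\ell/k)=j/\ell$ lets me rewrite the resulting cographs. This yields
\[
(\mathrm{id}\otimes\bar\Delta)\bar\Delta(j)=\sum_{k\subseteq\ell\subseteq j} k\otimes \ell/k\otimes j/\ell,
\]
now subject to $k\in B$, $j/k\in J$, $\ell/k\in B$ and $j/\ell\in J$. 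The two expressions carry identical summands $k\otimes\ell/k\otimes j/\ell$, so equality of the maps reduces to equality of the two index sets. Both share the standing conditions $k\in B$, $\ell/k\in B$, $j/\ell\in J$ (with $j\in J$); the left side additionally demands $\ell\in B$, while the right side additionally demands $j/k\in J$. Hence the two index sets coincide exactly when, for every nested triple $k\subseteq\ell\subseteq j$ meeting the common conditions, one has $\ell\in B\Leftrightarrow j/k\in J$ — which is precisely the hypothesis of the lemma. This final matching step is the same ``use the hypothesis to equate two expressions'' maneuver as in the proof of Lemma~\ref{lem Hopf proj}.

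I expect the main obstacle to lie in the right-hand expansion rather than the algebra: one must carefully establish the subgraph-of-quotient bijection $m\leftrightarrow\ell/k$ together with the identity $(j/k)/(\ell/k)=j/\ell$ in the present half-edge graph setting (these are the graph-contraction facts underlying coassociativity of $\Delta_{core}$ itself), and then track precisely which membership conditions survive in each index set so that the leftover pair $\ell\in B$ versus $j/k\in J$ lines up cleanly with the stated biconditional. Once that bookkeeping is pinned down, the proof is a direct comparison.
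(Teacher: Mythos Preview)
Your proposal is correct and takes essentially the same approach as the paper: expand both $(\Delta\otimes\mathrm{id})\bar\Delta(j)$ and $(\mathrm{id}\otimes\bar\Delta)\bar\Delta(j)$ as sums over nested triples $k\subseteq\ell\subseteq j$, list the membership constraints each side imposes, and observe that the only discrepancy is $\ell\in B$ versus $j/k\in J$, which the hypothesis resolves. The paper's version is terser (it does not explicitly write out the reindexing bijection $m\leftrightarrow\ell/k$ or the identity $(j/k)/(\ell/k)=j/\ell$, taking these as inherited from the coassociativity of the underlying graph coproduct), but the argument is the same.
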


\begin{proof}
  The counital property is straightforward.  To show coassociativity, consider  $k\subseteq \ell \subseteq j$.  In order to obtain $k\otimes \ell/k \otimes j/\ell$ by $(\text{id}\otimes \bar{\Delta})\bar{\Delta}$ we must have $k\in B$, $j/k\in J$, $\ell/k \in B$, and $j/\ell \in J$.  In order to obtain $k\otimes \ell/k \otimes j/\ell$ by $(\Delta \otimes \text{id})\bar{\Delta}$ we must have $\ell\in B$, $j/\ell\in J$, $k \in B$, and $\ell/k\in B$.  So to obtain the same terms in both directions we must have that if $j\in J, j\in B, \ell/k \in B$ and $j/\ell \in J$ then $\ell\in B \Leftrightarrow j/k\in J$.
\end{proof}

Returning to the coaction of $H_{core}$ on $H_{pC}$,
$H_{core}$, regarded as a sub-vectorspace of $H_{pC}$, is embedded in $H_{pC}$ by viewing $\Gamma\in H_{core}$ as $(\Gamma, \Gamma) \in H_{pC}$, which we will do below without further comment. In fact 
\[
\Gamma \in H_{pC}\cap H_{core}\Leftrightarrow |\Gamma|=||\Gamma||.
\]
Of particular interest for this coaction are the graphs where $|\Gamma|\gneq ||\Gamma||$. Consider the vectorspace $H_{pC}^>$ generated by such graphs.
Note that $H_{pC}^>\subsetneq \mathbf{Aug}_{pC}$, where $\mathbf{Aug}_{pC}$ is the augmentation ideal of $H_{pC}$.

The coproduct $\Delta_{core}$ of $H_{core}$ has then a natural extension $\bar{\Delta}_{core}$ which coacts on $H_{pC}^>$:
\[
\bar{\Delta}_{core}: H_{pC}^>\to H_{core}\otimes H_{pC}^>,
\]
\[
\bar{\Delta}_{core}(\Gamma)=\One\otimes \Gamma+\sum_{\gamma\in H_{core}} \gamma\otimes \Gamma/\gamma,
\]
where the sum is over all $\gamma\in H_{core}$ such that $H_{core}\ni \gamma\subsetneq \Gamma\in H_{pC}^>$.  Note that $(\hat{\One}\otimes \mathrm{id})\bar{\Delta}_{core}=\mathrm{id}$ as it must.
 
Note that $\Gamma/\gamma\in H_{pC}^>$ by construction. One has
\[
(\Delta_{core}\otimes\mathrm{id}) \bar{\Delta}_{core}=(\mathrm{id}\otimes \bar{\Delta}_{core})\bar{\Delta}_{core}: H_{pC}\to H_{core}\otimes H_{core}\otimes H_{pC},
\]
since the condition of the lemma is satisfied because if $\ell/k$ and $k$ are core then $\ell$ is core, and contraction of a core subgraph in a pre-Cutkosky graph always gives a pre-Cutkosky graph.
Consequently, this is a coaction.

\begin{rem}\label{hccoact}
A similar coaction exists for Cutkosky graphs $\Gamma\in H_C^>\subsetneq H_{pC}^>$:
\[
\bar{\Delta}_{core}: H_C^>\to H_{core}\otimes H_C^>.
\]
This is a coaction since only contracting un-cut subgraphs cannot create cut vertices and so the condition of the lemma holds.  As above we can also view this coaction on all of $H_C$, $\bar{\Delta}_{core}:H_C \to H_{core}\otimes H_C$.

Additionally, letting $H_C^{\|j\|}$ be the subspace of $H_C$ spanned by graphs with $\|G\|=j$, we have $H_C^>\subsetneq H_C$ and $H_C$ can be written as a direct sum $H_C=\bigoplus_{j=0}^\infty H_C^{||j||}$.  A graph $\Gamma\in H_C^{||0||}$ has Cutkosky cuts such that no loop is left intact.  This is the vectorspace of almost leading singularities. 
Note that $H_C^{||0||}={H_C^>}^{||0||}$ whilst for $j\gneq 0$ we have
${H_C^>}^{||j||}\subsetneq {H_C}^{||j||}$, the difference 
${H_C}^{||j||}\setminus {H_C^>}^{||j||}=H_{core}^{(j)}$, the space of $j$-loop core graphs. 

Here we call a Cutkosky graph leading if all its edges are cut so that they are evaluated on-shell, and almost leading if the cut edges ensure that no loop is left intact.  This parallels the language of leading and almost or weakly leading singularities.
\end{rem}
\begin{rem}\label{coactthm}
There is an obvious result that renormalization is compatible with these coactions.  Specifically,
\[
\bar{\Phi}(\Gamma):=m_{\mathbb{C}}(S_R^\Phi\otimes\Phi)\bar{\Delta}_{core}(\Gamma),
\]
exists for any Cutkosky graph $\Gamma\in H_C^>$ and for any map
$S_R^\Phi:H_{core}\to\mathbb{C}$
which is a counterterm  $S_R^\Phi:=-R(S_R^\Phi\star \Phi\circ P)$. See Sections~\ref{parametricFR} and \ref{momspaceFR} for information on Feynman rules and counterterms.

From a physicist's viewpoint this is straightforward. Indeed,
$\Gamma\in H_C^>$ is overall convergent so that a renormalization of its subgraphs with loops left intact suffices \cite{CK}.  

Overall convergence follows from the fact that graphs $\Gamma\in H_C^{||0||}$ provide an integrand which is to be integrated over a compact domain only \cite{DirkEll,MarkoDirk}. For the definition of the renormalization scheme $R$ we refer the reader to \cite{MarkoDirk,BrownKreimer} and Appendix~\ref{appC}. Note that renormalization as needed here uses the core Hopf algebra to provide counterterms for graphs with vertices of any valence which emerge from shrinking edges.\footnote{Shrinking edges increases the overall degree of divergence and hence alters the action of $R$ accordingly. See Remark~\ref{higherdegreediv}.}
\end{rem}

Furthermore,
\begin{rem}
  Let $\tilde{\bar{\Delta}}_{core} = \bar{\Delta}_{core} - \One\otimes \mathrm{id}$ applied to $H_{pC}^>$.  Then
\[
\tilde{\bar{\Delta}}_{core}^{||\Gamma||}(\Gamma)
\] 
provides a flag of $||\Gamma||+1$
graphs, where the rightmost tensor factor is an element of $H_C^{||0||}$.
\end{rem}
Figure~\ref{bardelta} shows how $\bar{\Delta}_{core}$ acts in an example.
\begin{rem}\label{remclustersep}
We can let $P_0:H_{pC}\to H_{pC}^{||0||}$ be the projector to pre-Cutkosky graphs with no loop left intact, and $\bar{\Delta}_0:=(\mathrm{id}\otimes P_0)\bar{\Delta}_{core}$. Then for $\Gamma\in H_{pC}^>$, 
$\bar{\Delta}_0(\Gamma)=x\otimes y$, $x\in H_{core}$, $y\in H_{pC}^{||0||}$.
This defines an iterated integral iterating $\Phi_R(x)$ into $\Phi(y)$
to obtain $\Phi_R(\Gamma)$, see Appendices \ref{parametricFR} and \ref{momspaceFR} as well as \cite{MarkoDirk}. 

Furthermore the map
\begin{equation}\label{eq disp eq}
\tilde{\Delta}_{pC}^{(h_0(\tilde{\Gamma})-2)}(y),
\end{equation}
decomposes $y$ into clusters related by dispersion relations, a subject of future work. $S_R^\Phi(x)$ provides all counterterms to render $\Phi_R(\Gamma)$ finite.  In brief, subdivergences separated by cuts can be renormalized independently, and this is captured algebraically by the relation between applying $\tilde{\Delta}_{pC}$ and then $\bar{\Delta}_{core}$ to each component and applying $\bar{\Delta}_{core}$ followed by $\tilde{\Delta}_{pC}$.

If the pre-Cutkosky graph $\Gamma$ gives a 2-partition of its set of external edges $L_\Gamma$,
$h_0(\tilde{\Gamma})=2$, then the map of \eqref{eq disp eq} is the identity and the dispersion is with respect to the momentum flow attributed to a normal cut.
\end{rem}
\begin{rem}\label{corevspc}
Accordingly we can generate all graphs in $H_{pC}$ by dressing all uncut edges and all uncut vertices of $H_{pC}^{||0||}$ by graphs from $H_{core}$.

The dressing of graphs in $H_{pC}^{||0||}$ by graphs from 
$H_{ren}\subsetneq H_{core}$ also corresponds to the identification and separation of hard diagrams from the infrared singularities of diagrams in $H_{pC}^{||0||}$,
see for example Chapter 12 in \cite{Sterman}.
This continues to hold for full Green functions as Equation~\ref{primitivesgpc} below exhibits.
\end{rem}

\begin{figure}[h]
\includegraphics[width=14cm]{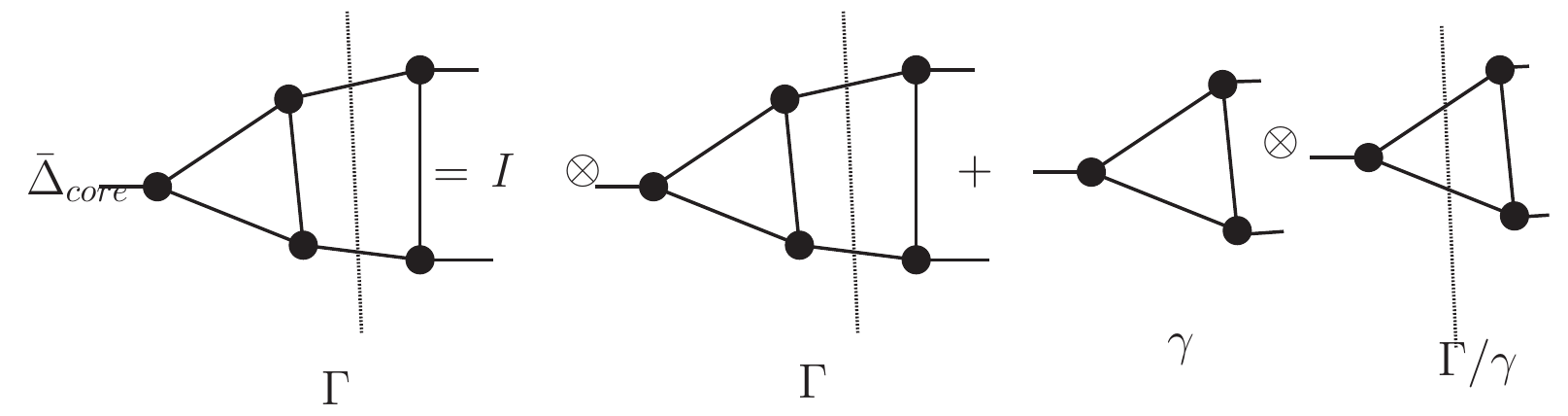}
\caption{The coaction $\bar{\Delta}_{core}$, $\bar{\Delta}(\Gamma)=\One\otimes \Gamma+\gamma\otimes \Gamma/\gamma$. Evaluating the rhs by $m\circ (S_R^\Phi\otimes\Phi)$ one gets a finite result $\Phi(\Gamma)-\Phi_0(\gamma)\Phi(\Gamma/\gamma)$ with $-\Phi_0(\gamma)$ the counterterm for the triangle subgraph $\gamma\in H_{core}$.
This uses that $\Gamma\in H_{pC}$ is overall convergent thanks to the presence of a cut.}
\label{bardelta}
\end{figure}
On the same graph as in Figure~\ref{bardelta}, $\Delta_{pC}$ acts as in Figure~\ref{tildedeltapC}.
\begin{figure}[h]
\includegraphics[width=14cm]{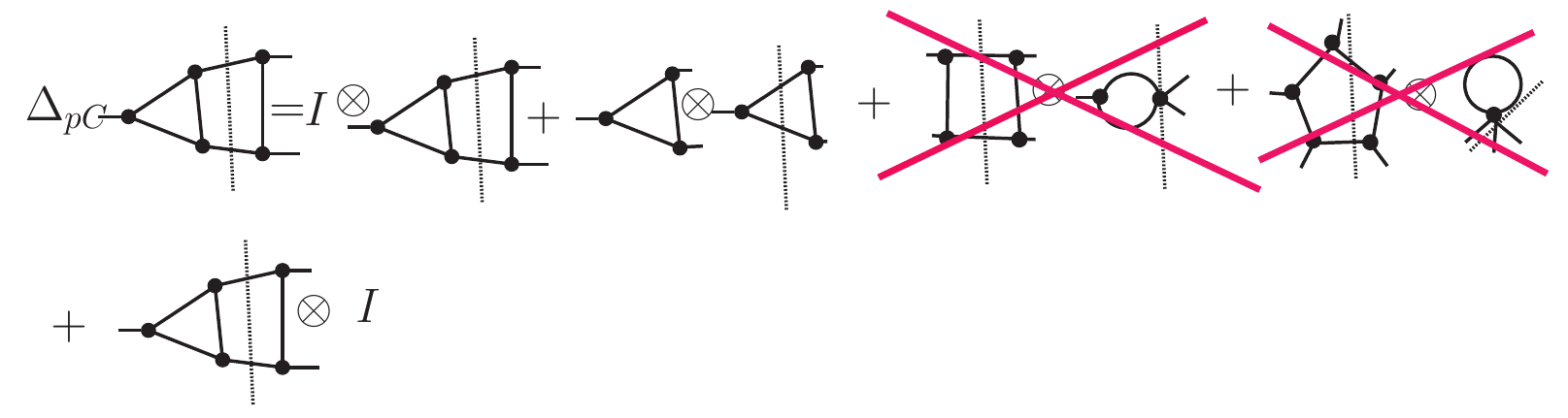}
\caption{The coproduct ${\Delta}_{pC}(\Gamma)=\One\otimes \Gamma+\Gamma\otimes \One +\gamma\otimes \Gamma/\gamma$.  If we chose to impose normality at each vertex cut, then we obtain only only one term
apart from $\Gamma\otimes\One$ or $\One\otimes \Gamma$.  This is because, 
out of the three terms obtained by shrinking cycles, two generate cut corollas which do not cut internal momentum flow so the cut at the corolla in the co-graph  is not normal.  These terms are crossed out in the figure and do not appear if we impose normality of the cuts at each vertex.}
\label{tildedeltapC}
\end{figure}

We note that the sub-vectorspace of graphs $H_{pC}^{||0||}\subsetneq H_{pC}$ such that no loop is left intact, $||\Gamma||=0$, forms a sub Hopf algebra $H_{pC}^{||0||}$ of $H_{pC}$.
\[
\Delta_{pC}H^{||0||}_{pC}\subset H^{||0||}_{pC}\otimes H^{||0||}_{pC}.
\]
Also we have 
\[
\bar{\Delta}_{core}(\Gamma)=\One\otimes \Gamma,\,\forall \Gamma\in H_{pC}^{||0||}.
\]
Corresponding to $H_{pC}^{||0||}$
there is a similar sub vectorspace $H_C^{||0||}\subsetneq H_C$ of graphs such that no loop is left intact.
\subsection{Cointeracting bialgebras}\label{cointbi}
There is an interesting interplay between the Hopf algebra $H_{GF}$ and the cubical chain complex associated to pairs $(\Gamma,T)$ \cite{rational}. In particular the chain complex gives rise to an incidence bialgebra $B_I$ based on the set of edges $E_T$ 
in the spanning tree $T$ of a pair $(\Gamma,T)$. It was studied in \cite{coaction}
and established that there is a coaction which assigns Galois conjugates to any pair $(\Gamma,T)$.
$H_{GF}$ and $B_I$ cointeract as bialgebras. This notion of cointeracting bialgebras has been recently investigated by Foissy \cite{cointeractiontalk,cointpapers} and others \cite{Manchon}.

In fact for any pair $(\Gamma,T)$ the set $E_L:=E_\Gamma\setminus E_T$ of edges $e\not\in E_T$ decomposes
under $\Delta_{GF}$ in the sense that each term in $\Delta_{GF}((\Gamma,T))$ partitions $E_T$ between the two sides and in fact the partition determines the graph-tree pairs on the two sides, as described in more detail in Section~\ref{subsec global set up}.  To say this in another way let us use the notation $E_{(\Gamma,T)}:= E_L$.
Then if we write
\[
\Delta_{GF}((\Gamma,T))=\sum_i (\Gamma,T)_i^\prime\otimes (\Gamma,T)_i^{\prime\prime},
\]
for any $i$ we have $E_{(\Gamma,T)_i^\prime}\cap E_{(\Gamma,T)_i^{\prime\prime}}=\emptyset$
and  $E_{(\Gamma,T)_i^\prime}\cup E_{(\Gamma,T)_i^{\prime\prime}}=E_L$ and the $E_{(\Gamma,T)_i^\prime}$ and $E_{(\Gamma,T)_i^{\prime\prime}}$ determine $(\Gamma,T)_i^\prime$ and $(\Gamma,T)_i^{\prime\prime}$. 

This allows us to work with a commutative and cocommutative bialgebra $(\mathcal{A}_p,m,\Delta_c)$ defined in terms of the unordered set $E_L$
with coproduct denoted by $\Delta_c$ and another map, $\rho$ which is a slightly modified incidence coproduct on intervals in $E_T$.  On graphs with tadpoles\footnote{i.e. self-loops} allowed (see Appendix~\ref{appB}), $\rho$ is a coproduct and with the same product $m$ we get a second bialgebra.
These two bialgebras are in cointeraction.

In Appendix~\ref{appB} we set this up in two different ways, first using a direct approach using the sets $E_L$ and $E_T$ followed by an approach via generators $x_{e,[a,b]}$ provided by single edges $e\in E_L$ and intervals $[a,b] \in E_T$. 

In particular in  Appendix~\ref{appB} we show, as part of Theorem~\ref{thmgenerators},
\be\label{cointeq}
m_{1,3,24}\circ(\rho\otimes\rho)\circ \Delta_c=(\Delta_c\otimes\mathrm{id})\circ\rho,
\ee
where $m_{1,3,24}$ is the map which multiplies together the arguments in the second and fourth slots.  This identity holds quite generally: we are working with cut edges via $(\Gamma,F)$ pairs and additionally the graph edges can be marked as to whether or not they are allowed as tadpoles.
The identity \eqref{cointeq} explains how renormalization and monodromies interfere.
As is discussed in more detail in Sections~\ref{shuffle} and \ref{subsec global set up}, in the $(\Gamma,F)$ context, $\Delta_c$ is closely related to $\Delta_{GF}$, differing in that $\Delta_c$ requires that the left hand side of the tensor product is always uncut.  The map $\rho$, as mentioned above, comes from the incidence coproduct on intervals in the power set of $E_T$.  An interval is interpreted as specifying edges to cut and edges to contract: $[a,b]$ represents cutting the set of edges $a$ and contracting the set of edges $E_T\setminus b$.  The map $\rho$ also differs from the incidence coproduct in that we can mark edges for which we forbid intervals which would yield these edges as tadpoles when $E_T\setminus b$ is contracted.  This means that $\rho$ is a coproduct on graphs with edges that are allowable as tadpoles and a coaction more generally.
The forbidding of certain tadpoles lines up with the fact that tadpoles vanish in kinematic renormalization schemes.  
\begin{rem}
In fact whenever renormalization is achieved by using the forest sum to subtract at a chosen $q_0\in Q^L$ tadpole integrals vanish (kinematic renormalization schemes)  and $\rho$ becomes a coaction on core
graphs and remains a coproduct on proper Cutkosky graphs in $H_C$. The situation is essentially the same in minimal subtraction schemes for massless particles. With massive particles the situation is slightly more subtle and the full set-up of Appendix~\ref{sec global co} is needed. See also the discussion  in \cite{coaction}.
\end{rem}

To explore these notions the next step is to define (combinatorial) Galois conjugates. 
\subsubsection{Galois conjugates as fundamental cycles and cuts}\label{subsec galois}
The notion of Galois conjugates for Feynman graphs owes its existence to ideas by Francis Brown \cite{BrownCNTP,Brownrecent}. Below it inspires a combinatorial study 
of pairs $(\Gamma,T)$ and $(\Gamma,f)$ through removing or shrinking edges.

Let a pair $(\Gamma,T)$ of a graph and spanning tree be given. Consider two mutually disjoint not necessarily non-empty subsets $p,q\in E_T$,
$p\cap q=\emptyset$.  Set $\gamma:=\Gamma/p$, $f:=T/p\setminus q$.
We call such a pair $(\gamma,f)$ a Galois conjugate of $(\Gamma,T)$.

So $(\Gamma,T)$ is a Galois conjugate of itself ($p=q=\emptyset$) as is any pair $(\Gamma,F)$
for $p=\emptyset,q=E_T\setminus E_F$. 

The set of all Galois conjugates of a pair $(\Gamma,T)$ is denoted by 
\be\label{galconj}
\mathsf{Gal}_{\Gamma,T}:=\{(\Gamma/p,T/p\setminus q)\,|\, p,q\in E_T, p\cap q =\emptyset \}.
\ee
We consider Galois conjugates relative to their fixed $(\Gamma,T)$ and so we define the set $ E_L=\{e\in E_{\Gamma/p}| e\not\in E_{T/p}\}$ to be the same for all Galois conjugates.

For any $e\in E_L$, $l(T/p,e)$ is a fundamental cycle for $\gamma$ and we define the path $t_e:=l(T/p,e)\cap T/p$ and also define 
\[
f_e:=t_e\cap f, C_e:=t_e\setminus f_e= t_e\cap q.
\] 

For $C_e\not=\emptyset$ we call 
$l(f_e,e):=l(T/p,e)\setminus C_e$ a cut fundamental cycle.  

A fundamental cycle $l(T/p,e)$ for any Galois conjugate
defines a graph $\gamma_e=(H_{\gamma_e},\mathcal{V}_{\gamma_e},\mathcal{E}_{\gamma_e})$
with 
\begin{itemize}
\item $H_{\gamma_e}=\dot{\cup}_{v\in l(T/p,e)}\mathbf{c}_v$ which also determines $\mathcal{V}_{\gamma_e}$ whilst 
\item $\mathcal{E}_{\gamma_e}$ is determined by letting the edges in $l(T/p,e)$  define the parts of cardinality two in $\mathcal{E}_{\gamma_e}$. 
\end{itemize}
This is the fundamental cycle with external edges since the full corollas at each vertex are included.

Also, $l(T/p,e)$ defines a pair $(\gamma_e,t_e)\in H_{GF}$ with 
$t_e=\gamma_e\cap T/p$. 
Similarly a cut fundamental cycle $l(f_e,e)$ defines a Cutkosky graph $(\gamma_e,h_e)\in H_C$ and a pair $(\gamma_e,f_e)\in H_{GF}$
where now
\[
(\gamma_e,h_e)=((H_{\gamma_e},\mathcal{V}_{\gamma_e},\mathcal{E}_{\gamma_e}),(H_{\gamma_e},\mathcal{V}_{\gamma_e},\mathcal{E}_{h_e})).
\] 
Here, $H_{\gamma_e},\mathcal{V}_{\gamma_e},\mathcal{E}_{\gamma_e}$ are as before and 
$\mathcal{E}_{h_e}$ is determined by letting the edges in $l(T/p,e)\setminus C_e$
determine the parts of cardinality two.  This is the cut fundamental cycle with full corollas and hence with external edges.

\begin{figure}[h]
\includegraphics[width=10cm]{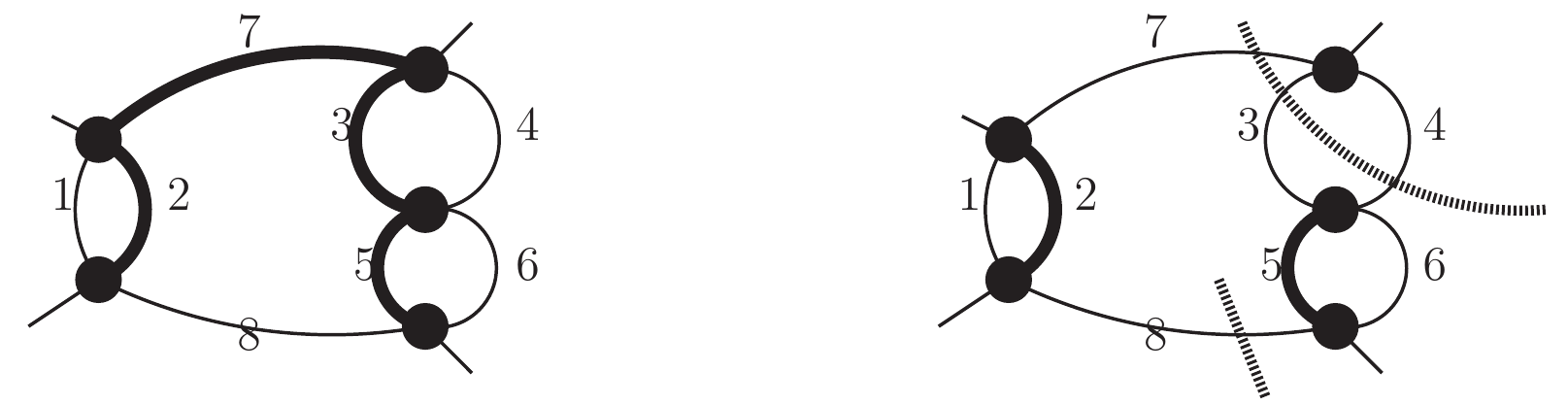}.
\caption{On the left we see a pair $(\Gamma,T)$ with $|\Gamma|=4$, edges in $T$ are in bold lines. The four fundamental cycles are $l_1=l(T,e_1)$, $t_{e_1}=e_2$, $l_2=l(T,e_4)$, $t_{e_4}=e_3$, $l_3=l(T,e_5)$, $t_{e_5}=e_6$, $l_4=(T,e_8)$, $t_{e_8}=e_2,e_7,e_3,e_5$.  On the right, $l_1$ and $l_3$ remain fundamental cycles while we now have cut fundamental cycles $l(T,e_4)\setminus e_3$ ($C_{e_4}=e_3$) and $l(T,e_8)\setminus\{e_7,e_3\}$ ($C_{e_8}=\{e_7,e_3\}$).}
\label{coactef}
\end{figure}

\subsubsection{Comparison of $\Delta_c$ and $\Delta_{GF}$}\label{shuffle}
Equation (\ref{cointeq}) answers a question which has not been satisfactorially answered yet in the physics literature:  how are the algebraic structures of renormalization and the analytic structure of physics amplitudes compatible? 

Here, we consider the removal
of edges from a spanning tree as synonymous with an investigation of the analytic structure of amplitudes in view of the results in \cite{MarkoDirk}.
For the purposes of this section, we will view a spanning forest $F$ as being obtained by removing edges from a spanning tree $T$, and so implicitly the information of $T$ is carried along with $(\Gamma,F)$.
For a pair $(\Gamma,T)$ define the set of fundamental cycles to be
\[
\mathsf{L}_T:=\{l(T,e): e\in E_L\}
\]
Note that $C_e=\emptyset$ for all $l(T,e)$ in a pair $(\Gamma,T)$.
For a pair $(\Gamma,F)$ define
\[
\mathsf{L}_F:=\{l(T,e): e\in E_L, C_e=\emptyset\}, 
\]
to be the set of fundamental cycles corresponding to loops left intact, hence for which $t_e$ is contained in $F$.

Consider the map
\be\label{coactgf}
\Delta_{GT}: H_{GF}^>\to H_{GT}\otimes H_{GF}^>,\, (\Gamma,F)\mapsto \One\otimes (\Gamma,F)+\sum_{p\subseteq \mathsf{L}_F} (p,t_p) \otimes (\Gamma/p,F/t_p),
\ee
where
\[
t_p:=\cup_{l(T,e)\in p\subseteq \mathsf{L}_F}t_e,
\] 
is the union of the spanning trees $t_e$ in fundamental cycles of $p$. 
This map is the variant of $\Delta_{GF}$ which agrees with the coaction $\Delta_c$ in the context of pairs $(\Gamma,F)$, as will be proved in Lemma~\ref{lem c GT}.

We emphasize that  $\mathsf{L}_F=\emptyset$ is possible whilst $|\mathsf{L}_T|$ can be large since removing edges from $T$ might leave few or no loops intact.

Here $H_{GT}\subsetneq H_{GF}$ is the sub-Hopf algebra coming from sets of pairs $(\Gamma,T)$ with $T$ a spanning tree of $\Gamma$. It forms a sub-Hopf algebra of $H_{GF}$ by definition of $\Delta_{GF}$ which acts on such pairs as
$H_{GT}\to H_{GT} \otimes H_{GT}$. $H_{GF}^>$ is generated from pairs $(\Gamma,F)$ where $F$ is not a spanning tree, $e_F<e_T$.

Note that there is a surjective map $H_{GF}^>\to H_C^>$ by 
\[
(\Gamma,F)\to ((H_\Gamma,\mathcal{V}_\Gamma,\mathcal{E}_\Gamma), (H_\Gamma,\mathcal{V}_\Gamma,\mathcal{E}_H)),
\] 
with $\mathcal{E}_H$ determined by  $\emptyset\not= C_G=E_T\setminus E_F$.

For any $(\Gamma,F)$, we define $(\Gamma_0,F_0):=(\Gamma/\mathsf{L}_F,F/t_{\mathsf{L}_F})$. 
We have 
\[
\Delta_{GF}((\Gamma_0,F_0))=\One\otimes (\Gamma_0,F_0), 
\]
as it corresponds to a union of cut fundamental cycles.

Next, using the notation of Appendix~\ref{appB}, we consider the map $w:H_{GF}\to \mathcal{A}$: 
\[
w((\Gamma,F))=E_L^{[E_T\setminus E_F, E_T]} = \prod_{l(T,e)\in \mathsf{L}_F}x_{e,[E_T\setminus E_F,E_T]}.
\]

\begin{lem}\label{lem c GT}
\[
\Delta_c \circ w((\Gamma,F))=(w\otimes w)\circ \Delta_{GT}((\Gamma,F)).
\]
\end{lem}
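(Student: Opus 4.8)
The plan is to expand both sides of the asserted identity as sums indexed by the \emph{same} family of subsets $p\subseteq\mathsf{L}_F$ of intact fundamental cycles, and then to check that the two tensor factors agree generator-by-generator for each $p$. Throughout I use the reading of a generator $x_{e,[a,b]}$ set up in Appendix~\ref{appB}: it records the loop edge $e\in E_L$ together with the instruction to cut the tree edges in $a$ and to contract those lying outside $b$, so that $b\setminus a$ are the tree edges active in the fundamental cycle of $e$. I also keep in mind the whole-tree convention built into $w$: every generator occurring in $w((\Gamma,F))$ carries the \emph{same} upper bound $E_T$, the full spanning tree of the pair.

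First I would compute the right-hand side. By the definition of $\Delta_{GT}$ we have $\Delta_{GT}((\Gamma,F))=\sum_{p\subseteq\mathsf{L}_F}(p,t_p)\otimes(\Gamma/p,F/t_p)$ with $t_p=\bigcup_{l(T,e)\in p}t_e$, the term $p=\emptyset$ reproducing $\One\otimes(\Gamma,F)$. The left pair $(p,t_p)$ lies in $H_{GT}$: its loops are all intact, its forest equals its spanning tree $t_p$, and its cut set is empty, so $w((p,t_p))=\prod_{l(T,e)\in p}x_{e,[\emptyset,\,t_p]}$ is an uncut monomial with common upper bound $t_p$. For the right pair I record two sub-claims: (a) since each $C_e\subseteq E_T\setminus E_F$ is disjoint from $t_p\subseteq E_F$, contracting $t_p$ leaves every $C_e$ unchanged, so intactness is preserved, the intact cycles of $\Gamma/p$ are indexed exactly by $\mathsf{L}_F\setminus p$, and the cut set stays $E_T\setminus E_F$; and (b) the co-graph has spanning tree $E_T\setminus t_p$ and forest $E_F\setminus t_p$, so its common upper bound becomes $E_T\setminus t_p$. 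Hence $w((\Gamma/p,F/t_p))=\prod_{l(T,e)\in\mathsf{L}_F\setminus p}x_{e,[E_T\setminus E_F,\,E_T\setminus t_p]}$, and applying $w\otimes w$ termwise yields
\[
(w\otimes w)\Delta_{GT}((\Gamma,F))=\sum_{p\subseteq\mathsf{L}_F}\Big(\prod_{l(T,e)\in p}x_{e,[\emptyset,\,t_p]}\Big)\otimes\Big(\prod_{l(T,e)\in\mathsf{L}_F\setminus p}x_{e,[E_T\setminus E_F,\,E_T\setminus t_p]}\Big).
\]

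Next I would expand the left-hand side by applying the definition of $\Delta_c$ from Appendix~\ref{appB} to $w((\Gamma,F))=\prod_{l(T,e)\in\mathsf{L}_F}x_{e,[E_T\setminus E_F,\,E_T]}$. Since $\Delta_c$ is the coproduct on $\mathcal{A}$ that mirrors $\Delta_{GF}$ while constraining the left tensor factor to be uncut, its action on this monomial selects a sub-collection $p$ of the loop edges to sit on the (uncut) left---contracting precisely their tree paths $t_p$---and leaves the complementary edges on the right, still carrying the cut $E_T\setminus E_F$ but with $t_p$ now contracted. Reading off the resulting intervals gives the left factor $\prod_{l(T,e)\in p}x_{e,[\emptyset,\,t_p]}$ and the right factor $\prod_{l(T,e)\in\mathsf{L}_F\setminus p}x_{e,[E_T\setminus E_F,\,E_T\setminus t_p]}$, which is exactly the $p$-summand displayed above. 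Matching the two sums under the identity bijection $p\leftrightarrow p$ then finishes the proof, the $p=\emptyset$ summand on both sides being $\One\otimes w((\Gamma,F))$.

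The hard part will be the interval bookkeeping, and specifically the fact that in each summand the upper bound of \emph{every} surviving generator drops by the \emph{union} $t_p=\bigcup_{l(T,e)\in p}t_e$ rather than by the individual paths $t_e$. When these paths overlap, the replacement of $E_T$ by $E_T\setminus t_p$ is a genuinely global feature of the monomial, so the match cannot be reduced to a one-cycle computation extended multiplicatively; one must verify directly that the set $t_p$ produced by $\Delta_c$ on the left coincides with the spanning tree $t_p$ of the subgraph $(p,t_p)$ produced by $\Delta_{GT}$ on the right, and likewise that the common-upper-bound convention of $w$ is respected on both factors. The two supporting points (a) and (b) are precisely where the hypothesis $p\subseteq\mathsf{L}_F$, equivalently $t_p\subseteq E_F$, is used: it guarantees both that intactness is preserved under contraction and that the lower bound $E_T\setminus E_F$ of the right-hand generators is left untouched, without which the claimed identity would fail.
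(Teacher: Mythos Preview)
Your approach is essentially the same as the paper's: both proofs expand each side as a sum indexed by subsets $p\subseteq\mathsf{L}_F$ and then match the summands termwise, with the paper phrasing this via the inverse of $w$ (noting that the cocommutativity of $\Delta_c$ is ``hidden'' under $w^{-1}$) while you make the interval bookkeeping explicit generator by generator. The paper's argument is terser, but the route and the indexing set are identical.
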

Note that while the maps $\Delta_c$ and $\Delta_{GT}$ agree, the corresponding Hopf algebras are not the same because the products are different -- with $\Delta_{GT}$ we use the usual disjoint union product of graphs, while with $\Delta_c$ we merge subgraphs of the fixed parent graph.

\begin{proof}
We can invert the map $w:H_{GF}\to \mathcal{A}$ for any given pair $(\Gamma,F)$
(which hides the cocommutativity of $\Delta_c (w((\Gamma,T)))$). See Figure~\ref{GFvsShuffle}.

For the coproduct terms this gives
\[
w(p)\otimes w((\mathsf{L}_F\setminus p)) = p\otimes (\Gamma/p,F/t_p).
\]
In particular for $p=\One$ we obtain 
\[
1_F \otimes w((\Gamma,F)) = \One \otimes (\Gamma,F),
\]
and for $p=\mathsf{L}_F$, 
\[
w((\mathsf{L}_F,t_{\mathsf{L}_F}))\otimes w((\Gamma_0,F_0)) = (\mathsf{L}_F,t_{\mathsf{L}_F})\otimes
(\Gamma_0,F_0),
\]
 and we use $w(\One)=1_F$.
\end{proof}

Note that under $w$ the expanded flag $\Delta_p^{|\Gamma|-1}w((\Gamma,T))$ maps to
the flag $w^{\otimes|\Gamma|}\Delta^{|\Gamma|-1}_{GF}((\Gamma,T))$ using the parlance of \cite{MarkoDirk}. 
\begin{figure}[H]
\includegraphics[width=14cm]{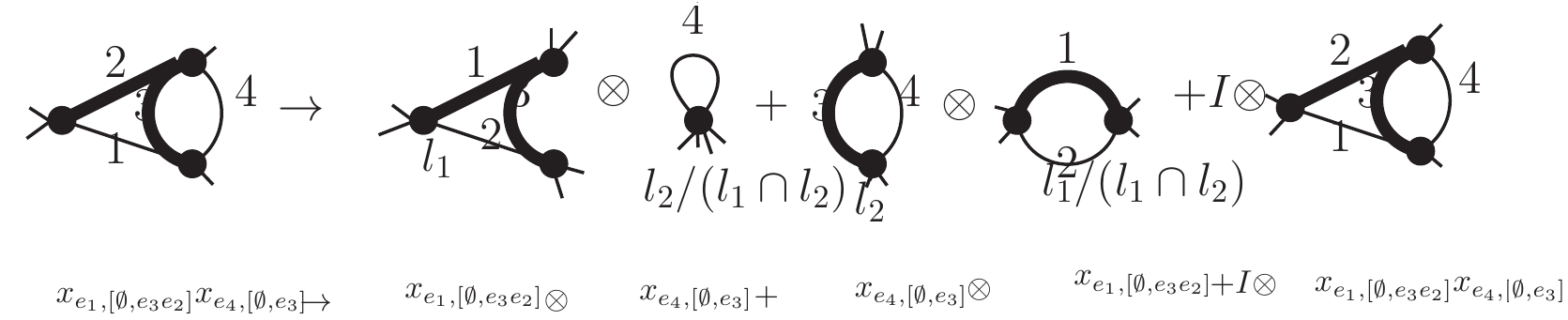}
\caption{We compare $\Delta_{GF}$ (above) and $\Delta_c$ (below).
$l_1$ is the cycle $e_2e_3e_1$ corresponding to $x_{e_1,[\emptyset,e_3e_2]}$ and $l_2$ is the cycle $l_4l_3$ corresponding to $x_{e_4,[\emptyset,e_3]}$. $l_1/(l_1\cap l_2)$ is the cycle $e_1e_2$ corresponding to $x_{e_1,[\emptyset,e_2]}$ and $l_2/l_1$ is the cycle $e_4$, a tadpole corresponding to $x_{e_4,[\emptyset,\emptyset]}$. $e_2,e_3$ make up the spanning tree.}
\label{GFvsShuffle}
\end{figure}

\subsubsection{The Galois coaction}\label{gfcoaction}
Switching from $H_{core}$ to $H_{GT}$ and from $H_C$ as in Remark~\ref{hccoact} to $H_{GF}$ we have the coaction \eqref{coactgf} as a coaction corresponding to the one above Section~\ref{hcorehpc}. 

We can use it for a coaction for Galois conjugates of $\Phi_R((\Gamma,T))$ \cite{coaction} which we define as the set of renormalized evaluations (see \eqref{galconj} and \cite{MarkoDirk} for the evaluations of graphs with Cutkosky cuts)
\[
\mathsf{Gal}_{\Gamma,T}^{\Phi_R}:=\{\Phi_R((\Gamma/p,T/p\setminus q))\,|\, p,q\in E_T, p\cap q =\emptyset \}.
\]
In fact the set $\mathsf{Gal}_{\Gamma,T}^{\Phi_R}$ decomposes into two mutually disjoint sets:
\[
\mathsf{Gal}_{\Gamma,T}^{\Phi_R}=\mathsf{Gal}_{\Gamma,T}^{\mathfrak{m}}\dot{\cup} \mathsf{Gal}_{\Gamma,T}^{\mathfrak{dr}},
\]
\[
\mathsf{Gal}_{\Gamma,T}^{\mathfrak{m}}=\{\Phi_R((\Gamma/p,T/p))\,|\, p\in E_T, (q =\emptyset) \},
\]
\[
\mathsf{Gal}_{\Gamma,T}^{\mathfrak{dr}}=\{\Phi_R((\Gamma/p,T/p\setminus q))\,|\, p,q\in E_T, p\cap q =\emptyset,\,q\not=\emptyset \}.
\]

Following the notation of \cite{Rella}, each graph $\Gamma/p$ in a pair $(\Gamma/p,T/p)\in \mathsf{Gal}_{\Gamma,T}^{\mathfrak{m}}$ defines an integrand corresponding to a class $[\hat{\omega}_{\Gamma/p}]$ ({\it de Rham framing}) and has an associated domain of integration which defines a class $[\hat{\sigma}_{\Gamma/p}]$ ({\it Betti framing}). 

The two framings pair to a motivic Feynman integral: 
\[
I_{\Gamma/p}^{\mathfrak{m}}=[H,[\hat{\omega}_{\Gamma/p}],[\hat{\sigma}_{\Gamma/p}]]^{\mathfrak{m}},
\]
 as a pairing of Betti and de Rham classes yields periods 
 \[
 I_{\Gamma/p}=\int_{\hat{\sigma}_{\Gamma/p}}\hat{\omega}_{\Gamma/p}^R=\int_{\sigma_{\Gamma/p}}\omega_{\Gamma/p}^R
 =\Phi_R(\Gamma/p),
 \] 
 with $H$ the associated Hodge structure (see \cite{Rella} for notation)
 \[
 H=H^{e_{\Gamma/p}-1}(P \setminus Y_{\Gamma/p} , B\setminus (B \cap Y_{\Gamma/p} )),
 \]
and $\omega_{\Gamma/p}^R$ the (class of) the renormalized form for $\Gamma/p$. 

Similarly the cointeraction on graphs above suggests that on the de Rham side the pairing is: 
\[
I_{\Gamma/p\setminus q}^{\mathfrak{dr}}=[H,[\hat{\omega}_{\Gamma/p\setminus q}],[\hat{\sigma}_{\Gamma/p\setminus q}]]^{\mathfrak{dr}},
\]
with
 \[
 I_{\Gamma/p\setminus q}=\int_{\hat{\sigma}_{\Gamma/p\setminus q}}\hat{\omega}_{\Gamma/p\setminus q}^R=\int_{\sigma_{\Gamma/p\setminus q}}\omega_{\Gamma/p\setminus q}^R
 =\Phi_R(\Gamma/p\setminus q).
 \] 
Here on the Betti and de Rham sides we have the classes $\hat{\sigma}_{\Gamma/p\setminus q},\hat{\omega}_{\Gamma/p\setminus q}$ being determined by  localizing $\hat{\sigma}_{\Gamma/p}$ and $\hat{\omega}_{\Gamma/p}$ to the corresponding threshold divisor accordingly.

Then there is a coaction suggested by the incidence coalgebra structure above: 
\[
\mathsf{Gal}_{\Gamma,T}^{\mathfrak{m}}\to \mathsf{Gal}_{\Gamma,T}^{\mathfrak{m}}\otimes
\mathsf{Gal}_{\Gamma,T}^{\mathfrak{dr}}.
\]
We will discuss this coaction in more detail below.
\begin{rem}
 Here on the rhs for $\mathsf{Gal}_{\Gamma,T}^{\mathfrak{dr}}$ we compute modulo $2\pi\i$.
 This is evident as any element in $I^{\mathfrak{dr}}\in\mathsf{Gal}_{\Gamma,T}^{\mathfrak{dr}}$ corresponds to a threshold divisor defined by setting edges $e\in E_{on}=E_\Gamma\setminus E_T$ onshell. 
$I^{\mathfrak{dr}}$ is a physical observable hence real for fixed chosen internal masses and external momenta. This real observable is a function itself of those internal masses and external momenta and hence can have an imaginary part when we vary those parameters.
We are ignoring this imaginary part as it reflects variations from putting a larger 
set of edges $e\in E_\Gamma\setminus E_{\tilde{F}}$ onshell, with $E_{\tilde{F}}\subsetneq E_F$ and this variation is the captured by a different element in 
 $\mathsf{Gal}_{\Gamma,T}^{\mathfrak{dr}}$. That this can be done consistently for any chosen refinement of $L_\Gamma$ reflects the Steinmann relations \cite{Steinmann,coaction}.
\end{rem}   
 Note that this coaction is a coproduct on $\mathsf{Gal}_{\Gamma,T}^{\mathfrak{dr}}$
 reflecting the fact that $\rho$ is a coproduct for intervals which do not correspond to contracting massless edges, see Appendix~\ref{incidence}.

\medskip
We can describe this coaction through the cointeracting bialgebras defined in Appendix~\ref{appB}.
First, let $\Phi_R: \mathcal{A}_p\to I^{\mathfrak{m}}$ assign to a monomial in $q\in \mathcal{A}_p$ 
the corresponding renormalized motivic integral $I^{\mathfrak{m}}_q\in \mathsf{Gal}_{G,T}^{\mathfrak{m}}$
associated  with the corresponding Feynman integral $\Phi_R(q)$ assigned to $q$.

Now consider 
\[
\rho_\Phi: \mathcal{A}\to \mathsf{Gal}_{\Gamma,T}^{\mathfrak{m}}\otimes \mathsf{Gal}_{\Gamma,T}^{\mathfrak{dr}},
\]
which we define via 
\be\label{rhoaction}
\rho_\Phi:=(m_{\mathbb{C}}\otimes_{\mathbb{Q}}\mathrm{id})(S_R^\Phi\otimes\Phi\otimes {\bar{\Phi}})\circ (w^{-1})^{\otimes 3}\circ(\Delta_c\otimes \mathrm{id})\circ \rho= (\Phi_R\otimes_{\mathbb{Q}} {\bar{\Phi}})\circ (w^{-1})^{\otimes 2}\circ\rho.
\ee
Here on the right $\rho$ is a coaction when it acts on $J_1$  and is a coproduct in general. In particular for pairs $(\Gamma,T)$
\[
\rho_\Phi\circ w: \mathsf{Gal}_{\Gamma,T}^{\mathfrak{m}}\to \mathsf{Gal}_{\Gamma,T}^{\mathfrak{m}}\otimes
\mathsf{Gal}_{\Gamma,T}^{\mathfrak{dr}},
\]
coacts, while for pairs $(\Gamma,F)$
\[
\rho_\Phi\circ w: \mathsf{Gal}_{\Gamma,T}^{\mathfrak{dr}}\to \mathsf{Gal}_{\Gamma,T}^{\mathfrak{dr}}\otimes
\mathsf{Gal}_{\Gamma,T}^{\mathfrak{dr}},
\]
is a coproduct.

Finally we want to sum over $T\in\mathcal{T}(\Gamma)$ and consider 
\[
\mathfrak{C}: H_C\to \mathsf{Gal}_{\Gamma}^{\mathfrak{m}}\otimes \mathsf{Gal}_{\Gamma}^{\mathfrak{dr}},\,\mathsf{Gal}_{\Gamma}^{\mathfrak{m}}=\sum_{T\in\mathcal{T}(\Gamma)}
\mathsf{Gal}_{\Gamma,T}^{\mathfrak{m}},
\mathsf{Gal}_{\Gamma}^{\mathfrak{dr}}=\sum_{T\in\mathcal{T}(\Gamma)} \mathsf{Gal}_{\Gamma,T}^{\mathfrak{dr}},
\]
with
\[
\mathfrak{C}(\Gamma):=\sum_{T\in\mathcal{T}(\Gamma)}\rho_\Phi\circ w_{(\Gamma,T)}.
\]
\begin{lem}\label{lemmafrakC}
\[
\mathfrak{C}(\Gamma)=\sum_{p\in \mathcal{P}(L_\Gamma)}\sum_{F\sim p}\Phi_R(\Gamma/E_F)\otimes \bar{\Phi}(\tilde{\Gamma}_F),
\]
where we sum over all partitions $p$ of $L_\Gamma$ and over all forests $F$ compatible with $p$, and where $\tilde{\Gamma}_F$ is the graph $\Gamma$ with the cut corresponding to $F$ done (analogously to the associated graph of Section~\ref{sec precut}). We use $\Phi_R(\gamma)=\sum_{T\in\mathcal{T}(\gamma)}\Phi_R((\gamma,T))$, $\forall \gamma\in H_{core}$, a result of  \cite{MarkoDirk}.
\end{lem}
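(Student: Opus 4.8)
\section*{Proof proposal}

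The plan is to unfold every layer of the definition of $\mathfrak{C}$ and then reorganize the resulting double sum so that the cited identity $\Phi_R(\gamma)=\sum_{T\in\mathcal{T}(\gamma)}\Phi_R((\gamma,T))$ can be applied. First I would fix a spanning tree $T\in\mathcal{T}(\Gamma)$ and compute $\rho_\Phi\circ w$ on $(\Gamma,T)$. Since for a spanning tree $E_T\setminus E_T=\emptyset$ and $C_e=\emptyset$ for all $e$, we have $w((\Gamma,T))=E_L^{[\emptyset,E_T]}=\prod_{e\in E_L}x_{e,[\emptyset,E_T]}$, so the only nontrivial ingredient is the incidence coproduct $\rho$ on the interval $[\emptyset,E_T]$. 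Writing $\rho([\emptyset,E_T])=\sum_{c\subseteq E_T}[\emptyset,c]\otimes[c,E_T]$ and using the interpretation of intervals from Appendix~\ref{appB} (the interval $[a,b]$ cuts $a$ and contracts $E_T\setminus b$), the left factor $[\emptyset,c]$ contracts $E_T\setminus c$ and cuts nothing, while the right factor $[c,E_T]$ cuts $c$ and contracts nothing.

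Next I would apply $(w^{-1})^{\otimes 2}$ and then $\Phi_R\otimes\bar{\Phi}$, reading off the two tensor factors as Galois conjugates of $(\Gamma,T)$. Setting $E_F:=E_T\setminus c$, the left factor is the motivic pair $(\Gamma/E_F,T/E_F)$, while the right factor is the pair $(\Gamma,F)$ with its cut, which maps under the surjection $H_{GF}^>\to H_C^>$ to the Cutkosky graph $\tilde{\Gamma}_F$; crucially this cut graph, and hence $\bar{\Phi}(\tilde{\Gamma}_F)$, depends only on $F$ and not on the ambient tree $T$. Thus for fixed $T$,
\[
\rho_\Phi\circ w((\Gamma,T))=\sum_{F\subseteq T}\Phi_R((\Gamma/E_F,T/E_F))\otimes\bar{\Phi}(\tilde{\Gamma}_F),
\]
the inner sum running over sub-forests $F\subseteq T$, equivalently over $c=E_T\setminus E_F\subseteq E_T$.

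The decisive step is then to interchange the two summations. Summing $\mathfrak{C}(\Gamma)=\sum_{T\in\mathcal{T}(\Gamma)}\rho_\Phi\circ w_{(\Gamma,T)}$ over $T$ and over sub-forests $F\subseteq T$ is the same as summing over all spanning forests $F$ of $\Gamma$ and, for each $F$, over all spanning trees $T\supseteq F$. Because $\bar{\Phi}(\tilde{\Gamma}_F)$ is independent of $T$, I can pull it out of the inner sum, leaving $\sum_{T\supseteq F}\Phi_R((\Gamma/E_F,T/E_F))$. Here I would invoke the standard bijection $T\mapsto T/E_F$ between spanning trees of $\Gamma$ containing $F$ and spanning trees of the contraction $\Gamma/E_F$ (with inverse $T'\mapsto T'\cup F$), which turns the inner sum into $\sum_{T'\in\mathcal{T}(\Gamma/E_F)}\Phi_R((\Gamma/E_F,T'))=\Phi_R(\Gamma/E_F)$ by the cited result of \cite{MarkoDirk}. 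Finally, grouping the remaining sum over spanning forests $F$ according to the partition $p\in\mathcal{P}(L_\Gamma)$ of the external legs that $F$ induces, i.e. writing $\sum_F=\sum_{p}\sum_{F\sim p}$, yields exactly the claimed expression.

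I expect the main obstacle to be the bookkeeping in the first two steps: verifying that $\rho$ really produces the clean interval splitting $[\emptyset,c]\otimes[c,E_T]$ on the generator $E_L^{[\emptyset,E_T]}$ (accounting for the simultaneous action on all $e\in E_L$ and for any tadpole markings that could suppress certain $c$), and confirming that $(w^{-1})^{\otimes 2}$ identifies the two factors with $(\Gamma/E_F,T/E_F)$ and with the $T$-independent cut graph $\tilde{\Gamma}_F$ respectively. Once these identifications are secured, the summation interchange together with the tree-contraction bijection and the formula $\Phi_R(\gamma)=\sum_T\Phi_R((\gamma,T))$ makes the collapse essentially automatic.
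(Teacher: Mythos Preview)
Your rearrangement argument---unfold $\mathfrak{C}$ as a double sum over $(T,F)$ with $F\subseteq T$, interchange the sums, use the bijection between spanning trees of $\Gamma$ containing $F$ and spanning trees of $\Gamma/E_F$, and apply $\Phi_R(\gamma)=\sum_{T'}\Phi_R((\gamma,T'))$---is the correct skeleton, and it is essentially what the paper does implicitly when it declares the $\|\tilde\Gamma_F\|=0$ case ``obvious'' (there the compatible forest is unique and your bijection collapses the tree-sum immediately).

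The paper's proof, however, is almost entirely devoted to the case $\|\tilde\Gamma_F\|=j>0$, which is exactly the ``expected obstacle'' you flag but do not resolve. When $j$ loops survive the cut, $\Gamma/E_F$ carries $j$ one-vertex-reducible petals (self-loops), and the tadpole marking $E_M$ built into $\rho$ suppresses precisely those $c$ for which some forbidden edge would become such a petal. Your computation treats $\rho$ as the unrestricted incidence splitting $\sum_{c\subseteq E_T}[\emptyset,c]\otimes[c,E_T]$; to recover the lemma's right-hand side, which sums over \emph{all} $F$, you must argue that the suppressed terms contribute $0$ there anyway. The paper does this by observing that petals factor out of $\Phi_R(\Gamma/E_F)$ and then checking, scheme by scheme, that $\Phi_R(p(e))$ is independent of the edge $e$: it is $0$ in kinematic renormalization (so massless-petal terms vanish on both sides) and $1$ under the leading-singularity normalization (which is also why $\bar\Phi$ rather than $\Phi$ must appear on the right). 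This petal discussion, together with the companion identity $\sum_{F\sim p}\Phi((\Gamma,F))=\Phi(\tilde\Gamma_F)$ from \cite{MarkoDirk}, is the substantive content of the paper's argument; your outline is right but stops just before it.
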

\begin{proof}
  For $\tilde{\Gamma}_F\in H_C^{(0)}$, the assertion is obvious
as there are no loops left intact and the spanning forest is then unique. 
If $\tilde{\Gamma}_F\in H_C^{(j)},j\gneq 0$, note that $\tilde{\Gamma}_F$ contains $j$ loops so that the spanning forests are not unique. On the left hand side 
$\Phi_R(\Gamma/E_F)$ contains then $j$ 1-vertex reducible petals $p(e_i)$ on $j$ edges $e_i$ which are part of these $j$ loops $l_i$  (one edge for each loop) but not part of the $j$ spanning trees $t_i=F\cap l_i$, $1\leq i\leq j$. Different choices of $F$, $F\sim p$
vary the $t_j$ and hence vary the edges forming the petals. As petals are 1-vertex reducible the petals $p(e)$ provide factors $\Phi_R(p(e))$ for any edge $e$.
The lemma follows if we have  $\Phi_R(p(e))=\Phi_R(p(f))$ for any distinct edges $e,f$ as then on the rhs we can factorize the sum over all $F\sim p$, 
\[
\sum_{F\sim p}\Phi((\Gamma,F))=\Phi(\tilde{\Gamma}_F),
\] 
by \cite{MarkoDirk}. This is true in kinematic renormalization schemes $R$ for which $\Phi_R(p(e))=0,\forall e$ but also when we enforce $\Phi_R(p(e))=1,\forall e$
as suggested by normalizing each graph against its leading singularity \cite{coaction}. If we use $\Phi_R(p(e))=1$ on the lhs the petals correspond to loops left intact on the rhs. Then $\Phi$ on the rhs has to be replaced by $\bar{\Phi}$,
see Eq.(\ref{coactthm}), as indicated.
Massless petals are forbidden in $\mathfrak{C}(\Gamma)$ by the action of $\rho$ and do not contribute on the right even for more general choices of renormalization schemes. 

For $\Phi_R(\Gamma/E_F)$ we use use kinematic renormaliztion conditions which define a renormalization point $S_0$ such that 
the first $\omega_{\Gamma/E_F}+1$ Taylor coefficients of  $\Phi_R(\Gamma/E_F)=\Phi_R(\Gamma/E_F)(S,S_0)$ vanish when expanding in the scale $S$, see Appendix~\ref{appC} and \cite{BrownKreimer}.
\end{proof}
\begin{rem}
Note that the above set-up is sufficiently flexible to allow the treatment of minimal subtraction (MS) renormalization schemes which set massless tadpoles to zero but not massive ones. Kinematic renormalization schemes would set them to zero in all cases, so that $\rho$ is a coaction on core graphs. Note that they form the leftmost column in the matrix in Fig.(\ref{finalexample}) below. Monodromies are then generated solely 
by variying variables $z\in Q^L$. In MS schemes massive tadpoles lead to monodromies 
as functions of $m_e^2$.
\end{rem}
\begin{rem}\label{mdrdualrem}
As we have $E_{on}=E_G\setminus E_F$ and $E_{\mathit{off}}=E_F$ there is an involution
$\iota,\iota^2=\mathrm{id}$, $\iota(\Gamma\setminus E_{on})=\Gamma/E_{\mathit{off}}$, $\iota(\Gamma/E_{\mathit{off}})=
\Gamma\setminus E_{on}$. We interpret it as implementing a combinatorial reflection duality between motivic and de Rham Feynman periods following Brown \cite{BrownCNTP}. 
\end{rem}
\begin{rem}
Below we derive formulae for the coproduct $\Delta_{\bullet}$ on full Green functions. In future work this will allow us to study $(\Phi_R\otimes_{\mathbb{Q}} {\Phi})\circ\rho$ acting on such Green functions as $\rho$ maps a Green function to Green functions of 
all Galois conjugates. 
\end{rem}
\subsubsection{Example}
\begin{ex}
We consider the one-loop triangle graph $t$  on edges $e_1,e_2,e_3$ with corrsponding vertices $v_{12},v_{23},v_{31}$.

\begin{figure}[H]
\includegraphics[width=14cm]{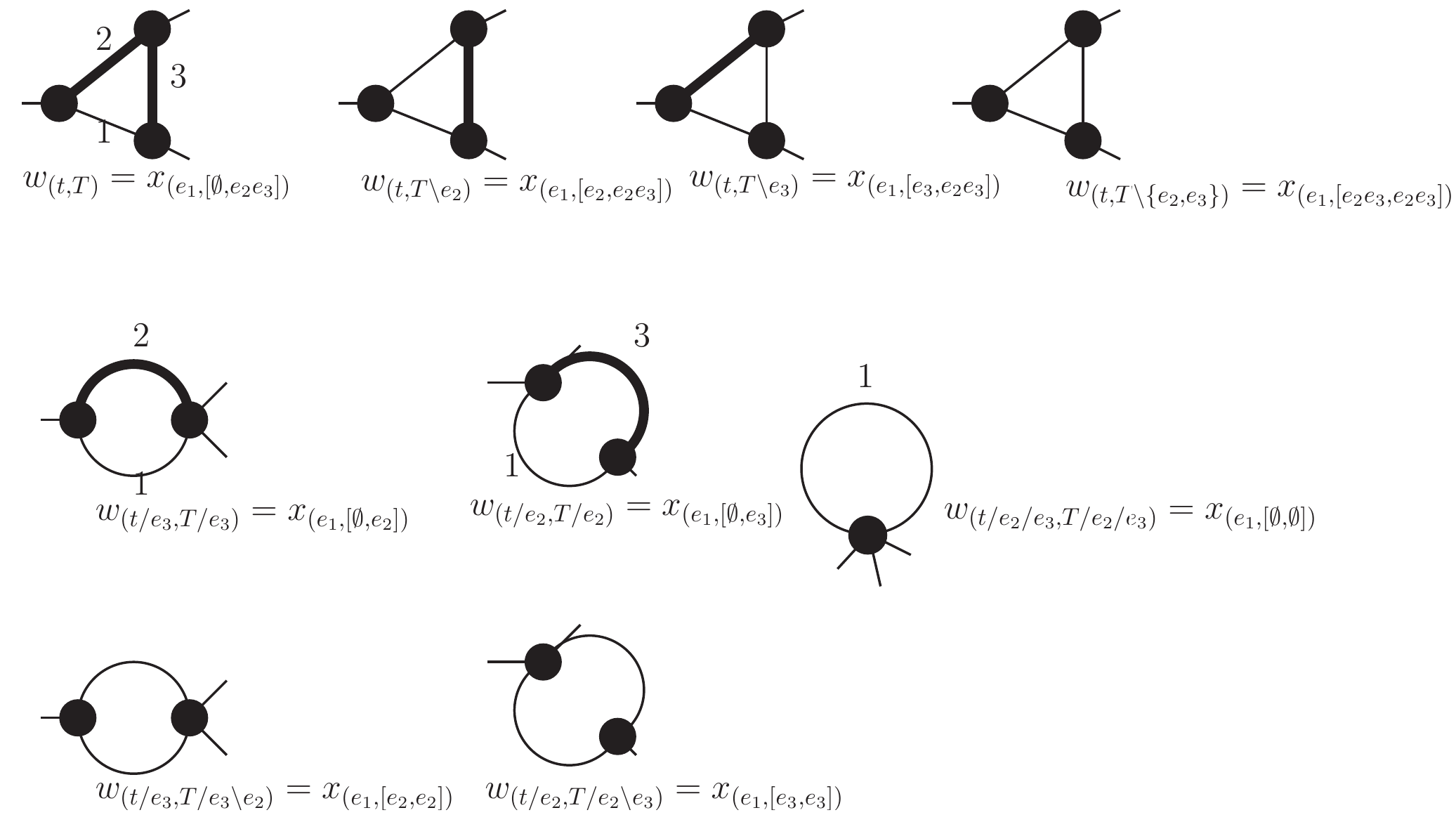}
\caption{The triangle graph $t$ with for example a spanning tree $T$ on edges $e_2,e_3$ and the associated Galois conjugates and generators.}
\label{examplet}
\end{figure}

It gives rise to reduced graphs $t_1:=t/e_1$, $t_2:=t/e_2$, $t_3:=t/e_3$ and tadpoles 
$t_{ij}= t_{ji}:=t/e_i/e_j$, $i\not= j$.
The graph $t$ has three spanning trees given by any pair of its edges.

Let us start with the pair $(t,T)=(t,\{e_2,e_3\})$, $T=\{e_2,e_3\}$, with fundamental cycle $l(\{e_2,e_3\},e_1)$, see Figure~\ref{examplet}. Let $v_{12}$ be the vertex between edges $1$ and $2$ and similarly for $v_{13}$, $v_{23}$.  The Galois conjugates are
\beas
(t, T) = (t,(\{v_{12}, v_{13}, v_{23}\}, \{e_2,e_3\})) & \sim &  x_{e_1,[\emptyset,e_2e_3]},\\
(t/e_2, T/e_2) = (t/e_2,(\{v_{13}, v_{12}\cup v_{13}\}, \{e_3\})) & \sim &  x_{e_1,[\emptyset,e_3]},\\
(t/e_3, T/e_3) = (t/e_3,(\{v_{12}, v_{13}\cup v_{23}\}, \{e_2\})) & \sim &  x_{e_1,[\emptyset,e_2]},\\
(t_{23}, T/e_2/e_3) = (t_{23},(\{v_{12}\cup v_{23}\cup v_{31}\}, \emptyset)) & \sim &  x_{e_1,[\emptyset,\emptyset]}\\
(t, T\setminus e_2) = (t,(\{v_{12}, v_{13}, v_{23}\}, \{e_3\})) & \sim &  x_{e_1,[e_2,e_2e_3]},\\
(t, T\setminus e_3) = (t,(\{v_{12}, v_{13}, v_{23}\}, \{e_2\})) & \sim &  x_{e_1,[e_3,e_2e_3]},\\
(t, T\setminus \{e_2, e_3\}) = (t,(\{v_{12}, v_{23}, v_{31}\}, \emptyset)) & \sim &  x_{e_1,[e_2e_3,e_2e_3]},\\
(t/e_2, T/e_2\setminus e_3) = (t/e_2,(\{v_{12}\cup v_{23}, v_{31}\}, \emptyset)) & \sim &  x_{e_1,[e_3,e_3]},\\
(t/e_3, T/e_3\setminus e_2) = (t/e_3,(\{v_{13}\cup v_{23}, v_{12}\}, \emptyset)) & \sim &  x_{e_1,[e_2,e_2]},
\eeas
where the spanning forests are indicates both in terms of $T$ and as a pair of a vertex set and an edge set.
The action of $\Delta_c$ is as follows
\beas
\Delta_c x_{e_1,[\emptyset,e_2e_3]} & = & x_{e_1,[\emptyset,e_2e_3]}\otimes 1_F+ 1_F\otimes x_{e_1,[\emptyset,e_2e_3]},\\
\Delta_c x_{e_1,[\emptyset,e_3]} & = & x_{e_1,[\emptyset,e_3]}\otimes 1_F+ 1_F\otimes x_{e_1,[\emptyset,e_3]},\\
\Delta_c x_{e_1,[\emptyset,e_2]} & = & x_{e_1,[\emptyset,e_2]}\otimes 1_F+ 1_F\otimes x_{e_1,[\emptyset,e_2]},\\
\Delta_c x_{e_1,[\emptyset,\emptyset]} & = &  x_{e_1,[\emptyset,\emptyset]}\otimes 1_F+1_F\otimes
 x_{e_1,[\emptyset,\emptyset]},\\
\Delta_c x_{e_1,[e_2,e_2e_3]} & = & 1_F\otimes x_{e_1,[e_2,e_2e_3]},\\
\Delta_c x_{e_1,[e_3,e_2e_3]} & = & 1_F\otimes x_{e_1,[e_3,e_2e_3]},\\
\Delta_c x_{e_1,[e_2e_3,e_2e_3]} & = & 1_F\otimes x_{e_1,[e_2e_3,e_2e_3]},\\
\Delta_c x_{e_1,[e_3,e_3]} & = & 1_F\otimes x_{e_1,[e_3,e_3]},\\
\Delta_c x_{e_1,[e_2,e_2]} & = & 1_F\otimes x_{e_1,[e_2,e_2]}.
\eeas
For the $\rho$ coaction, assuming all edges are massless and hence all tadpoles are forbidden, we find
\beas
\rho( x_{e_1,[\emptyset,e_2e_3]} ) & = &  x_{e_1,[\emptyset,e_2e_3]}\otimes 
x_{e_1,[e_2e_3,e_2e_3]}+x_{e_1,[\emptyset,e_3]}\otimes x_{e_1,[e_2,e_2e_3]}
+x_{e_1,[\emptyset,e_2]}\otimes x_{e_1,[e_3,e_2e_3]},\\
\rho( x_{e_1,[\emptyset,e_3]} ) & = &  x_{e_1,[\emptyset,e_3]}\otimes 
x_{e_1,[e_3,e_3]},\\
\rho( x_{e_1,[\emptyset,e_2]} ) & = &  x_{e_1,[\emptyset,e_2]}\otimes 
x_{e_1,[e_2,e_2]},\\
\rho( x_{e_1,[\emptyset,\emptyset]} ) & = &  x_{e_1,[\emptyset,\emptyset]}\otimes x_{e_1,[\emptyset,\emptyset]},\\
\rho( x_{e_1,[e_2,e_2e_3]} ) & = &  x_{e_1,[e_2,e_2e_3]}\otimes x_{e_1,[e_2e_3,e_2e_3]}
+ x_{e_1,[e_2,e_2]}\otimes x_{e_1,[e_2,e_2e_3]},\\
\rho( x_{e_1,[e_3,e_2e_3]} ) & = &  x_{e_1,[e_3,e_2e_3]}\otimes x_{e_1,[e_2e_3,e_2e_3]}
+ x_{e_1,[e_3,e_3]} \otimes x_{e_1,[e_3,e_2e_3]},\\
\rho( x_{e_1,[e_2e_3,e_2e_3]} ) & = &  x_{e_1,[e_2e_3,e_2e_3]} \otimes x_{e_1,[e_2e_3,e_2e_3]},\\
\rho( x_{e_1,[e_2,e_2]} ) & = &  x_{e_1,[e_2,e_2]} \otimes x_{e_1,[e_2,e_2]},\\
\rho( x_{e_1,[e_3,e_3]} ) & = &  x_{e_1,[e_3,e_3]} \otimes x_{e_1,[e_3,e_3]}.
\eeas
One immediately checks Theorem~\ref{thmgenerators}.

Consider now $\mathfrak{C}(t)$.
We have
\[
\mathfrak{C}(t)=\rho_\Phi\left(x_{e_1,[\emptyset,e_2e_3]}+x_{e_2,[\emptyset,e_3e_1]}+x_{e_3,[\emptyset,e_1e_2]}\right).
\]
We have for example
\beas
\rho_\Phi(x_{e_1,[\emptyset,e_2e_3]}) &  =  & \Phi_R\circ w^{-1}(x_{e_1,[\emptyset,e_2e_3]})\otimes \Phi\circ w^{-1}(x_{e_1,[e_2e_3,e_2e_3]})\\
 & & +
\Phi_R\circ w^{-1}(x_{e_1,[\emptyset,e_2]})\otimes \Phi\circ w^{-1}(x_{e_1,[e_2,e_2e_3]})\\
 & & +
\Phi_R\circ w^{-1}(x_{e_1,[\emptyset,e_3]})\otimes \Phi\circ w^{-1}(x_{e_1,[e_3,e_2e_3]})\\
 & = &
 \Phi_R((t,(\{v_{12}, v_{13}, v_{23}\}, \{e_2, e_3\})))\otimes \Phi((t,(\{v_{12}, v_{23}, v_{31}\}, \emptyset))\\
  & & +
 \Phi_R((t/e_3,(\{v_{12}, v_{13}\cup v_{23}\}, \{e_2\})))\otimes \Phi((t,(\{v_{12}, v_{13}, v_{23}\}, \{e_3\})))\\
 & & +
 \Phi_R((t/e_2,(\{v_{13}, v_{12}\cup v_{23}\}, \{e_3\})))\otimes \Phi((t,(\{v_{12}, v_{13}, v_{23}\}, \{e_2\}))).
\eeas
Indicating the trees by their edge sets for conciseness, We have 
\[
 \Phi_R((t,\{e_2,e_3\}))+ \Phi_R((t,\{e_3,e_1\}))+ \Phi_R((t,\{e_1,e_2\}))=\Phi_R(t),
\]
and
\beas
 \Phi_R((t/e_2,\{e_3\}))+ \Phi_R((t/e_2,\{e_1\})) & = & \Phi_R(t/e_2),\\
  \Phi_R((t/e_3,\{e_1\}))+ \Phi_R((t/e_3,\{e_2\})) & = & \Phi_R(t/e_3),\\
 \Phi_R((t/e_1,\{e_3\}))+ \Phi_R((t/e_1,\{e_2\})) & = & \Phi_R(t/e_1).
\eeas
Also $\Phi((t,(\{v_{12}, v_{13}, v_{23}\}, \emptyset)))=\Phi(t\setminus \{e_1,e_2,e_3\})$
and $\Phi((t,(\{v_{12}, v_{13}, v_{23}\}, \{e_i\})))=\Phi(t \setminus \{e_j,e_k\})$ for $\{i,j,k\}=\{1,2,3\}$.

We hence find
\beas
\rho_\Phi\left(x_{e_1,[\emptyset,e_2e_3]}+x_{e_2,[\emptyset,e_3e_1]}+x_{e_3,[\emptyset,e_1e_2]}\right) & = &
\Phi_R(t)\otimes \Phi(t\setminus \{e_1,e_2,e_3\})\\
 & + & \Phi_R(t/e_1) \otimes \Phi(t \setminus \{e_2,e_3\})\\
 & + & \Phi_R(t/e_2) \otimes \Phi(t \setminus \{e_3,e_1\})\\
 & + & \Phi_R(t/e_3) \otimes \Phi(t \setminus \{e_1,e_2\}).
\eeas
This confirms Lemma~\ref{lemmafrakC} and agrees, for example, with the study of the triangle on pp.(15,16) in \cite{coaction}.

\end{ex}

\subsection{Sector Decomposition}\label{SecDec}
One reason to consider Dyson--Schwinger equations for pairs $(\Gamma,F)$ lies in an accompanying sector decomposition. Sector decompositions are used in physics to allow
for stable evaluations of Feynman graphs in regions where the Feynman integrand 
suffers from infrared, ultraviolet, collinear or other kinematical singularities
\cite{Heinrich,HeinrichBinot}. For a related use of Hopf algebras in infrared singular situations see \cite{BorHerzog}. 

Sector decompositions have a rich mathematical structure in terms of generalized perutahedra for quantum field theories on Euclidean space \cite{SecEuc}, see also the discussion  in \cite{MichiSampling}. We hope that our approach allows to generalize such structures 
to the study of quantum field theories on Minkowski space in particular with regard to resulting infrared divergences.

Consider a non-negative weight on each edge of $\Gamma$ which we will think of as an edge length.  The approach to Feynman integration by sector decomposition \cite{Heinrich} involves breaking up the region of integration based on the relative order of the lengths of the edges.  As such, if there are $n$ edges then there are $n!$ sectors, but we can usefully gather sectors together based on certain features of the sector.

One very important feature of a sector, particularly for the parametric representation, is the minimum spanning tree.  An order on the edges determines a unique minimum spanning tree and this also determines the minimum monomial in the denominator in parametric representation.  For purposes such as the Hepp bound~\cite{Hepp}, only this mimimum monomial matters.

Given a fixed spanning tree, many sectors will have that tree as a minimum spanning tree.  In particular, given a fixed spanning tree, we can take any order on the edges of the spanning tree and any order on the edges not in the spanning tree, and then we can shuffle the tree and non-tree edges provided each edge not in the tree appears after all the other edges in its fundamental cycle.

Those sectors where all the edges of the minimum spanning tree are smaller than all edges not in the tree are special, as these sectors do not require renormalization.  Another way to look at this is that if we shrink the edges to length $0$ one at a time, following the order, then these are the orders where the tree shrinks first, leaving a rose, and then the petals are shrunk one by one.  Let $\mathbf{secJ}(\Gamma)$ be the number of such sectors.  Then 
\[
\mathbf{secJ}(\Gamma):=\mathbf{spt}(\Gamma) \times (e_\Gamma-|\Gamma|)!,
\]
where $(e_\Gamma-|\Gamma|)=e_T$ is the number of edges in any spanning tree $T$ of $\Gamma$ and $\mathbf{spt}(\Gamma)$ is as in Section~\ref{sec count sp tr}.  These sectors are important in \cite{MarkoDirk}.

The difference $n!-\mathbf{secJ}(\Gamma)$ comes from sectors where a loop would shrink before the full spanning tree has been retracted, that is the edges of a cycle would appear before any spanning tree does in the edge order.
These are the sectors which need a blow-up before they can be integrated out.  These sectors are taken into consideration if we work with renormalized Feynman rules. See the discussion in Appendix 3 of \cite{coaction}.

For either type of sector, if we build a graph by iterated insertion, as for example it would be generated by a Dyson-Schwinger equation, then the order of insertion determines an order on the fundamental cycles of the graph.  This does not completely determine the sector, but again we can gather together sectors with the same fundamental cycle order, as they arise from the same insertion of sectors of the sub- and co-graphs.

There is an easy combinatorial fact at play in the enumeration of the different types of sectors: the sectors covered by the cubical chain complex plus sectors corresponding to vanishing loops add to the total set of sectors. 

We can, hence, for renormalized amplitudes restrict the sector decomposition 
of physicists \cite{Heinrich} to the study of the bordification of OS untertaken by 
Vogtmann and collaborators \cite{jewel} in future work. In Section~\ref{DSE}
we will systematically build Feynman graphs from iterating one-loop necklaces which 
is then in accordance with the sector decomposition desired in physics.

An important part of sector decomposition as an integration technique is rewriting the integrand into a form suitable for each sector.  Again this becomes a question of collecting together sectors with the appropriate combinatorial properties to suit each form of the integrand, as we will explore in the examples below.

\subsubsection{Sector decomposition and iteration of graphs}
For any chosen spanning tree $T$, we get an accompanying basis of cycles $l(T,e)$.
Consider parametric Feynman rules defined in Section~\ref{parametricFR}. Choose $e\in E_L$ so that for every fundamental cycle $l(T,e)$, we can switch to variables 
\[
A_e,a_f=A_f/A_e,\,f\in E_{t_e}.
\] 
A choice of an order for the variables $A_e$, $e\in E_L$ then ensures that all poles generated from fundamental cycles (loops) are normal crossing which implies a proper sector decomposition when we later build Feynman graphs from Dyson--Schwinger equation in Section~\ref{DSE}. 
 \subsubsection{Two examples}
Let us study two examples. 
We start with the triangle graph $t$ in Figure~\ref{trianglecanvassecdec}.
\begin{figure}[H]
\includegraphics[width=14cm]{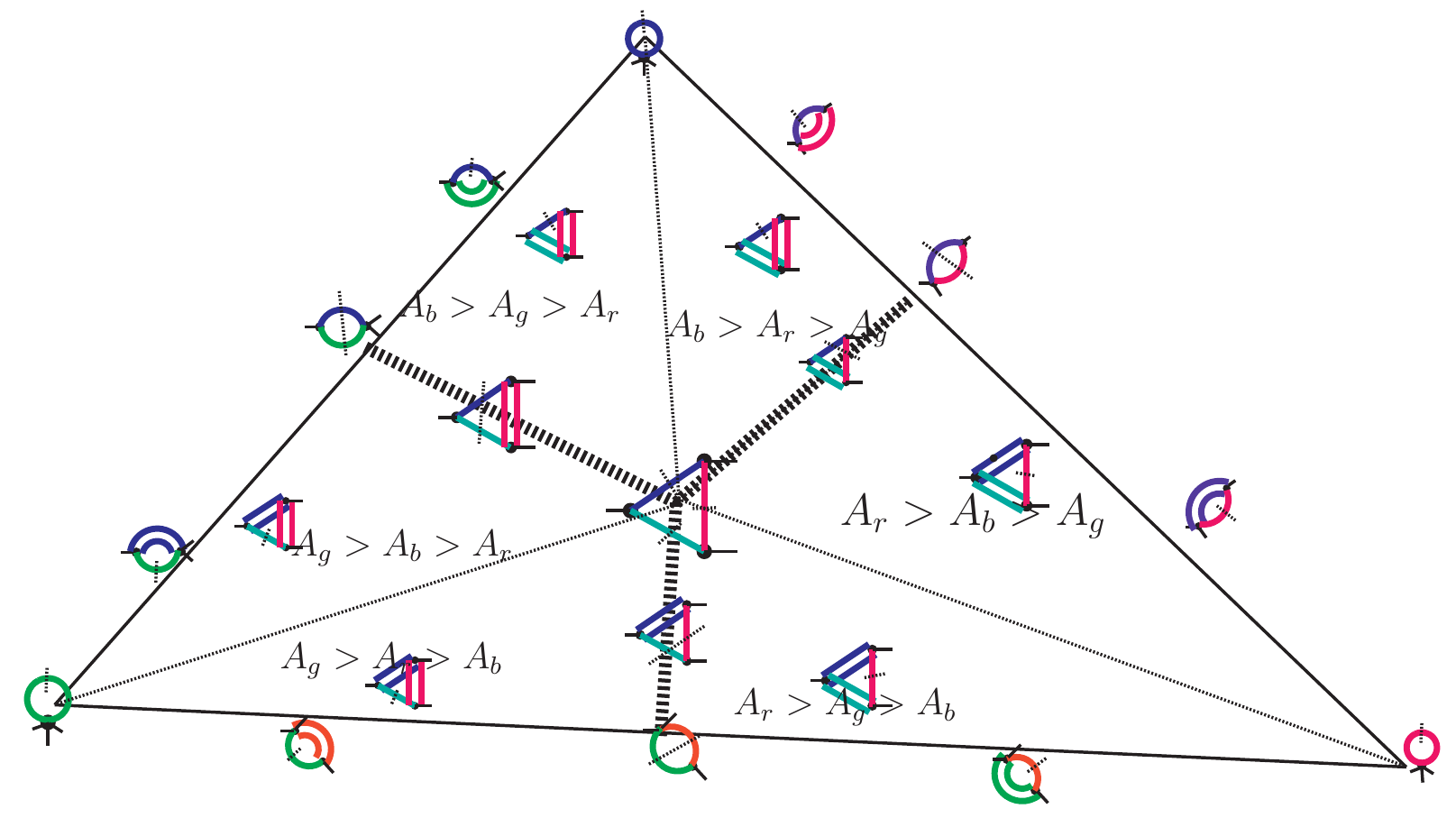}
\caption{The triangle graph $t$  on three different masses indicated by coloured edges. Edges in spanning trees $T$ or forests are indicated by doubled lines.
The other edges have a single line and are also marked by black cut across the edge. There is an associated cell for $t$ which is a $(e_t-1)$-simplex (itself a triangle). In the middle of the 2-simplex we see the triangle with all three edges onshell, corresponding to the pair $(t,T\setminus E_T)$. 
The codimension one surfaces are 1-simplices. In their middle we see bubble 
graphs obtained by shrinking an edge and putting the remaining edges onshell
corresponding to pairs $(t/e_i,T/e_i\setminus e_j)$. In the corners we have tadpoles $(t/T,\emptyset)$. The 2-simplex decomposes into six sectors corresponding to the $e_t!$ orders of possible edge length. The six sectors are grouped in three 2-cubes combining two triangles respectively which have boundaries which are four 1-simplices and four 0-simplices (corners) corresponding to Galois conjugates as indicated.}
\label{trianglecanvassecdec}
\end{figure}
Corresponding to the sector decomposition in Figure~\ref{trianglecanvassecdec}
is an integrand which in parametric variables (see Appendix~\ref{parametricFR}) reads (assuming we are, say, in a scalar field theory in $D=6$  dimensions of spacetime)
\[
\mathsf{Int}(t)(q,p)=\frac{\ln\frac{q_{rb}^2A_rA_b+q_{bg}^2A_bA_g+q_{gr}^2A_gA_r-(A_rm_r^2+A_bm_b^2+A_gm_g^2)(A_r+A_b+A_g)}{-\mu^2A_rA_b-\mu^2A_bA_g-\mu^2A_gA_r-(A_rm_r^2+A_bm_b^2+A_gm_g^2)(A_r+A_b+A_g)}}{(A_r+A_b+A_g)^3}\Omega_t,
\]
with $\mu^2>0$ a renormalization point and $\Omega_t$ the 2-form
\[
\Omega_t=A_g d\!A_b\wedge d\!A_r-A_b d\!A_r\wedge d\!A_g+A_r d\!A_g\wedge d\!A_b,
\]
and further where we have $p\in\mathbb{P}^2(\mathbb{R}_+)$ and $q\in Q^L(t)$, with $Q^L(t)$ a three dimensional real vectorspave generated by  Lorentz invariants
$q_{rb}^2,q_{bg}^2,q_{gr}^2$ as scalar products of external momenta at the vertices
whilst mass squares $m_r^2,m_b^2,m_g^2$ are kept fixed with positive real part and small negative imaginary part.

$\mathsf{Int}(t)(q,p)$ can be integrated against $\mathbb{P}^2(\mathbb{R}_+)$
and gives a function $\Phi_R(t)(q)$ whose monodromy we are after.

By construction it can be also integrated against the intersection of
$\mathbb{P}^2(\mathbb{R}_+)$ with the interior of any of the six sectors $A_i>A_j>A_k$ indicated in Fig.(\ref{trianglecanvassecdec}).\footnote{Along the 1-simplices defined by $A_i=0$ the integrand has to be modified according to the increased degree of divergence. As a result when gluing such simplices to describe graph complexes we have to follow the approach outlined by Berghoff in \cite{Marko}.}

In each of the six sectors we can implement the sector decomposition in accordance with the fundamental cycles $l(T,e)$ underlying the graph. For the triangle graph
we have three spanning trees and a single loop, which makes three fundamental cycles to consider.

Each corresponds to one of the 2-cubes in  Figure~\ref{trianglecanvassecdec}.
For example the fundamental cycle $l(\{e_b,e_r\},e_g)$ corresponds to the two sectors $A_g>A_b>A_r$ and $A_g>A_b>A_r$.

In the integrand we can rescale $a_b=A_b/A_g,a_r=A_r/A_g$, set $A_g=1$ and integrate 
in the two accompanying sectors to find
\[
\Phi_R(t)^{grb}:=\int_0^\infty \left(\int_0^{a_r}\frac{\ln\frac{q_{rb}^2a_ra_b+q_{bg}^2a_b+q_{gr}^2a_r-(a_rm_r^2+a_bm_b^2+m_g^2)(a_r+a_b+1)}{-\mu^2a_ra_b-\mu^2a_b-\mu^2a_r-(a_rm_r^2+a_bm_b^2+m_g^2)(a_r+a_b+1)}}{(a_r+a_b+1)^3}d\!a_b\right)d\!a_r,
\]
for $A_g>A_r>A_b$ ($a_r>a_b$) and
\[
\Phi_R(t)^{gbr}:=\int_0^\infty \left(\int_0^{a_b}\frac{\ln\frac{q_{rb}^2a_ra_b+q_{bg}^2a_b+q_{gr}^2a_r-(a_rm_r^2+a_bm_b^2+m_g^2)(a_r+a_b+1)}{-\mu^2a_ra_b-\mu^2a_b-\mu^2a_r-(a_rm_r^2+a_bm_b^2+m_g^2)(a_r+a_b+1)}}{(a_r+a_b+1)^3}d\!a_r\right)d\!a_b,
\]
for for $A_g>A_b>A_r$ ($a_b>a_r$). For the spanning tree $T_{rb}\in\mathcal{T}(t)$ on edges $e_r,e_b$ we set 
\[
\Phi_R((t;T_{rb}))=\Phi_R(t)^{grb}+\Phi_R(t)^{gbr},
\] 
and
\[
\Phi_R(t)=\sum_{T\in\mathcal{T}(t)}\Phi_R((t,T)),
\]
becomes a sum over three summands, one for each spanning tree, and each summand is a sum over all (two) sectors provided by the possible (two) orderings of edges in the spanning tree. This construction is generic as any spanning tree $T\in\mathcal{T}(G)$
of a graph $G$ gives rise to an $e_T$-cube in its cubical chain complex. Such a cube then has a decomposition in $e_T!$ simplices.

\medskip

Next consider the Dunce's cap graph $dc$ given in Fig.(\ref{dunceg}).
\begin{figure}[H]
\includegraphics[width=7cm]{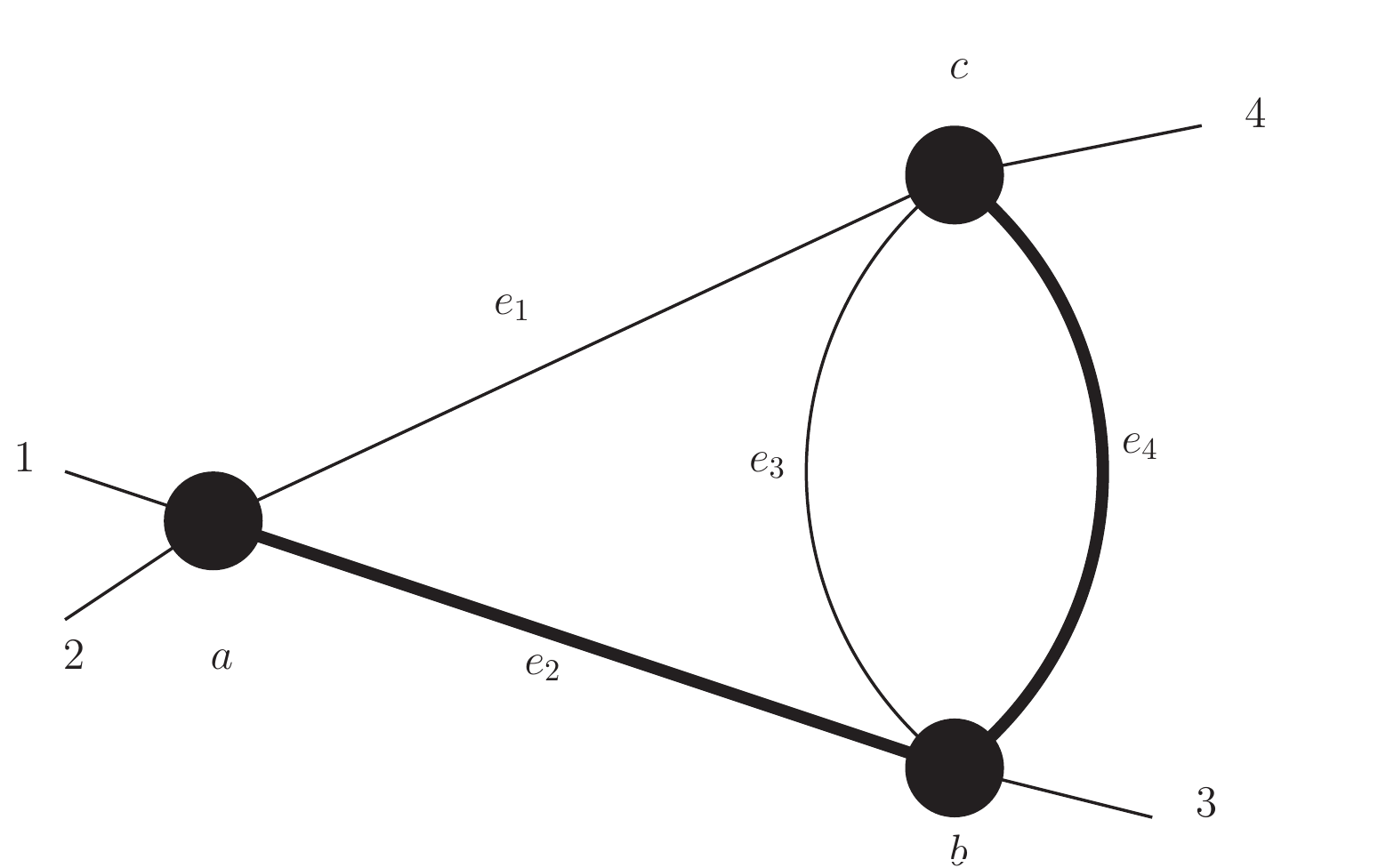}
\caption{The Dunce's cap graph $dc$. The spanning tree on edges $e_2,e_4$ is indicated by bold lines.}
\label{dunceg}
\end{figure}
Choose the spanning tree $T=(e_2,e_4)$ so that the loops are $g_1= l_1=(e_1,e_4,e_2)$ corresponding to the fundamental cycle $l(T,e_1)$
and $g_2= l_2=(e_3,e_4)$ corresponding to $l(T,e_3)$. This implies $A_3>A_4$ and $A_1>A_2,A_1>A_4$ as when we shrink edges, the edges in $t_e$ shrink before the edge $e$.

Consider the choice that $g_1$ is the co-loop and $g_2$ the subloop.
\[
dc=B_+^{(g_1/(l_1\cap l_2),T/e_4)}((g_2,e_4)).
\]
We have $A_1>A_3$ as $g_2$ is the subloop
and $g_1/(l_1\cap l_2)=g_1/e_4$ the co-loop.

Hence we rescale in thze first Symanzik polynomial of $dc$ ${\mathfrak{s}_\psi}_{dc}$
\[
((A_1+A_2)(A_3+A_4)+A_3A_4)\to A_1^2((1+a_2)(a_3+a_4)+a_3a_4)\to
A_1^2 a_3((1+a_2)(1+b_4)+a_3b_4). 
\] 
Following \eqref{FRInt} the corresponding integrand rescales to
\beas
\mathsf{Int}(dc)(q^2,p_1^2,p_2^2,\mu^2;a_2,a_3,b_4) & = & \left(\frac{\ln\frac{(q^2 a_2(1+b_4)+p_1^2 a_3b_4+p_2^2 a_2a_3b_4)}{(\mu^2a_2(1+b_4)+\mu^2a_3b_4+\mu^2a_2a_3b_4)}}{a_3((1+a_2)(1+b_4)+a_3b_4)^2}\right.\\
 &  & \left.  -  \frac{\ln\frac{q^2a_2(1+b_4)+\mu^2 a_3b_4(1+a_2)}{\mu^2a_2(1+b_4)+\mu^2a_3b_4(1+a_2)}}{a_3(1+a_2)^2(1+b_4)^2}
  \right) da_2da_3db_4,
\eeas
after integrating $A_1$ (the longest edge in the sectors under consideration) by the exponential integral. We set all masses to zero to zero for a succinct expression. This does not alter the argument.

Note that 
\[
\mathsf{Int}(dc)(q^2,p_1^2,p_2^2,\mu^2;a_2,0,b_4)=0,
\]
 as both terms on the right have a pole at $a_3=0$ with identical residue
\[
\frac{(\ln\frac{q^2}{\mu^2})da_2db_4}{(1+a_2)^2(1+b_4)^2}.
\]

The above integrand covers the three sectors $A_1>A_3>A_2>A_4,A_1>A_3>A_4>A_2$
and $A_1>A_2>A_3>A_4$ (the sector where renormalization, or bordification, is genuinely needed). 

A similar analysis for $g_1$ providing the subloop and $g_2$ the co-loop covers the sectors $A_3>A_1>A_2>A_4,A_3>A_1>A_4>A_2$.
This gives five sectors for the choice $T=e_2,e_4$. The choices $T=e_1,e_4$,
$T=e_2,e_3$, $T=e_1,e_3$ are similar and this covers twenty sectors altogether. 

The choice $T=e_1,e_2$ covers the remaining four sectors. See also \cite{coaction}.
A finer analysis to be given in future work exhibits that the boundary between sectors is generated by the coactions studied above in Section~\ref{cointbi}.
\subsection{Two more coactions}\label{sec more coactions}
For completeness and future use let us store two more coactions.
\subsubsection{$H_{C}\to H_{C}\otimes H_{pC}$}
$H_C\subsetneq H_{pC}$ is a vector subspace of $H_{pC}$. Hence 
\[
\Delta_{pC}:H_{C}^>\to H_{C}^>\otimes H_{pC}
\]
 coacts and similarly
\[
\Delta_{pC}:H_{C}^0\to H_{C}^0\otimes H_{pC}^0.
\]
Figure~\ref{hpccoact} gives a typical example.
\begin{figure}[H]
\includegraphics[width=14cm]{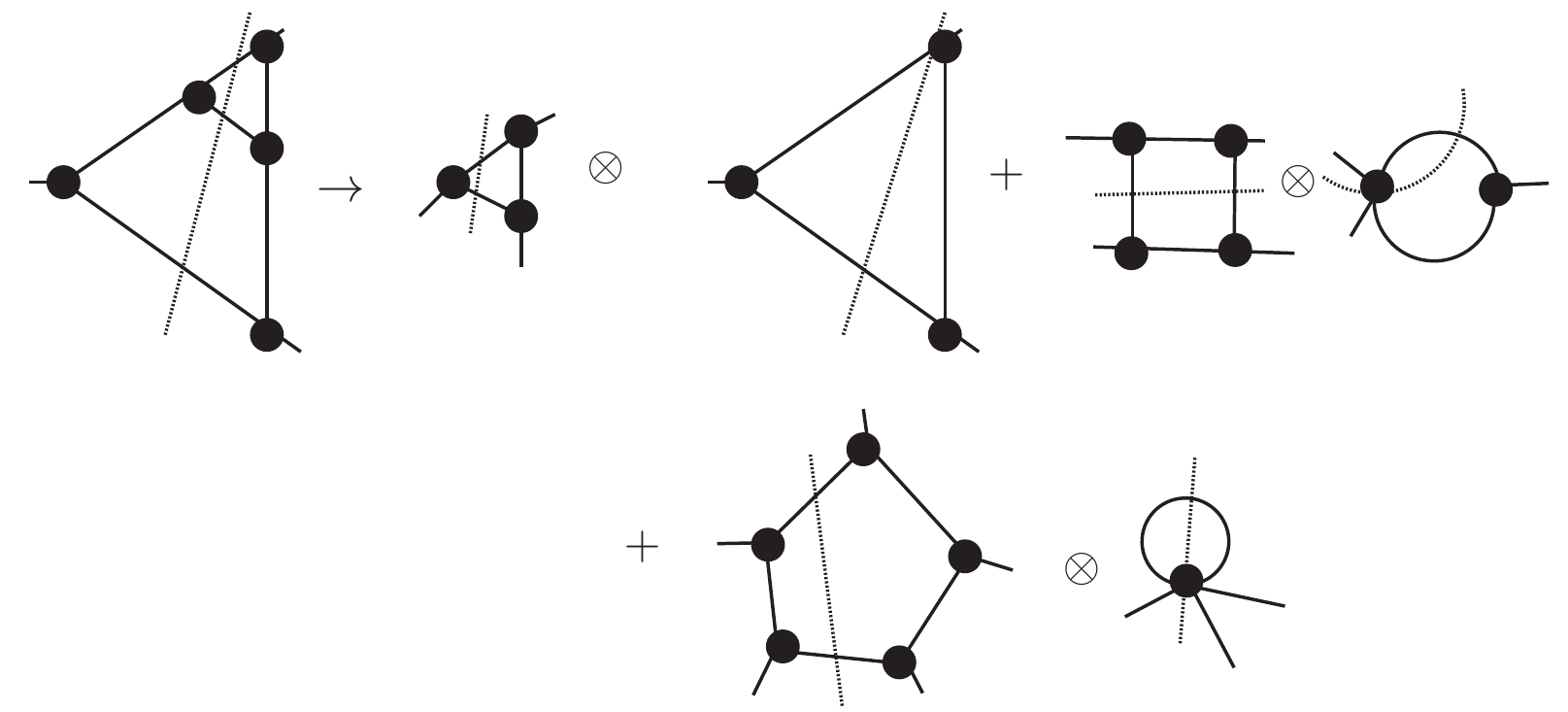}
\caption{$\Delta_{pC}$ regarded as a coaction maps to graphs in $H_C$ on the left and to graphs in $H_{pC}$ on the right.
}
\label{hpccoact}
\end{figure}
This is obvious from the coassociativity of $\Delta_{pC}$. We can regard $\Delta_{pC}$ simultaneously as such a coaction and a coproduct on $H_{pC}$.
\begin{rem}
Note that in the case that $\Gamma\in H_C^0$ has only propagator subgraphs, we have $\Delta_{pC}(\Gamma)\in H_C^0\otimes H_C^0$.
\end{rem}  
\subsubsection{$H_{nC}\to H_{pC}\otimes H_{nC}$}
This coaction concerns variations on non-principal sheets \cite{DirkEll}. It relates to the jewels of Vogtmann and collaborators \cite{jewel}. It uses $\Delta_{pC}$ to identify subgraphs $H_{pC}^>\ni \gamma\subsetneq \Gamma\in H_{nC}$, the vector space of graphs where we allow arbitrary subsets of edges or vertices to be cut disregarding the requirement that the graph is correspondingly cut into disconnected parts. 

We define
\[
\bar{\Delta}_{pC}(\Gamma)=\One\otimes \Gamma+\sum_{\gamma\subsetneq \Gamma, \gamma\in H_{pC}^>}\gamma\otimes \Gamma/\gamma.
\]
Figure~\ref{hpcbarcoact} gives an example.
The graphs $\Gamma \in H_{nC}$ where the number of separations is unity play a very special role here:
\[
\mathbf{nos}(\Gamma)=1\Leftrightarrow \bar{\Delta}_{pC}(\Gamma)=\One\otimes \Gamma.
\]
We do not provide further details as we will not use this coaction later on.
\begin{figure}[H]
\includegraphics[width=10cm]{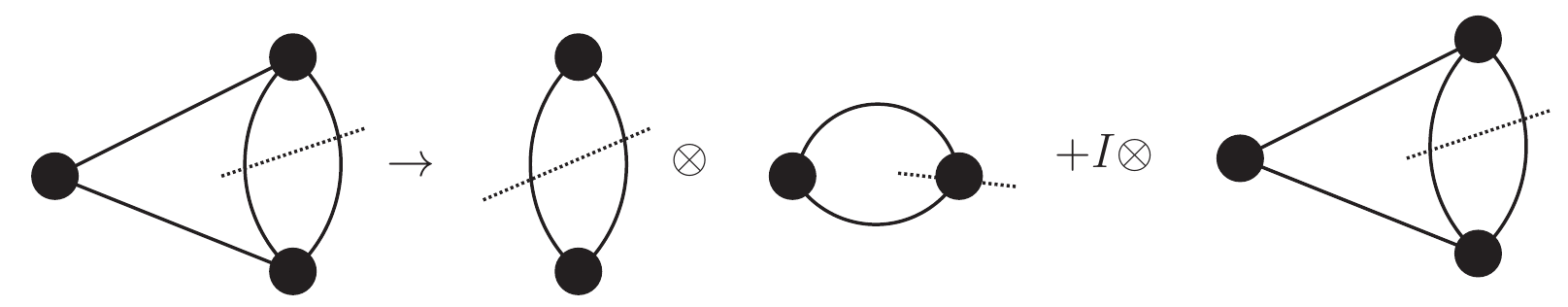}
\caption{$\bar{\Delta}_{pC}:H_{nC}\to H_{pC}\otimes H_{nC}$ maps here a graph $\Gamma$ in $H_{nC}\setminus H_{pC}$ to a graph $\gamma$ in $H_{C}\subsetneq H_{pC}$ on the left and a graph $\Gamma/\gamma$ in $H_{nC}$ on the right.}
\label{hpcbarcoact}
\end{figure}

\begin{rem}
The coactions above were based on $\Delta_{core}$. We can pursue a similar analysis using 
any $\Delta_V$ instead.
\end{rem}

\section{DSEs in core and quotient Hopf algebras}\label{DSE}
Combinatorial Green functions as formal series are defined as a sum of  all bridgeless graphs which have an identical external leg structure. Upon evaluation by Feynman rules they give 1-particle irreducible (1PI) Green functions as formal series in the couplings.

These combinatorial Green functions then satisfy certain functional equations that are combinatorial version of the Dyson--Schwinger equations (DSEs) of the theory.  This theory has been well-developed for the usual Green functions \cite{Karenbook,FoissyDSE} and has found use in other areas of mathmatics
\cite{Ralph,BorVogt}  and physics \cite{Balduf,KUvBY}.  Analogous results apply when we are working with cuts.  Developing this theory is the subject of the current section.  Other than the set-up itself, there are three main points of particular note, each of which is detailed below. First, there is a connection with the assembly maps of \cite{fourauthors}. Second, the invariant charges and their interaction with the coproduct are crucial.  Third, the cut inverse propagator is special and in particular we do not need to consider insertions on either side of the cut propagator.  This simplifies the combinatorics.

\subsection{Dyson--Schwinger and cut Dyson--Schwinger setup}\label{DSS}
We define combinatorial Green functions for scattering.
For a start we define appropriate series of graphs.

The first thing we need is the order in the couplings $g_\Gamma$ of a graph $\Gamma$, so define $g_n$ for each $n\geq 3$, set $g_2=1$ and define
\[
g_\Gamma:=\left(\prod_{v\in V_\Gamma} g_{\mathbf{val}v}\right)/ g_{l_\Gamma}.
\]
For a Cutkosky graph or a pair $(\Gamma,F)$, the order in the couplings is defined in the same way, since the cuts of a Cutkosky graph only involve edges, as does the cut defined by a spanning forest.

For a pre-Cutkosky graph $\Gamma$ we 
have couplings $g_p$ for each integer partition $p$.\footnote{An \emph{integer partition} is simply a finite multiset of positive integers.  Since the order does not matter, we can write it as a list where we take the convention to write the integers, called the \emph{parts}, in weakly decreasing order.  For example, $(2,1,1)$ is an integer partition and furthermore, we say it is a partition \emph{of} $4$ since the integers in the list sum to 4.  If we have a set partition of a set of size $n$ then the sizes of the parts give an integer parititon of $n$.}  The parts of the integer partition are the sizes of the parts of the set partition of a cut vertex.  If $v$ is a cut vertex let $p(v)$ be this integer partition.  Then define
\[
g_\Gamma:=\left(\prod_{v\in V_\Gamma} g_{p(v)}\right) / g_{p(\Gamma)},
\]
where $p(\Gamma)$ is the partition whose parts are the sizes of the parts of the set partition of the external edges of $\Gamma$ induced by the cut of $\Gamma$.

\begin{rem}
  Note that, as a physicist would expect, we do not have a coupling for uncut edges, though for notational convenience it is handy to set $g_2=1$.

  For cut edges we do need something like a coupling to keep track of the number of cut edges in the combinatorics below.  We use $g_{1,1}$ for this purpose.  One could further refine the situation by using different variables in place of $g_{1,1}$, one for each different mass appearing among the onshell edges. 
\end{rem}

\begin{figure}[H]
\includegraphics[width=14cm]{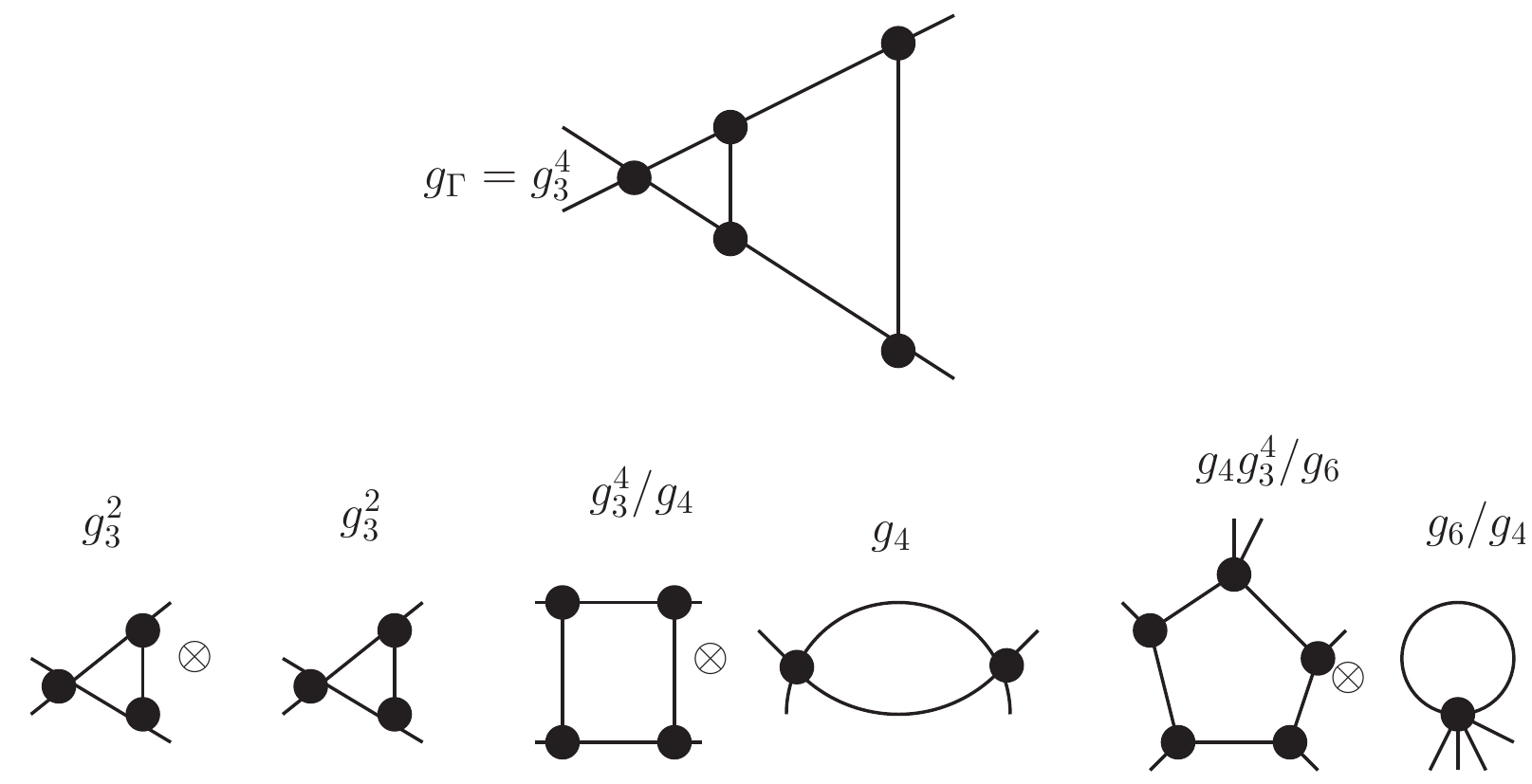}
\caption{We start with a two-loop 4-point graph $\Gamma$  which has order $g_\Gamma=g_4g_3^4/g_4=g_3^4$. Under $\tilde{\Delta}_{core}$ it decomposes into sub- and co-graphs  with orders 
$g_3^2\otimes g_3^2$, $g_3^4/g_4\otimes g_4$ and $g_4g_3^4/g_6\otimes g_6/g_4$
on the two sides.
$\Gamma$ allows for three cycles with four or six external legs and is based on three- or four-valent vertices.
 Note the appearance of denominators in the orders of sub- or co-graphs.
 If we let $\mathsf{Ord}:H_{core}\to\mathbb{R}$ be the map $\Gamma\to g_\Gamma$, we have 
 $\mathsf{Ord}(\Gamma)=\mathsf{Ord}(\Gamma^\prime)\otimes \mathsf{Ord}(\Gamma^{\prime\prime})$
 for any summand in the reduced coproduct. 
}\label{figeight}
\end{figure}

\begin{rem}
  If we wish to also include the order in $\hbar$, this can be obtained by scaling the $g_i$ by powers of $\hbar$, taking advantage of Euler's formula.  Specifically, for a core graph with $k_i$ vertices of degree $i+2$, for $i\geq 1$, and $n$ external edges, the correct power of $\hbar$ is $\hbar^{1-\frac{-n+\sum_{i} ik_i}{2}}$
  and so scaling each $g_i$ by $\hbar^{i/2}$ along with an overall scaling independent of the graph we will obtain the correct power of $\hbar$.

Note also that with $k_i$ and $n$ as above, the product $\left( \prod_{i\in N} g_{i+2}^{k_i-\delta_{i+2,n}}\right)$ is another expression for $g_\Gamma$ for a core graph $\Gamma$.

See  Figure~\ref{figeight} for an example.

  The negative powers in these products are useful as they make the coproduct behave nicely with the multigrading from the coupling constants.
\end{rem}

To build the combinatorial Green functions we are interested in the set of all graphs (whether core, Cutkosky, etc) with a fixed external edge structure.  From a combinatorial perspective we treat the external edges as labelled (so permuting them will give a different graph unless some automorphism of the rest of the graph can undo the permutation.) This implies that in the core case the external edge structure is an ordered list of half edges, but since there is only one type of half edge in this theory, the only information carried by the external structure is the number of external edges.  Then define the set of core graphs with $n$ external edges to be $S_{core}^n$, and the combinatorial Green function
\[
G_{core}^n = \One\pm \sum_{\Gamma\in S_{core}^n} \frac{g_\Gamma}{|\mathrm{Aut}(\Gamma)|}\Gamma
\]
where the sign is $+$ if $n>2$ and $-$ if $n=2$ and where $\mathrm{Aut}(\Gamma)$ is the set of automorphisms of $\Gamma$.

In the pre-Cutkosky and Cutkosky cases, the cuts give a set partition structure on the external edges.  The half edges are all of the same type, so if we forget the cyclic order of the external edges, then the remaining information carried by the external structure is only the sizes of the parts of the set partition, that is, the information is an integer partition.  This integer partition gives the number of external edges in each piece into which the graph must be cut.  Then define, for an integer partition $p$, the set of Cutkosky graphs with external structure $p$ to be $S_{C}^p$.  Define $S_{pC}^p$ similarly for pre-Cutkosky.  

For pairs $(\Gamma,F)$, the external leg structure is as in the Cutkosky case, but now we define the set $S_{GF}^p$ to be the set of pairs $(\Gamma,F)$ which are compatible with the partition $p$. 

Let $\not\sim$ be the imaginary part of the inverse free propagator (which evaluates to $\i \pi\delta(p^2-m^2)$, see Sec.(\ref{sec propagator}) below). 
For $\bullet \in \{C, pC, GF\}$ define the combinatorial Green functions to be
\[
G_{\bullet}^{1,1} = \not\sim + \sum_{\Gamma\in S_{\bullet}^{1,1}} \frac{g_\Gamma}{|\mathrm{Aut}(\Gamma)|}\Gamma,
\]
\[
G_{\bullet}^p = \sum_{\Gamma\in S_{\bullet}^p} \frac{g_\Gamma}{|\mathrm{Aut}(\Gamma)|}\Gamma
\]
when $p$ has at least two parts but $p\neq (1,1)$, and
\[
G_\bullet^{(n)} = \One\pm \sum_{\Gamma\in S_{core}^{(n)}} \frac{g_\Gamma}{|\mathrm{Aut}(\Gamma)|}G
\]
when $p$ is the partition with the single part $n$.  In the last case the sign is $-$ when $n=2$ and $+$ otherwise.  

Note when the partition is $(n)$ and we are in the Cutkosky or pre-Cutkosky case then all the graphs involved in the sum are essentially core graphs.  Specifically, the graphs are all of the form $(\Gamma,\Gamma)$ with $\Gamma$ core.  Identifying $\Gamma$ and $(\Gamma,\Gamma)$ in these cases we have that $G_\bullet^{(n)} = G_{core}^n$ for $\bullet \in \{C,pC\}$.  For $G_{GF}^{(n)}$ almost the same thing is true, except that we sum over all $(\Gamma,T)$ pairs with $T$ a spanning tree of $\Gamma$ and $\Gamma$ as in the sum of $G_{core}^n$.

Note also that the $G_{\bullet}^{1,1}$ case is quite special.  It gets a positive sign despite being an inverse propagator, as if we have a cut propagator insertion we only want to make one cut, not a sequence of cuts, along that propagator.\footnote{This avoids products of distributions $\Theta(p^2-M^2)$
with $M$ a sum of masses  of cut propagators. Products would  show up for repeated cuts at similar self-energies and would lead to an ill-defined product of distributions with coinciding support. 
The situation is more intricate when we have flavour indices at Green functions and their product is matrix-valued. Such a generalization is straightforward but needs more elaborate notation.}  For the uncut inverse propagator we take the convention to include the negative sign so that interpreting the  propagator as a geometric series gives no additional signs.  The $G_\bullet^{1,1}$ case has additional special properties, most notably that no other propagator insertions can sit beside it on the same propagator.  $G_\bullet^{1,1}$ is discussed in detail in Subsection \ref{sec propagator}.

Our task below is to write
$G_\bullet^p$ as a solution to a fixed-point equation in Hochschild cohomology, similarly to what has been done in previous work for combinatorial Dyson--Schwinger equations.

These formal series are solutions to fixed-point equations which are formulated using maps (Hopf algebra endomorphisms) $B_+^\gamma$, for suitable primitive graphs $\gamma$.   In the following, $\bullet$ can be $core$ or $pC$.
Under the Feynman rules one obtains integral equations which contain information beyond perturbation theory \cite{KUvBY,Karenbook,Borinsky, DunneB}.

Here we define $B_+^\gamma$ as a map
\[
B_+^\gamma: H_{\bullet}\to \langle H_{\gamma} \rangle\subsetneq \mathbf{Aug}_{\bullet}.
\]
$\langle H_\gamma\rangle$ is the $\mathbb{Q}$-linear span of graphs which have $\gamma$ as an
ultimate co-graph, that is, $\tilde{\Delta}_{\bullet}$
generates $Y\otimes \gamma$ for some $Y\in H_{\bullet}$.
Specifically, define
\be\label{bplus}  
B_+^\gamma(X):=\sum_\Gamma \frac{\mathbf{bij}(\gamma,X,\Gamma)}{|X|_V}\frac{1}{\mathrm{maxf}(\Gamma)}\frac{1}{[\gamma|X]} \Gamma
\ee
in the notation of \cite{anatomy,BrownKreimer}, where the sum is over $\Gamma$ such that $X\otimes \gamma$ appears in the coproduct of $\Gamma$ and the coefficient is designed to account for overcounting.

Since all the primitives in our context are 1-loop graphs, the coefficient in the definition of $B_+$ can be slightly simplified, but this will not be important for us.

\begin{rem}
  For a map $B$, the Hochschild one-cocyle property is
\be\label{Hochschild}
\Delta_{\bullet} \circ B(\cdot)=B(\cdot)\otimes \One+(\mathrm{id}\otimes B)\Delta_{\bullet}.
\ee 
  
The individual maps $B_+^\gamma$ are not Hochschild one-cocycles.  As in most interesting physical situations \cite{anatomy} we must take a sum of $B_+^\gamma$ and apply the resulting operator to appropriate sums of graphs to obtain the one-cocyle property.
For all our purposes we will be in such situations.

\end{rem}

It will be useful in the following to collect together a product of Green functions for each primitive graph.  Recalling section~\ref{pC necklaces} we are representing primitive graphs by necklaces, both in the cut and uncut cases.

To a pre-Cutkosky necklace $\omega$ associated to the primitive pre-Cutkosky graph $\gamma$ assign the product of Green functions over edges $e_i$ and vertices $v_i$,
\[
\Pi_\omega:=\prod_{i=1}^{v_\gamma} \frac{G_{v_i}}{G_{e_i}},
\]
where 
$G_{e_i}=(G^{1,1}_{pC})^{-1}$ for $e_i\in C_\gamma$, $G_{e_i}=G^2_{core}$ for $e_i\not\in C_\gamma$ and 
$G_{v_i}=G^{p(v_i)}_{pC}$ for $v_i$ a cut vertex and $G_{v_i}=G^{\textbf{val}(v_i)}_{core}$ for $v_i$ an uncut vertex.

Note that for a cut edge $e$ we have $G_{e}=(G^{1,1}_{pC})^{-1}$, this is because in the definition of $\Pi_\omega$ the edge factors $G_e$ all appear in the denominator, but the cut edge factors should be in the numerator since an edge can only be cut once; it cannot have a sequence of cut insertions put into it, while an uncut edge can have a sequence of (uncut) edge inesrtions.

The $\Pi_\omega$ are reminiscent of the combinatorial invariant charges.  This will be discussed further in Remark~\ref{pivsQ}.

The $\Pi_\omega$ can also be defined for core necklaces simply by taking the above definition in the case that no cuts appear.

Then, using the $\Pi_{\omega}$, we have the following lemma.
\begin{lem}\label{lemmagraphins}
  \[
  \sum_{\gamma \sim p} \frac{g_{\gamma}}{|\mathtt{Aut}(\gamma)|}B_+^{\gamma}(\Pi_{\omega(\gamma)})=\sum_{\Gamma\sim p} \frac{g_\Gamma}{|\mathtt{Aut}(\Gamma)|}\Gamma,
  \]
  where the sum on the left hand side is over all primitive pre-Cutkosky graphs graphs $\gamma$
  compatible with a chosen partition $p$ of external edges and $\omega(\gamma)$ is the necklace associated to $\gamma$,  while the sum on the right hand side is over all Cutkosky graphs $\Gamma$ which are compatible with $p$.
\end{lem}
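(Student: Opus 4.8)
The plan is to recognize the right-hand side as the image, under the grafting operators $B_+^\gamma$, of the insertion products $\Pi_{\omega(\gamma)}$, and then to check that the overcounting coefficients built into $B_+^\gamma$ in \eqref{bplus} deliver each Cutkosky graph $\Gamma\sim p$ with exactly the weight $g_\Gamma/|\mathrm{Aut}(\Gamma)|$. This is the Cutkosky analogue of the standard statement that the grafting operators $B_+^\gamma$ generate all graphs by insertion at the edges and vertices of their primitive co-graphs; the only genuinely new ingredient is that the cut structure must be shown to ride along correctly.

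First I would set up the insertion picture. Every Cutkosky graph $\Gamma$ compatible with $p$ reduces, by contracting a core subgraph, to a primitive pre-Cutkosky necklace, and any such primitive co-graph $\gamma$ is again compatible with $p$ since contracting uncut subgraphs does not alter the cut-induced partition of $L_\Gamma$. Conversely, $\Gamma$ is built from such a $\gamma$ by inserting, independently at each insertion place of $\gamma$, a graph drawn from the appropriate Green function: at each uncut edge $e_i\notin C_\gamma$ a (possibly trivial) propagator correction from $G^2_{core}$, at each cut edge $e_i\in C_\gamma$ exactly the single factor recorded by $(G^{1,1}_{pC})^{-1}$, and at each vertex $v_i$ a correction from $G_{v_i}$. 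This is precisely the content of $\Pi_{\omega(\gamma)}=\prod_{i=1}^{v_\gamma} G_{v_i}/G_{e_i}$, whose expansion runs over all tuples of insertable subgraphs. The numerator placement for cut edges encodes that a cut propagator admits only a single cut, with no sequence of insertions on either side, so no repeated cuts on one propagator are produced.

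Next I would match coefficients. Expanding $\Pi_{\omega(\gamma)}$ and applying $B_+^\gamma$ as in \eqref{bplus}, each choice of insertion data $X$ yields a graph $\Gamma$ with $\Gamma/X=\gamma$ weighted by $\mathbf{bij}(\gamma,X,\Gamma)$, $|X|_V$, $\mathrm{maxf}(\Gamma)$ and $[\gamma|X]$. The heart of the argument is the symmetry-factor identity, standard for combinatorial Dyson--Schwinger equations \cite{anatomy,BrownKreimer}: these factors combine with the automorphism orders of $\gamma$ and of the inserted pieces so that, once the prefactor $g_\gamma/|\mathrm{Aut}(\gamma)|$ is included, the net coefficient of $\Gamma$ is $g_\Gamma/|\mathrm{Aut}(\Gamma)|$. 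On the coupling side this uses the multiplicativity of $\mathsf{Ord}$ under the coproduct (Figure~\ref{figeight}), giving $g_\Gamma=g_\gamma\prod_i g_{(\cdot)}$ with the conventions $g_2=1$ and the $g_{1,1}$ bookkeeping of cut edges; on the automorphism side it uses that an automorphism of $\Gamma$ is determined by an automorphism of the skeleton $\gamma$ together with automorphisms of the inserted pieces, modulo the permutations of identical insertions already absorbed by $B_+$.

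Finally, the sum over all primitive $\gamma\sim p$ ranges over every way a graph $\Gamma\sim p$ reduces to a primitive, and the overcounting factor $\mathrm{maxf}(\Gamma)$ together with $\mathbf{bij}$, $|X|_V$ and $[\gamma|X]$ is designed so that these multiple contributions to a single $\Gamma$ sum correctly; hence the left-hand side equals $\sum_{\Gamma\sim p} \frac{g_\Gamma}{|\mathrm{Aut}(\Gamma)|}\Gamma$. The main obstacle is exactly this coefficient bookkeeping in the presence of cuts: one must verify that cut edges and cut vertices do not disturb the classical insertion symmetry-factor count, in particular that the numerator placement of $(G^{1,1}_{pC})^{-1}$ interacts correctly with $\mathbf{bij}$ and $\mathrm{maxf}$ and produces no spurious repeated cuts. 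Everything else is the standard grafting-equals-insertion computation.
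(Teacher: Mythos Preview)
Your proposal is correct and follows essentially the same approach as the paper's proof. The paper is considerably more terse: it defers the automorphism and overcounting bookkeeping entirely to \cite{anatomy,BV,FoissyDSE}, stating only that the coefficient in \eqref{bplus} is designed so that each graph appears with weight $1/|\mathrm{Aut}(\Gamma)|$, and then checks the coupling factorization $g_\Gamma = g_\gamma \cdot (\text{couplings from inserted pieces})$ exactly as you do. Your version is more explicit about why the output lands in Cutkosky rather than merely pre-Cutkosky graphs and about the role of the cut-propagator numerator, points the paper leaves to the surrounding discussion rather than the proof itself, but the underlying argument is the same.
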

\begin{proof}
  The coefficient in the definition of $B_+$ is designed to divide out by the overcounting so that each graph appears exactly $\frac{1}{|\mathtt{Aut}(\gamma)|}$ times.  For details, see \cite{anatomy,BV} from which know that Eq.(\ref{bplus}) is in accordance with graph-counting. See also \cite{FoissyDSE}.

  The only additional thing to prove in our case is that the couplings work out.  On the right hand side, we have $g_G$, that is a product of a coupling for each vertex of $\Gamma$ divided by the coupling corresponding to the vertex that would be given by the external edges of $\Gamma$ were the internal edges contracted.  On the left hand side, in the argument to $B_+$ contributes the product of the couplings for all vertices of the inserted graphs along with the inverse of the couplings for all vertices given by external edges for each insertion.  That is, the argument to $B_+$ contributes the product of the couplings for all vertices of the inserted graphs along with the inverse of the couplings for the vertices of $\gamma$.  Multplying by $g_{\gamma}$ 
  gives the correct order matching $g_\Gamma$.
  \end{proof}
\begin{rem}
In Figure~\ref{dPc} in the last line on the rhs we see a graph $\gamma$ representing a necklace $\omega$ with 
\[
\Pi_\omega=\frac{G_{core}^3\left[G_{pC}^{1,2}\right]^2}{[G_{core}^2]^3}.
\]
\end{rem}

\subsection{Assembly maps vs Hochschild 1-cocycles}\label{assemblysec}

The infinite series alluded to above and also in Sec.(\ref{DSS}) can be obtained as solutions to fixed point equations --- combinatorial Dyson--Schwinger equations --- using the maps $B_+^{x}$ such that $x$ is either in $H_{core}$ or $H_{pC}$.
\begin{rem} 
In such combinatorial Dyson--Schwinger equations, series of graphs with (possibly a common partition of) given external legs form a blob.  The blob represents a place to insert into, and doing the insertion in all possivle ways gives the series where any graph compatible with the blob is inserted term-by-term. 
\end{rem}
Such insertions assemble new graphs from elements in such blobs, using an underlying $x$  providing  a gluing map $B_+^x$. This sums over bijections between external edges of graphs in such a blob and half-edges of corollas or edges of $x$ as in Equation~\ref{bplus}.

For a single bijection
this is similar to the assembly maps 
of \cite{fourauthors}, and we exhibit the similarity in the following 
Fig.(\ref{assembly}). 
\begin{figure}[H]
\includegraphics[width=14cm]{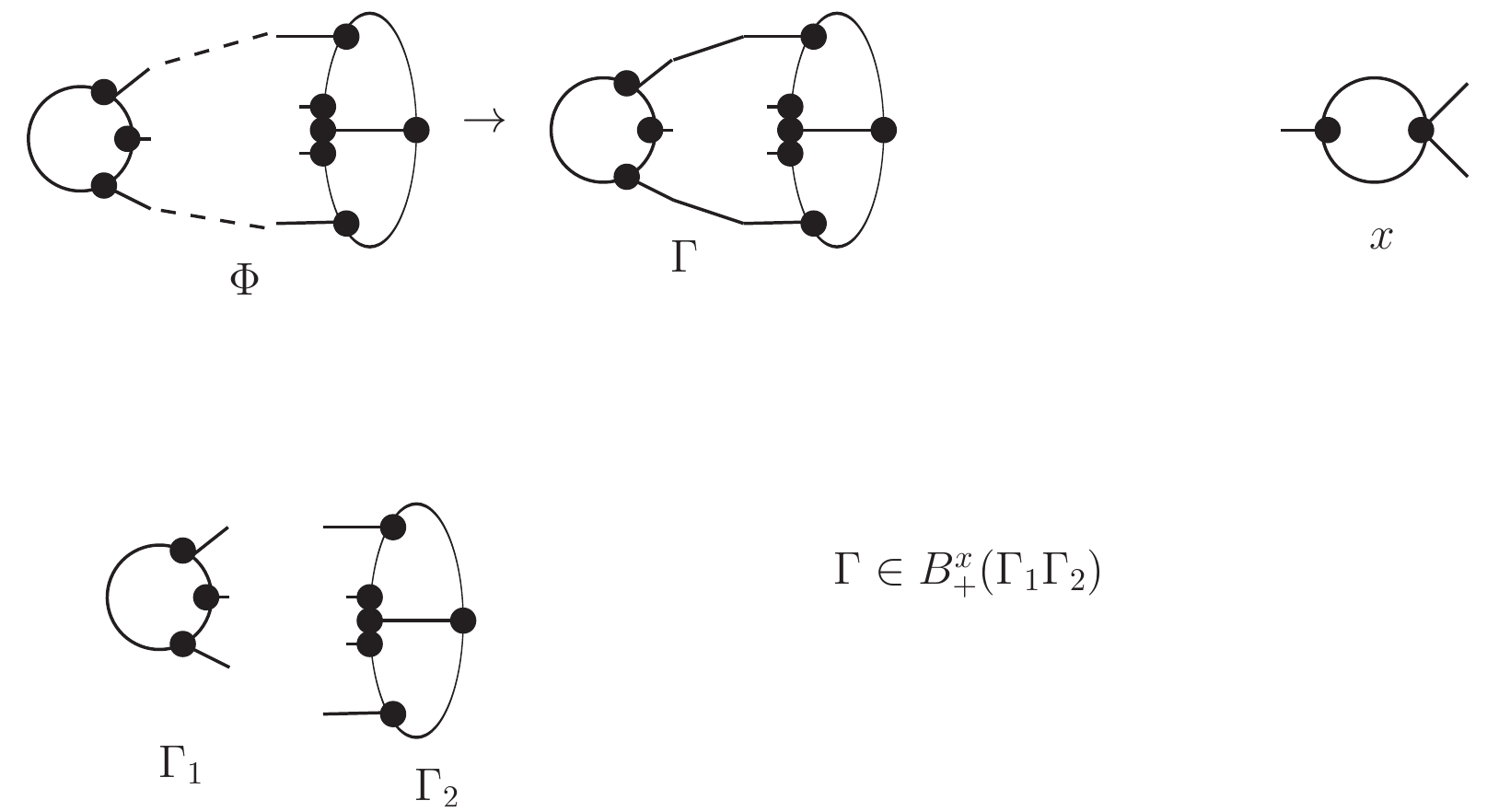}
\caption{Assembling. 
The dashed lines indicate an assemply map $\Phi$ (in the notation of \cite{fourauthors}, see in particular their Fig.(7)) which glues the graphs $\Gamma_1,\Gamma_2$ to obtain $\Gamma$. Similarly, $\Gamma$ shows up (we write $\Gamma\in B_+^x(\Gamma_1\Gamma_2)$) in the graphs formed by the map $B_+^x$, based on a Hochschild one-cocycle $B_+^x$ for the graph $x$, when acting on the product (disjoint union) $\Gamma_1\Gamma_2$ by inserting the three-point graph $\Gamma_1$ into the three-point vertex in $x$, and the four-point graph $\Gamma_2$ into the four-point vertex in $x$.  Note that $\mathsf{vcd}(\Gamma)=8$ and so $g_\Gamma=g^{\mathsf{vcd}(\Gamma)}$ if we set $g_n=g^{n-2}$.}
\label{assembly}
\end{figure}
The maps $B_+^x$ alluded to above appear as a sum over   gluing patterns $\phi$.  The gluing patterns are themselves captured by the graph $x$ into which we insert, along with the specific bijection of external edges of the inserted graph with half edges of the insertion places in $x$.  An important part of the structure of the gluing pattern is the graph $x$.

In \cite{fourauthors} gluing patterns were studied for graphs $\Gamma\in X_{n,s}$ where
the set $X_{n,s}$ is determined by 
\be\label{Xns}
\Gamma\in X_{n,s}\Leftrightarrow n=|\Gamma|,\,s=|L_\Gamma|. 
\ee

A gluing pattern allows one to assemble $\Gamma\in X_{n,s}$ from $x_i\in X_{n_i,s_i}$,
$1\leq i\leq k$ say, using a gluing pattern $\phi$ a a map which identifies endpoints 
of external edges $e\in L_{x_i}$ so as to glue them to edges $e\in E_\Gamma$.

In the notation of \cite{fourauthors} the virtual cohomological dimension
$\mathsf{vcd}$ for $X_{n,s}$ is $2n-3+s$.  We write for $\Gamma\in X_{n,s}$ simply $\mathsf{vcd}(\Gamma)=2n+s-3$ and it fulfills
\[
\mathsf{vcd}(\Gamma)=(k-1)+\sum_{i=1}^k \mathsf{vcd}(x_i).
\] 
This same counting is captured in our set up by replacing the coupling constants $g_n$ with powers of a single coupling $g$ following the rule $g_n=g^{n-2}$.  With this substitution we have 
\[
g_\Gamma=g^{\mathsf{vcd}(\Gamma)}.
\]

In \cite{fourauthors} the gluing patterns operate on single bijections and are therefore assoiative.

This misses the Lie-algebraic structures of graph composition which are apparent when one studies sums over bijections and hence sums over gluing patterns. 

In particular the Hochschild one-cocycles $B_+^x$ act to povide a sum of gluing patterns based on the pre-Lie algebraic operation of graph insertion as a sum over bijectios underlying DSEs.

An analysis of such sums in terms of assemply operations awaits clarification.
\begin{rem}
In \cite{fourauthors} graphs $\Gamma$ with $|\Gamma|=n$, $l_\Gamma=s$ were collected in such a set $X_{n,s}$. We then have that for any Green function we can write
\[
G^s_{core}=\One\pm \sum_{\Gamma\in S^s_core}\frac{g_\Gamma}{|Aut(\Gamma)|}\Gamma=
\One\pm \sum_{n\geq 1}\underbrace{\sum_{\Gamma\in X_{n,s}}\frac{g_\Gamma}{|Aut(\Gamma)|}\Gamma}_{=:c_{n,s}}. 
\]
The $c_{n,s}$ form sub-Hopf algebras.  See Lem.(\ref{lem Delta G core}) and Lem.(\ref{lemmainvchwithcuts}) for a generalization to Cutkosky graphs. Consequences for the groups $\Gamma_{n,s}$ studied in \cite{fourauthors} await clarification.  
\end{rem}

\subsection{Core, no cuts}
For the case of two external legs there are only two primitive core graphs, as illustrated in Figure~\ref{twoptprim}.  Recalling that the neckaces we use indicate the number of external edges at each vertex, or equivalently the degree minus 2, the necklaces corresponding to these two primitives are $\omega_1 = 2$ and $\omega_2 = 1,1$.
\begin{prop}\label{DSEcorenocutsvert}
With this notation we have for the case of two external legs
\begin{align*}
G_{core}^2& =\One-\frac{1}{2}B_+^{\omega_1}\left(g_4\Pi_{\omega_1}
\right)-\frac{1}{2}B_+^{\omega_2}\left(
g_{3}^2\Pi_{\omega_2}\right) \\
& =\One-\frac{1}{2}B_+^{\omega_1}\left(\frac{g_{4}G_{core}^{4}}{G_{core}^2}
\right)-\frac{1}{2}B_+^{\omega_2}\left(
\frac{g_{3}^2(G_{core}^{3})^2}{(G_{core}^2)^2}\right),
\end{align*}
see Fig.(\ref{twoptprim}),
and $\forall n\geq 3$,
\begin{align*}
  G_{core}^n&=\One+\frac{1}{g_n}\sum_{k=1}^n \sum_{|\omega|=n} \frac{1}{|\mathtt{Aut}(\gamma_\omega)|}B_+^{\gamma_\omega}\left(
  \left(\prod_{v\in V(\gamma_\omega)} g_{\mathbf{val}(v)}\right) \Pi_{\omega}
\right) \\
&=\One+\frac{1}{g_n}\sum_{k=1}^n \sum_{|\omega|=n} \frac{1}{|\mathtt{Aut}(\gamma_\omega)|}B_+^{\gamma_\omega}\left(
\frac{\prod_{v\in V(\gamma_\omega)} g_{\mathbf{val}(v)}G_{core}^{\mathbf{val}(v)}}{(G_{core}^2)^k}
\right),
\end{align*}
where $\gamma_\omega$ is the primitive graph associated to the necklace $\omega$ and the inner sums are over all necklaces of size $n$, that is with the associated graph having $n$ external edges.
\end{prop}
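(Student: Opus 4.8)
The plan is to derive both families of identities as specializations of Lemma~\ref{lemmagraphins} to the uncut situation, combined with the definitions of $G_{core}^n$ and of the products $\Pi_\omega$. The only content beyond that lemma is the bookkeeping of coupling constants, automorphism factors, and the sign convention.

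First I would specialize Lemma~\ref{lemmagraphins} to the trivial single-part partition $p=(n)$. For this partition ``compatible with $p$'' forces all external edges onto one side of any prospective cut, so no cut is admissible and the pre-Cutkosky (resp.\ Cutkosky) graphs compatible with $(n)$ are exactly the core graphs $(\Gamma,\Gamma)$ with $l_\Gamma=n$; under the identification $G_\bullet^{(n)}=G_{core}^n$ recorded in Section~\ref{DSS} the lemma becomes a statement purely about $H_{core}$, and the underlying graph-counting is the classical one of \cite{anatomy,BV,FoissyDSE}. The primitive graphs on the left are then the one-loop core graphs with $n$ external edges, which by Remark~\ref{necklace} are indexed precisely by necklaces $\omega$ of size $n$, with $\gamma_\omega$ the associated graph. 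Thus the lemma gives
\[
\sum_{\Gamma\in S_{core}^n}\frac{g_\Gamma}{|\mathrm{Aut}(\Gamma)|}\Gamma
=\sum_{|\omega|=n}\frac{g_{\gamma_\omega}}{|\mathrm{Aut}(\gamma_\omega)|}B_+^{\gamma_\omega}(\Pi_\omega).
\]

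Next I would unfold the coupling factors. Since $\gamma_\omega$ has $l_{\gamma_\omega}=n$, its order is $g_{\gamma_\omega}=\bigl(\prod_{v\in V(\gamma_\omega)}g_{\mathbf{val}(v)}\bigr)/g_n$. Pulling the scalar $\prod_v g_{\mathbf{val}(v)}$ inside $B_+^{\gamma_\omega}$ by linearity and extracting $1/g_n$ turns the right-hand side into
\[
\frac{1}{g_n}\sum_{|\omega|=n}\frac{1}{|\mathrm{Aut}(\gamma_\omega)|}B_+^{\gamma_\omega}\!\left(\Bigl(\prod_{v\in V(\gamma_\omega)}g_{\mathbf{val}(v)}\Bigr)\Pi_\omega\right),
\]
and grouping necklaces by their number of vertices $k$ (the length of $\omega$) produces the displayed double sum $\sum_{k=1}^n\sum_{|\omega|=n}$. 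Adding back $\One$ with the sign dictated by the definition of $G_{core}^n$ --- $+$ for $n\geq 3$ and $-$ for $n=2$ --- yields the first equality in each case. For the second, fully explicit, equality I would insert the definition of $\Pi_\omega$ from Section~\ref{DSS}: for a core necklace every edge factor is $G_{core}^2$ and every vertex factor is $G_{core}^{\mathbf{val}(v)}$, and since a $k$-vertex one-loop necklace has exactly $k$ edges, $\Pi_\omega=\bigl(\prod_{v}G_{core}^{\mathbf{val}(v)}\bigr)/(G_{core}^2)^k$.

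The two-external-leg case is then the explicit enumeration of its primitives. There are exactly two one-loop core graphs with two external edges, the self-loop at a four-valent vertex ($\omega_1=2$) and the bubble on two three-valent vertices ($\omega_2=1,1$), each with $|\mathrm{Aut}(\gamma_\omega)|=2$ (swapping the two half-edges of the self-loop, respectively the two internal edges of the bubble), which supplies the factors $\tfrac12$; with $g_2=1$ the prefactor $1/g_n$ disappears and the vertex products read $g_4$ and $g_3^2$, matching the stated formula. The only point requiring genuine care is the application of Lemma~\ref{lemmagraphins} in the uncut case, i.e.\ confirming that the single-part partition really restricts it to core graphs; the remaining steps --- the identification of $g_{\gamma_\omega}$, the automorphism counts, and the sign convention --- are routine and present no real obstacle.
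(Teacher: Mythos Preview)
Your proposal is correct and follows the same route as the paper: the Proposition is obtained by specializing Lemma~\ref{lemmagraphins} to the trivial partition $p=(n)$, which collapses the pre-Cutkosky side to core graphs, and then unpacking the coupling and automorphism factors (the paper states this in the Corollary immediately following the Proposition, with proof ``Follows from Lemma~\ref{lemmagraphins} by projecting onto those graphs which have $n$ external edges''). Your explicit bookkeeping of $g_{\gamma_\omega}$, $\Pi_\omega$, the $n=2$ automorphism counts, and the sign convention simply makes explicit what the paper leaves to the reader.
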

\begin{figure}[H]
\includegraphics[width=8cm]{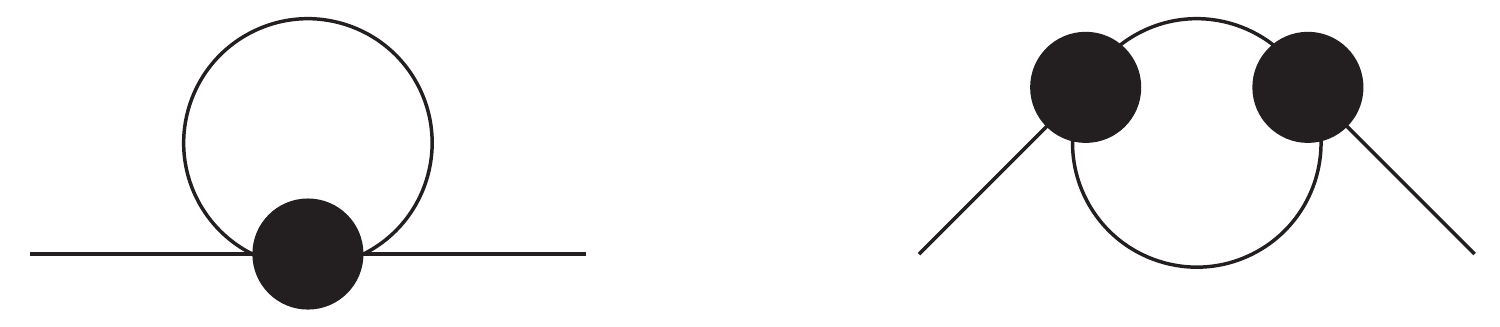}
\caption{The two graphs $B_+^{\omega_1}(\One)$ and $B_+^{\omega_2}(\One)$.
}
\label{twoptprim}
\end{figure}

\begin{cor}
The solution to these fixed point equations is
\[
G_{core}^2=\One-\sum_{\Gamma\in (H_{core})_2} \frac{g_\Gamma \Gamma}{|\mathtt{Aut}(\Gamma)|},
\]
and
\[
G_{core}^n=\One+\sum_{\Gamma\in  (H_{core})_n} \frac{g_\Gamma \Gamma}{|\mathtt{Aut}(\Gamma)|},
\]
where $(H_{core})_n$ is those core graphs with $n$ external edges.
\end{cor}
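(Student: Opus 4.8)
The plan is to prove the two halves of the assertion separately: first that the explicit series displayed in the corollary satisfies the fixed-point equations of Proposition~\ref{DSEcorenocutsvert}, and second that these equations have a unique solution, so that the explicit series is \emph{the} solution. Throughout one works in the completion of $H_{core}$ with respect to the loop-number grading; this makes the series well defined, and it makes the factors $(G_{core}^2)^{-1}$ occurring in the $\Pi_\omega$ meaningful, since $G_{core}^2=\One-(\text{higher loop order})$ is invertible as a formal series with leading term $\One$.

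For the verification I would substitute the explicit sums into the products $\Pi_\omega$ appearing on the right-hand side and then invoke Lemma~\ref{lemmagraphins} in the uncut (core) case, i.e.\ for the one-part partition $p=(n)$. The only bookkeeping needed beforehand is to match the coefficients. For $n\geq 3$ the coefficient multiplying $B_+^{\gamma_\omega}(\Pi_\omega)$ in the proposition is $\frac{1}{g_n}\frac{1}{|\mathrm{Aut}(\gamma_\omega)|}\prod_{v\in V(\gamma_\omega)}g_{\mathbf{val}(v)}$; since $\gamma_\omega$ has $n$ external edges, $g_{\gamma_\omega}=\left(\prod_{v}g_{\mathbf{val}(v)}\right)/g_n$, so this coefficient is exactly $g_{\gamma_\omega}/|\mathrm{Aut}(\gamma_\omega)|$, matching the left-hand side of Lemma~\ref{lemmagraphins}. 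For $n=2$ the two primitives are $\omega_1=2$ and $\omega_2=1,1$, with $g_{\gamma_{\omega_1}}=g_4$, $g_{\gamma_{\omega_2}}=g_3^2$ and $|\mathrm{Aut}|=2$ in both cases, and one carries the overall minus sign. With the coefficients matched, Lemma~\ref{lemmagraphins} turns the entire right-hand side into $\One\pm\sum_{\Gamma}\frac{g_\Gamma}{|\mathrm{Aut}(\Gamma)|}\Gamma$ summed over core graphs with $n$ external edges, which is precisely the explicit series; hence it is a fixed point.

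For uniqueness I would argue by the loop-number grading. Every primitive $\gamma_\omega$ is a one-loop necklace, so each $B_+^{\gamma_\omega}$ raises the loop number by one; expanding the $\Pi_\omega$ as products of the $G_{core}^k$, the component of the right-hand side at loop order $L$ depends only on the components of the Green functions at loop orders strictly below $L$. Therefore the fixed-point equations determine all coefficients recursively from the loop-zero datum $\One$, and the solution is unique, completing the identification. The main obstacle is the symmetry-factor bookkeeping already isolated in Lemma~\ref{lemmagraphins}: one must know that each core graph $\Gamma$ is produced, by inserting its maximal proper subgraphs into its outermost one-loop necklace, with total multiplicity exactly $1/|\mathrm{Aut}(\Gamma)|$. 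This is guaranteed by the normalization of $B_+$ in Equation~\ref{bplus}, so once Lemma~\ref{lemmagraphins} is in hand the remaining work is the routine coefficient- and grading-bookkeeping sketched above.
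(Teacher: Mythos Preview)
Your proof is correct and follows essentially the same route as the paper: both reduce the verification to Lemma~\ref{lemmagraphins} specialised to the trivial partition $p=(n)$, which the paper phrases as ``projecting onto those graphs which have $n$ external edges.'' Your write-up is more explicit than the paper's one-line proof in that you spell out the coefficient matching $\frac{1}{g_n}\frac{1}{|\mathrm{Aut}(\gamma_\omega)|}\prod_v g_{\mathbf{val}(v)}=g_{\gamma_\omega}/|\mathrm{Aut}(\gamma_\omega)|$ and you add a uniqueness argument via the loop filtration, neither of which the paper bothers to state; these additions are sound and make the argument self-contained.
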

\begin{proof}
Follows from Lemma~\ref{lemmagraphins} by projecting onto those graphs which have $n$ external edges.
\end{proof}
\subsubsection{Sub-Hopf algebras and invariant charges $Q_i$}
The combinatorial version of the invariant charge is very important here, as it is elsewhere in Dyson-Schwinger analysis.  The most important consequence of the invariant charge is that it lets us write down a formula for the coproduct applied to Green functions, and in fact these formulas show that the failure of the Green function to be grouplike is controlled by the invariant charge.

Formulas of this form have been around since the Hopf algebraic approach to Dyson-Schwinger equations began.  This idea is already implicit in the proof of Theorem 2 
and discussion beforehand in \cite{CK}. Bringing the idea into an explicit formula, in a similar form to what we will give here, though in a more limited context, we have Lemma~4.6 of \cite{kythesis}.  Around the same time is Proposition~7 and Equation~10 of \cite{Walter2}, which is based on Proposition~16 of \cite{Walter3}.

Subsequent formulas along these lines can be found as Theorem 1
 of \cite{BorQED}, for QED,  as Equations 46 and 47 of \cite{HenryQED}, for QCD  as Equation 3.75 of \cite{Henrythesis}, and generalized to super- and non-renorma\-lizable theories as Proposition~4.2 of \cite{Prinzsusy}.

These formulas also have a nice physical interpretation: there is always one power of the invariant charge appearing on the left of the tensor for every power of the coupling that we're taking on the right, so the invariant charge comes with a power of the coupling, and hence it is behaving like a renormalization factor $Z_\alpha$ for the coupling $\alpha$.  Similarly the $n$-point Green function appearing on the right of the tensor is acting like a correponding renormalization factor $Z_n$. 

Let us now give the analogous formula for the core Hopf algebra.  We will give a pre-Cutkosky version in the next subsection.

Given a sequence $\mathbf{k} = (k_1, k_2, \ldots, k_i, \ldots)$, with all but finitely many terms $0$, we'll use a slightly adjusted version of multi-index notation writing $\mathbf{g^k}$ for
\[
\mathbf{g^k} = \prod_{i=1}^\infty g_{i+2}^{k_i}.
\]
We'll also use square bracket coefficient extraction notation.  That is $[\mathbf{g^k}]X$ is defined to be the coefficient of $\mathbf{g^k}$ in the series $X$.

We define the combinatorial invariant charges for the core Hopf algebra as follows.
\begin{defn}
\[
Q_{i+2}:=\frac{G^{i+2}_{core}}{\left[G^2_{core}\right]^{\frac{i+2}{2}}}.
\]
is the $i+2$th combinatorial invariant charge.
\end{defn}

With all this notation we find for the coproduct of the series $G_{core}^n$,
\begin{lem}\label{lem Delta G core}
  \[
  \Delta_{core} \left[\mathbf{g^k}\right]G^n_{core}=\sum_{j_i=0}^{k_i}\left[\mathbf{g^j}\right]\left(G^n_{core}
\prod_{i=1}^\infty Q_{i+2}^{k_i-j_i}\right)\otimes \left[\mathbf{g^{k-j}}\right]G^n_{core},
\]
for any multi-index $\mathbf{k}$, where the subtraction $\mathbf{k-j}$ is coordinatewise. 
\end{lem}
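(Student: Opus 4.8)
The plan is to leverage the multiplicativity of the coproduct together with the sub-Hopf algebra structure of the Green functions, reducing the statement to the grading behaviour controlled by the invariant charges. The key structural fact, which I would establish first, is that the $c_{n,s}$ (and hence the full Green functions) span sub-Hopf algebras, so that $\Delta_{core}$ applied to $G_{core}^n$ stays inside the algebra generated by the $G_{core}^m$. Concretely, I expect a closed form of the shape
\[
\Delta_{core}(G_{core}^n)=\sum_{\mathbf{m}} P_{\mathbf{m}}\otimes \left[\text{a monomial in the }G_{core}^m\right],
\]
where each left-hand factor is $G_{core}^n$ multiplied by a product of invariant charges. This is the content of the physical interpretation already flagged in the text: every power of a coupling taken on the right is accompanied by one power of an invariant charge on the left, behaving like the renormalization factor $Z_\alpha$.

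First I would recall that each $B_+^{\gamma_\omega}$ is (after summing as in the Dyson--Schwinger equations of Proposition~\ref{DSEcorenocutsvert}) a Hochschild one-cocycle, so that $\Delta_{core}B_+(\cdot)=B_+(\cdot)\otimes\One+(\mathrm{id}\otimes B_+)\Delta_{core}$. Applying this cocycle identity to the fixed-point equation for $G_{core}^n$, and feeding in by induction on the loop number the assumed coproduct formula for the Green functions appearing inside the argument $\Pi_\omega$, produces a recursion for $\Delta_{core}(G_{core}^n)$. The multiplicativity of $\Delta_{core}$ on the products $\Pi_\omega=\prod_v G_{core}^{\mathbf{val}(v)}/(G_{core}^2)^k$ then lets me collect the right-hand tensor factors into another copy of the argument, while the left-hand factors reorganize into a Green function times a product of invariant charge factors $Q_{i+2}^{k_i-j_i}$. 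Tracking the coupling-constant bookkeeping — precisely the negative-power expressions $g_\Gamma=\prod_i g_{i+2}^{k_i-\delta_{i+2,n}}$ noted earlier — is what converts the abstract sub-Hopf statement into the explicit multi-index sum with the coordinatewise subtraction $\mathbf{k-j}$.

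The cleanest route to the final formula is to extract the coefficient of $\mathbf{g^k}$ at the very end. Since $Q_{i+2}$ carries one power of $g_{i+2}$ at lowest order, the factor $\prod_i Q_{i+2}^{k_i-j_i}$ supplies exactly the couplings needed to promote $[\mathbf{g^j}]$ on the left to match $[\mathbf{g^{k-j}}]$ on the right, which is why the summation range is $0\le j_i\le k_i$ and why the left tensor factor is $G_{core}^n\prod_i Q_{i+2}^{k_i-j_i}$ rather than a bare Green function. I would verify that the grouplike corrections vanish, i.e.\ that the $j=k$ term recovers $G_{core}^n\otimes\One$ and the $j=0$ term recovers the primitive part, confirming the formula is counital and consistent.

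\textbf{The main obstacle} I anticipate is the combinatorial matching of automorphism factors and coupling weights when the cocycle identity is iterated: one must check that the coefficients $g_\Gamma/|\mathrm{Aut}(\Gamma)|$ generated on both sides of the recursion agree after reindexing by the invariant charges, and in particular that no double-counting arises from the many necklaces $\omega$ of a fixed size $n$. This is precisely the bookkeeping that Lemma~\ref{lemmagraphins} was designed to control, so I would reduce the overcounting question to that lemma rather than recomputing the bijection factors $\mathbf{bij}(\gamma,X,\Gamma)$ directly; with that in hand, the remaining steps are the routine algebra of collecting multi-index coefficients.
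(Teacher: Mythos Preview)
Your route via the Hochschild cocycle identity and induction on the loop number is a genuinely different argument from the paper's, and it is one that does appear elsewhere in the literature. The paper does not touch the DSE recursion or the cocycle property at all in this proof: it argues directly, term by term. A term in $\Delta_{core}[\mathbf{g^k}]G^n_{core}$ is a pair $(\gamma,\Gamma/\gamma)$; each vertex of $\Gamma$ sits in exactly one of the two factors, and if the co-graph has $k_i-j_i$ vertices of degree $i+2$ then it has precisely that many insertion places of that type, forcing the subgraph to come from $G^n_{core}\prod_i Q_{i+2}^{k_i-j_i}$. The half-edge count at each vertex then gives the correct power of $(G^2_{core})^{-1}$ for edge insertions. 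For the coefficients the paper invokes the standard equivalence between labelled counting via exponential generating series and unlabelled counting weighted by $1/|\mathrm{Aut}|$, so the matching of symmetry factors is the product rule for EGFs and nothing more. This is considerably shorter than threading the cocycle identity through an induction.

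One caution about your outline: you propose to ``establish first'' that the $c_{n,s}$ span a sub-Hopf algebra, but that statement is essentially the lemma itself, so it cannot be an input. Likewise, the fact that the summed $B_+$ operators are honest Hochschild $1$-cocycles on the full Hopf algebra is not free---the paper explicitly remarks that the individual $B_+^\gamma$ fail the cocycle condition and that one needs appropriate sums acting on appropriate arguments---and in several treatments that fact is \emph{deduced from} the sub-Hopf property rather than the other way round. Your inductive scheme can be made to work, but you would need to prove the cocycle property independently (e.g.\ by a direct residue/overcounting argument at the level of $B_+$) to avoid circularity; Lemma~\ref{lemmagraphins} handles graph counts but not the cocycle identity itself.
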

\begin{proof}
  First lets check that the same terms appear on each side and then check that the coefficients of each term match.

  A term on the left is a term in the coproduct of a graph $\Gamma$ with $n$ external edges and with  $k_i$ vertices of degree $i+2$ for $i+2\neq n$ and $k_{n-2}+1$ vertices of degree $n$.  Such a graph term comes from a way of building $\Gamma$ by an insertion.  In such an insertion every vertex of $\Gamma$ appears exactly once in either the subgraph or the co-graph, and additionally the co-graph has a vertex for each external structure among the inserted graphs, including empty insertions.  Since we set up our couplings to include a factor in the denominator for the external structure, the product of the couplings for the subgraph and co-graph give the product of the couplings for $\Gamma$.

  Furthermore, if the co-graph has $k_i-j_i$ vertices of degre $i+2$ for $i+2\neq n$ and $k_{n-2}- j_{n-2}+1$ vertices of degree $n$ then it has that many insertion places of those degrees and so the subgraph must come from the series $G_{core}^n\prod_{i=1}^\infty Q_{i+2}^{k_i-j_i}$.  This also gives the correct number of powers of $(G_{core}^2)^{-1}$ for the edge insertions because each vertex contributes degree many half edges to the graph and the same number of powers of $(G_{core}^2)^{-1/2}$ via its $Q$, except for one degree $n$ vertex which does not contribute powers of $G_{core}^2$.  Every half edge except for the $n$ external half edges is paited with another into an internal edge, and so the power of $(G_{core}^2)^{-1}$ is  exactly corresponding the number of internal edges.

  Thus the terms on the left all appear on the right, and by the same counting, all terms on the right also appear on the left.

  It remains to check that the coefficients agree on the two sides.  It is a standard fact of enumeration that labelled counting via exponential generating series and unlabelled counting weighted by automorphism factors are equivalent (see Lemma 2.14 of \cite{kythesis} for one exposition in a similar language to the present paper).  Working, then, in the labelled case, let us not collect terms with isomorphic graphs, then a each term appears with coefficient $1/|\Gamma|!$.  On the right, the co-graph $\Gamma/\gamma$ appears with coefficient $1/|\Gamma/\gamma|!$ and each inserted subgraph $\gamma_i$ also appears with coefficient $1/|\gamma_i|!$, but there are $\left(|\Gamma/\gamma|!\prod_i |\gamma_i|!\right)/|\Gamma|!$ many relabellings, so the specific term we are looking for appears with coefficient $1/|\Gamma|!$ as desired.  This argument is simply a rewording of the fact that the product of exponential generating series corresponds to the labelled product of the counted objects.

\end{proof}

The result above nicely illustrates why the negative powers in $g_G$ are useful, namely, they mean the couplings behave well with the coproduct on the graphs.

\subsection{Core, with cuts}

For $\omega$ a necklace compatible with a $j$-cut of a set $L$ of $n\geq j$ external edges,
we find similar systems. 

The cut propagator is a special situation, as analytically we can derive that we do not need to consider propagator insertions on either side of the cut in a cut propagator.  This derivation is the main goal of Subsection~\ref{sec propagator} and once we have it in hand, the combinatorics is simpler.

\subsubsection{The (inverse) propagator}\label{sec propagator}

For the inverse propagator which has two external edges the only non-trivial partition is $(1,1)$. We thus have
\be\label{propwithcuts}
G_{pC}^{1,1}=\not\sim-\sum_{\omega\sim \not\sim}B_+^{\omega}\left(\Pi_\omega \right),
\ee
where 
the following eight  cut necklaces $\omega$  are compatible with $\not\sim$:
\begin{figure}[H]
\includegraphics[width=8cm]{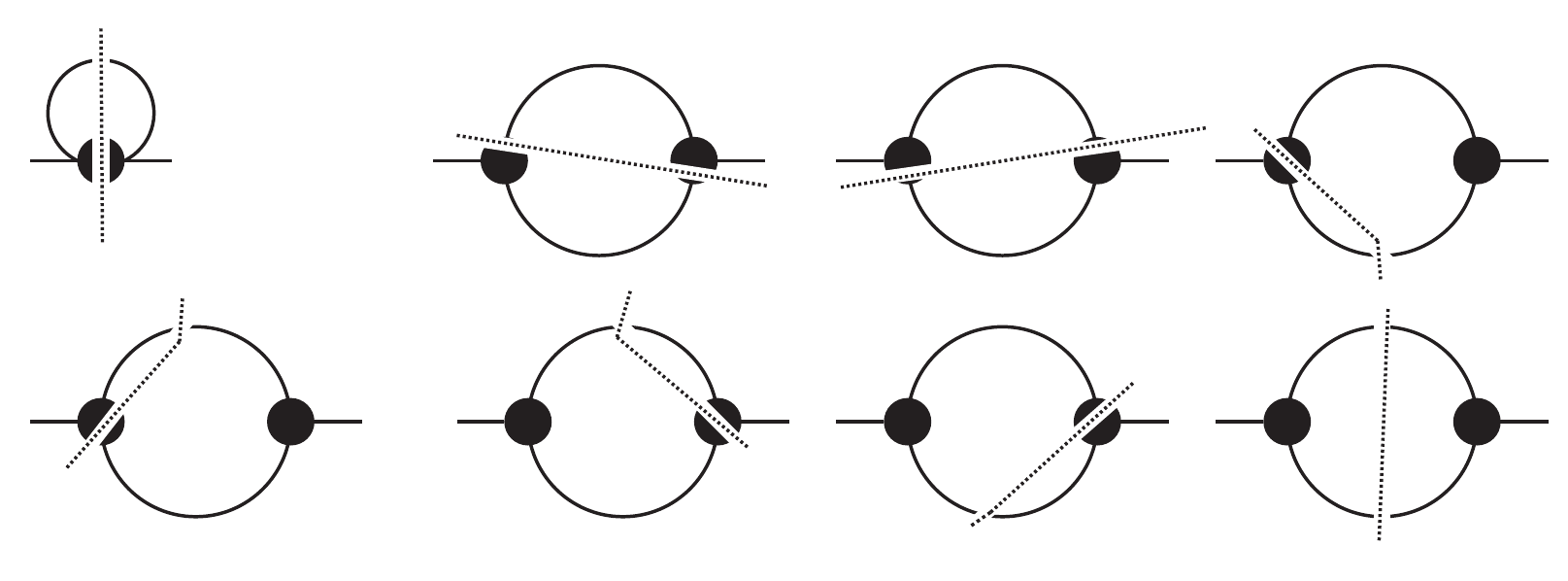}.
\caption{The eight necklaces $\omega$ for the inverse propagator.
}
\label{omegatwocut}
\end{figure}
Here,  $\Pi_\omega$ is given as follows. In the upper row from left to right:
\[
\{G_{pC}^{1,1}G_{pC}^{2,2},(G_{pC}^{1,2})^2(G_{core}^2)^2,(G_{pC}^{2,1})^2(G_{core}^2)^2,G_{pC}^{1,2}G_{pC}^{1,1}G_{core}^3G_{core}^2\}.
\]
In the lower row from left to right:
\[
\{G_{pC}^{2,1}G_{pC}^{1,1}G_{core}^3G_{core}^2,
G_{core}^3G_{pC}^{1,2}G_{pC}^{1,1}G_{core}^2,G_{core}^3G_{pC}^{2,1}G_{pC}^{1,1}G_{core}^2,(G_{core}^{3})^2(G_{pC}^{1,1})^2\}.
\]

In fact, the inverse propagator needs some more detailed care.
Let $\tilde{\sigma}(q^2,m^2)=\Phi_R(X^-)= 1-\bar{\sigma}(q^2,m^2)$ be the self-energy, where we singled out the constant $1=\Phi_R(\One)$, and $G^2_{core}$ is the corresponding combinatorial Green function.

The propagator is
\[
P(q^2,m^2)=\frac{1}{q^2-m^2-\sigma(q^2,m^2)}.
\]
We work with renormalization conditions
\[
\sigma(m^2,m^2)=0,
\]
and 
\[
f(q^2,m^2):=\partial_{q^2}\sigma(q^2,m^2)\rightarrow f(m^2,m^2)=0.
\]
We have $\bar{\sigma}(q^2,m^2)=\sigma(q^2,m^2)/(q^2-m^2)$,
with $\bar{\sigma}(m^2,m^2)=0$,
and then
\[
P=\frac{1}{(q^2-m^2)(1-\bar{\sigma}(q^2,m^2))}.
\]
We have
\[
\Im(P)=\Im\left(\frac{1}{q^2-m^2+i\iota}\right)\Re\left(\frac{1}{1-\bar{\sigma}}\right)+\Re\left(\frac{1}{q^2-m^2+i\iota}\right)\Im\left(\frac{1}{1-\bar{\sigma}}\right)
\]
As $\Im\left(\frac{1}{q^2-m^2+i\iota}\right)=\pi\delta(q^2-m^2)$, we find 
\beas
\Im\left(\frac{1}{q^2-m^2+i\iota}\right)\Re\left(\frac{1}{1-\bar{\sigma}}\right) & = &\\
\pi\delta(q^2-m^2)\Re\left(\frac{1}{1-\bar{\sigma}(q^2,m^2)}\right) & = &\\
\pi\delta(q^2-m^2)\Re\left(\frac{1}{1-\bar{\sigma}(m^2,m^2)}\right) & = &\\
\pi\delta(q^2-m^2) & . & 
\eeas
As $\Re\left(\frac{1}{q^2-m^2+i\iota}\right)=\frac{\mathrm{CP}}{q^2-m^2}$,
we find
\beas
\Re\left(\frac{1}{q^2-m^2+i\iota}\right)\Im\left(\frac{1}{1-\bar{\sigma}}\right) & = &\\
\frac{\mathrm{CP}}{q^2-m^2}\Im(\sigma(q^2,m^2)\frac{\mathrm{CP}}{q^2-m^2}
\Re\left(\frac{1}{\left(1-\bar{\sigma}(q^2,m^2)\right)^2}\right) & , &
\eeas
where we used that 
\[
\Im(\bar{\sigma}^n)=\Im(\bar{\sigma})\Re(\partial_{\bar{\sigma}}\bar{\sigma}^n)
\]
and $\Im(\bar{\sigma})=\Im(\sigma)\frac{\mathrm{CP}}{q^2-m^2}$.

Accordingly, the combinatorial Green function for a cut self-energy starts as 
\[
G^{1,1}_{pC}=\not\sim t_{m} - g^2 B_+^{\omega}\left((G^3_{core})^2\left(\frac{\tilde{G}^{1,1}_{pC}}{(G^2_{core})^2}+\not\sim t_m\right)^2\right)+\cdots,
\]
where $G^3_{core}$ is the combinatorial Green function for the 3-point vertex, $\omega=c1c1$ in the notation of Section~\ref{pC necklaces}, that is, $\omega$ is the cut one-loop bubble (the last in the lower row in Figure~\ref{omegatwocut}), $t_m$ is an indeterminant, one for each distinct mass $m$, where $m$ is the mass of the cut edge, and $\tilde{G}^{1,1}_{pC}=-G^{1,1}_{pC}+\not\sim t_m$.
Furthermore, $\Phi_R(\not\sim)=\pi\delta(q^2-m^2)$.

By putting $t_m$ in the combinatorial Green function for the cut self-energy, we can then extract the graphs with cuts consisting of $a_1$ edges of mass $m_1$ cut, $a_2$ edges of mass $m_2$ cut and so on, by extracting the coefficient of $t_{m_1}^{a_1}t_{m_2}^{a_2}\cdots$ from $G^{1,1}_{pC}$.  The $t_m$ can all be set to $1$ to simplify the equation in the case where we do not want access to this information.  Note that the same coefficient extraction will give the cuts with edges of those masses cut in the vertex functions as well, even though the Dyson-Schwinger equations for the vertex functions have no explicit appearance of any $t_m$.  The recursive appearances of $t_m$ from occurences of $G^{1,1}_{pC}$ suffice because only in the self-energy function is a direct edge-cut possible.

Furthermore, note that we can set 
\[
\not\sim\equiv \not\sim (G^2_{core})^2,
\]
 due to the propagator renormalization conditions $\Phi_R(G^2_{core})(m^2,m^2)=1$. Accordingly, we can set for $\omega=c1c1$,
\[
\left((G^3_{core})^2\left(\frac{\tilde{G}^{1,1}_{pC}}{(G^2_{core})^2}+\not\sim t_m\right)\right)=
\left( G^2_{core} Q^2\left(\tilde{G}^{1,1}_{pC}+\not\sim t_m\right)\right),
\]
with $Q=G^3_{core}/(G^2_{core})^{\frac{3}{2}}$. Other $\omega$ use different $Q$
given as the coefficient of $G_{core}^2\left(\tilde{G}^{1,1}_{pC}+\not\sim t_m\right)$
in the argument of $B_+^\omega$.

We can thus write the coproduct on Green functions 
$X^{P(j)}_{pC}$ by coefficient extraction on such Green functions, on $Q$ and on $\tilde{G}^{1,1}_{pC}$, see Lem.(\ref{lemmainvchwithcuts}).

\subsubsection{$|L_G|\gneq 2$}
We now treat  $|L_G|>2$.

Using necklaces which are pre-cut graphs in the sense of Definition~\ref{precutgraphsdefn} we find 
\begin{prop}\label{DSEcorewithcutsvert}
With this notation we have for the case of $n=2$  external legs
Equation~\ref{propwithcuts} with eight contributing necklaces for $G^{1,1}_{pC}$
and $\forall n\geq 3$, when $p$ has only one part Proposition~\ref{DSEcorenocutsvert} applies, while for $p$ with at least two parts we have
\begin{align*}
  G_{pC}^p&=\frac{1}{g_n}\sum_{k=1}^n \sum_{|\omega|=n,\omega\sim p} \frac{1}{|\mathtt{Aut}(x_\omega)|}B_+^{\gamma_\omega}\left(
  \left(\prod_{v\in V(\gamma_\omega)} g_{p(v)}\right) \Pi_{\omega}
\right) \\
&=\frac{1}{g_n}\sum_{k=1}^n \sum_{|\omega|=n,\omega\sim p} \frac{1}{|\mathtt{Aut}(\gamma_\omega)|}B_+^{\gamma_\omega}\left(
\frac{\prod_{v\in V(\gamma_\omega)} g_{p(v)}G_{pC}^{p(v)}(G_{pC}^{1,1})^{(e_{\gamma_\omega}-e_{\tilde{\gamma}_\omega})}}{(G_{core}^2)^{e_{\tilde{\gamma}_\omega}}}
\right),
\end{align*}
where $\gamma_\omega$ is the primitive graph associated to the necklace $\omega$, $\tilde{\gamma}_\omega$ is this graph with the cuts done (see Section~\ref{sec precut}),
$p(v)$ is the partition of vertex $v$ in $\gamma_\omega$  and the inner sums are over all necklaces of size $n$ compatible with the partition $p$ of $n$ external legs, that is with the associated graph $\gamma_\omega$ having $n$ external edges
and $h_0(\gamma_\omega)=|p|$.
\end{prop}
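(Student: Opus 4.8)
The plan is to treat the statement as a corollary of Lemma~\ref{lemmagraphins} together with the explicit product shape of $\Pi_\omega$: the genuinely combinatorial content (that $B_+^{\gamma}$ with the prescribed coefficient rebuilds each graph exactly $1/|\mathrm{Aut}(\Gamma)|$ times) has already been absorbed into that lemma, so no fresh coassociativity or Hochschild input is required. I would start from the defining series $G_{pC}^p=\sum_{\Gamma\in S_{pC}^p}\frac{g_\Gamma}{|\mathrm{Aut}(\Gamma)|}\Gamma$ and read Lemma~\ref{lemmagraphins} from right to left, using that the necklaces in play are pre-cut graphs so that inserting the primitive pre-Cutkosky graphs compatible with $p$ reconstructs exactly the pre-Cutkosky series. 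This gives $G_{pC}^p=\sum_{\gamma\sim p}\frac{g_\gamma}{|\mathrm{Aut}(\gamma)|}B_+^{\gamma}(\Pi_{\omega(\gamma)})$, after which everything remaining is re-indexing and coupling bookkeeping.

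For the first displayed equality I would replace the sum over primitive graphs by a sum over necklaces via the correspondence $\gamma=\gamma_\omega$ of Section~\ref{pC necklaces}, and then group the necklaces by their number of vertices $k=v_{\gamma_\omega}$. As in the core case (Proposition~\ref{DSEcorenocutsvert}), each vertex has valence $\ge 3$ and therefore carries at least one of the $n$ external legs, so $1\le k\le n$ and the single sum becomes $\sum_{k=1}^{n}\sum_{|\omega|=n,\,\omega\sim p}$. For the couplings I would unfold the pre-Cutkosky definition $g_{\gamma_\omega}=\bigl(\prod_{v\in V(\gamma_\omega)}g_{p(v)}\bigr)/g_{p(\gamma_\omega)}$ and pull the external-structure factor $1/g_{p(\gamma_\omega)}$ (written $1/g_n$ in the statement, since $p(\gamma_\omega)=p$ is the fixed external partition) out in front of the sum, leaving the vertex factor $\prod_{v}g_{p(v)}$ inside the argument of $B_+^{\gamma_\omega}$. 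This reproduces the first line verbatim.

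The second equality is then pure substitution of the explicit $\Pi_\omega=\prod_i G_{v_i}/\prod_i G_{e_i}$. Using that a one-loop primitive is a single cycle, so $v_{\gamma_\omega}=e_{\gamma_\omega}$, I would split the edge product as $\prod_i G_{e_i}=(G_{core}^2)^{e_{\tilde{\gamma}_\omega}}(G_{pC}^{1,1})^{-(e_{\gamma_\omega}-e_{\tilde{\gamma}_\omega})}$, the two exponents being precisely the numbers of uncut and of cut edges (recall $\tilde{\gamma}_\omega$ is $\gamma_\omega$ after the cuts are performed, so cut edges are no longer internal). Inverting moves the cut factor into the numerator with power $e_{\gamma_\omega}-e_{\tilde{\gamma}_\omega}$, and reading an uncut vertex as a trivial one-part cut so that $G_{v_i}=G_{pC}^{p(v_i)}$ uniformly (via $G_{pC}^{(m)}=G_{core}^m$) delivers exactly the stated argument of $B_+^{\gamma_\omega}$. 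The degenerate cases are only to be recorded: for $p$ with a single part the graphs are core and one falls back on Proposition~\ref{DSEcorenocutsvert}, while $p=(1,1)$ (that is $n=2$) is instead governed by Equation~\ref{propwithcuts}, which is why the formula is asserted only for $n\ge3$ with $p$ having at least two parts.

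The step I would watch most carefully — the only place where this is more than bookkeeping — is the placement of the cut-propagator factor $G_{pC}^{1,1}$ in the \emph{numerator} rather than the denominator. This is not a formal accident but encodes the fact, established in Subsection~\ref{sec propagator}, that a cut propagator admits only a single cut with no further self-energy insertions on either side, so that cut edges multiply in rather than resumming as a geometric series like the uncut propagators do. I would therefore fix the exponent count $e_{\gamma_\omega}-e_{\tilde{\gamma}_\omega}$ directly from the sign and placement conventions already built into the definition of $\Pi_\omega$ (and the accompanying normalization of $\not\sim$ there), rather than re-deriving them, since it is exactly those conventions that make the geometric-series reading of the remaining uncut propagator factors $(G_{core}^2)^{-e_{\tilde{\gamma}_\omega}}$ legitimate.
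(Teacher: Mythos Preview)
Your proposal is correct and matches the paper's approach: the paper does not give an explicit proof of this proposition, treating it as an immediate consequence of Lemma~\ref{lemmagraphins} together with the definition of $\Pi_\omega$ (exactly as it does for the core analogue, Proposition~\ref{DSEcorenocutsvert}, whose corollary is proved in one line by invoking that lemma). Your write-up is in fact more detailed than what the paper supplies, and your identification of the cut-propagator placement as the one substantive point, anchored in Subsection~\ref{sec propagator}, is precisely the emphasis the paper intends.
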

The solutions to theses fixed-point equations are sums over Cutkosky graphs despite the fact that the necklaces $\omega$ in $B_+^\omega$ can be pre-Cutkosky.

\subsubsection{Sub-Hopf algebras and invariant charges $Q_i$}
Analogously to Lemma~\ref{lem Delta G core}, we can use the invariant charges to understand how to take the coproduct of the series  $G^{p}_{pC}$.  Here the notion of multi-index is generalized so that $\mathbf{k} = (k_{p_1}, k_{p_2}, \ldots)$ where the $k_i$ are indexed by partitions and all but finitely many are $0$, and then $\mathbf{g^k} = \prod_q g_q^{k_q}$ where the product runs over partitions $q$.  Note that the index shift was not built in to the partitioned couplings.

\begin{defn}
Define the cut invariant charges for a partition $q$ to be
\[
Q_{q}:=\frac{G^{q}_{pC}}{\left[G^2_{core}\right]^{\frac{|q|}{2}}}.
\]
if $q\neq (1,1)$ and
\[
Q_{1,1}:=G^{1,1}_{pC}G^2_{core}.
\]

\end{defn}

\begin{lem}\label{lemmainvchwithcuts}
  \[
  \Delta_{pC} \left[\mathbf{g^k}\right]G^p_{pC}=\sum_{j_q=0}^{k_q}\left[\mathbf{g^j}\right]\left(G^p_{pC}
\prod_{q} Q_{q}^{k_q-j_q}\right)\otimes \left[\mathbf{g^{k-j}}\right]G^p_{pC},
\]
for any partiton $p$ and multi-index $\mathbf{k}$, where the subtraction $\mathbf{k-j}$ is coordinatewise and the product is over partitions $q$. 
\end{lem}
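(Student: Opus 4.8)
The plan is to follow the proof of Lemma~\ref{lem Delta G core} almost verbatim, first checking that the same monomials (in Green functions, cut invariant charges and couplings) appear on both sides and then that their coefficients agree. As there, the starting point is that $\Delta_{pC}$ is inherited from $\Delta_{core}$ through the map $f_\Gamma$: every term of $\Delta_{pC}(\Gamma)$ has the form $\gamma\otimes\Gamma/\gamma$, with $\gamma$ a full bridgeless subgraph inheriting its cuts and $\Gamma/\gamma$ the corresponding co-graph. The crucial structural fact, already used in the normality discussion above, is that every vertex of $\Gamma$ appears exactly once in either $\gamma$ or $\Gamma/\gamma$ with its vertex-cut refinement, and hence its integer partition $p(v)$, unchanged; likewise every cut edge of $\Gamma$ appears exactly once, either inside $\gamma$ or as a cut edge of $\Gamma/\gamma$. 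This is precisely what lets the partition-indexed bookkeeping go through.

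First I would track the couplings. Because $g_\Gamma$ carries a denominator $g_{p(\Gamma)}$ for the external structure (and uses the parts of $p(v)$ rather than a plain valence at cut vertices), the product of couplings for $\gamma$ and for $\Gamma/\gamma$ multiplies back to $g_\Gamma$: each vertex contributes its $g_{p(v)}$ exactly once, and the co-graph's distinguished external-structure vertex supplies exactly the missing factor. This is the partition-indexed analogue of the core computation and explains why the multi-index $\mathbf{k}$ now ranges over integer partitions $q$ with $\mathbf{g^k}=\prod_q g_q^{k_q}$.

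Next I would match the Green-function content using the cut invariant charges. If the co-graph $\Gamma/\gamma$ has $k_q-j_q$ objects of type $q$ (vertices for $q\neq(1,1)$, cut edges for $q=(1,1)$), beyond the distinguished vertex of type $p$ coming from the external structure, then it has exactly that many insertion places of each type, so the subgraph must be drawn from $G_{pC}^p\prod_q Q_q^{k_q-j_q}$. For $q\neq(1,1)$ the role of $Q_q$ is as in the core case: it supplies the vertex Green function $G_{pC}^q$ together with $|q|/2$ factors of $(G_{core}^2)^{-1/2}$ for its half edges. The genuinely new point is the cut-edge charge $Q_{1,1}=G_{pC}^{1,1}G_{core}^2$: a cut edge can be cut only once (Subsection~\ref{sec propagator}), so it enters $\Pi_\omega$, and hence $G_{pC}^p$, in the numerator as $G_{pC}^{1,1}$ rather than as an inverse power of a propagator, while the compensating factor $G_{core}^2$ in $Q_{1,1}$ restores the correct tally of uncut internal propagators. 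I would verify that, after combining $G_{pC}^p$ with the charges, the total power of $(G_{core}^2)^{-1}$ equals the number of uncut internal edges and the total power of $G_{pC}^{1,1}$ equals the number of cut edges, matching the edge exponents set up in Proposition~\ref{DSEcorewithcutsvert}.

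The coefficient check is then identical to the core case: passing to labelled counting via exponential generating series, each graph term appears with coefficient $1/|\Gamma|!$ on both sides, the relabelling factor $(|\Gamma/\gamma|!\prod_i|\gamma_i|!)/|\Gamma|!$ reconciling the unlabelled automorphism weights. I expect the main obstacle to be exactly the cut-edge bookkeeping: one must check carefully that $Q_{1,1}$, whose $+G_{core}^2$ sits in the numerator rather than contributing $(G_{core}^2)^{-1/2}$ half-edge factors like the other charges, together with the single-cut rule of Subsection~\ref{sec propagator}, reproduces the precise numerator/denominator balance of $\Pi_\omega$, and that the distinguished external-structure part of $p$ is never miscounted among the insertion places.
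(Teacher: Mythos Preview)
Your proposal is correct and follows essentially the same approach as the paper: the paper's proof simply says the argument is analogous to that of Lemma~\ref{lem Delta G core}, with the only new ingredient being the treatment of cut propagators via $Q_{1,1}$, which replaces an inverse core propagator by $G^{1,1}_{pC}$ at each cut edge of the co-graph. You have spelled out exactly this argument in considerably more detail than the paper does, correctly identifying the cut-edge bookkeeping through $Q_{1,1}$ as the crux.
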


\begin{proof}
The proof is analogous to the proof of Lemma~\ref{lem Delta G core} with the addion of the need to distinguish between cut and uncut propagators.  The cut propagators are counted by the powers of $g_{1,1}$ and $Q_{1,1}$ is designed to substitute a $G^{1,1}_{pC}$ for an inverse core propagator Green function each time a cut propagator appears in the co-graph.  This gives the correct insertions.
\end{proof}

\begin{rem}\label{pivsQ}
  Recall $\Pi_\omega$ from Section~\ref{DSE}.  Both $\Pi_\omega$ and the $Q_i$ are products of combinatorial Green functions and inverses of combinatorial Green functions, and they are closely related.
  The $\Pi_\omega$ are associated to a necklace, while the $Q_i$ are essentially associated to a corolla.
  
  We can express $\Pi_\omega$ through invariant charges $Q_{v_i}$, $v_i\in V_{x_\omega}$ and inverse propagator functions $G_{core}^2,G_{pC}^{1,1}$, simply by multiplying around the necklace. Specifically, if a necklace $\omega$
contributes to $G_{pC}^p$ with $p$ a partition of, say, $n$ external legs (we include the core case $p=(n)$) then 
\[
\Pi_\omega=\prod_{i=1}^{v_g} G_{pC}^p \frac{Q_{v_i}}{Q_p}
(Q_{1,1})^{|C_g|},
\]  
with $Q_{v_i}=G_{v_i}/(G_{core}^2)^{{\bf{eval}}(v_i)/2}$, $Q_p=G_{pC}^p/(G_{core}^2)^{n/2}$ and $Q_{1,1}=G_{pC}^{1,1}G_{core}^2$,
which replaces each coupling $g_i$ apparent in $g_{x_\omega}$ by the corresponding 
Green function. 

The relation between necklace and corolla is also a special case of the relationship between a graph and its planar dual, so in some sense the $\Pi$ and the $Q$ are dual to each other.
\end{rem}

\subsubsection{Example}

{\allowdisplaybreaks
\begin{align*}
  G^{1,2}_{pC} & = g^2B_+^{\includegraphics{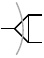}}\left(\frac{(G^{1,1}_{pC})^2 (G^3_{core})^3}{G^2_{core}}\right) + g^2B_+^{\includegraphics{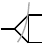}+\includegraphics{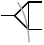}+\includegraphics{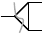}+\includegraphics{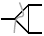}}\left(\frac{G^{1,2}_{pC}G^{1,1}_{pC}(G^3_{core})^2}{(G^2_{core})^2}\right) \\
 & \qquad + g^2B_+^{\includegraphics{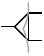}+\includegraphics{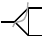}+\includegraphics{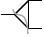}}\left(\frac{(G^{1,2}_{pC})^2G^3_{core}}{(G^2_{core})^3}\right)\\
 & \qquad + g^2B_+^{\includegraphics{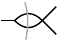}}\left(G^4_{core}G^3_{core}(G^{1,1}_{pC})^2\right) + g^2B_+^{\includegraphics{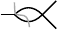}+\includegraphics{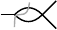}}\left(\frac{G^4_{core}G^{1,2}_{pC}G^{1,1}_{pC}}{G^2_{core}}\right) \\
  & \qquad + g^2B_+^{\includegraphics{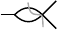}+\includegraphics{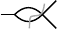}}\left(\frac{G^{1,3}_{pC}G^3_{core}G^{1,1}_{pC}}{G^2_{core}}\right) + g^2B_+^{\includegraphics{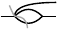}+\includegraphics{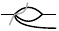}}\left(\frac{G^{2,2}_{pC}G^3_{core}G^{1,1}_{pC}}{G^2_{core}}\right) \\
  & \qquad + g^2B_+^{\includegraphics{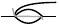}+\includegraphics{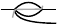}}\left(\frac{G^{2,2}_{pC}G^{1,2}_{pC}}{(G^2_{core})^2}\right)  + g^2B_+^{\includegraphics{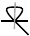}}\left(\frac{G^{2,3}_{pC}}{G^2_{core}}\right) + \cdots
\end{align*}
}
It is worth emphasizing that, as one would see from expanding out the example, while these equations live in the pre Cutkosky universe, the series when expanded out only include Cutkosky graphs, since Green functions for the cut vertices have no constant term, and so for any given graph, with all blobs substituted, the only remaining cuts are edge cuts in accordance with Lemma~\ref{lemmagraphins} and Equation~\ref{propwithcuts}, the latter equation having a constant term $\not\sim$. 

\subsection{The coaction on Green functions}\label{coactGreen}
The coaction $\rho$ from Section~\ref{cointbi} and the related Galois coation (see Section~\ref{gfcoaction}) can be expressed at the level of Green functions.  The nicest way to see this is to put each Green function into a matrix.  The matrix also nicely illustrates how the Green function can be refined by number of vertices and number of cut edges as well as by loops, and connects to other work of one of us \cite{BlochKreimerOS,coaction,MarkoDirk}.

To begin with, we will describe how to build a lower triangular matrix $M_\Gamma$ from the Galois conjugates of a core graph $\Gamma$.
Suppose $\Gamma$ has $n$ external edges.  We will view $\Gamma$ as having additional external edges of momentum $0$ at all vertices which do not already have an external edge, so that we can cut all non-self-loop edges of $\Gamma$ without resulting in non-physical cuts.
The entries of $M_\Gamma$ are all pre-Cutkosky graphs, though we will, without further comment, identify a core graph $\Gamma'$ with the pre-Cutkosky graph $(\Gamma',\Gamma')$

The first column of $M_\Gamma$ is built from the set of all graphs which can be obtained from $M_\Gamma$ by contracting edges of a spanning forest of $\Gamma$.\footnote{We will consider two graphs obtained in this way to be the same if they are isomorphic where we take the external edges to be labelled but the internal edges to be unlabelled.  Equivalently, since we have put extra external edges so that each vertex has at least one external edge, we can consider two graphs obtained in this way to be the same if they are isomorphic when we take the vertices, but not the edges, to be labelled.  In particular, there is only one such graph on one vertex.}  The column has one such graph in each entry.  The ordering is given by the number of vertices, and within graphs with the same number of vertices an aribtrary order is chosen.  So, the top left entry of $M_\Gamma$ is the graph with one vertex, as many loops as $\Gamma$ and as many external edges as $\Gamma$.  The entry below this has two vertices, as may some further entries below. Next come three vertex graphs, and finally the bottom left entry of $M_\Gamma$ is $\Gamma$ istelf.  

The main diagonal of $M_\Gamma$ is built from the same graphs as the first column, in the same order, except that all non-loop edges are cut.  The top left entry is common to the first column and the main diagonal, but fortunately, it has no non-loop edges, so this is consistent.

For $1<j<i$, the $(i,j)$th entry of $M_\Gamma$ is the pre-Cutkosky graphs whose underlying graph is the $(i,1)$ entry of $M_\Gamma$ and whose associated graph with any compatible forest contracted is the $(j,j)$ entry of $M_\Gamma$, if such a graph exists, and is $0$ otherwise.  Observe that if such a graph exists then it is unique because the external edges are taken as distinguishable, so we know which vertices of $\Gamma$ have been combined in the $(i,1)$ and $(j,j)$ entries and either these combinations are incompatible, or they are compatible and we know exactly which edges to cut to get the $(i,j)$th entry from the underlying graph.

Note that $M_\Gamma$ is built of Galois conjugates of $\Gamma$.

\medskip

The next step is that we want to upgrade $M_\Gamma$ to include all graphs of the Green function $G^{(n)}_{core}$.  We will call this new matrix $M$.  $M$ is an infinite lower triangular matrix whose entries are pre-Cutkosky graphs with $n$ external edges with nonzero momenta.

The first column of $M$ consists of all graphs appearing in $G^{(n)}_{core}$ with their symmetry factors as coefficients and ordered first by loop number, then within a loop order, ordered by number of vertices.  The main diagonal of $M$ consists of these same graphs in the same order with all edges cut.

Analytically the main diagonal is obtained from the first column by replacing what would be obtained from the Feynman rules by its leading singularity.

For $1<j<i$, if the $(i,1)$ and $(j,j)$ entry appear together in some $M_\Gamma$, then the $(i,j)$th entry is the sum of the corresponding entries from all such $M_G$, each scaled by their symmetry factors, otherwise the $(i,j)$th entry is $0$.\footnote{The reason that non-trivial sums are possible in general is that we do not have a universal labelling of vertices, only of the original $n$ external edgs, and so different contractions of the $(i,1)$th entry can potentially give the underlying graph of the $(j,j)$th entry.  This could be avoided by, as in $M_\Gamma$ adding labelled external edges of momentum $0$ to all vertices of all graphs, at the cost of requiring extra copies of many graphs, one for each different way the graph can be obtained as a contraction of a larger one.  The number of extra copies gets quite large as with many vertices we need many external edges and then we have the copies of the graphs with fewer vertices obtained by contracting these larger ones.}

See Figure~\ref{finalexample}.

\begin{figure}[H]
\includegraphics[width=12cm]{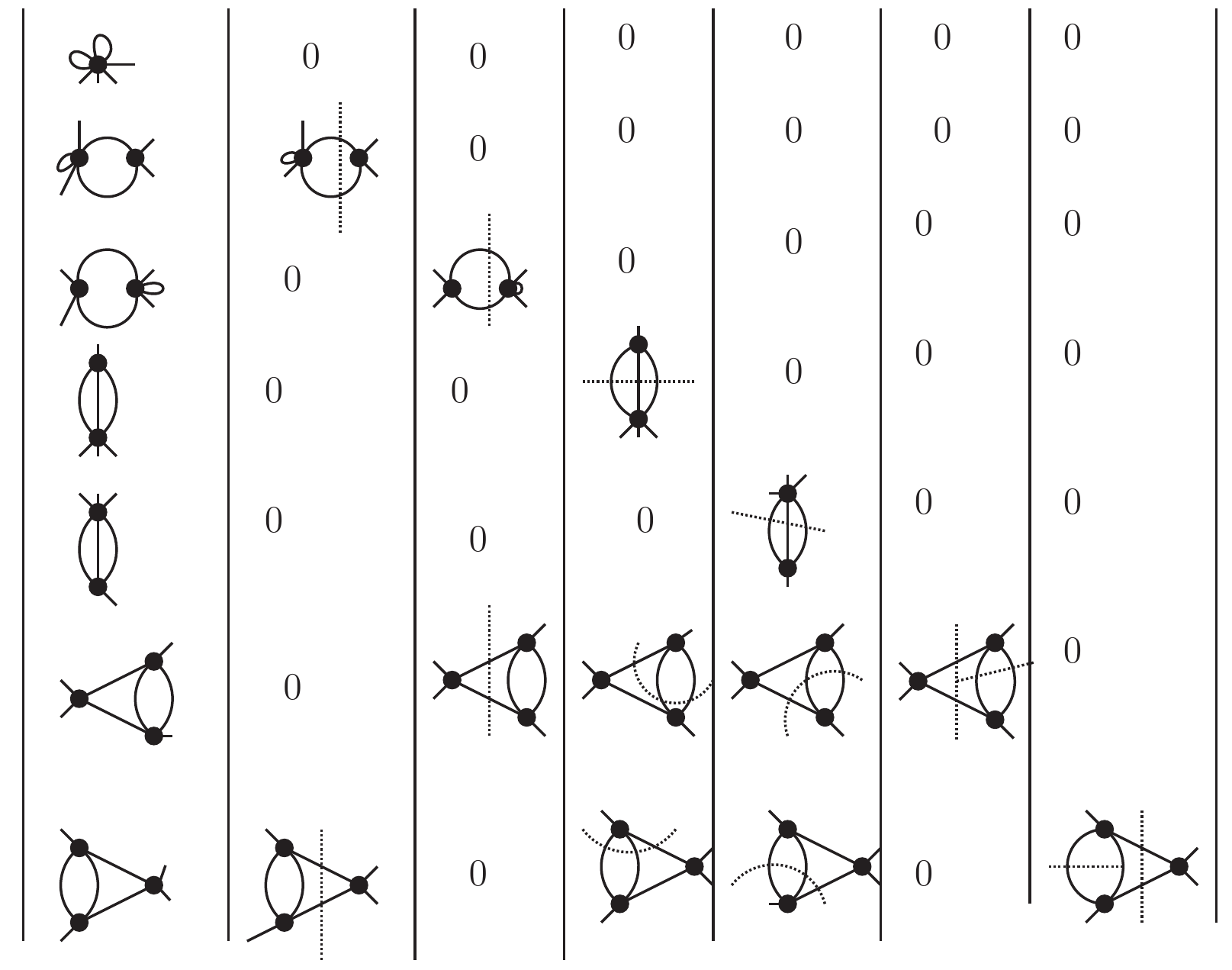}.
\caption{A $7\times 7$ example of a matrix $M$. Note that we can regard any graph $\Gamma$ in this matrix as representing a sum $\sum_{F\sim E_{on}} (\Gamma,F)$ over pairs $(\Gamma,F)$ with $F$ compatible
with the set $E_{on}$ indicating the chosen refinement of $L_\Gamma$ in accordance with Lem.(\ref{lemmafrakC}). Note that we omitted symmetry factors $|\mathtt{Aut}(\gamma)|$ for any graphs $\gamma$ in this matrix for brevity and division by this symmetry factor is understood for each graph consistent with the appearance of such factors in 
combinatorial Green functions.}
\label{finalexample}
\end{figure}

\begin{rem}
  Note that in Figure~\ref{finalexample} recuperating all the matrix entries from the diagonal entries amounts to full cut reconstructibility. This is feasible upon moving to the left by iterating dispersion integrals and moving up by contracting edges $e$, amounting to an operator product expansion for the merger of the two vertices given by source $s(e)$ and target $t(e)$, see the discussion in \cite{MarkoDirk}. It is an underlying motivation of this paper that the structure of these matrices results from an underlying cubical chain complex \cite{rational}. It systematizes the study of cut reconstructibility in a way which generalizes studies which  at first loop order were undertaken by Britto and collaborators \cite{Britto}.
\end{rem}

Next, the cointeracting bialgebra structure discussed in Section~\ref{cointbi}  acts on the matrix $M$.

Consider Figure~\ref{finalexample}. In the leftmost column we list some two-loop graphs of $G^{(4)}_{core}$.
The uppermost entry has two loops but only one vertex.  The next four entries have two loop and two vertices; they count as different when the external edges are partitoned differently.  The final two entries are the two core graphs with two loops and three vertices.  The leftmost column also determines the diagonal by putting every edges on-shell.

Using our results in Section~\ref{cointbi} we finally have
\begin{thm}\label{prop matrix coaction}
For $j=1$ the map 
\[
\rho_M: M_{ij}\to \sum_{k=1}^\infty \Phi_R(M_{kj})\otimes m_{\mathbb{C}}(S_R^\Phi\otimes\mathrm{id})\bar{\Delta}(M_{ik})
\]
is a map $\rho_M: \mathsf{Gal}_{(G,T)}\to \mathsf{Gal}^{\mathfrak{m}}
\otimes \mathsf{Gal}^{\mathfrak{dr}}$  which fulfills 
\[
\rho_M=\sum_{T\in \mathcal{T}}\rho_\Phi\circ w,
\]
based on the coaction $\rho$ of Appendix~\ref{appB}.\\
For $M_{ij}$, $j\geq 2$, $\rho_M=\rho_\Phi\circ w$ is a map  $\mathsf{Gal}_{(G,T)}\to \mathsf{Gal}^{\mathfrak{dr}}\otimes \mathsf{Gal}^{\mathfrak{dr}}$ based on the coproduct that is given by $\rho$ when there is no tadpole restriction.
\end{thm}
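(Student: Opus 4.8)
The plan is to reduce the statement to the generator-level results already established in Appendix~\ref{appB} (Theorem~\ref{thmgenerators}) together with Lemma~\ref{lemmafrakC}, by first translating the matrix $M$ into the language of Galois conjugates and the generators $x_{e,[a,b]}$. By construction the first-column entry $M_{i1}$ is a motivic conjugate $\Gamma/p_i$ (no cuts, $q=\emptyset$), hence lies in $\mathsf{Gal}^{\mathfrak{m}}$ and equals $w^{-1}$ of a generator of the form $x_{\bullet,[\emptyset,b_i]}$; the diagonal entry $M_{kk}$ is its fully-cut partner, and for $k<i$ the off-diagonal entry $M_{ik}$ is the unique pre-Cutkosky graph whose underlying graph is $M_{i1}$ and whose contracted associated graph is $M_{kk}$. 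Under $w$ this is precisely the generator with interval $[a,c]$, where $a$ records the cut edges and $E_T\setminus c$ the contracted edges needed to pass from $M_{i1}$ to $M_{kk}$. Thus the index $k$ runs over exactly the intermediate cuts $c$ with $\emptyset\subseteq c\subseteq b_i$ appearing in the incidence coproduct $\rho$, and the lower-triangular shape of $M$ is the Hasse diagram of this interval poset. This is the step that turns the matrix bookkeeping $\rho[(M)_{ij}]=\sum_k (M)_{ki}\otimes(M)_{jk}$ of the introduction into the interval decomposition of $\rho$.

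For $j=1$ I would then expand $\rho_M(M_{i1})=\sum_k \Phi_R(M_{k1})\otimes m_{\mathbb{C}}(S_R^\Phi\otimes\mathrm{id})\bar{\Delta}(M_{ik})$ and identify the two tensor slots. The left slot $\Phi_R(M_{k1})$ is a renormalized motivic period; using $\Phi_R(\gamma)=\sum_{T\in\mathcal{T}(\gamma)}\Phi_R((\gamma,T))$ from \cite{MarkoDirk}, it already encodes the sum over spanning trees, which is why the matrix map reproduces the tree sum in $\sum_{T}\rho_\Phi\circ w$ without any external summation. The right slot is, by Remark~\ref{coactthm}, the renormalized de Rham evaluation $\bar{\Phi}(\tilde{\Gamma}_F)$ of the cut graph recorded by $M_{ik}$, where $F$ is the compatible forest giving that cut. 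Matching these term by term with Lemma~\ref{lemmafrakC}, namely $\mathfrak{C}(\Gamma)=\sum_{p}\sum_{F\sim p}\Phi_R(\Gamma/E_F)\otimes\bar{\Phi}(\tilde{\Gamma}_F)$, together with the defining identity $\mathfrak{C}(\Gamma)=\sum_{T}\rho_\Phi\circ w_{(\Gamma,T)}$, yields $\rho_M=\sum_{T}\rho_\Phi\circ w$ and shows the image lies in $\mathsf{Gal}^{\mathfrak{m}}\otimes\mathsf{Gal}^{\mathfrak{dr}}$; that it is a coaction is then inherited from $\rho$ acting on $J_1$ via Theorem~\ref{thmgenerators} and the cointeraction identity \eqref{cointeq}.

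For $j\geq 2$ the same unravelling applies, except that the left entry $M_{kj}$ already carries the cut recorded by column $j$, so the left slot is a de Rham rather than a motivic evaluation and both tensor factors land in $\mathsf{Gal}^{\mathfrak{dr}}$. In this regime the contractions involved never produce the forbidden tadpoles, so by the no-tadpole-restriction case of Theorem~\ref{thmgenerators} the map $\rho$ is a genuine coproduct rather than merely a coaction; hence $\rho_M=\rho_\Phi\circ w$ is the asserted coproduct $\mathsf{Gal}^{\mathfrak{dr}}\to\mathsf{Gal}^{\mathfrak{dr}}\otimes\mathsf{Gal}^{\mathfrak{dr}}$, with coassociativity again following from \eqref{cointeq}.

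I expect the main obstacle to be the bookkeeping in the identification step: verifying that the diagonal entries $M_{kk}$ are in order-preserving bijection with the admissible intermediate cuts $c$ in $\rho$, so that the column sum $\sum_k$ neither over- nor under-counts, and that the symmetry factors attached to the matrix entries (suppressed in Figure~\ref{finalexample}) are exactly those produced when one collects the forests $F\sim p$ in Lemma~\ref{lemmafrakC}. Getting the renormalization maps to distribute correctly across the tensor product, so that $S_R^\Phi$, $\Phi_R$, and $\bar{\Phi}$ land on the pieces dictated by \eqref{rhoaction}, is the delicate point; but once the dictionary between matrix entries and generators $x_{e,[a,b]}$ is in place, this is controlled entirely by the cointeraction identity \eqref{cointeq}.
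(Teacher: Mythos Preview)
Your proposal is correct and follows essentially the same route as the paper: the paper gives no explicit proof beyond the sentence ``Using our results in Section~\ref{cointbi} we finally have'', and you have correctly unpacked that reference by invoking Lemma~\ref{lemmafrakC}, the definition $\mathfrak{C}(\Gamma)=\sum_T\rho_\Phi\circ w_{(\Gamma,T)}$, equation~\eqref{rhoaction}, and Theorem~\ref{thmgenerators}, together with the dictionary between matrix entries and generators $x_{e,[a,b]}$. Your identification of the bookkeeping obstacles (bijection between the $M_{kk}$ and admissible cuts, and the distribution of $S_R^\Phi$, $\Phi_R$, $\bar{\Phi}$) is exactly the content the paper leaves implicit.
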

\begin{rem}
Note that the columns $C_j$, $(C_j)_i=(M)_{i,j}$ can be generated by
Proposition~\ref{DSEcorewithcutsvert}. In particular also the diagonal $D_i=(M)_{i,i}$
can be generated by setting all core Green functions to $\One$ in all invariant charges. See also Remark~\ref{corevspc}.
\end{rem}

\subsection{Pairs $(\Gamma,F)$ or $H_C$}
The matrix $M$ and Proposition~\ref{prop matrix coaction} both link together the Green functions and their Dyson--Schwinger equations with the Galois conjugates.  The Dyson--Schwinger equations are defined in terms of pre-Cutkosky graphs while the latter are defined in terms of pairs $(\Gamma,F)$.  This naturally leads into the question of Dyson-Schwinger eqations for pairs $(\Gamma,F)$.

However, when we consider cuts via pairs $(\Gamma,F)$, the situation with respect to Dyson--Schwinger equations is quite different.  The valid co-graphs in this case must still be $(\Gamma,F)$ pairs, and so we cannot contract a subgraph which would yield a cut vertex.  We cannot even contract a cut propagator subgraph, as in the co-graph this would join two branches of the forest that were separate in the original graph, resulting either in a cycle or a forest with fewer trees, in either case not a valid co-graph.

This means that there are many more primitive graphs but what is inserted into them is simpler -- only uncut graphs.  The primitive graphs in $H_{GF}$ are of two forms: core primitives, that is one loop uncut graphs, and cut graphs where no loops are left intact by the cut, the graphs $\Gamma_j$ in Equation~\ref{primitivesgpc} below.  These are exactly the primitives as cut subgraphs are not allowed in the coproduct and a cut graph with an uncut loop has that uncut loop as a subdivergence. 

Dyson--Schwinger equations can still be formed and they generate the same Green functions since the graphs which ultimately appear in the $G_{pC}$ are all Cutkosky graphs.  It amounts to the same thing to write Dyson--Schwinger equations in the vector space $H_C$.  Whether in $H_C$ or $H_{GF}$, since there are no vertex cuts, the Dyson--Schwinger equations insert core graphs into core primitives and into graphs with no loop left intact.  The only difference between the formulation in $H_C$ compared to $H_{GF}$ is whether or not graphs are split into sums over compatible forests.

In $H_C$ we can set up the Dyson--Schwinger equations as follows and the $H_{GF}$ case is analogous.  For any partition $p$ of $L_\Gamma$ we can re-write the corresponding Green function 
$G^p_{pC}$ as generated from $1$-cocycles $B_+^\gamma$, with $\gamma\in H_C$ and $||\gamma||=0$,
$|\gamma|\geq 1$, $\gamma\sim p$. With $P_0$ the projection $H_C\to H_C^{||0||}$, consider
$G_{pC}^{p,0}=P_0(G_{pC}^p)$.  

We write
\[
G_{pC}^{p,0}=\sum_j c_j \Gamma_j, \Gamma_j\in H_C^{||0||},
\]
where $c_j$ contains $g_{\Gamma_j}$ and symmetry factors. Then for each $\Gamma_j$ define 
\[
\Pi_j^p:=\frac{\prod_{v\in V_{\Gamma_j}}G_{core}^{\mathbf{val}(v)}}{\prod_{e\in E_{\tilde{\Gamma_j}}}G_{core}^2}.
\]
We have 
\be\label{primitivesgpc}
G_{pC}^{p}=\sum_j c_j B_+^{\Gamma_j}(\Pi_j^p),
\ee
by construction. For any non-trivial partition $p$, $\Gamma_j$ has no loop left intact.

This way of writing the Dyson--Schwinger equations emphasizes different aspects than the Dyson--Schwinger equations from elsewhere in this paper.  The $H_C$ or $H_{GF}$ perspective generates the same Green functions as the $H_{pC}$ approach, but builds them by inserting core Green functions into graphs with no loop left intact.  The $H_C$ or $H_{GF}$ perspective does not capture the structure of the cuts in a rich way.  The insertions are core graphs and all the cut structure is in the primitives, so these Dyson--Schwinger equations don't really build the cuts, they only work around them.

The $H_C$ or $H_{GF}$ perspective does bring up the question of characterizing the Cutkosky graphs with no loop left intact.  
Such a graph after cutting falls into a collection of cycle-free graphs, that is, it is a forest.  The external edges of the graphs act like roots in this forest, but note that there may be multiple roots in one tree of the forest.  The case where one or more trees of the forest has no external edges will not add to monodromy apparent from varying external momenta but can contribute when we vary internal masses or are interested in variations off the principal sheets.

This says that the non-core primitives in $H_{GF}$ are multiply rooted forests along with gluing information giving a pairing of the leaves of the forest so that no two leaves of the same tree are paired.

Turning this around we might consider fixing a forest and looking at different ways to glue it according to this rule.  One interesting thing to consider is when the gluing gives a graph which is primitive in some renormalization Hopf algebra.  The simplest case of this will be studied in \cite{DLY}.

\section{Conclusions}\label{sec concl}
There are conclusions to be drawn with regards to combinatorial Dyson--Schwinger equations, coactions and the cointeraction.
\subsection{Combinatorial Dyson--Schwinger equations}
Combinatorial Dyson--Schwinger equations are fixed-point equations which originate from the Hopf algebra structure underlying core or pre-Cutkosky Hopf algebras.

As their solutions they generate infinite series over core or Cutkosky graphs \cite{Karenbook}.
Applying Feynman rules to such equations delivers systems of integral equations 
which generalize the well-studied Dyson--Schwinger equations for Green functions in renormalizable quantum field theories. 

One finds asymptotic series
in coupling parameters as formal solutions to these equations obtained from the recursive nature of perturbation theory \cite{Michibook}. Sophisticated methods of resummation exists and are recently complemented by methods of resurgence. One finds that the non-perturbative contributions which is missing in these asymptotic series can be systematically generated from the asymptotic series themselves \cite{DunneB}.

We hope that this approach can in the future also be used here where the systems of 
combinatorial Dyson--Schwinger equations given in
Proposition~\ref{DSEcorenocutsvert} and Proposition~\ref{DSEcorewithcutsvert} are more involved. In particular when fixing the number of cut edges for a Green functions (computing the amplitude for a $k$-particle cut with fixed $k$) our approach allows using a Fubini type decomposition into a cut graph augmented (dressed) by proper renormalized Green functions corresponding to full propagators and vertices
as in Eq.\eqref{primitivesgpc}. 

Finally given that Cutkosky cuts are intimately related to an understanding of the infrared sector of the theory \cite{Sterman} we hope that our results on cut Dyson-Schwinger equations open an avenue to make progress in this direction as well
as indicated in Rem.(\ref{corevspc}). In particular the recent treatment of infrared singularities \cite{CapIR} in the context of loop tree duality \cite{Cap1,Cap2,Cap3} indicates an interplay of infrared and short distance singularities akin to the cointeraction studied in this work.  
\subsection{Coactions}
There are two bialgebras and accompanying coactions whose use in  physics amplitudes we illuminated.

i) The first $\bar{\Delta}_{core}$ in Section~\ref{hcorehpc} considers the treatment of loop integrals for loops build from  offshell edges which appear in a cut Feynman graph. This coaction is needed in the above Fubini type decomposition of any Cutkosky graph $G$ into a cut bare diagram $y$ which has no loop left intact and loop graphs $x$ providing radiative corrections at internal vertices and at internal edges of $y$ as in Remark~\ref{remclustersep}.

ii) The second concerns the decomposition of a Feynman graph or amplitude according to its Hodge structure. The incidence bialgebra with its coproduct $\rho$ on proper Cutkosky graphs and coaction $\rho$ on core graphs delivers the Hodge decomposition 
captured by the coaction $\rho$ when acting on Green functions as in  Section~\ref{coactGreen}. This delivers a solid mathematical foundation to an analytic understanding of amplitudes and in particular to the notion of coaction as promoted in one-loop examples in \cite{Britto}. Furthermore it is the starting point for a study of amplitudes as realizations of cubical chain complexes \cite{BlochKreimerOS}. We pointed out relations to the notion of assembly maps in Section~\ref{assemblysec}. First consequences for Green functions were developed in \cite{MarkoDirk}.
\subsection{Cointeracting bialgebras}
We clarified in this paper how these two coactions i) and ii) above interact. Fortunately recent progress by Lo\"ic Foissy \cite{cointpapers} and others \cite{Manchon} provided us with the notion of cointeracting bialgebras. It turns out that the two coaction provide exactly that: two cointeracting bialgebras. We quote \eqref{cointeq}:
\[
m_{1,3,24}\circ(\rho\otimes\rho)\circ \Delta_c=(\Delta_c\otimes\mathrm{id})\circ\rho,
\]
which allows us to either first decompose a graph or amplitude according to its Hodge structure and then take care of the loop integrals, or vice versa first to renormalize the amplitude and then determine the Hodge structure in sub- and co-loops,
leading to the map \eqref{rhoaction}
\[
\rho_\Phi:=(m_{\mathbb{C}}\otimes_{\mathbb{Q}}\mathrm{id})(S_R^\Phi\otimes\Phi\otimes {\bar{\Phi}})(\circ w^{-1})^{\otimes 3}\circ(\Delta_c\otimes \mathrm{id})\circ \rho= (\Phi_R\otimes_{\mathbb{Q}} {\bar{\Phi}})\circ (w^{-1})^{\otimes 2}\circ\rho,
\]
involving both $\rho$ and $\bar{\Delta}_{core}$ in $\bar{\Phi}=m_\mathbb{C}(S_R^\Phi\otimes\Phi)\circ\bar{\Delta}_{core}$, and therefore illuminating the Hodge structure of Cutkosky cuts and renormalization in one cointeraction.
\appendix

\section{Graph set up}\label{graph appendix}

Here we give a formal set up for graphs and related notions as a rigorous underpinning for all that has been discussed above.

First, we have to define graphs, and variations of that notion adopted to the need to study Cutkosky cuts. 
In particular for a given partition of the external edges $L$ into $k$ parts we 
define a corresponding notion of graph with cuts corresponding to the parititon generalizing the notion of a Feynman graph.  Physically, the cut edges are edges which are put on-shell.

For this we have to single out the internal edges which are put on-shell. We notate such a graph as a pair of two graphs, the first given by forgetting the distinction of those on-shell edges, the second by regarding those on-shell edges as two unconnected half-edges which gives us an appropriate combinatorial handle on
the analytic structure of amplitudes. 

Alternately, we can encode a graph with cuts using a graph and a spanning forest where the cut edges are those going between different trees of the forest.
The map from a pair $(\Gamma,F)$ of a graph $\Gamma$ and a spanning forest $F$ to a pair 
of two graphs $(\Gamma,H)$ indicating the same cut is surjective but not injective: there can be several forests putting the same set of edges on-shell.

\subsection{Graphs and Cut Graphs}\label{set graphs}
We start by defining graphs in a way which is well suited to our needs.  See \cite{Michibook, kythesis} for similar quantum field theory inspired set-ups of graphs.  The standard definition of combinatorial map, see \cite{LZgraphs}, is also closely related.
\subsubsection{Graphs}\label{sec graphs}
Given a set $S$ a \emph{partition} (or \emph{set partition}) $\mathcal{P}$ of $S$ is a decomposition of $S$ into disjoint nonempty subsets whose union is $S$.  The subsets forming this decomposition are the \emph{parts} of $\mathcal{P}$.  The parts of a partition are unordered, but it is often convenient to write a partition with $k$ parts as $\dot{\cup}_{i=1}^k S_i = S$ with the understanding that permuting the $S_i$ still gives the same partition.  A partition $\mathcal{P}$ with $k$ parts is called a $k$-partition and we write $k=|\mathcal{P}|$.
We will also need notation for restricting a partition to a subset.  For $R\subseteq S$, and $P=\dot\cup_i S_i$ a partition of $S$, write $P|_R$ for the partition of $R$ whose parts are the nonempty $S_i\cap R$. 

\begin{defn}
A \emph{graph} $\Gamma$ is a tuple $\Gamma=(H_\Gamma, \mathcal{V}_\Gamma, \mathcal{E}_\Gamma)$ consisting of 
\begin{itemize}
\item $H_\Gamma$, the set of half-edges of $\Gamma$,
\item $\mathcal{V}_\Gamma$, a partition of $H_\Gamma$ with parts of cardinality at least 3 giving the vertices of $\Gamma$,
\item $\mathcal{E}_\Gamma$, a partition of $H_\Gamma$ with parts of cardinality at most 2 giving the edges of $\Gamma$.
\end{itemize}
\end{defn}
We do not require all parts of $\mathcal{E}_\Gamma$ to be of  cardinality 2.
The parts of cardinality 2 we call the edges (or internal edges) of $\Gamma$; we denote the set of them by $E_\Gamma$ and set $e_\Gamma:=|E_\Gamma|$.  The parts of cardinality 1 we call the external edges of $\Gamma$; we denote the set of them $L_\Gamma$ and set $l_\Gamma:=|L_\Gamma|$. Also we set $v_\Gamma:=|\mathcal{V}_\Gamma|$.

In this formulation a graph $H=(H_H, \mathcal{V}_H, \mathcal{E}_H)$ is a subgraph of $\Gamma=(H_\Gamma, \mathcal{V}_\Gamma, \mathcal{E}_\Gamma)$ if $H_H\subseteq H_\Gamma$, $\mathcal{V}_H = \mathcal{V}_\Gamma|_{H_H}$ and $E_H\subseteq E_\Gamma$. Many of our subgraphs will have the additional property that every part of $\mathcal{V}_H$ is a part of $\mathcal{V}_\Gamma$ (so corollas are either included or not in the subgraph), but this is not included in the definition because, while it is the correct condition for subFeynman diagrams, it is not a desired condition for spanning trees and forests which we also view as subgraphs.

We say that a graph $\Gamma$ is connected if there is no partition of $H_\Gamma$ into two sets $H_\Gamma(1),H_\Gamma(2)$ such that all parts of cardinality two of $\mathcal{E}_\Gamma$
are either in $H_\Gamma(1)$ or in $H_\Gamma(2)$.

The partition $\mathcal{V}_\Gamma$ collects half-edges of $\Gamma$ into vertices.  This formulation of graphs does not distinguish between a vertex and the corolla of half-edges giving that vertex.  However, it is sometime useful to have notation to distinguish when one should think of vertices as vertices and when one should think of them as corollas.  Consequently, let $V_\Gamma$, the set of vertices of $\Gamma$, be a set in bijection with the parts of $\mathcal{V}_\Gamma$, $|V_\Gamma|=v_\Gamma=|\mathcal{V}_\Gamma|$.  This bijection can be extended to a map $\nu_\Gamma:H_\Gamma\rightarrow V_\Gamma$ by taking each half edge to the vertex corresponding to the part of $\mathcal{V}_\Gamma$ containing that vertex.
For $v\in V_\Gamma$ define 
\[
\mathbf{c}_v:=\nu_\Gamma^{-1}(v)\subset H_\Gamma,
\]
to be the corolla at $v$, that is the part of $\mathcal{V}_\Gamma$ corresponding to $v$. 
A graph $\Gamma$ as above can be regarded as a set of corollas determined by $\mathcal{V}_\Gamma$ glued together according to $\mathcal{E}_\Gamma$.\footnote{To make the connection with combinatorial maps, since all parts of $\mathcal{E}_\Gamma$ are of cardinality at most 2, $\mathcal{E}_\Gamma$ uniquely determines an involution of $H_\Gamma$ which takes a half edge to itself if it is alone in its part and to its part-mate otherwise.  In combinatorial maps this involution is usually called $\alpha$ for \emph{ar\^ete}, French for edge.  Combinatorial maps differ from graphs in that there is a cyclic order for the half-edges around each vertex, thus the partition $\mathcal{V}_\Gamma$ is upgraded to a permutation of $H_\Gamma$ with one cycle for each vertex.  This permutation is usually called $\sigma$ for \emph{sommet}, French for vertex.}

For an edge $e\in E_\Gamma$, if $|\nu_\Gamma(e)|=1$, we say $e$ is a self-loop at $v$, with $\nu_\Gamma(e)=\{v\}$.

We emphasize that we allow multiple edges between vertices and allow self-loops as well. 

We write $|\Gamma|:=|H^1(\Gamma)|=e_\Gamma-v_\Gamma+1$ for the number of independent loops, or the dimension of the cycle space of the connected  graph $\Gamma$.
Note that for disjoint unions of graphs $H_1,H_2$, we have $|H_1\dot{\cup} H_2|=|H_1|+|H_2|$.

A graph is bridgeless if $(\Gamma-e)$ has the same number of connected components as $\Gamma$ for any $e\in E_\Gamma$.  A graph is 1PI or 2-edge-connected if it is both bridgeless and connected, equivalently if $(\Gamma-e)$ is connected for any $e\in E_\Gamma$. 
Here,
for $\Gamma=(H_\Gamma, \mathcal{V}_\Gamma, \mathcal{E}_\Gamma)$, we define
\[
(\Gamma-e):= (H_\Gamma, \mathcal{V}_\Gamma, \mathcal{E}'_\Gamma)
\]
where $\mathcal{E}'_\Gamma$ is the partition which is the same as $\mathcal{E}_\Gamma$ except that the part corresponding to $e$ is split into two parts of size $1$.

The removal $\Gamma-X$ of edges forming a subgraph $X\subset \Gamma$ is defined similarly 
by splitting the parts of $\mathcal{E}_\Gamma$ corresponding to edges of $X$.
$\Gamma-X$ can contain isolated corollas.

Note that this definition is different from graph theoretic edge deletion as all the half-edges of the graph remain and the corollas are unchanged.  
We neither lose vertices nor half-edges when removing an internal edge. We just unglue
the two corollas connected by that edge, or to put it another way, the edge is split into two external edges by separating the two half edges forming it.

The graph resulting from the contraction of edge $e$, denoted $\Gamma/e$ for $e\in E_\Gamma$, is defined to be
\be\label{edgecont}
\Gamma/e = (H_\Gamma-e, \mathcal{V}'_\Gamma, \mathcal{E}_\Gamma-e)
\ee
where $\mathcal{V}'_\Gamma$ is the partition which is the same as $\mathcal{V}_\Gamma$ except that in place of the parts $\mathbf{c}_v$ and $\mathbf{c}_w$ for $e=\{v,w\}$, $\mathcal{V}'$ has a single part $(\mathbf{c}_v\cup \mathbf{c}_w) - e$.\footnote{We often use $-$ for the set difference, e.g.\ $H_\Gamma-e=H_\Gamma\setminus e$.}
Likewise we define $\Gamma/X$, for $X\subseteq \Gamma$ a (not necessarily connected) graph, to be the graph obtained from $\Gamma$ by contracting all internal edges of $X\subseteq \Gamma$.

Intuitively we can think of $\Gamma/X$ as the graph resulting by shrinking all internal edges of $X$ to zero length:
\be\label{lengthcont}
\Gamma/X=\Gamma|_{\mathrm{length}(e)=0,e\in E_X}.
\ee
This intuitive definition can be made into a precise definition if we add the notion of edge lengths to our graphs, but doing so is not to the point at present.

We let $\mathbf{val}(v):=|\mathbf{c}_v|$ the degree or valence of $v$ and $\mathbf{eval}(v):=|L_v|$
the number of external edges at $v$.

Our interest lies in cutting bridgeless graphs into disconnected pieces
allowing for arbitrary partitions of the set $L_\Gamma$. This is achieved by cuts
which partition $L_\Gamma$ by either removing edges from $\Gamma$ or by partitioning corollas $\mathbf{c}_v$.
\begin{figure}[H]
\includegraphics[width=6cm]{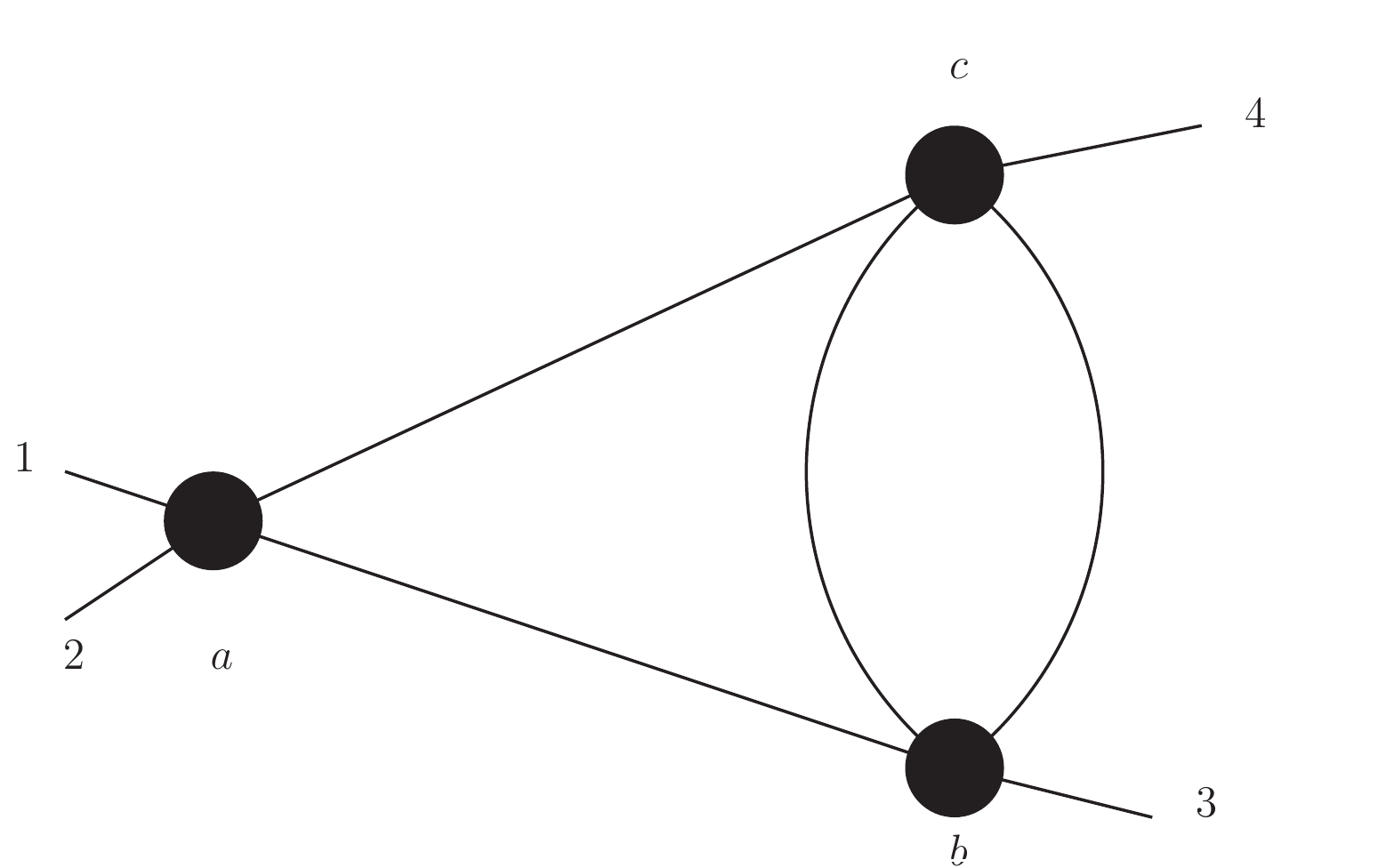}.
\caption{A graph. The external edges labeled $1,2,3,4$ provide four half-edges $\{l_1,l_2,l_3,l_4\}= L$. The vertices $a,b,c$ are all four-valent with
$\mathbf{eval}(a)=2,\mathbf{eval}(b)=1=\mathbf{eval}(c)$. Between vertices $b,c$ there are two edges $e_{bc}(1),e_{bc}(2)$ corresponding to four half-edges $e_{bc}^b(1),\ldots,e_{bc}^c(2)$, and four more half-edges are provided by edges $e_{ab},e_{ac}$. $|H|=12$, and $\mathcal{E}$ consists of four parts of cardinality two giving the four internal edges and four of cardinality one, namely one for each $l_i$.  $\mathcal{V}_\Gamma$ partitions the twelve half-edges into three corollas of cardinality four constituting the three four-valent vertices.
}
\label{dunce}
\end{figure}
\subsection{Cuts}\label{subsec cuts}
Consider a bridgeless connected graph $\Gamma$. We have 
\[
1=h_0(\Gamma)=|\Gamma|-e_\Gamma+v_\Gamma.
\]
If we want to cut $\Gamma$ by removing edges, the Euler characteristic demands that we remove at least two edges.

From a physicist's viewpoint the cut edges can also be regarded as marked edges which are put on-shell when we apply Feynman rules.

\subsubsection{Refinements}
Given two partitions $\mathcal{P}$ and $\mathcal{P}'$ of a set $S$, we say $\mathcal{P}'$ is a \emph{refinement} of $\mathcal{P}$ if every part of $\mathcal{P}'$ is a subset of a part of $\mathcal{P}$.  Intuitively, $\mathcal{P}'$ can be made from $\mathcal{P}$ by splitting some parts.  The set of all partitions of $S$ with the refinement relation gives a lattice called the \emph{partition lattice} (see for example Chapter 4 of \cite{Gbook}).  The covering relation in this lattice is the special case of refinement where exactly one part of $\mathcal{P}$ is split into two parts to give $\mathcal{P}'$.

We call a refinement maximal if it is an $|S|-1$-refinement of a set $S$, that is $S$ is refined down to singletons. 

\subsubsection{Cuts}
Let us now consider cuts. In general we will use cuts which decompose a graph $\Gamma$ into a disjoint union
\[
\dot{\cup}_{i=1}^k \Gamma_i,
\]
of $k$ graphs $\Gamma_i$ which induce a $k$-partition of $L_\Gamma$.

Such a cut can be obtained by either removing edges from the graph, or by splitting vertices $v$
and therefore partitioning their corollas $\mathbf{c}_v$.  In the way we have defined graphs, this means that a cut can be obtained from refining $\mathcal{E}_\Gamma$ or refining $\mathcal{V}_\Gamma$.  We also consider the situation where both can be refined.

\begin{rem}\label{rem normal}
  We could augment this set up by carrying a maximal chain of refinements along with every split vertex.  This would allow us to add restrictions to the refinement chain.  One restriction in particular that is useful for quantum field theory would be to require the first cut of a vertex to be \emph{normal}, that is to include internal edges on both sides of the cut.  This corresponds to the fact that the physical motivation for considering cuts is to describe monodromy of amplitudes.  If the first split of a vertex is not normal in this way then it will not lead to monodromy of any Feynman integral.

  The reader can check that the maximal chain can be carried though all the definitions and operations that we study in this paper.
\end{rem}

\subsubsection{Pre-cut graphs}\label{sec precut}
\begin{defn}\label{precutgraphsdefn}
A pre-cut graph $\Gamma$ is a pair of graphs $((H_\Gamma, \mathcal{V}_\Gamma, \mathcal{E}_\Gamma), (H_\Gamma, \mathcal{V}_H, \mathcal{E}_H))$ on the same half-edges $H_\Gamma$ such that $\mathcal{V}_H$ refines $\mathcal{V}_\Gamma$ and $\mathcal{E}_H$ refines $\mathcal{E}_\Gamma$.
\end{defn}

By abuse of notation the pre-cut graph and the unrefined graph making it up have the same name ($\Gamma$ in the above).  This is because for physics applications we want to regard the pre-cut graph as being the original $\Gamma$ with the cut edges and split corollas marked, so we view it as a decoration of $\Gamma$, or as $\Gamma$ with extra structure added.  

In view of this, it will also be useful to have the notation $C_\Gamma\subset E_\Gamma$ for the edges which are cut, that is for those edges in $E_\Gamma$ which are not edges in $E_H$.

We will still need unambigious notation for the two graphs making up a pre-cut graph $\Gamma$.  Given a pre-cut graph $\Gamma$ we will use the notation
\begin{align*}
  \hat{\Gamma} & = (H_\Gamma, \mathcal{V}_\Gamma, \mathcal{E}_\Gamma), \text{ and}\\
\tilde{\Gamma} & =(H_\Gamma, \mathcal{V}_H, \mathcal{E}_H).  
\end{align*}
We will call  $\tilde{\Gamma}$ the associated graph.  When it is sufficiently clear we we will write $\Gamma=(\Gamma,H)$ as shorthand for the two graphs making up a pre-cut graph and may simply use $H$ for the associated graph.

\begin{defn}
For a pre-cut graph $\Gamma$ we set $|\Gamma|:=|\hat{\Gamma}|$ and $||\Gamma||:=|\tilde{\Gamma}|.$
\end{defn}

Note that $(\Gamma,\Gamma)$ is a pre-cut graph as the trivial refinement is a refinement.

There is an $h_0(\tilde{\Gamma})$-partition $L_\Gamma(h_0(\tilde{\Gamma}))$ of $L_\Gamma$. We have 
\[
L_\Gamma(h_0(\tilde{\Gamma}))=\tilde{\Gamma}/E_{\tilde{\Gamma}},
\]
which is a $h_0(\tilde{\Gamma})$-partition of the vertex $\hat{\Gamma}/E_{\hat{\Gamma}}$.
\begin{figure}[H]
\includegraphics[width=8cm]{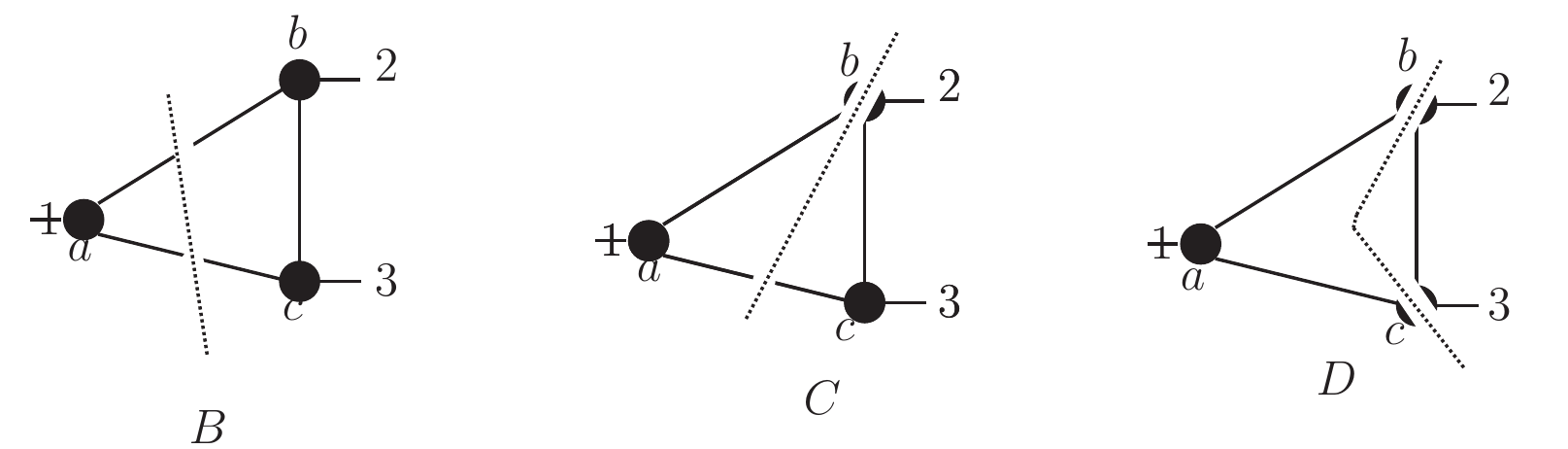}
\caption{Pre-cut graphs. They all correspond to a 2-partition of $L=\{\{l_1\},\{l_2,l_3\}\}
\sim \tilde{\Gamma}/E_{\tilde{\Gamma}}$ for $\Gamma$ any of the three graphs.
$B$ on the left is also a cut graph, for $C$ in the middle we have a normal cut of vertex $b$, for $D$ on the right both vertices $b,c$ have normal cuts.
For $B$, the set $C_B$ has two edges, for $C$ one, and for $D$ none. }
\label{PreCut}
\end{figure}

\subsubsection{Cut graphs}
\begin{defn}
  A pre-cut graph $\Gamma=((H_\Gamma, \mathcal{V}_\Gamma, \mathcal{E}_\Gamma), (H_\Gamma, \mathcal{V}_\Gamma, \mathcal{E}_H))$, that is one where no corollas are split, is a \emph{cut graph}.
\end{defn}
We often want to restrict the notion of cut or pre-cut graph to the notion of Cutkosky or pre-Cutkosky graph. For that we first need to discuss spanning forests, which is the topic of Section~\ref{sec span}.

\subsubsection{Connection to the cut space}

It is worth a brief digression to draw the comparison between this set up and a different but related object which is standard in graph theory, the cut space.  All the graph theory facts outlined below are standard and can be found for instance in \cite{GRbook}.

Cuts in this context are edge cuts, not allowing splits of corollas. 
  Furthermore, they are minimal in the sense that such a cut is defined by a bipartition of the vertices of the graph, the cut itself being exactly those edges with one end in each part.

More specifically, given a graph $\Gamma$, assign orientations arbitrarily.  For each nontrivial ordered bipartiton of the vertices, form the signed sum of the edges of the corresponding cut, with positive sign if the edge is oriented from the first part to the second part and negative sign otherwise.  The span of these vectors is the \emph{cut space}.  It is a subspace of the vector space spanned by the edges of the graph.

Equivalently, the cut space is the row space of the signed incidence matrix of $\Gamma$.  From this characterization we see that a basis of the cut space is given by cuts which detatch exactly one vertex, running over all but one vertex of each connected component of the graph.  For $\Gamma$ connected, note that given an edge $e$ in a spanning tree $T$ of $\Gamma$, $T-e$ has exactly two components and so defines a vertex bipartition of $\Gamma$ and hence a cut.  Thus, in the connected case another natural basis is given by choosing a spanning tree and taking the cuts defined by each edge of the spanning tree.

The cut space does not tell us directly about anomalous thresholds in quantum field theory.  For instance, consider the triangle graph with edges $a$, $b$, $c$, oriented cyclically.  Then $a-b$, $b-c$ and $c-a$ are all elements of the cut space and in fact any two of them generate it.  However there is no way to obtain $a\pm b\pm c$ for any choice of signs as a linear combination of $a-b$ and $b-c$, so the more general cut of the triangle which cuts all three edges leaving three pieces does not appear directly in the cut space.  We do obtain $a-2b+c$ in the cut space, so provided we are working over a field of characteristic not equal to 2 we do see an element with the correct support for the cut of all three edges of the triangle, but not the correct coefficients.

More generally, working over a field of characteristic $0$, define the function from an element $c$ of the cut space to the set of edges which is the support of $c$ (that is the edges which have a nonzero coefficient in $c$).  This map gives us the sets of edges which are cuts in the sense which we are predominantly working with in this paper.  In particular this is a function from the cut space of $\Gamma$ to the space of cut graphs on the fixed base graph $\Gamma$.  However, this is not a linear map as taking supports is not a linear operation.  The interplay of these two different vector spaces and the non-linear function connecting them hints at the subtlety of the study of cuts in quantum field theory.

A different way to obtain simple edge sets from linear combinations of edges is to work over the field with two elements, $\mathbb{F}_2$ (in some sources the term cut space refers to this vector space).  This is not so useful for the purposes of the present paper, but one important fact of note is that the cut space over $\mathbb{F}_2$ is the orthogonal complement to the cycle space of the graph.  The cycle space is defined similarly as the span of vectors which are the sum of edges forming cycles.  Over more general fields the analogue of the cycle space is called the flow space and remains the orthogonal complement of the cut space over the same field.  Additionally, if $\Gamma$ is planar the cut space of $\Gamma$ is the flow space of its planar dual. The flow space has a very physicsy feel; it is essentially what we integrate over when doing momentum space integrals except that momenta are vectors rather than scalars.

An important conjecture due to Tutte is that every bridgeless graph has an element in the flow space over $\mathbb{F}_5$ whose support is all edes of the graph.  Such a flow is called a nowhere zero 5-flow.  Tutte made this conjecture in 1954 \cite{T5flow} and it is demed highly important by the graph theory community.  The evident difficulty of this conjecture again shows the subtlety of taking supports.

\subsection{Spanning Forests and pre-Cutkosky graphs}\label{sec span}
We now proceed to define the notion of Cutkosky and pre-Cutkosky graphs and to discuss the spanning trees and spanning forests that correspond to them. 

\subsubsection{Cutkosky and pre-Cutkosky graphs}

\begin{defn}
  \mbox{}
  \begin{itemize}
    \item A pre-Cutkosky graph $\Gamma$ is a pre-cut graph for which every edge $e\in C_\Gamma$ has the property that the two ends of $e$ are in different components of $\tilde{\Gamma}$.
    \item A Cutkosky graph $\Gamma$ is a cut graph for which every edge $e\in C_\Gamma$ has the property that the two ends of $e$ are in different components of $\tilde{\Gamma}$.
  \end{itemize}
\end{defn}

\subsubsection{Spanning trees and forests}\label{subsec span}
\begin{defn}
A spanning tree $T=(H_T,\mathcal{V}_T,\mathcal{E}_T)$ of a connected graph $\Gamma=(H_\Gamma,\mathcal{V}_\Gamma,\mathcal{E}_\Gamma)$ is a connected subgraph $T\subseteq \Gamma$ such that $H_T\subseteq H_\Gamma$, $H_T\cap L_\Gamma=\emptyset$, $V_T=V_\Gamma$,  which has no cycles, i.e. is simply connected, $v_T-e_T=1$.
\end{defn}
$E_T\subseteq E_\Gamma$ is the set of edges of the spanning tree.

For all $T\in \mathcal{T}(\Gamma)$ and $e\in E_\Gamma\setminus E_T$ there is a unique cycle in $T\cup e$.  This is called the \emph{fundamental cycle} $l(T,e)$ associated to $T$ and $e$.  For any fixed spanning tree $T$, the fundamental cycles associated to $T$ and each of the edges of $E_\Gamma\setminus E_T$ give a basis for the cycle space of $\Gamma$.

\begin{defn}
A spanning $k$-forest $F$ is similarly a disjoint union $\dot{\cup}_{i=1}^k T_i$ of $k$  trees $T_i\subseteq \Gamma$,
such that $\cup_i V_{T_i}=V_\Gamma$. Note $|\Gamma|=|\Gamma/F|$ for any spanning forest $F$ of $\Gamma$.
\end{defn}
$E_F$ is the set of edges of $F$. $e_F=\sum_i e_{T_i}$.
A spanning 1-forest is a spanning tree.

Equivalently, a spanning $k$-forest is the result of taking a spanning tree and removing $(k-1)$ edges from it.
\begin{defn}
A spanning tree $T$ of a pre-cut graph $\Gamma=((H_\Gamma,\mathcal{V}_\Gamma,\mathcal{H}_\Gamma),(H_\Gamma,\mathcal{V}_H,\mathcal{E}_H))$ is 
a spanning tree of each component of the associated graph $\tilde{\Gamma}$.

A spanning $k$-forest $F$ of a pre-cut graph $\Gamma$ is the result of removing $k-1$ edges from a spanning tree of $\Gamma$.
\end{defn}

Given a spanning $k$-forest $F$ of a pre-cut graph $\Gamma$, there are a number of different sets of edges which will be important.  First the edges of the forest themselves are important.  Second are the edges of $\Gamma$ which are not in $F$ but join distinct components of $F$.  Thinking of $F$ as a spanning tree $T$ with some edges removed then all the edges of $T-F$ are in this second class, as well, typically, as others.  Third are the edges of $G$ which are not in $F$ but have both ends in the same tree of $F$. 
The second and third sets of edges above are those which will ultimately be put on-shell, while those in the first set remain off-shell.

We will use the notation $\breve{E}_F$ for the second of the above sets of edges: \[\breve{E}_F:=\{e\in E_{\Gamma}|\; e\not\in E_{F}, |\nu_{G/F}(e)\cap V_{\Gamma/(F\cap \Gamma)}|=2\}\]

\begin{defn}
A spanning $k$-forest $F$ for a pre-cut graph $\Gamma$ is a compatible spanning forest if 
$\emptyset=C_\Gamma\cap E_F$ and $C_\Gamma=\breve{E}_F$.
\end{defn}
That is, a spanning forest is compatible if 
the vertex partition induced by the trees of the spanning forest agrees with the cut or pre-cut of $\Gamma$.  Compatibility ensures that the spanning forest 
is in accordance with the chosen refinements $\mathcal{V}_H,\mathcal{E}_H$.
\begin{lem}
  \mbox{}
  \begin{itemize}
  \item A pre-Cutkosky graph $\Gamma$ is a pre-cut graph $\Gamma$ for which a compatible spanning forest $F$ exists.  
  \item A Cutkosky graph $\Gamma$ is a cut graph for which a compatible spanning forest $F$ exists.
  \end{itemize}
\end{lem}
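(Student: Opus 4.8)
The plan is to prove, for each bullet, the characterization as a biconditional: among pre-cut (resp.\ cut) graphs $\Gamma$, the defining pre-Cutkosky (resp.\ Cutkosky) condition that every $e\in C_\Gamma$ has its two ends in different components of $\tilde{\Gamma}$ holds if and only if a compatible spanning forest exists. Since the Cutkosky bullet never uses the possibility of vertex splits, its proof is word-for-word the pre-cut argument with ``pre-cut'' replaced by ``cut,'' so I would prove the first bullet and remark that the second is identical. Two preliminary observations drive everything. First, the uncut edges are exactly $E_H=E_\Gamma\setminus C_\Gamma$, and any such edge is an edge of $\tilde{\Gamma}$, so its two ends lie in the same component of $\tilde{\Gamma}$. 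Second, by the definition of a spanning forest of a pre-cut graph (a spanning tree of each component of $\tilde{\Gamma}$, with edges subsequently removed) any spanning forest $F$ satisfies $E_F\subseteq E_H$, so the condition $C_\Gamma\cap E_F=\emptyset$ in the definition of compatibility is automatic and only $C_\Gamma=\breve{E}_F$ carries content.

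For the forward direction I would construct a compatible forest explicitly: choose a spanning tree in each connected component of $\tilde{\Gamma}$ and let $F$ be their union; this is a spanning tree of the pre-cut graph in the defined sense, and in particular a spanning forest. Its trees coincide, as a vertex partition, with the components of $\tilde{\Gamma}$. Then $\breve{E}_F$ consists of the edges of $\Gamma$ not in $F$ whose ends lie in distinct components of $\tilde{\Gamma}$. By the first observation no uncut edge qualifies, so $\breve{E}_F\subseteq C_\Gamma$; and by the pre-Cutkosky hypothesis every cut edge does qualify (it is absent from $\tilde{\Gamma}$, hence from $F$, and its ends are in distinct components), giving $C_\Gamma\subseteq \breve{E}_F$. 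Hence $C_\Gamma=\breve{E}_F$ and $F$ is compatible.

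The reverse direction is where the real work sits, and the linchpin I would isolate as a small lemma is that for a compatible forest $F$ the trees of $F$ coincide with the components of $\tilde{\Gamma}$. One inclusion is easy: vertices in one tree of $F$ are joined by $F$-edges, which are uncut, hence lie in a common component of $\tilde{\Gamma}$. For the other inclusion I would take an uncut edge $e$ and show both its ends lie in one tree of $F$: if $e\in E_F$ this is clear, and if $e\notin E_F$ then $e\notin C_\Gamma=\breve{E}_F$, so by the definition of $\breve{E}_F$ its ends are not in distinct trees of $F$. Thus every uncut edge stays inside a single tree of $F$, so a path of uncut edges, i.e.\ a connection within a component of $\tilde{\Gamma}$, keeps all its vertices in one tree, proving that the components of $\tilde{\Gamma}$ refine the trees of $F$. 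With the trees of $F$ thereby identified with the components of $\tilde{\Gamma}$, the compatibility condition $C_\Gamma=\breve{E}_F$ says precisely that every cut edge has its ends in distinct components of $\tilde{\Gamma}$, i.e.\ that $\Gamma$ is pre-Cutkosky. The main obstacle is exactly this identification of the combinatorial trees of $F$ with the topological components of $\tilde{\Gamma}$: it is the one place where the compatibility hypothesis $C_\Gamma=\breve{E}_F$ must be fed back in, and care is needed to handle the uncut edges lying outside $F$, which a priori could cross between trees but are forbidden from doing so precisely by $\breve{E}_F=C_\Gamma$.
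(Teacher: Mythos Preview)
Your proof is correct and follows essentially the same approach as the paper. The forward direction is identical: take a spanning tree in each component of $\tilde{\Gamma}$ and check both inclusions $C_\Gamma\subseteq\breve{E}_F$ and $\breve{E}_F\subseteq C_\Gamma$. For the reverse direction you are in fact more careful than the paper, which jumps directly from ``all cut edges are in $\breve{E}_F$'' to ``hence join disjoint components of $\tilde{\Gamma}$''; strictly speaking $\breve{E}_F$ records that the ends lie in distinct trees of $F$, and your intermediate lemma that the trees of a compatible $F$ coincide with the components of $\tilde{\Gamma}$ is exactly the bridge needed to make that step rigorous.
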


\begin{proof}
  Assume $\Gamma$ is Cutkosky or pre-Cutkosky.  Choose a spanning tree of each component of $\tilde{\Gamma}$, and call that forest $F$.  Since all cut edges go between two different components of $\tilde{\Gamma}$, all cut edges are in $\breve{E}_F$.  Since $F$ consists of spanning trees of the components of $\tilde{\Gamma}$, no other edges are in $\breve{E}_F$, and hence $F$ is a compatible forest.

  Assume $\Gamma$ is cut or pre-cut and has a compatible forest $F$, then all cut edges are in $\breve{E}_F$ and hence join disjoint components of $\tilde{\Gamma}$ so $\Gamma$ is Cutkosky or pre-Cutkosky respectively.
\end{proof}

Note $h_0(\tilde{\Gamma})=h_0(F)$ for a compatible $F$.
\begin{figure}[H]
\includegraphics[width=7cm]{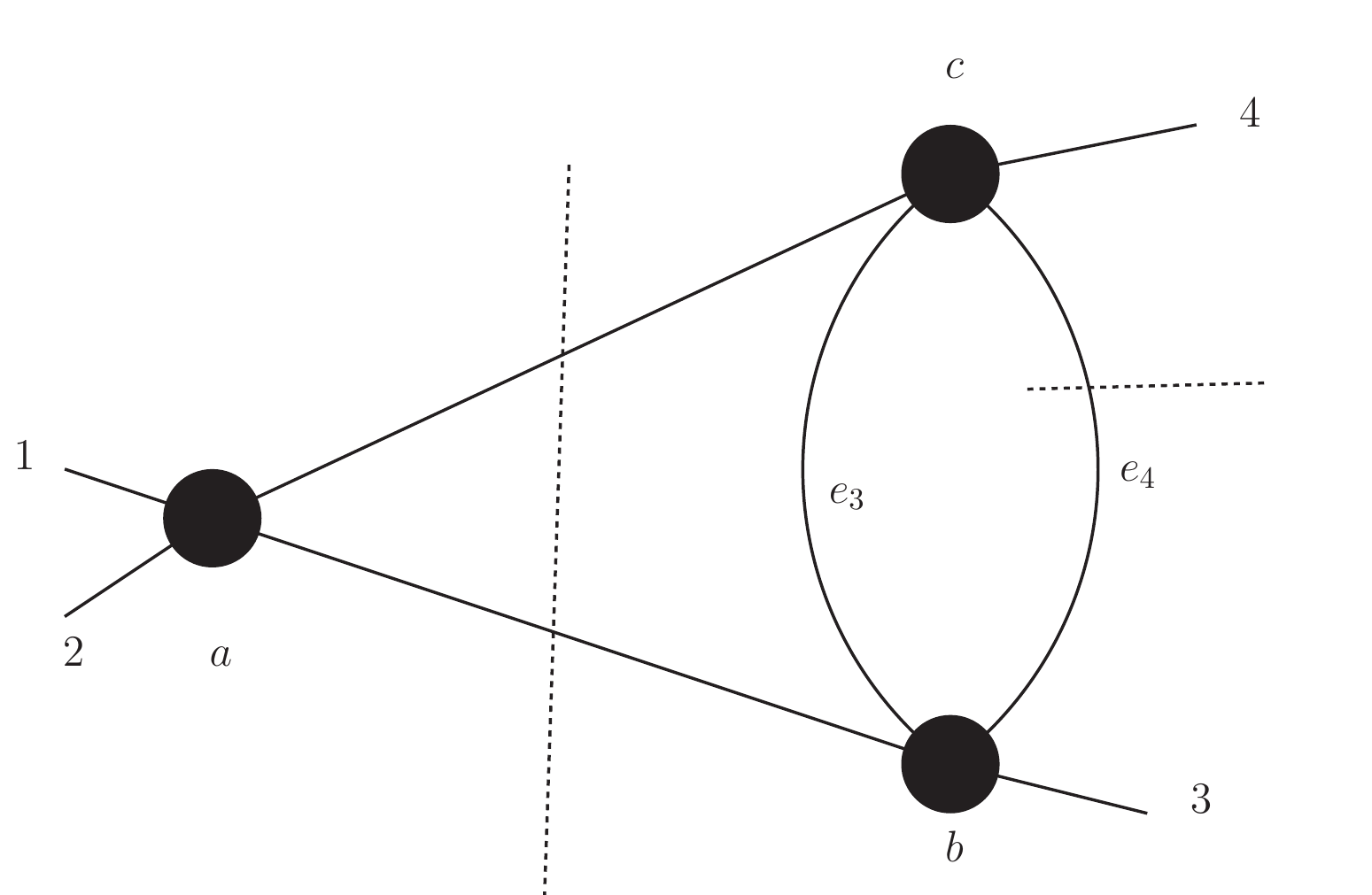}
\caption{Neither Cutkosky nor pre-Cutkosky. The only possible spanning forest is vertex $a$ together with edge $e_3$. But edge $e_4\in C_\Gamma$ connects with both ends to the same component $e_3$ of the spanning forest. On the other hand, in previous Figure~\ref{PreCut} we have from left to right a Cutkosky and two pre-Cutkosky graphs.}
\label{noncutkoskygraph}
\end{figure}

\begin{rem}
Accompanying an ordering of edges in spanning trees is the notion of a cubical chain complex \cite{rational, BlochKreimerOS}  which we use in Secs.(\ref{SecDec},\ref{cointbi},\ref{coactGreen}) when we study sector decompositions and Hodge structures underlying the combinatorics of cut graphs.
\end{rem}

Observe that $\tilde{\Gamma}$ is a union 
\[
\tilde{\Gamma}=\bigcup_i \Gamma_i
\]
of $k$ mutually disjoint  connected  graphs $\Gamma_i\subseteq \tilde{\Gamma}$,
$1\leq i\leq k$ such that $V_{\Gamma_i} = V_{T_i}$ and $\tilde{\Gamma}_i$ includes all internal edges of $\tilde{\Gamma}$ with both ends in $V_{T_i}$ (that is, $\Gamma_i$ is the  subgraph induced by $V_{T_i}$.) The pair $(\Gamma,F)$ defines hence a set of pairs $(\Gamma_i,T_i)$. The one-vertex graph $r_i:=\Gamma_i/T_i$ defines a rose $r_i$  
on $|\Gamma_i|$ petals.

Let us finally collect the various edge sets in play in a definition.
\begin{defn}\label{onoffedges}
The set $E_{\mathit{off}}= E_{\mathit{off}}(\Gamma)$ of edges of a pair of a pre-cut graph $\Gamma$ with compatible forest $F$ is the set $E_{\mathit{off}}=E_F$. The set $E_{on}= E_{on}(\Gamma)$ is the complement of $E_{\mathit{off}}$ in the set of all internal edges of the graph $\hat{\Gamma}=(H_\Gamma,\mathcal{V}_\Gamma,\mathcal{E}_\Gamma)$. Note that $E_{\hat{\Gamma}}=E_{on}\cup E_{\mathit{off}}$. 
\end{defn}
Note that $E_{on}$ contains the edges in $\breve{E}_F$ as well as the edges $e\in r_i$ which shrink to self-loops in the quotient $\tilde{\Gamma}/E_F$.
\subsection{Sub- and co-graphs}\label{subco}
We already have a notion of subgraph.  Now we need to extend to a notion of subgraph for pre-cut graphs and all the other types of graphs we have built on the pre-cut graphs.  Since these will be subgraphs that we will use for building coproducts we will take a more restricted notion of subgraph than what we used for the basic graphs.  Specifically, these subgraphs will be full at the vertices in the sense that if a vertex appears in the subgraph then its whole corolla must appear, and furthermore these subgraphs will be bridgeless, but not necessarily connected.  
If $\gamma\subsetneq \Gamma$ is such a subgraph then 
to the pair $(\gamma,\Gamma)$, we can assign a co-graph $\Gamma/\gamma$ as described below.
Sub- and co-graphs as defined here are the ones we need in Section~\ref{Hopf} to build Hopf algebra structures on the types of graphs introduced so far.

First consider $\langle H_\bullet \rangle$, $\bullet\in \{core,pC,\ldots\}$ to be the $\mathbb{Q}$-vectorspace generated by single bridgeless graphs of the indicated type.
This gives rise to a free commutative $\mathbb{Q}$-algebra structure generated by such graphs by considering disjoint union as a commutative product.

To prepare for more structure we now turn to investigate sub- and co-graphs.
\subsubsection{Subgraphs}
Consider a pre-cut graph $\Gamma=((H_\Gamma,\mathcal{V}_\Gamma,\mathcal{E}_\Gamma),(H_\Gamma,\mathcal{V}_H,\mathcal{E}_H))$. We say that a pre-cut graph $\gamma=\dot{\cup}_i \gamma_i$, with each $\gamma_i$ connected and bridgeless, is a proper subgraph $\gamma\subsetneq \Gamma$ of $\Gamma$ if and only if each $\gamma_i$ is a proper subgraph of $\Gamma$. The definition for connected subgraphs is as follows:
\begin{defn}
A connected pre-cut graph $\gamma=((H_\gamma,\mathcal{V}_\gamma,\mathcal{E}_\gamma),(H_\gamma,\mathcal{V}_h,\mathcal{E}_h))$ is a subgraph 
of the pre-cut graph $\Gamma$, $\gamma\subseteq \Gamma$ if and only if
\begin{enumerate}
\item $H_\gamma\subseteq H_\Gamma$,
\item each part of $\mathcal{V}_\gamma$ is a part of $\mathcal{V}_\Gamma$ and each part of $\mathcal{V}_h$ is a part of $\mathcal{V}_H$,
\item $\mathcal{E}_\gamma$ is a refinement of $\mathcal{E}_\Gamma|_{H_\gamma}$ and $\mathcal{E}_h$ is a refinement of $\mathcal{E}_H|_{H_\gamma}$,
\item (recalling that every part of a $\mathcal{E}$-set is either of size $1$ or $2$) for any part of $\mathcal{E}_H$ which is refined into two parts in $\mathcal{E}_h$ the corresponding part in $\mathcal{E}_\Gamma$ is also refined into two parts in $\mathcal{E}_\gamma$,
\item $\gamma$ is bridgeless.
\end{enumerate}
A subgraph $\gamma\subseteq \Gamma$ is proper, $\gamma\subsetneq \Gamma$, if either $H_\gamma\subsetneq H_\Gamma$ or $\mathcal{E}_\gamma$ is a nontrivial refinement of $\mathcal{E}_\Gamma$.
\end{defn}
The third last restriction says that a subgraph of a pre-cut graph cannot define extra of its edges to be cut; it can only inherit the cuts from the original pre-cut graph. 

To emphasize one of the comments before, note that under this definition a subgraph must be bridgeless, unlike what one might expect from a pure graph theoretical perspective.  This is as it should be for our Hopf algebras.

Note that a graph $\Gamma=((H_\Gamma,\mathcal{V}_\Gamma,\mathcal{E}_\Gamma),(H_\Gamma,\mathcal{V}_\Gamma,\mathcal{E}_\Gamma))$ without cuts has subgraphs $\gamma=((H_\gamma,\mathcal{V}_\gamma,\mathcal{E}_\gamma),(H_\gamma,\mathcal{V}_\gamma,\mathcal{E}_\gamma))$ without cuts.  Compared to our original definition of subgraph of a graph $(H_\Gamma, \mathcal{V}_\Gamma, \mathcal{E}_\Gamma)$, the subgraph $\gamma$ as defined in this section is more restrictive as $\gamma$ must be bridgeless and must be full at the vertices.

Furthermore a pre-cut graph which is pre-Cutkosky will have pre-Cutkosky subgraphs, as compatible forests are preserved in the above definitions, likewise for cut graphs and Cutkosky graphs due to the fullness at the vertices. 

\begin{lem}\label{lem precut subs}
  Let the pre-cut graph $\gamma$ be a subgraph of the pre-cut graph $\Gamma$.  Then
  \begin{itemize}
  \item $\hat{\gamma}$ is a subgraph of $\hat{\Gamma}$ in the sense of Section~\ref{sec graphs},
  \item $\hat{\gamma}$ determines $\gamma$,
  \item if $\gamma'$ is any subgraph of $\hat{\Gamma}$ that is bridgeless and every part of $\mathcal{V}_{\gamma'}$ is a part of $\mathcal{V}_\Gamma$ then $\gamma' = \hat{\gamma}$ for a pre-cut subgraph $\gamma$ of $\Gamma$.
  \end{itemize}
\end{lem}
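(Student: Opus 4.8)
The plan is to prove all three bullets by directly unwinding the definition of pre-cut subgraph from Section~\ref{subco} together with the notion of subgraph from Section~\ref{sec graphs}. Since every condition in the definition of pre-cut subgraph is imposed componentwise on the connected pieces $\gamma_i$, and all three claims are likewise statements about connected pieces, it suffices to treat the case where $\gamma$ (respectively $\gamma'$) is connected and then pass to disjoint unions. Throughout I would write $\gamma=(\hat{\gamma},\tilde{\gamma})$ with $\hat{\gamma}=(H_\gamma,\mathcal{V}_\gamma,\mathcal{E}_\gamma)$ and $\tilde{\gamma}=(H_\gamma,\mathcal{V}_h,\mathcal{E}_h)$. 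For the first bullet I would verify the three requirements for $\hat{\gamma}\subseteq\hat{\Gamma}$: the inclusion $H_\gamma\subseteq H_\Gamma$ is condition~(1); the equality $\mathcal{V}_\gamma=\mathcal{V}_\Gamma|_{H_\gamma}$ follows from fullness at the vertices in condition~(2), since each part of $\mathcal{V}_\gamma$ is a whole part of $\mathcal{V}_\Gamma$ and $H_\gamma$ is their union; and $E_\gamma\subseteq E_\Gamma$ holds because a cardinality-$2$ part of $\mathcal{E}_\gamma$ is contained, by condition~(3), in a part of $\mathcal{E}_\Gamma|_{H_\gamma}$ of cardinality at most $2$, hence coincides with a cardinality-$2$ part of $\mathcal{E}_\Gamma$.

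For the second bullet the essential point is that $\gamma$ inherits exactly the cuts of $\Gamma$, that is $C_\gamma=C_\Gamma\cap E_\gamma$ and $\mathcal{V}_h=\mathcal{V}_H|_{H_\gamma}$. The vertex statement is forced by condition~(2): the parts of $\mathcal{V}_h$ are whole parts of $\mathcal{V}_H$ partitioning $H_\gamma$, and since $\mathcal{V}_H$ refines $\mathcal{V}_\Gamma$ while $H_\gamma$ is a union of $\mathcal{V}_\Gamma$-parts, these are precisely the $\mathcal{V}_H$-parts lying inside $H_\gamma$. For the edges I would argue both inclusions. If $e\in E_\gamma$ is cut in $\Gamma$, then its two half-edges are singletons already in $\mathcal{E}_H|_{H_\gamma}$, hence also in the refinement $\mathcal{E}_h$, so $e$ is cut in $\gamma$. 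Conversely, if $e\in E_\gamma$ is uncut in $\Gamma$ but were cut in $\gamma$, then condition~(4) would force the cardinality-$2$ part $e$ of $\mathcal{E}_\Gamma$ to be refined in $\mathcal{E}_\gamma$, contradicting $e\in E_\gamma$. Thus $\mathcal{E}_h$, like $\mathcal{V}_h$, is uniquely reconstructed from $\hat{\gamma}$ together with the fixed parent $\Gamma$, which is the assertion that $\hat{\gamma}$ determines $\gamma$.

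For the third bullet, given a bridgeless subgraph $\gamma'$ of $\hat{\Gamma}$ that is full at the vertices, I would construct the inherited cut and check the definition. Set $\mathcal{V}_h:=\mathcal{V}_H|_{H_{\gamma'}}$ and let $\mathcal{E}_h$ be the partition of $H_{\gamma'}$ whose cardinality-$2$ parts are $E_{\gamma'}\cap E_H$, all remaining half-edges being singletons; put $\gamma:=(\gamma',\tilde{\gamma})$. Verifying that $\tilde{\gamma}$ refines $\gamma'$ and then running through conditions~(1)--(5) is routine: (1) and (2) use fullness and the fact that $\mathcal{V}_H$ refines $\mathcal{V}_\Gamma$; (3) holds since $\mathcal{E}_{\gamma'}$ and $\mathcal{E}_h$ only split edges of $\Gamma$ into singletons relative to $\mathcal{E}_\Gamma|_{H_{\gamma'}}$ and $\mathcal{E}_H|_{H_{\gamma'}}$ respectively; (4) holds because an edge $e\in E_H$ that is split by $\mathcal{E}_h$ satisfies $e\notin E_{\gamma'}$, so $e$ is split in $\mathcal{E}_{\gamma'}=\mathcal{E}_\gamma$; and (5) is the hypothesis that $\gamma'$ is bridgeless. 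By construction $\hat{\gamma}=\gamma'$.

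The one step needing genuine care, and the only place I expect real friction, is condition~(4) in the second bullet: it is precisely this condition that prevents a subgraph from introducing a cut on an edge uncut in $\Gamma$, and hence it is what simultaneously makes the inheritance of cuts \emph{forced} (second bullet) and the constructed $\mathcal{E}_h$ \emph{admissible} (third bullet). Everything else is bookkeeping with the refinement relations, and the componentwise reduction makes the disjoint-union case immediate.
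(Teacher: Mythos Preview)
Your proof is correct and follows essentially the same approach as the paper: both arguments unwind the definition of pre-cut subgraph directly, using condition~(2) to pin down $\mathcal{V}_h=\mathcal{V}_H|_{H_\gamma}$ and condition~(4) (what the paper calls ``the third last point'') as the key ingredient forcing $\mathcal{E}_h$ to be the inherited edge-cut. Your treatment is somewhat more explicit than the paper's---you spell out both inclusions $C_\gamma=C_\Gamma\cap E_\gamma$ in the second bullet and check conditions~(1)--(5) individually in the third, where the paper simply says ``we can follow the steps of the second point''---but there is no substantive difference in strategy.
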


  \begin{proof}
    The first point is immediate from the definitions since $E_\Gamma$ is a set in bijection with the size $2$ parts of $\mathcal{E}_\Gamma$, so saying $E_\gamma\subseteq E_\Gamma$ just means that $\mathcal{E}_\gamma$ refines $\mathcal{E}_\Gamma$.

    For the second point, $\hat{\gamma}$ immediately determines $H_\gamma$, $\mathcal{V}_\gamma$ and $\mathcal{E}_\gamma$.  Each part of $\mathcal{V}_h$ is a part of $\mathcal{V}_H$ and $\mathcal{V}_h$ is a partition of $H_\gamma$, so $\mathcal{V}_h$ is simply those parts of $\mathcal{V}_H$ which are contained in $H_\gamma$.  This is a refinement of $\mathcal{V}_\gamma$, since $\mathcal{V}_H$ is a refinement of $\mathcal{V}_\Gamma$.  The third last point of the definition of a pre-cut subgraph tells us that $\mathcal{E}_h$ refines $\mathcal{E}_H|_{H_\gamma}$ only when $\mathcal{E}_\gamma$ refines $\mathcal{E}_\Gamma|_{H_\gamma}$ and so $\mathcal{E}_h$ is also determined by $\hat{\gamma}$. 

    For the final point we can follow the steps of the second point provided $\gamma'$ is bridgelss and each part of $\mathcal{V}_{\gamma'}$ is a part of $\mathcal{V}_\Gamma$.    \end{proof}

\subsubsection{Co-graphs}\label{sec co}
For $\gamma\subsetneq \Gamma$, we can form the co-graph $\Gamma/\gamma$. We first consider the case of a connected subgraph $\gamma$ with $l_\gamma\geq 3$.

\begin{defn}
For $l_\gamma\geq 3$, the co-graph $\Gamma/\gamma$ is the graph
\[
((H_{\Gamma/\gamma},\mathcal{V}_{\Gamma/\gamma},\mathcal{E}_{\Gamma/\gamma}),(H_{\Gamma/\gamma},\mathcal{V}_{H/h},\mathcal{E}_{H/h})),
\]
where
\begin{enumerate}
\item $H_{\Gamma/\gamma}=(H_\Gamma\setminus H_\gamma)\dot{\cup}L_\gamma$,\\
\item $\mathcal{V}_{\Gamma/\gamma} = (\mathcal{V}_\Gamma|_{H_\Gamma\setminus H_\gamma}) \dot\cup L_\gamma$ and $\mathcal{V}_{H/h} = (\mathcal{V}_H|_{H_\Gamma \setminus H_\gamma})\dot\cup \gamma/\tilde{\gamma}$ where we regard $\gamma/\tilde{\gamma}$ as a partition of $L_\gamma$ into parts according to the connected components of the associated graph $\tilde{\gamma}$.
\item $\mathcal{E}_{\Gamma/\gamma}= \mathcal{E}_\Gamma|_{H_{\Gamma/\gamma}}$ and $\mathcal{E}_{H/h} = \mathcal{E}|_{H_{\Gamma/h}}$.
\end{enumerate}
\end{defn}  

This is different for the case $l_\gamma=2$. In that case we define
\begin{defn}
For $l_\gamma=2$, the co-graph $\Gamma/\gamma$ is
\[
((H_{\Gamma/\gamma},\mathcal{V}_{\Gamma/\gamma},\mathcal{E}_{\Gamma/\gamma}),(H_{\Gamma/\gamma},\mathcal{V}_{H/h},\mathcal{E}_{H/h})),
\]
where
\begin{enumerate}
\item $H_{\Gamma/\gamma}=(H_\Gamma\setminus H_\gamma)$,\\
\item $\mathcal{V}_{\Gamma/\gamma} = \mathcal{V}_\Gamma|_{H_{\Gamma/\gamma}}$ and $\mathcal{V}_{H/h} = \mathcal{V}_H|_{H_{\Gamma/\gamma}}$,
\item 
  let $e_1, e_2 \in E_G$ be the two unique edges such that  $L_g\cap e_i\not=\emptyset$, and let $h_1, h_2$ be the two half edgs in those edges which are not in $L_\gamma$, then $\mathcal{V}_{\Gamma/\gamma} = (\mathcal{V}_\Gamma|_{H_{\Gamma/\gamma} \setminus \{h_1, h_2\}}) \dot\cup \{h_1, h_2\}$ and $\mathcal{V}_{H/h} = (\mathcal{V}_{H}|_{H_{G/g} \setminus \{h_1, h_2\}}) \dot \cup \{h_1\}, \dot\cup \{h_2\}$.
\end{enumerate}
\end{defn}
Note that in the above, when we take a partition and then $\dot\cup$ one or two additional sets then we are adding those additional sets as parts to the partition, and likewise the $\dot\cup$ of two partitions of disjoint sets is the partition whose parts are the parts of the two partitions.

\begin{rem}
  This is one point where including the chain of refinements with a normality condition does need consideration as by this definition $\Gamma/\gamma$ may or may not be a pre-cut graph as the normality condition of vertex cuts may or may not be satisfied.  If we were to require a normality condition then when we used this co-graph construction for a Hopf algebra, we would restrict to only those subgraphs which do give the normality condition for $\Gamma/\gamma$ and hence for which $\gamma$ and $\Gamma/\gamma$ are both pre-cut.
\end{rem}

For $\gamma=\dot{\cup}_{i=1}^k \gamma_i$ we set $\Gamma/\gamma=\Gamma/\gamma_1/\gamma_2\cdots/\gamma_k$ dividing from left to right.  By the disjointness of the $\gamma_i$, changing the order of the $\gamma_i$ does not affect the co-graph.

Figure~\ref{coprodself} gives an example for an uncut graph. 
\subsubsection{Decomposing graphs}
With the notion of sub- and co-graph, we note that each choice of a sub-graph $\gamma\subsetneq \Gamma$ gives rise to a pair $(\gamma,\Gamma/\gamma)$. Summing over all such pairs, sometimes with appropriate restrictions, gives rise to a coproduct which we discuss in Section~\ref{Hopf}.
\begin{figure}[H]
\includegraphics[width=12cm]{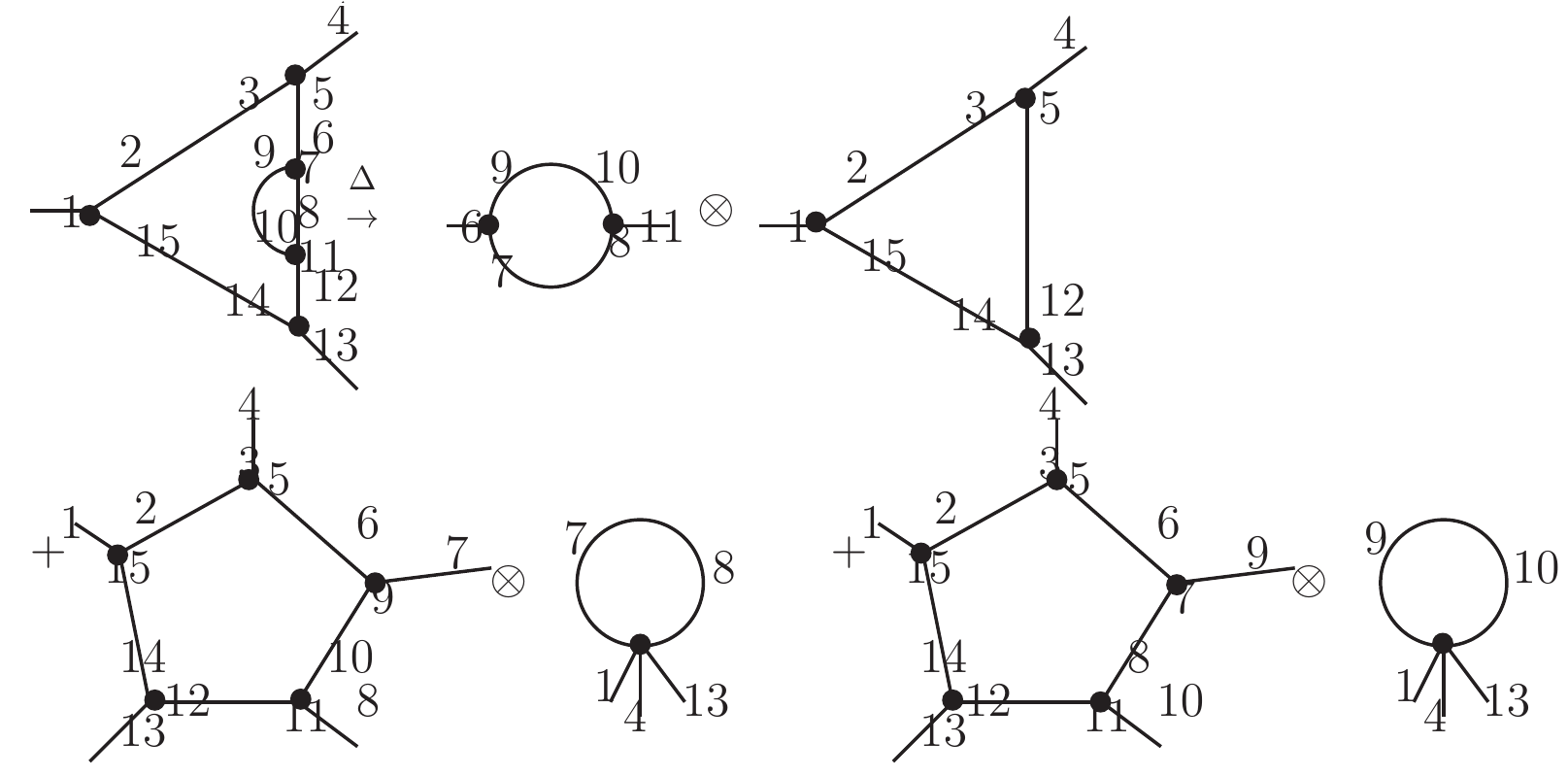}
\caption{An uncut  graph with labeled half-edges. We introduce a map $\Delta$ which decomposes the graph into its subgraphs $\gamma$ and co-graphs $\Gamma/\gamma$ and notate this as $\gamma\otimes \Gamma/\gamma$. In the first term on the rhs, note that we have $l_\gamma=2$ and half-edges $6,11$ are external to $\gamma$ but do not appear as half-edges of $\Gamma/\gamma$. Note that
half-edges $5,12$ form an edge in $\Gamma/\gamma$. The next term gives a subgraph $\gamma$ with $H_\gamma=H_\Gamma$, but half-edges $7,8$ are external in $\gamma$ while internal in $\Gamma$ and $\Gamma/\gamma$. In the last term, half-edges $7,8$ and $9,10$ switch roles.}
\label{coprodself}
\end{figure}
We can also consider pairs $(\Gamma,F)$ of a graph and a spanning forest.
To such a pair, as a sub-construction we have to find pairs $(\gamma,f)$ of sub-graphs and sub-forests and corresponding pairs $(\Gamma/\gamma,F/f)$ of co-graphs and co-forests.

For a pair $(\Gamma,F)$ of a graph and a forest and $l_\gamma\geq 3$, to any subgraph $\gamma$ of $\Gamma$  as defined above
we have an accompanying subforest $f:=F\cap \gamma$ (formed precisely of those edges of $F$ which are also edges of $\gamma$) and a corresponding co-graph $\Gamma/\gamma$ and co-forest $F/f$ (this contraction is usual graph contraction). Note that the co-forest, as defined here, is not necessarily a forest, however we will only be interested in the situations where $F/f$ is a forest.  To this end we say that $(\gamma,f)$ is sub to $(\Gamma,F)$, if and only if $\gamma$ is a subgraph of $\Gamma$ and $F/f$ is a forest of $\Gamma/\gamma$.  As an example where this is important, consider Figure~\ref{G and spT}.  In the figure, note that there is not a term for the subgraph where the internal edges of $\gamma$ consist of ${2,3}, {5,6}, {7,8}, {11,12}, {14,15}$.  For this value of $\gamma$ and the marked spanning tree, the co-forest is the loop ${9,10}$ whish is certainly not a forest.  Additionally, what should be a spanning subtree of $\gamma$ is actually a spanning subforest, with one component of the forest being an isolated vertex; this is not a problem in and of itself, but this change in number of components indicates that the co-forest will contain cycles and so not be a forest.

For $l_\gamma=2$, we proceed as above with the understanding that if the half-edges in $L_\gamma$, $h_1(\gamma),h_2(\gamma)$, are both in the spanning forest $F$ of $\Gamma$, then the edge formed of their other halves in $\Gamma/\gamma$ is in the spanning forest of $\Gamma/\gamma$, else it is not.
This is in accordance with the treatment of self-enery sub-graphs in \cite{BrownKreimer}.

The next two figures are instructive.   
\begin{figure}[H]
\includegraphics[width=12cm]{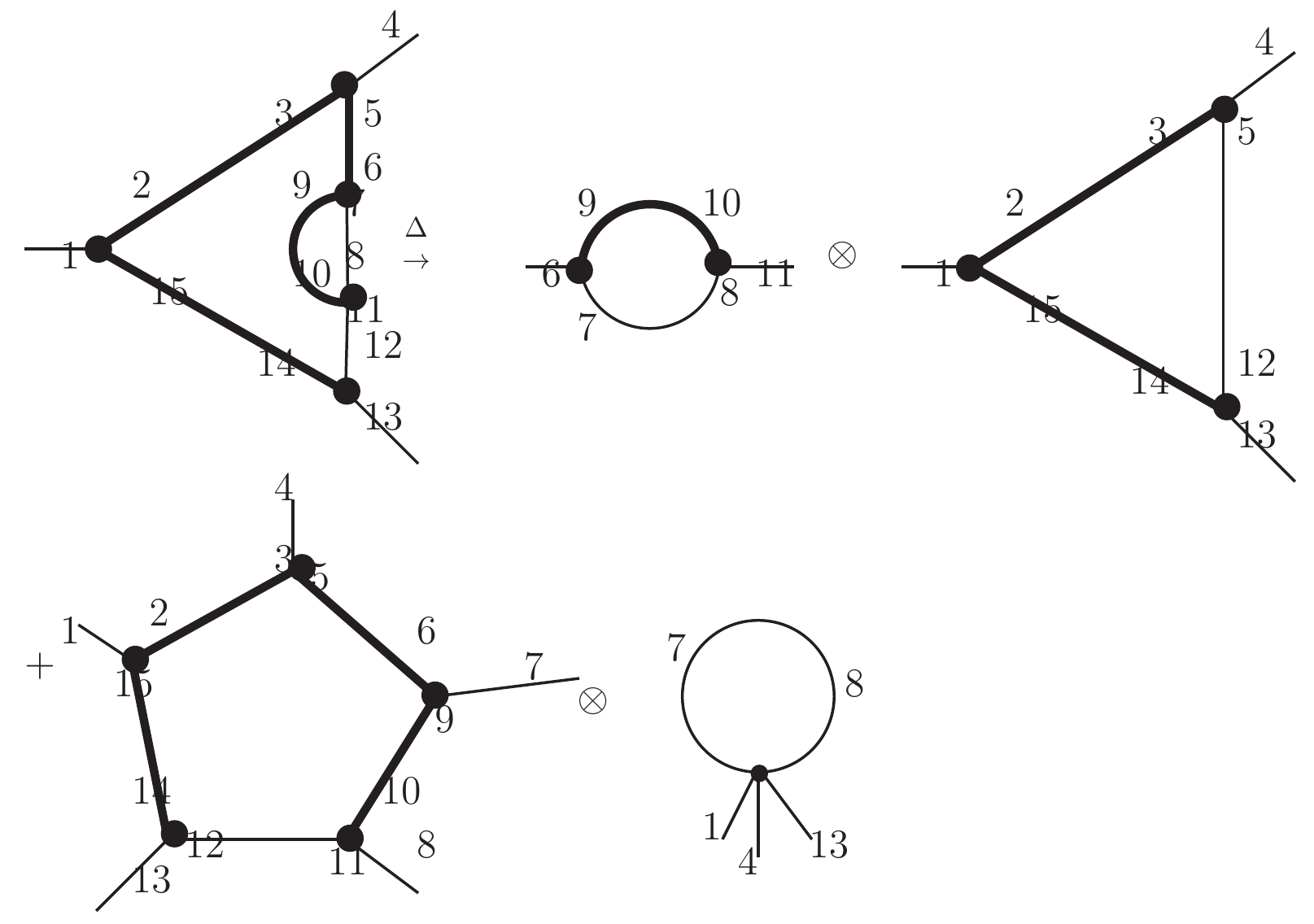}
\caption{A graph and spanning tree with labeled half-edges. The spanning tree is indicated by thickened edges, so the pair $(\Gamma,T)$ is given by  graph $\Gamma$ with bold edges for $T$. Note that in the first co-graph on the right, the edge $(5,12)$ is not part of the spanning tree, even if the half-edge $5$ was part of the spanning tree of $\Gamma$.}
\label{coprodselfspTone}\label{G and spT}
\end{figure}
It is also instructive to study an example with forests.
\begin{figure}[H]
\includegraphics[width=12cm]{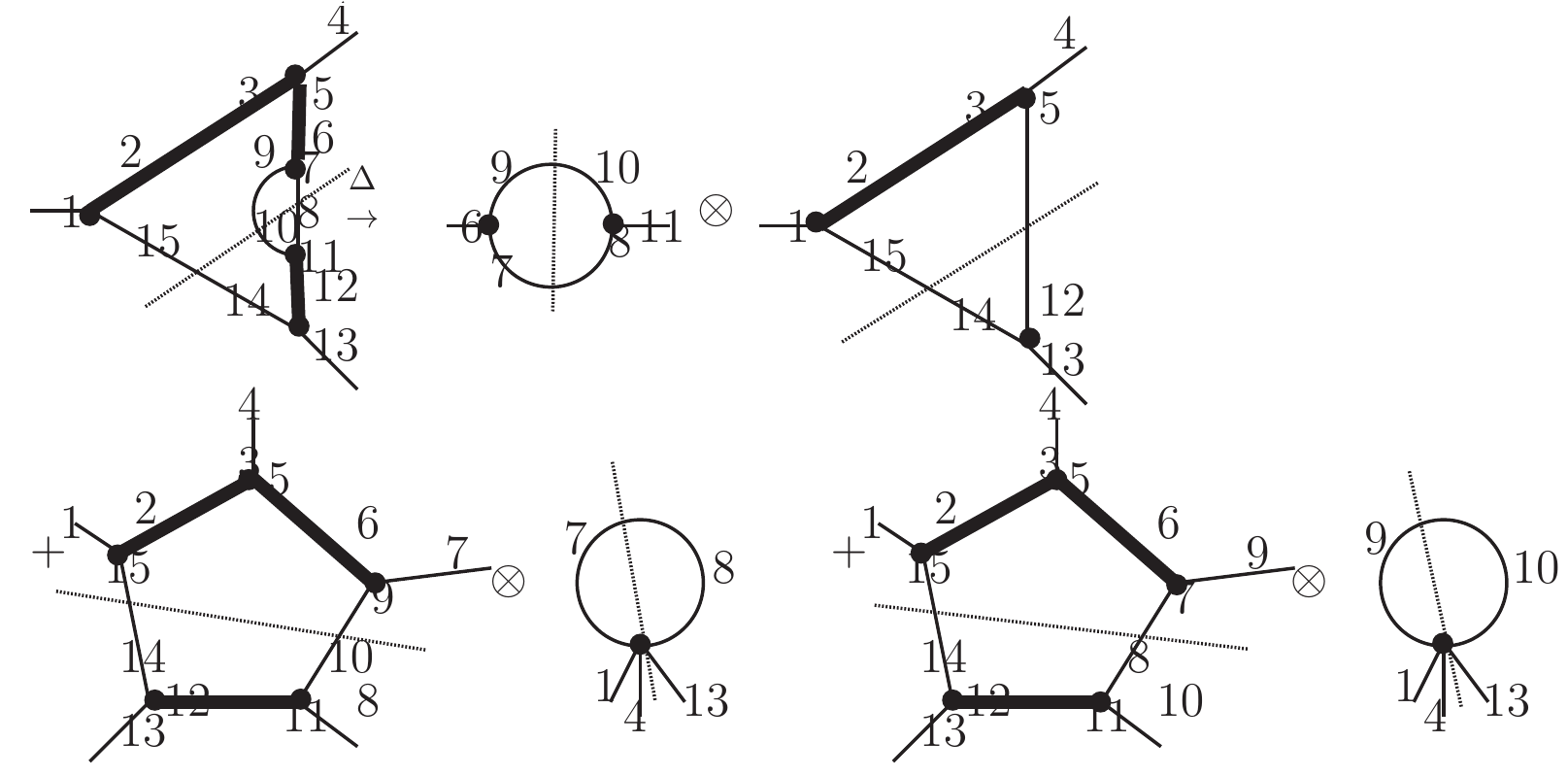}
\caption{A graph and a spanning forest again indicated by bold edges. We also indicate the separation of the graph induced by the Cutkosky cut.
While we start with a graph which is Cutkosky, the decomposition generates graphs which are pre-Cutkosky.}
\label{coprodselfspFone}
\end{figure}

\subsubsection{Composing graphs}\label{composing}
Dual to a decomposition of graphs, we can compose large graphs from smaller ones. This is based on summing over all  bijections between the external legs $L_\gamma$ of a graph $\gamma$ to be inserted with the half-edges at an insertion place of a graph $\Gamma$ in which we insert. For $l_\gamma=2$, all internal edges $e\in E_\Gamma$
are insertion places and we insert by identifying the two half-edges of $L_\gamma$ with the two half-edges of $e$, whilst for $l_\gamma \geq 3$, we sum for each vertex $v\in G$ over all bijections of $L_\gamma$ with $\mathbf{c}_v$, and we take the pairing between $L_\gamma$ and $\mathbf{c}_v$ given by the bijection as defining new edges. Figure~\ref{composingg} gives an example. The resulting compositions give rise to a pre-Lie structure as discussed in Section~\ref{Hopf}.

Note that the parts of the vertex partition $\mathcal{V}_\Gamma$ provide the insertion places for the insertion of vertex graphs and the parts of the edge partition $\mathcal{E}_\Gamma$ provide the insertion places for self-energy graphs.

When inserting (pre)-Cutkosky graphs, the partition of $L_\gamma$ for $l_\gamma\geq 3$ given by the cut has to match the vertex partitions of the parts of $\mathcal{V}_\Gamma$ in $\mathcal{V}_H$ and similarly for $l_\gamma=2$ and edge partitions.

\begin{figure}[H]
\includegraphics[width=12cm]{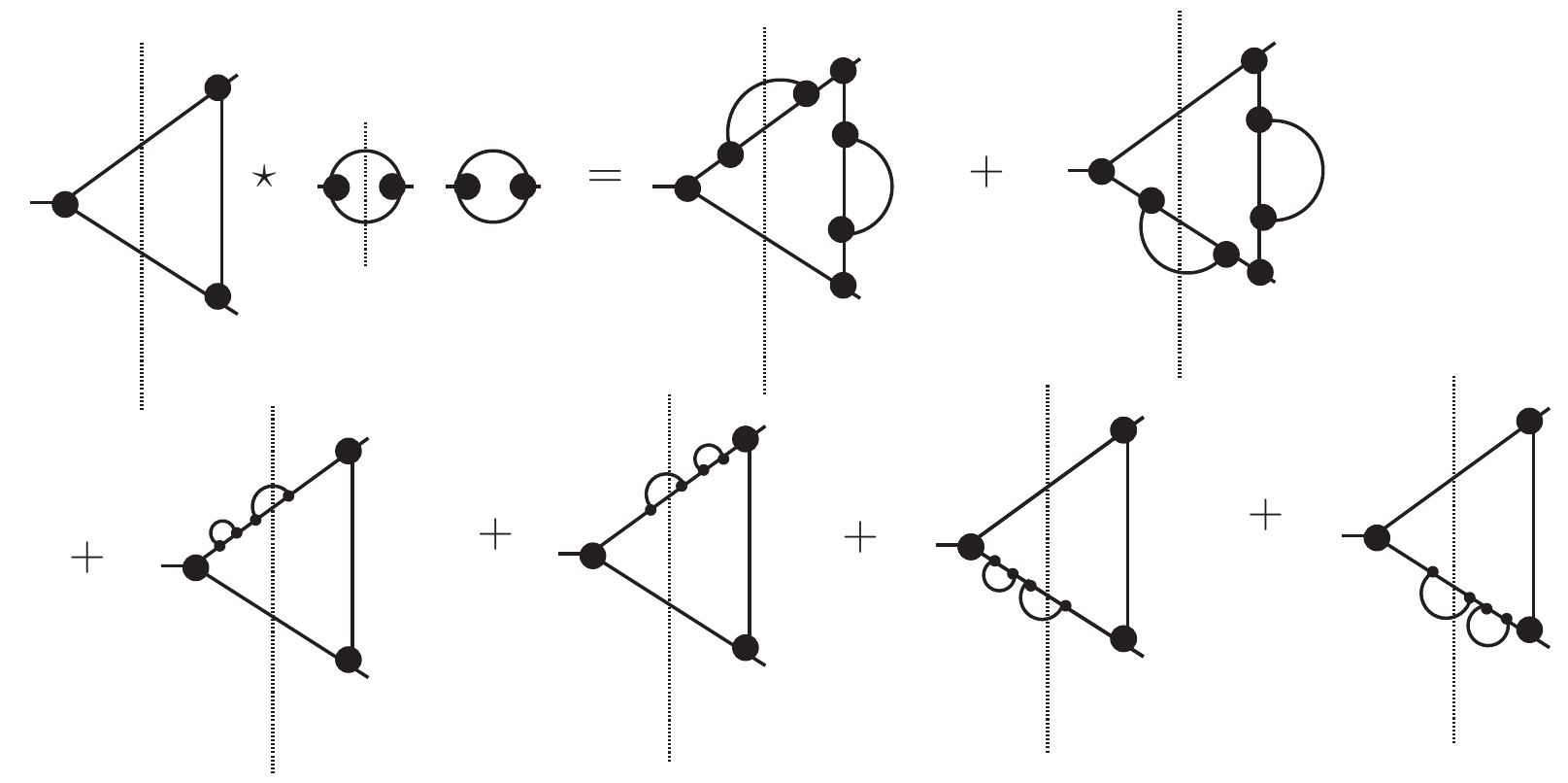}
\caption{Composing for the example of Cutkosky graphs.}\label{composingg}
\end{figure}
The generalization to the composition of pairs $(\gamma,f)$ with pairs $(\Gamma,F)$  is straighforward
as the spanning forests of the parts combine to the spanning forest of the composed graph.

Also the graphs $\gamma$ to be inserted do not have to be bridgeless as for the insertion only the set $L_\gamma$ is relevant; no information on the internal structure of $\gamma$ is needed. 
\section{Feynman rules}\label{appC}
It is useful to collect some properties of Feynman rules which illuminate why the coactions in Section~\ref{coactions}  are useful.
\subsection{The space of external parameters}\label{QL}
Below we consider two variants of the renormalized Feynman rules $\Phi_R: H_\bullet \to \mathbb{C}$,
$\Gamma\to\Phi_R(\Gamma)$. Both have in common that the function $\Phi_R(\Gamma)=\Phi_R(\Gamma)(q)$ is a function of a vector $q$ and of masses. We first describe this dependence. 

The Feynman integral $\Phi_R(\Gamma)(q)$ is function of a vector $q\in Q^L=Q^L(\Gamma)$, where $Q^L$ is a real 
vector space of dimension $l_\Gamma(l_\Gamma-1)/2$ spanned by Lorentz invariants.\footnote{We assume $l_\Gamma\leq D$. Else, the vector space $Q^L$ has a  dimension $\mathtt{dim}(Q^L)\leq l_\Gamma(l_\Gamma-1)/2$  as $D$ dimensions can only accomodate $D$ independent momenta. 
For example for $D=4$, $\mathtt{dim}(Q^L)=4l_\Gamma-10$ for $l_\Gamma\geq 4$. See Sec.(6) in \cite{BlochKr1loop} and Sec.(6-2-4) in \cite{ItzZ}. This does not alter our approach in any essential way.} 
\[
q(e_i)\cdot q(e_j)=
(q(e_i)^2+q(e_j)^2-(q(e_i)-q(e_j))^2)/2,\, \forall e_i,e_j\in L_G.
\]   
Here $q(e_i)\in\mathbb{M}^D$ is a $D$-dimensional vector $(q_0(e_i),q_1(e_i),\ldots q_{D-1}(e_i))^{\mathsf{T}}$, $q_j(e_i)\in\mathbb{C}$, $j=0,\ldots, D-1$ and
$q_j(e_i)^2:=q_j(e_i)\bar{q_j}(e_i)=|q_j(e_i)|^2$.
\[
q(e_i)\cdot q(e_i)= q(e_i)^2=|q_0(e_i)|^2-|q_1(e_i)|^2-\ldots -|q_{D-1}(e_i)|^2.
\]
Note $|q_i(e_j)|\geq 0$ while $-\infty\leq q(e_i)^2\leq+\infty$. 

A choice of a two-partition $p_2:V_1\dot{\cup} V_2=V_G$ of vertices of $\Gamma$ defines a vector $s=s(p_2):=(\sum_{j\in V_1}q_{e_j})^2\in Q^L$. This decomposes $Q^L=Q^L_\perp\times \mathbb{R}_0$ where $\mathbb{R}_0$ is the one-dimensional real vectorspace spanned by $s$. We write $\Phi_R(\Gamma)(q)=\Phi_R(\Gamma)(s,q_\perp)$ correspondingly.

In physics one often tends to complexify and replaces $Q^L=Q^L(\mathbb{R})$ through
$Q^L=Q^L(\mathbb{C})$ by allowing the scalar products $q(e_i)\cdot q(e_i)$ to take complex values. For a mathematical analysis see Sec.(6) of \cite{BlochKr1loop}.
We follow this set-up but restrict to the complexification of a normal threshold variable $s$ defined by any 2-partition.

We set $Q^L(\mathbb{C})\supsetneq Q^L_\mathbb{C}:=Q^L_\perp\times \mathbb{C}_0$ where $\mathbb{C}_0$ is the one-dimensional complex vectorspace where we allow $s\in\mathbb{C}$ whilst we keep
$Q^L_\perp=Q^L_\perp(\mathbb{R})$ real. Note that $Q^L_\mathbb{C}=Q^L_\mathbb{C}(p_2)$ depends on the chosen 2-partition $p_2$ and different choices for $p_2$ lead to different subspaces $Q^L_\mathbb{C}$ of the $p_2$-independent  $Q^L(\mathbb{C})$.

For chosen $p_2$ and fixed $q_\perp$ 
and any $\Gamma$ we consider $\Phi_R(\Gamma)(s,q_\perp)$ as a function 
\[
\Phi_R(\Gamma)(\cdot,q_\perp):X_{|\Gamma|,l_\Gamma}\times\mathbb{C}\to \mathbb{C},
\]
 where all mass squares $m_e^2$, $e\in E_\Gamma$ are kept fixed and
 $X_{|\Gamma|,l_\Gamma}$ is defined in Equation~\ref{Xns}.

For a core graph $\Gamma$ a cut through edges $e\in C_\Gamma$ compatible with $p_2$ determines a Cutkosky graph $\Gamma\setminus C_\Gamma$ such that $\Phi_R(\Gamma)(s,q_\perp)$ has a singularity at $s=s_0:=(\sum_ {e\in C_e}m_e)^2$ and for $s>s_0$, 
\[
\Im(\Phi_R(\Gamma)(s,q_\perp))=\Phi_R(\Gamma\setminus C_\Gamma).
\]

Refinements of $p_2$ determine further singular points $s_i\in\mathbb{R},i\gneq 0$ 
with monodromy $\Phi_R(\Gamma)(s_i,q_\perp)$.
In Section~10 of \cite{MarkoDirk} the reader finds an example for the triangle graph 
with a normal threshold $s_0=s_{normal}>0$ in a variable $s$ determined by a 2-partition of its three vertices and  a further anomalous threshold at a point $s_{anom}=s_1$, $0<s_1<s_0$ determined by refining the partition to the
leading  3-partition of the vertices. 
\begin{rem}
For any chosen $p_2$  and real $q_\perp$ the domain,
\[
D_{p_2}=\{q\in Q^L(\mathbb{C})\cap \Im(\Phi_R(\Gamma)(s,q_\perp))=0\}\subseteq \mathbb{R}_{s_-}\times Q^L_\perp(\mathbb{R}),\,\forall p_2,
\]
where $\mathbb{R}_{s_-}$ is the real open half-axis $[-\infty,s_-[$ and $s_-$ the lowest (anomalous) threshold in all refinements of $p_2$. It is commonly conjectured that the intersection over all 2-partitions $p_2$ of $V_\Gamma$, $\cap_{p_s} D_{p_2}$, is non-empty and allows for analytic continuations of $\Phi_R(\Gamma)(q)$ to boundary values of analytic functions.  
\end{rem}
We now define $\Phi_R(\Gamma)(q)$ in parametric and momentum space renormalization. 
Here $q\in Q^L$. In the notation $\Phi_R(\Gamma)(q)$ it is understood that the renormalized Feynman rules $\Phi_R$ are not only parametrized by $q$ but also by a renormalization point $\mu\in Q^L$ and we use  $\Phi_R(\Gamma)(q)$ as a shorthand for
$\Phi_R(\Gamma)(q,\mu)$. Furthermore for parametric renormalization we give results for $\Gamma$ a core graph and for momentum space renormalization we give results for $\Gamma$ a more general Cutkosky graph. An extended discussion is in \cite{MarkoDirk}.
\subsection{Parametric Feynman rules}\label{parametricFR}
Parametric Feynman rules are based on the use of the two well-known Symanzik polynomials $\mathfrak{s}_\Phi,\mathfrak{s}_\psi$, see for example \cite{BrownKreimer}.\footnote{We let $\mathfrak{s}_\Phi$ be the second Symanzik polynomial with masses and $\mathfrak{s}_\psi$ be the first Symanzik polynomial.}

Renormalized amplitudes in $D$ dimensions of spacetime in the parametric representation deliver 
for a graph $\Gamma$ a renormalized  integrand $\mathsf{Int}_R(\Gamma)$ which is a  form
\be\label{FRInt}
\mathsf{Int}_R(\Gamma)(q,p)=\sum_{F}(-1)^{|F|}\frac{\ln\left(\frac{{\mathfrak{s}_\phi}(\Gamma/F){\mathfrak{s}_\psi}(F)+{\mathfrak{s}_\phi}_0(F){\mathfrak{s}_\psi}(\Gamma/F)}{{\mathfrak{s}_\phi}_0(\Gamma/F){\mathfrak{s}_\psi}(F)+{\mathfrak{s}_\phi}_0(F){\mathfrak{s}_\psi}(\Gamma/F)}\right)}{{\mathfrak{s}_\psi}^{D/2}(\Gamma/F){\mathfrak{s}_\psi}^{D/2}(F)}\Omega_\Gamma,
\ee
in the notation of \cite{BrownKreimer,BlochKreimerOS}.
Here $p\in \mathbb{P}_\Gamma= \mathbb{P}^{e_\Gamma-1}(\mathbb{R}_+)$ varies as a function of the edge lengths $A_e\geq 0$ and
$q\in Q^L=Q^L(\Gamma)$ is a point in the vectorspace of Lorentz invariants $q(h)\cdot q(g)$ spanned by vectors $q(h),q(g)$, for all  $g,h\in L_\Gamma$ as above. Furthermore,
\[
\Omega_\Gamma=\sum_j(-1)^j dA_1\wedge\cdots\wedge dA_{j-1}\wedge dA_{j+1}\wedge\ldots \wedge dA_{e_\Gamma},
\]
where $dA_j$ is ommitted. For ${\mathfrak{s}_\phi}_0$ in the above the subscript ${}_0$ indicates 
a renormalization point $\mu\in Q^L$ to be used in the evaluation of the second Symanzik polynomial ${\mathfrak{s}_\phi}:Q^L \times \mathbb{P}_\Gamma\to\mathbb{C}$.

The renormalized Feynman integral is
\[
\Phi_R(\Gamma)(q)=\int_{\mathbb{P}_\Gamma}\mathsf{Int}_R(\Gamma)(q,p).
\]
Note that the above is true if the graph $\Gamma$ and therefore all its co-graphs is overall logarithmic divergent ($\omega(\Gamma)=0$) and so are its divergent subgraphs.

Other degrees of divergence $\omega(\Gamma)>0$ of $\Gamma$  
demand subtractions such that the first $\omega(\Gamma)+1$ Taylor coefficients vanish when expanding $\Phi_R(\Gamma)(q,p)$ in $q$ around the renormalization point $q_0$
and modify the above formula slightly. See \cite{BrownKreimer}.  
Here $\omega(\Gamma)=(|\Gamma|\times D-2e_\Gamma)/2$ in the conventions of \cite{BrownKreimer}.

Note that $\mathsf{Int}_R(\Gamma)(q,p)$ is a log-rational function of $p$ which is well-defined in the interior of the simplex $A_e\gneq 0$ by assumption (so we assume there are no singularities when a subset of the $A_e$ becomes large, a corresponding singularity would be known as an infrared singularity in physicics parlance).

It is also well-defined along $A_e\geq 0$ including the boundaries thanks to the signed forest sum for the antipode $S_{ren}$ in a suitable  renormalization Hopf algebra $H_{ren}$, a suitable quotient (see Section~\ref{quotient})  of the core Hopf algebra,
\be\label{forestsum}
S_{ren}(\Gamma)=-\sum_{F}(-1)^{|F|}F\times \Gamma/F,
\ee
apparent in Equation~\ref{FRInt}.
The latter sum reflects the presence of the antipode $S_R^\Phi$ in $\Phi_R=m_\mathbb{C}(S^\Phi_R\otimes \Phi)\Delta_{ren}$
and $S_R^\Phi=-R\left(m_\mathbb{C}(S^\Phi_R\otimes \Phi\circ P_{aug})\Delta_{ren}\right)$.
$P_{aug}$ is the projection into the augmentation ideal $\mathbf{Aug}(H_{ren})$.

This goes back to the interplay between Zimmermann's forest formula and factorization properties of the above ${\mathfrak{s}_\psi}$ and ${\mathfrak{s}_\phi}$ polynomials 
\[
{\mathfrak{s}_\psi}(\Gamma)={\mathfrak{s}_\psi}(\Gamma/F){\mathfrak{s}_\psi}(F)+R^\Gamma_F,\,{\mathfrak{s}_\phi}(\Gamma)={\mathfrak{s}_\phi}(\Gamma/F){\mathfrak{s}_\psi}(F)+\tilde{R}^\Gamma_F,
\]
with remainders $R^\Gamma_F$, $\tilde{R}^\Gamma_F$ which are of higher degree in the edge  variables $A_e,e\in E_F$  provided by $F$ than is ${\mathfrak{s}_\psi}(F)$.

Hence poles along $A_e=0$ are eliminated \cite{BrownKreimer} in the signed forest sum Eq.(\ref{forestsum}) above. Note that for
${\mathfrak{s}_\psi}(\Gamma)$ to vanish we need that a subset $A_e,\, e\in E$ tends to zero where the set $E$ covers a loop.  

\subsection{Momentum space Feynman rules}\label{momspaceFR}
For momentum space Feynman rules we follow \cite{MarkoDirk,BrownKreimer} and quote from \cite{MarkoDirk} where details can be found.
 
For a core graph $\Gamma$ the  renormalized integral $\Phi_R(\Gamma)(q)$, $q\in Q^L$, is given as 
\bea
\Phi_R(\Gamma)(q) & = & \int_{-\infty}^{\infty}\prod_{i=1}^{|\Gamma|}dk_{i,0}
\prod_{j=1}^{|\Gamma|}\int d^{D-1}\vec{k}(j)
\underbrace{\left(\frac{1}{\prod_{e\in E_\Gamma}Q_e}\right)^R}_{\mathsf{Int}(\Gamma)(q,k)}\nonumber\\
 & = &
\sum_{i=1}^{\xi_\Gamma}\left\{ \left(\prod_{j=1}^{|\Gamma|}\int d^{D-1}\vec{k}(j)\right) \times \left(\prod_{j=1}^{|\Gamma|}
\sum_{T\in\mathcal{T}(\gamma_j^{(i)})}
\mathbf{\bar{pf}}(T)\right)^R_{k_0(j)=+\sqrt{\vec{k}(j)^2-m_{\acute{T}}^2+i\eta}}\right.\nonumber\\
 & & \left.\times
\frac{1}{+\sqrt{\vec{k}^2-m_{\acute{T}}^2+i\eta}}\right\}.\nonumber
\eea
The above renormalized 
product over quadrics 
$\left(\prod_{e}\frac{1}{Q_e}\right)^R$ corresponds to a signed forest sum over such products where in each element $\gamma\in F$ of a forest $F$ the momenta external to the internal momenta of $\gamma$ are evaluated according to the kinematic renormalization conditions for graphs with $l_\Gamma$ external legs.
The unrenormalized product $\left(\prod_{e}\frac{1}{Q_e}\right)$ delivers the unrenormalized Feynman integrand in momentum space, see  \cite{MarkoDirk} for details.
 
This can be written as a sum over all spanning trees of $\Gamma$ together with a sum of all orderings of the space like integrations in accordance with the flag structure and we find $\Phi_R(\Gamma)(q)  =  \sum_{T\in\mathcal{T}(\Gamma)}\Phi_R({\Gamma_T})$,
\bea\label{FRmomexplicit}
\Phi_R({\Gamma_T})  & = &   \sum_{\sigma\in S_{|\Gamma|}}\int_{0<s_{\sigma(|\Gamma|)}<\cdots<s_{\sigma(1)}<\infty}\left(\prod_{e\in E_T}
\frac{1}{Q_e}\right)^R_{k(j)_0^2=s_j+m_j^2,\,j\not\in E_T}\times\\
 & \times & \frac{1}{+\sqrt{\vec{k}^2-m_{\acute{T}}^2+i\eta}}
\prod_{j\not\in E_T}ds(j).\nonumber
\eea
Here $Q_e$ are the quadrics assigned to internal edges $e$ and $\xi_\Gamma$ is the number of flags (summands) in $\tilde{\Delta}_{core}^{|\Gamma|-1}(\Gamma)$.

For a Cutkosky graph $\Gamma$, $|\Gamma|\gneq ||\Gamma||$ so that $\bar{\Delta}_0(\Gamma)=\gamma\otimes \Gamma/\gamma$, we have 
\[\Phi_R(\Gamma):=\bar{\Phi}(\Gamma),\]
as only the subloops left intact, provided by $\gamma$,  have to be renormalized and so
\bea\label{Fubini}
\Phi_R(\Gamma)(q) & = & \int\prod_{j=1}^{|\Gamma/\gamma|}d^Dk_j \left(\frac{1}{\prod_{e\in E_{F_0}}Q_e}\right)_{\cap_{f\in E_{on}^{\Gamma/\gamma}}(Q_f=0)}\times \nonumber\\
 & \times & \underbrace{\sum_{t\in\mathcal{T}(\gamma)}\overbrace{\int\prod_{j=1}^{|\gamma|}d^Dk_j\left(\prod_{e\in E_t}\frac{1}{Q_e}\right)^R_{\cap_{f\in (E_\gamma-E_t)}(Q_f=0)}}^{\Phi_R({\gamma_t})}}_{\Phi_R(\gamma)},
\eea
where the superscript $R$ indicates to sum over all terms needed for renormalization as before, using that the renormalization Hopf algebra $H_{ren}$ is a quotient of $H_{core}$ and coacts accordingly as we used in  Sec.(\ref{hcorehpc}).
\begin{rem}\label{higherdegreediv}
Let us consider how to handle higher degrees of divergence $\omega(\Gamma)>0$ when renormalizing products 
\[P(\Gamma)(q):=\left(\prod_{e\in E_T}
\frac{1}{Q_e}\right).
\]
For a Feynman graph $\Gamma$ with space of external kinematics $Q^L=Q^L(\Gamma)$, the complex vector space of invariants $q(e)\cdot q(f)$, $e,f\in L_\Gamma$, fix a real point $\mu\in Q^L$
in generic position (away from monodromies).
The product $P(\Gamma)=P(\Gamma)(q),q\in Q^L$ is a function of $q\in Q^L$.
For $\omega(\Gamma)=0$, renormalization of the overall divergence proceeds by subtraction
$P^R(\Gamma)(q,\mu)= P(\Gamma)(q)-P(\Gamma)(\mu)$. The Taylor multi-variable expansion  $T(P(\Gamma))(q,\mu)$ expansion of $P(\Gamma)(q)$ near $\mu$ is 
\[
T(P(\Gamma))(q,\mu)=\sum_{|\alpha|>0}\frac{(q-\mu)^\alpha}{\alpha!}(\partial^\alpha P(\Gamma))(\mu),
\]
where a multi-index notation adopted to the fact that $Q_\Gamma$ is a higher dimensional vector space is understood. Let $T(P(\Gamma))(q,\mu)^{[j]}$ be the terms up to order $|\alpha|^j$ in this, so  $T(P(\Gamma))(q,\mu)^{[0]}=P(\Gamma)(\mu)$ and we have 
\[
P^R(\Gamma)(q,\mu)=P(\Gamma)(q)-T(P(\Gamma))(q,\mu)^{[\omega(\Gamma)]},
\]
for general $\omega(\Gamma)\geq 0$. The iteration of this renormalization to treat divergent subgraphs is compatible with the Hopf algebra structure of renormalization and the corresponding forest formula \cite{Collins,K}.
\end{rem}

\section{Cointeraction}\label{appB}
In this appendix we define and prove the existence of  cointeracting bialgebras underlying monodromies and renormalization.
\subsection{The incidence coalgebra $I_N$  of a set $N$}\label{incidence}
We start from a partially ordered set $N$  with partial order $\subseteq$.
For $x, y \in N$, $x \subseteq y$, define the interval
\[
[x, y] = \{z \in N|x \subseteq  z \subseteq  y\}.
\]
$N$ gives rise to an incidence coalgebra  denoted by $I_N$  upon setting
\[ 
\rho: I_N\to I_N\otimes I_N,\,\rho([x, y]) = \sum_{
x\subseteq z\subseteq y}
[x, z] \otimes [z, y],
\] 
for the coproduct $\rho$  and
\[
\hat{\One}([x, y]) = \delta_{x,y}, 
\]
for the counit $\hat{\One}:I_N\to\mathbb{Q}$, where
$\delta_{x,y} = 1,$ $x = y$, $\delta_{x,y} = 0$ otherwise.  See \cite{Sincidence} for details.

Note that $\rho([x, x]) = [x, x] \otimes [x, x]$ is group-like. 

Consider the $\mathbb{Q}$ vector space $I_N^0$  generated by intervals $[\emptyset,x]$, $x\in N$. As long as $N\neq \emptyset$, we have $I_N^0\subsetneq I_N$ as vector spaces.  The coproduct $\rho$ maps $I_N^0$ to $I_N^0 \otimes I_N$.

In what follows we will work with a modified version of $\rho$ where we carry around some additional structure.  Additionally, the modified $\rho$ will put some restrictions on $z$ which are too close to $x$ in the coproduct formula (in a sense that will be defined precisely in Section~\ref{sec global co}).  This will make it a coproduct only on a smaller space and a coaction more generally.

\subsection{Cointeraction motivation}\label{subsec global set up}
We begin with a graph and spanning tree $(\Gamma,T)$ which will remain fixed for the following.  As discussed in the section on Galois conjugates, Section~\ref{subsec galois}, we will be interested in the sitation where we have two disjoint sets $p$ and $q$ of edges of $T$ where we view $p$ as edges to contract and $q$ as edges to put on shell, or more combinatorially, edges to remove from $T$, leaving a spanning forest and hence inducing a cut of $\Gamma$.  That is, we are interested in $(\Gamma/p, T/p\backslash q)$.

For the cointeraction the graph structure itself will not be important.  All that we will need is that given an edge $e\in E_\Gamma\backslash E_T$ we have an associated subset of $E_T$, namely the fundamental cycle associated with $e$, but we will only need that it is a subset.  Let $f:(E_\Gamma\backslash E_T)\rightarrow \mathcal{P}(E_T)$ be the function giving this association, where $\mathcal{P}$ indicates the power set.
In other words, $f(e)=t_e$, but it will be useful in the following to have function notation for this association.  Furthermore, define $E_L= E_\Gamma\backslash E_T$.

For the purposes of the Hopf algebra structure on $(\Gamma,T)$, the subgraphs that play a role are exactly those which are formed of a union of fundamental cycles, and so to characterize such a subgraph we only need to give a subset of $E_\Gamma\backslash E_T$. 

 Furthermore, in the output of the coproduct, the co-graph is also determined by a subset of $E_\Gamma\backslash E_T$, provided we also know the subgraph (as a subset).  The reason for this is that the subset for the subgraph tells us which fundamental cycles to contract and the subset for the co-graph tells us which remaining fundamental cycles to keep.  The same holds for the results of iterated coproducts, simply contract everything appearing to the left and let the subset determine which cycles to keep for the current graph.

In view of this, all we need are subsets $B$ of $E_\Gamma\backslash E_T$ to play the role of subgraphs and co-graphs.  To keep track of the tree edges which are cut or contracted, we need an interval in $\mathcal{P}(E_T)$ which we interpret as $[q, E_T\setminus p]$.  
For a subset $B$, the interval should be in $\mathcal{P}(f(B))$.

Finally, for the second bialgebra structure, we may want to indicate some edges of $E_L$ which are allowable as tadpoles\footnote{i.e. self-loops} while others are not.  Let $E_M\subseteq E_L$ be the set of edges which may not appear as tadpoles. This interpolates between the situation where all tadpoles are allowed, and the situation where all tadpoles vanish, as they do in a kinematic renormalization scheme or for massless tadpoles in dimensional regularization, and so allows one framework to cover both cases.  One could have a renormalization scheme where some tadpoles vanish but others do not, based, say, on the masses of the edges.  This would also fit the framework.

\subsection{Cointeracting bialgebras}\label{sec global co}
Given two sets $E_T\subseteq E_\Gamma$ and a function $f:(E_\Gamma\backslash E_T)\rightarrow \mathcal{P}(E_T)$ we will build a cointeracting bialgebra structure as follows.

The underlying vector space of the bialgebras is the span of the set of formal symbols $B^{[A_1, A_2]}$ where $B\subseteq E_L$ and $[A_1, A_2]$ is an interval in $\mathcal{P}(f(B))$.  Let this set be $\mathcal{A}$.

Write $A|_{A'}$ for the set $A$ restricted to the set $A'$, that is for $A\cap A'$.
The product is
\[
m(B_1^{[A_1, A_2]}, B_2^{[A_3, A_4]}) =
\begin{cases}
  (B_1\cup B_2)^{[A_1\cup A_3, A_2\cup A_4]} & \text{if } B_1\cap B_2=\emptyset,\  A_1|_{A'} = A_3|_{A'}\\
    & \quad \text{ and } A_2|_{A'} = A_4|_{A'}; \\
0 & \text{otherwise,}
\end{cases}
\]
where $A'=f(B_1)\cap f(B_2)$.
The unit for this product is $\One = \emptyset^{[\emptyset, \emptyset]}$.

The product looks a bit messy, but the idea is not so complicated.  We take the union of the fundamental cycles defining the subgraphs and the corresponding union of the intervals, provided the intervals agree on any edges which are shared between the fundamental cycles and provided no fundamental cycle appears in its entierty in both subgraphs.

The coproduct is
\[
\Delta_c(B^{[A_1, A_2]}) = \sum_{\substack{B_1\subseteq B \\ f(B_1)\cap A_1 = \emptyset}} B_1^{[A_1\cap f(B_1), A_2\cap f(B_2)]} \otimes (B\backslash B_1)^{[A_1\cap f(B\backslash B_1), A_2\cap f(B\backslash B_1)]}
\]
The counit for this coproduct is $\hat{\One}_{\Delta_c}(\One) = 1$ and $\hat{\One}_{\Delta_c}(B^{[A_1, A_2]}) = 0$ for $B\neq \emptyset$.

Let $\mathcal{A}_p$ be the subspace of $\mathcal{A}$ spanned by $B^{[\emptyset, A]}$.  With these operations we get a Hopf algebra $(\mathcal{A}_p, m, \Delta_c)$, which we can see either by directly checking the required properties.  
The coproduct $\Delta_c$ gives a coaction of $\mathcal{A}$ on $\mathcal{A}_p$, $\Delta_c: \mathcal{A} \rightarrow \mathcal{A}_p \otimes \mathcal{A}$, since $A_1\cap f(B_1)=\emptyset$ so the terms on the left hand side of the coproduct are always in $\mathcal{A}_p$.

In the graph case we see that $\mathcal{A}$ is closely related to $H_{GF}$ and $\mathcal{A}_p$ is closely related to $H_{GT}$.  However, they are not the same Hopf algebras.  The product is different as the $m$ defined here is not disjoint union, and furthermore $\Delta_{GF}$ does not agree with $\Delta_c$ off of $\mathcal{A}_p$ as there is not typically a $\Gamma\otimes \One$ term in $\Delta_c$ off $\mathcal{A}_p$.

Now we want to build a second bialgebra structure with the same product.  This second bialgebra is essentially the incidence bialgebra structure on $\mathcal{P}(E_T)$ with three adjustments.  First the set $B$ is carried along, and second the product is $m$, rather than the usual direct product of intervals.  We can see $m$ as the result of first taking the direct product and then modding out by the ideal which sets to 0 the products of intervals which are incompatible according to the definition of $m$.

The third adjustment is to bring in $E_M$, the edges which are not permitted to become tadpoles.  Note that an edge $e\in E_M$ would be a tadpole if all the edges in the fundamental cycle of $e$ are contracted, that is if $f(e)\subseteq f(B)\backslash A_2$ for the graph associated to $B^{[A_1, A_2]}$.

Let $\mathcal{A}_m$ be the subspace of $\mathcal{A}$ spanned by the $B^{[A_1, A_2]}$ for which there is no $e\in E_m\cap B$ such that $f(e)\subseteq f(B)\backslash A_2$.  $\mathcal{A}_m$ is the subspace where the graphs do not have any forbidden tadpoles.  Additionally define $\mathcal{A}_e$ to be the subspace of $\mathcal{A}$ spanned by the $B^{[A_1, A_2]}$ for which there is no $e\in E_m\cap B$ such that $f(e)\subseteq f(B)\backslash A_1$.  $\mathcal{A}_e$ is the subspace where no contractions of forest edges of the graphs can give forbidden tadpoles.

Note that $m:\mathcal{A}_m\otimes \mathcal{A}_m \rightarrow \mathcal{A}_m$ since for the product to be nonzero the upper ends of the two intervals must agree on the intersection of the images of $f$.  Similarly $m:\mathcal{A}_e\otimes \mathcal{A}_e \rightarrow \mathcal{A}_e$ by the analogous argument involving the lower ends of the two intervals.

With this set up, define
\[
\rho(B^{[A_1, A_2]}) = \sum_{\substack{A_1\subseteq A \subseteq A_2\\ \nexists e\in E_m \cap B, f(e)\subseteq f(B)\backslash A}} B^{[A_1, A]}\otimes B^{[A, A_2]}
\]

\begin{lem}
  $(\mathcal{A}_e, m, \rho)$ is a bialgebra and $\rho:\mathcal{A}_m \rightarrow \mathcal{A}_m \otimes \mathcal{A}_e$ is a coaction.
\end{lem}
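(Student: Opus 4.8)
The plan is to read $\rho$ as a decorated, tadpole-restricted version of the incidence coproduct of Section~\ref{incidence}, and to derive every required property from incidence coalgebra coassociativity together with two elementary bookkeeping facts: $f$ is additive on unions, $f(B_1\cup B_2)=f(B_1)\cup f(B_2)$, and restriction to $A'$ is just intersection, $A|_{A'}=A\cap A'$. The algebra side is already available: $(\mathcal{A},m,\One)$ is unital associative, it was noted just before the lemma that $m$ restricts to $m:\mathcal{A}_e\otimes\mathcal{A}_e\to\mathcal{A}_e$, and $\One=\emptyset^{[\emptyset,\emptyset]}$ lies in $\mathcal{A}_e$ (and in $\mathcal{A}_m$). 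So it remains to check that $\rho$ is coassociative and counital on $\mathcal{A}_e$ with $\hat{\One}_\rho(B^{[A_1,A_2]}):=\delta_{A_1,A_2}$, that $\rho(\mathcal{A}_e)\subseteq\mathcal{A}_e\otimes\mathcal{A}_e$, that $\rho$ is multiplicative (the bialgebra compatibility), and finally that $\rho$ restricts to a coaction $\mathcal{A}_m\to\mathcal{A}_m\otimes\mathcal{A}_e$.

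First I would record the monotonicity inclusion $\mathcal{A}_e\subseteq\mathcal{A}_m$: because $A_1\subseteq A_2$ forces $f(B)\setminus A_2\subseteq f(B)\setminus A_1$, any forbidden tadpole witnessed at the upper endpoint is witnessed at the lower one. This makes the stability statements transparent. In a summand $B^{[A_1,A]}\otimes B^{[A,A_2]}$ of $\rho(B^{[A_1,A_2]})$, the left factor keeps the lower endpoint $A_1$, so its $\mathcal{A}_e$-membership is exactly the $\mathcal{A}_e$-hypothesis on the input, while the right factor has lower endpoint $A$, and its $\mathcal{A}_e$-membership is precisely the restriction $\nexists e\in E_m\cap B,\ f(e)\subseteq f(B)\setminus A$ already imposed in the defining sum; hence $\rho(\mathcal{A}_e)\subseteq\mathcal{A}_e\otimes\mathcal{A}_e$. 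Reading the same summand with the $\mathcal{A}_m$-condition (which constrains the upper endpoint) gives $\rho(\mathcal{A}_m)\subseteq\mathcal{A}_m\otimes\mathcal{A}_e$. For coassociativity, both $(\rho\otimes\mathrm{id})\rho$ and $(\mathrm{id}\otimes\rho)\rho$ expand, exactly as for the plain incidence coproduct, into the sum of $B^{[A_1,A]}\otimes B^{[A,A']}\otimes B^{[A',A_2]}$ over chains $A_1\subseteq A\subseteq A'\subseteq A_2$, and I would check that both bracketings impose precisely the two restrictions ``at $A$'' and ``at $A'$'', so the two sums agree term by term. The counital identities then hold on $\mathcal{A}_e$ (and the right counit on $\mathcal{A}_m$) because for such inputs the extreme terms $A=A_1$ and $A=A_2$ survive the tadpole restriction, using $\mathcal{A}_e\subseteq\mathcal{A}_m$.

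The real content is multiplicativity of $\rho$, i.e. $\rho\circ m=(m\otimes m)(\mathrm{id}\otimes\tau\otimes\mathrm{id})(\rho\otimes\rho)$, with $\tau$ the flip. Write $x=B_1^{[A_1,A_2]}$, $y=B_2^{[A_3,A_4]}$ and $A'=f(B_1)\cap f(B_2)$. When $m(x,y)\ne0$ I would set up the bijection between split points $C$ of the merged interval, $A_1\cup A_3\subseteq C\subseteq A_2\cup A_4$, and pairs $(C_1,C_2)$ with $A_1\subseteq C_1\subseteq A_2$, $A_3\subseteq C_2\subseteq A_4$ and $C_1|_{A'}=C_2|_{A'}$, via $C_1=C\cap f(B_1)$, $C_2=C\cap f(B_2)$, $C=C_1\cup C_2$; here $f(B_1\cup B_2)=f(B_1)\cup f(B_2)$ is essential, and the endpoint containments are forced using $A_4\cap f(B_1)=A_4|_{A'}=A_2|_{A'}\subseteq A_2$ and its mirror. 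The identity $C_1|_{A'}=C_2|_{A'}=C\cap A'$ is exactly what makes the two factor products $m(B_1^{[A_1,C_1]},B_2^{[A_3,C_2]})$ and $m(B_1^{[C_1,A_2]},B_2^{[C_2,A_4]})$ nonzero, so the bijection matches summands on the two sides. It then remains to transport the tadpole restriction: since $B_1\cap B_2=\emptyset$, any $e\in E_m\cap(B_1\cup B_2)$ lies in a single $B_i$, say $B_1$, so $f(e)\subseteq f(B_1)$ and $f(e)\cap C=f(e)\cap C_1$; thus a forbidden tadpole for $B_1\cup B_2$ at $C$ exists iff one exists for $B_1$ at $C_1$ or for $B_2$ at $C_2$, matching the restriction on $C$ with the restrictions on $C_1$ and $C_2$. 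The degenerate cases, $B_1\cap B_2\ne\emptyset$ or the intervals disagreeing on $A'$, kill both sides, and $\rho(\One)=\One\otimes\One$ is immediate.

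Finally, the comodule axioms for $\rho:\mathcal{A}_m\to\mathcal{A}_m\otimes\mathcal{A}_e$ come for free: coassociativity $(\rho\otimes\mathrm{id})\rho=(\mathrm{id}\otimes\rho)\rho$ is the very chain computation of the second paragraph (which never used which subspace the input lived in, so the right-hand $\rho$ may be read as the $\mathcal{A}_e$-coproduct), and the counit axiom $(\mathrm{id}\otimes\hat{\One}_\rho)\rho=\mathrm{id}$ holds on $\mathcal{A}_m$ precisely because the $A=A_2$ term survives exactly when the input satisfies the $\mathcal{A}_m$-condition. I expect the main obstacle to be the multiplicativity step: producing the bijection $C\leftrightarrow(C_1,C_2)$ cleanly and, above all, proving that the single tadpole restriction on the merged datum is equivalent to the conjunction of the two restrictions on the pieces. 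This is where additivity of $f$ and disjointness of the $B_i$ must be used carefully, and where mishandling the restrictions $A|_{A'}$ is the easiest way to go wrong.
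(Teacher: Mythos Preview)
Your proof is correct. It differs from the paper's in a useful way, so let me compare.

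The paper's argument hinges on one observation you do not make explicit: on $\mathcal{A}_e$ the tadpole restriction in $\rho$ is \emph{vacuous}. Indeed, if $B^{[A_1,A_2]}\in\mathcal{A}_e$ then for every $A_1\subseteq A\subseteq A_2$ we have $f(B)\setminus A\subseteq f(B)\setminus A_1$, so no forbidden $e$ can appear at $A$ either. Thus on $\mathcal{A}_e$ the map $\rho$ is literally the decorated incidence coproduct, and the bialgebra check reduces to the standard incidence bialgebra modded out by the ideal of products that $m$ sets to zero; the paper then only needs to note this is a bi-ideal. For the coaction on $\mathcal{A}_m$ the paper instead computes $\rho$ with $E_m=\emptyset$ and projects out forbidden tadpoles afterwards, invoking the projection lemma (Lemma~\ref{lem Hopf proj}) for coassociativity.

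Your route is a direct verification of every axiom, carrying the tadpole restriction throughout. This is longer but entirely self-contained: you do not need the projection lemma, and your explicit bijection $C\leftrightarrow(C_1,C_2)$ together with the computation $f(e)\cap C=f(e)\cap C_i$ for $e\in B_i$ gives multiplicativity cleanly. The paper's approach is shorter and more conceptual, at the price of deferring work to the ideal check and to Lemma~\ref{lem Hopf proj}; yours is more explicit and would transplant unchanged to settings where the ``vacuity on $\mathcal{A}_e$'' shortcut is unavailable. Both are fine.
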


\begin{proof}
  This can be checked directly.

  Alternately, for the first part notice that in $\mathcal{A}_e$ the $E_m$ condition plays no role, so if we follow the standard incidence Hopf algebra construction with the $B$s as extra decorations, then to move from here to $\mathcal{A}_e$ we only need to mod out by the terms which are set to $0$ in the definition of $m$.  These form a Hopf ideal since if the interval disagrees on $f(B_1)$ or $f(B_2)$ then at least one side of any term in the coproduct will also disagree, and if the $B$s are not disjoint then they remain not disjoint on both sides of each term of the coproduct.
  
For the second part, we can reinterpret the problem by calculating $\rho$ with $E_m =\empty$ and then setting any graphs with forbidden tadpoles to $0$ at the end.  Then Lemma~\ref{lem Hopf proj} gives the result.
  
\end{proof}

More than two bialgebras, we have cointeracting bialgebras in the sense of \cite{cointeractiontalk, cointpapers}. 

\begin{thm}\label{thmgenerators}
On $\mathcal{A}$
\begin{itemize}
\item $\rho(\One)=\One\otimes\One$.
\item $m_{1,3,24}\circ(\rho\otimes\rho)\circ \Delta_c=(\Delta_c\otimes\mathrm{id})\circ\rho$, with:
\[
m_{1,3,24}:\,\mathcal{A}\otimes \mathcal{A}\otimes \mathcal{A}\otimes \mathcal{A}\to \mathcal{A}\otimes \mathcal{A}\otimes \mathcal{A},
\]
\[
m_{1,3,24}(w_1,w_2,w_3,w_4)=w_1\otimes w_3\otimes (w_2 w_4).
\]
\item $\forall w_1,w_2\in \mathcal{A}$, $\rho(w_1 w_2)=\rho(w_1)\rho(w_2)$,
\item $\forall w\in\mathcal{A}$, $(\hat{\One}_{\Delta_c}\otimes \mathrm{id})\circ\rho(w)=\hat{\One}_{\Delta c}(w)\One $.
\end{itemize}
In particular, With the coaction $\rho$, $(\mathcal{A}_p, m,\Delta_c)$ and 
$(\mathcal{A}_e,m,\rho)$ are in cointeraction.
\end{thm}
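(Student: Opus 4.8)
The plan is to establish the four displayed identities by direct computation on the spanning basis $\{B^{[A_1,A_2]}\}$ of $\mathcal{A}$, and then to read off the cointeraction from them together with the structures already in hand: $(\mathcal{A}_p,m,\Delta_c)$ is the Hopf algebra noted above, $(\mathcal{A}_e,m,\rho)$ is the bialgebra of the preceding lemma, and the counit and coaction statements identify the relevant restrictions. The two cheap items I would dispatch first. For $\rho(\One)=\One\otimes\One$, since $\One=\emptyset^{[\emptyset,\emptyset]}$ the sum defining $\rho$ runs over the single index $A=\emptyset$ with the tadpole clause vacuous (because $B=\emptyset$). For the counit item, I would use that $\hat{\One}_{\Delta_c}$ annihilates every $B^{[A_1,A]}$ with $B\neq\emptyset$, so $(\hat{\One}_{\Delta_c}\otimes\mathrm{id})\rho(B^{[A_1,A_2]})$ vanishes term by term unless $B=\emptyset$, in which case the element is $\One$ and the identity is immediate.

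The heart is the cointeraction identity $m_{1,3,24}\circ(\rho\otimes\rho)\circ\Delta_c=(\Delta_c\otimes\mathrm{id})\circ\rho$, which I would prove by an explicit index-preserving bijection between the terms of the two sides applied to a generic $B^{[A_1,A_2]}$. On the right, a term is indexed by a pair $(B_1,A)$ with $B_1\subseteq B$, $f(B_1)\cap A_1=\emptyset$, $A_1\subseteq A\subseteq A_2$ and the $\rho$-tadpole condition for $B$, and reads
\[
B_1^{[A_1\cap f(B_1),\,A\cap f(B_1)]}\otimes (B\setminus B_1)^{[A_1\cap f(B\setminus B_1),\,A\cap f(B\setminus B_1)]}\otimes B^{[A,A_2]}.
\]
On the left, after $\Delta_c$ splits $B=B_1\sqcup(B\setminus B_1)$ and each slot is expanded by $\rho$ with cut point $A'$ on the $B_1$-side and $A''$ on the $(B\setminus B_1)$-side, the map $m_{1,3,24}$ multiplies the two right-hand $\rho$-outputs. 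Using $f(B)=f(B_1)\cup f(B\setminus B_1)$ and $A_2\subseteq f(B)$, that product collapses to $B^{[A'\cup A'',A_2]}$, and matching the three slots forces $A'=A\cap f(B_1)$, $A''=A\cap f(B\setminus B_1)$, whence $A'\cup A''=A$. I would then check this assignment is a bijection onto the left-hand index set: the two interval-compatibility constraints in $m$ are automatic once $A',A''$ are cut from a common $A$, the upper ends agreeing because $A_2\cap f(B_1)$ and $A_2\cap f(B\setminus B_1)$ restrict to the same set on the overlap $f(B_1)\cap f(B\setminus B_1)$, and the lower ends agreeing for the same reason.

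The one genuinely delicate point, and the step I expect to be the main obstacle, is matching the $E_m$-tadpole clauses: the single clause ``$\nexists e\in E_m\cap B$ with $f(e)\subseteq f(B)\setminus A$'' on the right against the two separate clauses on $B_1$ and $B\setminus B_1$ produced by the two copies of $\rho$ on the left. Here I would use that for $e\in B_1$ one has $f(e)\subseteq f(B_1)$, so with $A'=A\cap f(B_1)$ the condition $f(e)\subseteq f(B_1)\setminus A'$ is equivalent to $f(e)\cap A=\emptyset$, i.e.\ to $f(e)\subseteq f(B)\setminus A$; the same holds for $e\in B\setminus B_1$, and since $B=B_1\sqcup(B\setminus B_1)$ the conjunction of the two left-hand clauses is exactly the single right-hand clause. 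This reduces the whole identity to bookkeeping of intersections with $f$, and the bijection above matches terms one-to-one. The multiplicativity $\rho(w_1w_2)=\rho(w_1)\rho(w_2)$ I would prove in the same style: when $m(w_1,w_2)=0$ the $B$-disjointness or interval-mismatch obstruction propagates to every term of $\rho(w_1)\rho(w_2)$, and when $m(w_1,w_2)\neq0$ the cut point $C$ of $\rho$ on the product corresponds bijectively to pairs $(A,A')$ of cut points via $A=C\cap f(B_1)$, $A'=C\cap f(B_2)$, with the tadpole clauses again matching by the same $f(e)\cap C$ reduction. With all four items in place, the cointeraction of $(\mathcal{A}_p,m,\Delta_c)$ and $(\mathcal{A}_e,m,\rho)$ is precisely the list of axioms of \cite{cointeractiontalk,cointpapers}, so the theorem follows.
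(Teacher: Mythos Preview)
Your proposal is correct and follows essentially the same approach as the paper: both sides of the cointeraction identity are expanded on a generic basis element $B^{[A_1,A_2]}$ and matched via the bijection $A\leftrightarrow(A',A'')=(A\cap f(B_1),\,A\cap f(B\setminus B_1))$, with the product $m$ collapsing the right-hand $\rho$-outputs to $B^{[A,A_2]}$; the multiplicativity and counit items are handled the same way. Your treatment of the $E_m$-tadpole clause (reducing each side to the condition $f(e)\cap A=\emptyset$ via $f(e)\subseteq f(B_i)$) is in fact more explicit than the paper's, which simply asserts the final summation conditions coincide; this is a useful elaboration rather than a different argument.
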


\begin{proof}
  The first point is immediate from the definition.  For the second point we have
  \begin{align*}
    & (\Delta_c\otimes \text{id})\circ \rho (B^{[A_1, A_2]}) \\
    & = \sum_{\substack{A_1 \subseteq A \subseteq A_2 \\ \nexists e\in E_m\cap B, f(e)\subseteq f(B)\backslash A}} \Delta_c(B^{[A_1, A]}) \otimes B^{[A, A_2]} \\
    & = \sum_{\substack{A_1 \subseteq A \subseteq A_2 \\ B_1\subseteq B \\ f(B_1)\cap A_1 = \emptyset \\ \nexists e\in E_m\cap B, f(e)\subseteq f(B)\backslash A}} B_1^{[A_1\cap f(B_1), A \cap f(B_1)]} \otimes (B\backslash B_1)^{[A_1\cap f(B\backslash B_1), A \cap f(B\backslash B_1)]} \otimes B^{[A, A_2]} 
  \end{align*}
  and
  \begin{align*}
    & m_{1,3,24}\circ(\rho\otimes\rho)\circ \Delta_c(B^{[A_1, A_2]}) \\
    & =\sum_{\substack{B_1\subseteq B \\ f(B_1)\cap A_1 = \emptyset}} m_{1,3,24}(\rho(B_1^{[A_1\cap f(B_1), A_2 \cap f(B_1)]})\otimes \rho((B\backslash B_1)^{[A_1 \cap f(B\backslash B_1), A_2 \cap f(B\backslash B_1)]})) \\
    & = \sum_{\substack{B_1\subseteq B \\ f(B_1)\cap A_1 = \emptyset \\ A_1 \cap f(B_1) \subseteq A_3 \subseteq A_2 \cap f(B_1) \\ A_1 \cap f(B \backslash B_1) \subseteq A_4 \subseteq A_2 \cap f(B \backslash B_1) \\ \nexists e\in E_m \cap B_1, f(e) \subseteq f(B_1) \backslash A_3 \\ \nexists e \in E_m\cap (B\backslash B_1), f(e) \subseteq f(B\backslash B_1)\backslash A_4} } m_{1,3,24}(B_1^{[A_1\cap f(B_1), A_3]}\otimes B_1^{[A_3, A_2 \cap f(B_1)]}\\
    & \hspace{6cm}\otimes (B\backslash B_1)^{[A_1 \cap f(B\backslash B_1), A_4]}\otimes (B\backslash B_1)^{[A_4, A_2 \cap f(B\backslash B_1)]}) \\
    & = \sum_{\substack{A_1 \subseteq A \subseteq A_2 \\ B_1\subseteq B \\ f(B_1)\cap A_1 = \emptyset \\ \nexists e\in E_m\cap B, f(e)\subseteq f(B)\backslash A}} B_1^{[A_1\cap f(B_1), A \cap f(B_1)]} \otimes (B\backslash B_1)^{[A_1\cap f(B\backslash B_1), A \cap f(B\backslash B_1)]} \otimes B^{[A, A_2]}
  \end{align*}
  where the last equality is because the last term only survives when $A_3$ and $A_4$ agree on $f(B_1)$ and $f(B\backslash B_1)$ in which case we can take $A=A_3\cup A_4$ and obtain
  \[
  m(B_1^{[A_3, A_2 \cap f(B_1)]}, (B\backslash B_1)^{[A_4, A_2 \cap f(B\backslash B_1)]})
  = (B_1\cup (B\backslash B_1))^{[A_3\cup A_4, A_2]} = B^{[A, A_2]}
  \]
  since $A_2\subseteq f(B)$.

  For the third point      
  consider $w_1 = B_1^{[A_1, A_2]}$ and $w_2 = B_2^{[A_3, A_4]}$, by linearity proving the result for these suffices.  Now calculate.
  \[
  \rho(w_1)\rho(w_2)
  = \sum_{\substack{A_1\subseteq A_5\subseteq A_2\\A_3\subseteq A_6\subseteq A_4 \\\nexists e \in E_m\cap B_1, f(e)\subseteq f(B_1)\backslash A_5 \\ \nexists e\in E_m\cap B_2, f(e)\subseteq f(B_2)\backslash A_6}} B_1^{[A_1, A_5]}B_2^{[A_3, A_6]} \otimes B_1^{[A_5, A_2]}B_2^{[A_6, A_4]}
  \]
  If the conditions are not satisfied so that the product $w_1w_2$ is nonzero then at least one of the products in each term above is also 0, so the result is 0 on both side.  If the conditions are satisfied so that $w_1w_2$ is nonzero then
  \begin{align*}
    \rho(w_1w_2) & = \rho(B_1\cup B_2)^{[A_1\cup A_3, A_2\cup A_4]}\\
    & = \sum_{\substack{A_1\cup A_3\subseteq A\subseteq A_2\cup A_4\\\nexists e\in E_m\cap(B_1\cup B_2), f(e)\subseteq f(B_1\cup B_2)\backslash A}} (B_1\cup B_2)^{[A_1\cup A_3, A]} \otimes (B_1\cup B_2)^{[A, A_2\cup A_4]}
  \end{align*}
  while
  \begin{align*}
    \rho(w_1)\rho(w_2)
    & = \sum_{\substack{A_1\cup A_3\subseteq A\subseteq A_2\cup A_4 \\ \nexists e\in E_m\cap (B_1\cup B_2), f(e)\subseteq f(B_1\cup B_2)\backslash A}} (B_1\cup B_2)^{[A_1\cup A_3, A]} \otimes (B_1\cup B_2)^{[A, A_2\cup A_4]}
  \end{align*}
  where $A = A_5\cup A_6$ when $A_5\cap f(B_1)\cap f(B_2) = A_6 \cap f(B_1)\cap f(B_2)$ and all other terms vanish.
 
  For the fourth point, consider $w = B^{[A_1, A_2]}$.  All terms in $\rho(w)$ have $B$ as the set on both sides of the tensor, but $\hat{\One}_{\Delta_c}$ is only nonzero if $B=\emptyset$, so the left hand side is $0$ if $B\neq\emptyset$ and is $\One$ otherwise.  The same is true of the right hand side, immediately from the definitions.  The fourth point then follows by linearity of the maps.

  The cointeraction is precisely the four bulleted properties when considered in the appropriate range.
\end{proof}

Finally note that if we imposed an order on the edges of $\Gamma$, then this order could be carried through everything done above, hence the information needed for the sector decompositions (see Section~\ref{SecDec}) can be added to this set up if desired.

\subsection{Cointeraction via generators}
For any index set $J$ consider the commutative $\mathbb{Q}$-algebra $F_J:=\mathbb{Q}[x_i]_{i\in J}$
of polynomials $f\in F_J$ in the variables $x_i,i\in J$.

The constant polynomial $F_J\ni 1_F=1$  is the unit with regard to the product.

Split $J=J_1\dot{\cup} J_2$ and consider the subalgebra 
$F_1:=\mathbb{Q}[x_i]_{i\in J_1}$. 

It is a cocommutative bialgebra by setting on a single generator $x_i$, 
\[
\Delta_1(x_i)=x_i\otimes 1_F+1_F\otimes x_i,
\] 
for all $i\in J_1$ and we extend to products consistently.

The augmentation ideal $A$ is given by those functions $p\in \mathbb{Q}[x_i]_{i\in J}$
such that $p(0)=0$.

The counit is given by $\hat{\One}_{\Delta_c}(p)=0$, $p\in A$ and $\hat{\One}_{\Delta_c}(1_F)=1$.

$F_1$ coacts on $F_J$. We  denote this coaction by  $\Delta_c$ and define it on  $F_J$ by setting for a single generator 
\[
\Delta_c(x_i)=1_F\otimes x_i,\,i\in J_2,\,\Delta_c(x_i)=x_i\otimes 1_F+1_F\otimes x_i,\,i\in J_1,
\]
and again extend to products consistently.

If we define a monomial for a set $m\subseteq J$ by $x_m:=\prod_{i\in m} x_i$ we have
\be\label{setdone}
\Delta_c (x_m)=\sum_{\emptyset\subseteq p\subseteq (m\cap J_1)} x_p\otimes x_{m\setminus p}.\ee

Consider two  sets $M$ and $N$ and the incidence coalgebra structure 
$I_N$ as in Sec.(\ref{incidence}).
We also choose a map $f$ which assigns to each $e\in M$ and interval $u=[u_i,u_f]\in I_N$, $u_i\subseteq u_f$ a set $f(e,u)\subseteq u_f$.

For fixed chosen interval $u$, define 
\[
M_u=\{e\in M|f(e,u)\cap u_i=\emptyset\}.
\]

Now take as index set $J=M\times I_n$, $n\subseteq N$ so that $I_{n}\subseteq I_N$  and split $J=J_1\dot{\cup} J_2$, $J_1=M_u\times I_{n}$, $J_2=J\setminus J_1$.

We define a second coproduct $\rho: F_J\to F_J\otimes F_J$ on the algebra $F_J= F^M_n$ by
\[
\rho(x_{m,[v_i,v_f]})=\sum_{\substack{v_i\subseteq c\subseteq v_f \\ c\neq v_i \text{ if } m\in E_m}}x_{m,[v_i,c]}\otimes x_{m,[c,v_f]}, \,\forall x_{m,[v_i,v_f]}\in A_1.
\]
Breaking up by generators makes the tadpole condition on $\rho$ nice to state as we see from the relative simplicity of the previous formula.

Identifying $B^{[A,B]}$
with $\prod_{e\in B}x_{e,[A,B]}$, we have almost obtained $\mathcal{A}$ by generators.  The only difference is that the product in $\mathcal{A}$ is not the free commutative product.  Modding out by the monomials which are sent to $0$ in the product of $\mathcal{A}$ we thus obtain, by construction, a description of $\mathcal{A}$ in terms of generators.

The generators $x_{e,[A,B]}$ are particularly nice since they correspond to the fundamental cycles and cut fundamental cycles, potentally with further edges marked for contraction.  Because of their close connection to the fundamental cycles, the $x_{e,[A,B]}$ are the language we will primarily use when working with these cointeracting bialgebras in the main body of this paper.


\begin{thebibliography}{99}
\bibitem{Smatrixbook} R.\ J.\ Eden, P.\ V.\ Landshoff, D.\ I.\ Olive, J.\ C.\ Polkinghorne, {\em The analytic $S$-matrix}
Cambridge University Press 1966.
\bibitem{AmatiFubini} D.\ Amati, S.\  Fubini, {\em Dispersion relation methods in strong interactions},  Annu.\ Rev.\ Nucl.\ Sci.\ {\bf 12} (1962) 359-434,
\url{https://www.annualreviews.org/doi/10.1146/annurev.ns.12.120162.002043}.
\bibitem{BlochKreimerOS} S.\ Bloch, D.\ Kreimer, {\em Cutkosky rules and Outer Space}, \url{https://arxiv.org/abs/1512.01705} [hep-th].
\bibitem{MarkoDirk} M.\ Berghoff, D.\ Kreimer, {\em Graph complexes and Feynman rules}, \url{https://arxiv.org/abs/2008.09540} [hep-th].
\bibitem{Karenbook} K.\ Yeats, {\em A Combinatorial Perspective on Quantum Field Theory}, Springer 2017.
\bibitem{coaction} D.\ Kreimer, {\em Outer Space as a combinatorial backbone for Cutkosky rules and coactions}, Proc.\ {\em Antidifferentiation and the Calculation of Feynman Amplitudes}, J.\ Bl\"umlein et.al., eds., in print, 
\url{https://arxiv.org/abs/2010.11781} [hep-th]. 
\bibitem{K} D.\ Kreimer, {\em On the Hopf algebra structure of perturbative quantum field theories}, Adv.\ Theor.\ Math.\ Phys.\ {\bf 2} (1998) 303-334, 
\url{https://arxiv.org/abs/q-alg/9707029} [math.QA].
\bibitem{CK} A.\ Connes, D.\  Kreimer, {\em Hopf algebras, renormalization and noncommutative geometry}, Commun.\ Math.\ Phys.\ {\bf 199} (1998) 203-242, 
\url{https://arxiv.org/abs/hep-th/9808042} [hep-th].
\bibitem{RHI} A.\ Connes, D.\ Kreimer, {\em Renormalization in quantum field theory and the Riemann-Hilbert problem. 1. The Hopf algebra structure of graphs and the main theorem}, Commun.\ Math.\ Phys.\ {\bf  210} (2000) 249-273, 
\url{https://arxiv.org/abs/hep-th/9912092} [hep-th].
\bibitem{RHII} A.\ Connes, D.\ Kreimer, {\em Renormalization in quantum field theory and the Riemann-Hilbert problem. 2. The beta function, diffeomorphisms and the renormalization group},  Commun.\ Math.\ Phys.\ {\bf  216} (2001) 215-241, 
\url{https://arxiv.org/abs/hep-th/0003188} [hep-th].
\bibitem{anatomy} D.\ Kreimer, {\em Anatomy of a gauge theory}, 
 Annals Phys.\ {\bf  321} (2006) 2757-2781, 
 \url{https://arxiv.org/abs/hep-th/0509135} [hep-th].
\bibitem{Michibook} M.\ Borinsky, {\em Graphs in Perturbation Theory}, Springer 2018.
\bibitem{BergKr} C.\ Bergbauer, D.\  Kreimer, {\em Hopf algebras in renormalization theory: Locality and Dyson-Schwinger equations from Hochschild cohomology},
IRMA Lect.\ Math.\ Theor.\ Phys.\ {\bf 10} (2006) 133-164, 
\url{https://arxiv.org/abs/hep-th/0506190} [hep-th].
\bibitem{FoissyDSE} L.\ Foissy, {\em Multigraded Dyson–Schwinger systems},
J.\ Math.\ Phys.\ {\bf 61} (2020) 5, 051703, 
\url{https://arxiv.org/abs/1511.06859} [math.RA].\\
L.\ Foissy, {\em Hopf Subalgebras of Rooted Trees from Dyson-Schwinger Equations},
Clay Math.\ Proc.\ {\bf 12} (2010) 189-210, Contribution to: {\em Conference on Motives, Quantum Field Theory, and Pseudodifferential Operators}.
\bibitem{core} D.\ Kreimer, {\em  The core Hopf algebra},
Clay Math.\ Proc.\ {\bf 11} (2010) 313-322, Clay Mathematics Proceedings 11 (2010) 313-322, Contribution to: {\em Noncommutative Geometry Conference in Honor of Alain Connes (NCG 2007)}, \url{https://arxiv.org/abs/0902.1223} [hep-th].
\bibitem{BV} D.\ Kreimer, W.\  van Suijlekom, {\em Recursive relations in the core Hopf algebra}, 
Nucl.\ Phys.\ {\bf B 820} (2009) 682-693, \url{https://arxiv.org/abs/0903.2849} [hep-th].
\bibitem{John1} J.\ Gracey, R.M.\ Simms, {\em Higher dimensional higher derivative $\phi^4$
theory},
Phys.\ Rev.\ {\bf D 96} (2017) 2, 025022, \url{https://arxiv.org/abs/1705.06983} [hep-th].\\
J.\ Gracey, {\em Renormalization of scalar field theories in rational spacetime dimensions}, Eur.\ Phys.\ J.\ {\bf C 80} (2020) 7, 604, \url{https://arxiv.org/abs/1703.09685} [hep-th].
\bibitem{John2} J.\ Gracey, {\em Eight dimensional QCD at one loop}, 
Phys.\ Rev.\ {\bf D 97} (2018) 2, 025009, \url{https://arxiv.org/abs/1712.02565} [hep-th].
\bibitem{Cutkosky} R.E.\ Cutkosky {\em Singularities and discontinuities of Feynman amplitudes}, J.\ Math.\ Phys.\ {\bf  1} (1960) 429-433.
\bibitem{BerghM} M.\ Berghoff,  M.\ M\"uhlbauer, {\em Moduli spaces of colored graphs},
Topology and its Applications {\bf 268} (2019)  106902, \url{https://arxiv.org/abs/1809.09954} [math.AT].
\bibitem{DirkEll} D.\ Kreimer, {\em Multi-valued Feynman Graphs and Scattering Theory},
Proc.\ {\em Elliptic Integrals, Elliptic Functions and Modular Forms in Quantum Field Theory}, Texts \& Monographs in Symbolic Computation 2019, 295-325,  J.Bluemlein et.al., eds., \url{https://arxiv.org/abs/1807.00288} [hep-th].
\bibitem{BrownKreimer} F.\ Brown, D.\  Kreimer, {\em Angles, Scales and Parametric Renormalization}, 
Lett.\ Math.\ Phys.\ {\bf 103} (2013) 933-1007, \url{https://arxiv.org/abs/1112.1180} [hep-th].
\bibitem{Sterman}
G.\ Sterman, {\em An Introduction to Quantum Field Theory}, Cambridge UP 1993.
\bibitem{rational} M.\ Culler, K.\ Vogtmann, {\em Moduli of graphs and automorphisms of free
groups}, Invent.\ Math.\ {\bf 84 no.1} (1986),  91--119.
\bibitem{cointeractiontalk} L.\ Foissy {\em Cointeracting bialgebras}, talk given at 
{\em Higher Structures Emerging from Renormalisation}, ESI, Vienna, October 2020
(\url{https://www.esi.ac.at/events/t304/}) and at {\em Algebraic Structures in Perturbative Quantum Field Theory}, IHES, November 2020. 
\bibitem{cointpapers} 
K.\ Ebrahimi-Fard, L.\ Foissy, J.\ Kock, F.\ Patras,   {\em Operads of (noncrossing) partitions, interacting bialgebras, and moment-cumulant relations},  Adv.\ Math.\
{\bf  369} (2020) 107170, \url{https://arxiv.org/abs/1907.01190} [math.CO].\\
L.\ Foissy, {\em Chromatic polynomials and bialgebras of graphs},
\url{https://arxiv.org/abs/1611.04303} [math.RA].
\bibitem{Manchon} M.\ Ayadi, D.\ Manchon, {\em Doubling bialgebras of finite topologies}, \url{https://arxiv.org/abs/2011.06967} [math.RA].
\bibitem{BrownCNTP} F.\ Brown, {\em Feynman amplitudes, coaction principle, and cosmic Galois group},
Comm.\ Number Th.\ Phys.\ {\bf 11} (2017) 453 -- 556.\\
F.\ Brown, {\em Notes on motivic periods},
Comm.\ Number Th.\ Phys.\ {\bf 11} (2017) 557 -- 655.
\bibitem{Brownrecent}
F.\ Brown, {\em Invariant differential forms on complexes of graphs and Feynman integrals}, \url{https://arxiv.org/abs/2101.04419} [math-ag].  
\bibitem{Rella} C.\ Rella, {\em An Introduction to Motivic Feynman Integrals},
\url{https://arxiv.org/abs/2009.00426} [math-ph].
\bibitem{Steinmann} O.\ Steinmann, {\em \"Uber den Zusammenhang zwischen den Wightmanfunktionen und 
den retardierten Kommutatoren}, Helv.\ Physica Acta {\bf 33} (1960) 257.\\
O.\ Steinmann, {\em Wightman-Funktionen und retardierten Kommutatoren. II}, Helv.\
Physica Acta {\bf 33} (1960) 347.
\bibitem{Heinrich} G.\ Heinrich, {\em Sector Decomposition}, 
Int.\ J.\ Mod.\ Phys.\ {\bf A 23} (2008) 1457--1486, \url{https://arxiv.org/abs/0803.4177} [hep-ph].
\bibitem{HeinrichBinot} T.\ Binoth, G.\ Heinrich, {\em Numerical evaluation of multiloop integrals by sector decomposition},
Nucl.\ Phys.\ {\em B 680} (2004) 375-388, \url{https://arxiv.org/abs/hep-ph/0305234} [hep-ph].\\
T.\ Binoth, G.\ Heinrich, {\em Numerical evaluation of phase space integrals by sector decomposition},
Nucl.\ Phys.\ {\bf B 693} (2004) 134-148, \url{https://arxiv.org/abs/hep-ph/0402265} [hep-ph].
\bibitem{BorHerzog} R.\ Beelveldt, M.\ Borinsky, F.\  Herzog,
{\em The Hopf algebra structure of the $R^{\star}$-operation},
JHEP {\bf 07} (2020) 061, \url{https://arxiv.org/abs/2003.04301} [hep-th].
\bibitem{SecEuc}
M.\ Aguiar, F.\ Ardila,
{\em Hopf monoids and generalized permutahedra},
\url{https://arXiv.org/abs/1709.07504} [math.CO].
\bibitem{MichiSampling}
M.\ Borinsky, {\em Tropical Monte Carlo quadrature for Feynman integrals
},
\url{https://arXiv.org/abs/2008.12310} [math-ph].
\bibitem{Hepp} E.\ Panzer,
{\em Hepp's bound for Feynman graphs and matroids}, \url{https://arxiv.org/abs/1908.09820} [math-ph].
\bibitem{jewel} K.U.\ Bux, P.\ Smillie, K.\ Vogtmann {\em On the bordification of outer space},
J.\ London Math.\ Soc.\ {\bf 98 1} (2018) 12--34.
\bibitem{Marko} M.\ Berghoff, {\em Feynman amplitudes on moduli spaces of graphs},
Ann.\ Inst.\ Henri Poincar{\'e} {\bf D7} (2020) 203--232, \url{https://arxiv.org/abs/1709.00545} [math-ph].
\bibitem{Ralph}
I.\ Galvez-Carillo, R.\ Kaufmann, A.\ Tonks,
{\em Three Hopf algebras from number theory, physics $\&$ topology, and their common background I: operadic $\&$ simplicial aspects},
Comm.\ in Numb.\ Th.\ and Physics (CNTP) {\bf 14,1} (2020) 1-90.\\
I.\ Galvez-Carillo, R.\ Kaufmann, A.\ Tonks, 
{\em Three Hopf algebras from number theory, physics $\&$ topology, and their common background II: general categorical formulation},
Comm.\ in Numb.\ Th.\ and Physics (CNTP) {\bf 14,1} (2020) 91-169.\\
R.\ Kaufmann, B.C.\ Ward, 
{\em Feynman categories},
Ast\'erisque {\bf 387} (2017), x+161 pages.   
\bibitem{BorVogt}
M.\ Borinsky, K.\ Vogtmann,
{\em The Euler characteristic of $Out(F_n)$}, Commentarii Mathematici Helvetici, {\bf 95} (2020), 703-748.
\bibitem{Balduf}
P.-H.\ Balduf, {\em Perturbation theory of transformed quantum fields},
 Math.\ Phys.\ Anal.\ Geom.\ {\bf 23} (2020), 33, \url{https://arxiv.org/abs/1905.00686} [math-ph].\\
 P.-H.\ Balduf, {\em Propagator-cancelling scalar fields}, \url{https://arxiv.org/abs/2102.04315} [math-ph]. 
\bibitem{KUvBY} D.\ Kreimer, D.\  Uminsky, G.\  van Baalen, K.\  Yeats, {\em The QCD beta-function from global solutions to Dyson--Schwinger equations}, Annals Phys.\ {\bf 325} (2010) 300-324, \url{https://arxiv.org/abs/0906.1754} [hep-th].
\bibitem{fourauthors} J.\ Conant, A.\ Hatcher, M.\ Kassabov, K.\ Vogtmann,
{\em Assembling Homology classes in automorphism groups of free groups},
Commentarii Mathematici Helvetici {\bf 91 4} (2016) 751--806.
\bibitem{Borinsky} M.\ Borinsky, {\em Renormalized asymptotic enumeration of Feynman diagrams},
Annals Phys.\ {\bf  385} (2017) 95-135, \url{https://arxiv.org/abs/1703.00840} [hep-th].
\bibitem{DunneB} M.\ Borinsky, G.\ Dunne,  {\em Non-Perturbative Completion of Hopf-Algebraic Dyson-Schwinger Equations}, 
Nucl.\ Phys.\ {\bf B 957} (2020) 115096, \url{https://arxiv.org/abs/2005.04265} [hep-th].\\
M.\ Borinsky, G.\ Dunne, M.\ Meynig, {\em Semiclassical Trans-Series from the Perturbative Hopf-Algebraic Dyson-Schwinger Equations: $\phi^3$
QFT in 6 Dimensions},
 \url{https://arxiv.org/abs/2104.00593} [hep-th].
\bibitem{kythesis} K.\ Yeats, {\em Rearranging Dyson--Schwinger equations},
Mem.\ AMS {\bf 211} (2011) 995.
\bibitem{Walter2}
W.\ van Suijlekom, {\em Renormalization of gauge fields using Hopf algebras},
in: B.\ Fauser, J.\ Tolksdorf, E.\ Zeidler (eds.) {\em Quantum Field Theory}. Birkh\"auser Basel 2008.
\bibitem{Walter3}
W.\ van Suijlekom,
{\em Renormalization of gauge fields: A Hopf algebra approach},
Commun.\ Math.\ Phys.\ {\bf 276} (2007) 773-798, 
\url{https://arxiv.org/abs/hep-th/0610137} [hep-th].
\bibitem{BorQED}
M.\ Borinsky, {\em Feynman graph generation and calculations in the Hopf algebra of Feynman graphs},
Comp.\ Phys.\ Comm.\ {\bf 185} (2014) 3317-3330, \url{https://arxiv.org/abs/1402.2613} [hep-th].
\bibitem{HenryQED}
H.\ Ki\ss ler, {\em Hopf-algebraic Renormalization of QED in the linear covariant Gauge}, Ann.\ Phys.\ {\bf 372} (2016) 159-174, \url{https://arxiv.org/abs/1602.07003} [hep-th].
\bibitem{Henrythesis}
H.\ Ki\ss ler, {\em Computational and Diagrammatic
Techniques for Perturbative Quantum
Electrodynamics}, Thesis, Humboldt U.\ (2017),\\ 
\url{http://www2.mathematik.hu-berlin.de/~kreimer/wp-content/uploads/KisslerDiss.pdf}
\bibitem{Prinzsusy}
D.\ Prinz, {\em Gauge Symmetries and Renormalization},
acc.\ f.\ publ.\ in Math.\ Phys.\ Anal.\ Geom., \url{https://arxiv.org/abs/2001.00104} [math-ph].


\bibitem{Britto} S.\ Abreu, R.\ Britto, C.\ Duhr, E.\ Gardi, J.\ Matthew, {\em Diagrammatic Coaction of Two-Loop Feynman Integrals}, 
Contributed to {\em  RADCOR 2019}, \url{https://arxiv.org/abs/1912.06561} [hep-th].



\bibitem{DLY} X.\ Dai, J.\ Long, and K.\ Yeats, {\em Subdivergence-free gluings of trees}, in preparation.


\bibitem{CapIR}
Z.~Capatti, V.~Hirschi, A.~Pelloni and B.~Ruijl,
{\em Local Unitarity: a representation of differential cross-sections that is locally free of infrared singularities at any order,}
JHEP \textbf{04} (2021), 104,
doi:10.1007/JHEP04(2021)104,
\url{https://arXiv.org/abs/2010.01068} [hep-ph].


\bibitem{Cap1}
Z.~Capatti, V.~Hirschi, D.~Kermanschah and B.~Ruijl,
{\em Loop-Tree Duality for Multiloop Numerical Integration,}
Phys.\ Rev.\ Lett.\ \textbf{123} (2019) no.15, 151602,
doi:10.1103/PhysRevLett.123.151602
\url{https://arXiv.org/abs/1906.06138} [hep-ph].

\bibitem{Cap2}
Z.~Capatti, V.~Hirschi, D.~Kermanschah, A.~Pelloni and B.~Ruijl,
{\em Manifestly Causal Loop-Tree Duality,}
\url{https://arXiv.org/abs/2009.05509} [hep-ph].


\bibitem{Cap3} Z.~Capatti, V.~Hirschi, D.~Kermanschah, A.~Pelloni, B.~Ruijl,
{\em Numerical Loop-Tree Duality: contour deformation and subtraction,}
JHEP \textbf{04} (2020), 096,
doi:10.1007/JHEP04(2020)096,
\url{https://arXiv.org/abs/1912.09291} [hep-ph].

\bibitem{LZgraphs} S.\ Lando and A.\ Zvonkin, {\em Graphs on surfaces and their applications}, Springer, 2004.


\bibitem{Gbook} G.\ Gr\"atzer, {\em General Lattice Theory}, Second Edition, Springer, 2003. 

\bibitem{GRbook} C.\ Godsil and G.\ Royle, {\em Algebraic graph theory}, Springer, 2001.
\bibitem{T5flow} W.T.\ Tutte, {\em A Contribution on the Theory of Chromatic Polynomials}, Canad. J. Math. {\bf 6} (1954) 80-91.
\bibitem{BlochKr1loop} S.\ Bloch, D.\ Kreimer, {\em Feynman amplitudes and Landau singularities for 1-loop graphs},  Commun.\ Num.\ Theor.\ Phys.\ {\bf 4} (2010) 709-753, \url{https://arxiv.org/abs/1007.0338} [hep-th].
\bibitem{ItzZ} J.-C.\ Itzykson, B.\ Zuber, {\em Quantum Field Theory}, McGrawHill 1980.

\bibitem{Collins} J.\ Collins, {\em Renormalization}, Cambridge University Press 1984.

\bibitem{Sincidence} W.R.\ Schmitt, {\em Incidence Hopf algebras}, Journal of Pure and Applied Algebra, {\bf 96}, no. 3 (1994) 299-330.
\end{thebibliography}
\end{document}